\documentclass[10pt,journal,draftclsnofoot,onecolumn]{IEEEtran}
\usepackage{mathpazo}
\usepackage{times}

\usepackage{graphicx}
\usepackage{amssymb}
\usepackage{amsmath}
\usepackage[mathscr]{euscript}
\usepackage{enumerate}
\usepackage{mathcomp}

\usepackage{theorem}  
\usepackage{cite}     
 
\usepackage{upref}

\usepackage{comment}  
\usepackage{psfig}

\usepackage{color}

\usepackage{supertabular}
\usepackage{multicol}

\makeatletter
\let\mcnewpage=\newpage
\newcommand{\TrickSupertabularIntoMulticols}{%
  \renewcommand\newpage{%
    \if@firstcolumn
      \hrule width\linewidth height0pt
      \columnbreak
    \else
      \mcnewpage
    \fi
  }%
}
\makeatother



\textheight   9.5in
\topmargin   -0.7in

\parindent   0.15in

\pretolerance=50
\tolerance=100
\hyphenpenalty=1
\interdisplaylinepenalty=2500
\setlength{\emergencystretch}{3pt}

\clubpenalty=300
\widowpenalty=300
\displaywidowpenalty=100

\hbadness = 10000
\vbadness = 6000

\hfuzz = 18.0pt

\flushbottom


\title{Thermal-Optimal Encoding for Low-Power Buses}
\title{Cooling Codes:
         Optimal Thermally-Aware Coding for Low-Power Buses}
\title{Cooling Codes:
         Thermally-Aware Coding for High-Performance Interconnects}
\title{Cooling Codes:
         Temperature Management Coding for High-Performance Interconnects}
\title{Cooling Codes:
        Thermal Management Coding for On-Chip Interconnects}
\title{Cooling Codes:
        Thermal-Management Coding for~High-Performance Interconnects}
\title{Cooling Codes:
        Optimal Coding to Reduce Peak~Temperature and Power Dissipation
        in High-Performance Interconnects}

\title{Cooling Codes:
        Thermal-Management Coding for High-Performance Interconnects}

\date{\today}
\author{\textbf{Yeow Meng Chee$^\text{x}$}, \textbf{Tuvi Etzion$^*$}, \textbf{Han Mao Kiah$^\text{x}$}, \textbf{Alexander Vardy$^{+\text{x}}$}\\
{\small $^\text{x}$School of Physical and Mathematical Sciences, Nanyang Technological University, Singapore}\\
{\small $^*$Computer Science Department, Technion, Israel Institute of Technology, Haifa 32000, Israel.\newline
The work was done while the author was visiting SPMS, Nanyang Technological University, Singapore,\linebreak
and the Department of Electrical and Computer Engineering, University of California}\\
{\small $^+$Department of Electrical and Computer Engineering, University of California, San Diego, CA 92093, USA}\\
{\small {\it YMChee@ntu.edu.sg}, {\it etzion@cs.technion.ac.il}, {\it HMKiah@ntu.edu.sg}, {\it avardy@ucsd.edu}\vspace{-0.13ex}}
\thanks{Y. M. Chee and H. M. Kiah were supported in part by the Singapore Ministry of
Education under Research Grant MOE2015-T2-2-086.
H.~M.~Kiah was also supported in part by the Singapore Ministry of
Education under Research Grant and MOE2016-T1-001-156.
T. Etzion and A. Vardy were supported in part by the United States --- Israel
Binational Science~Foundation (BSF), Jerusalem, Israel, under Grant 2012016.}
}


\theoremstyle{plain}
\theorembodyfont{\normalfont\slshape}

\newtheorem{thm}{Theorem\hspace{-1pt}}
\newenvironment{theorem}
{\begin{thm}\hspace*{-1ex}{\bf.}}{\end{thm}}

\newtheorem{lem}[thm]{Lemma\hspace{-.75pt}}
\newenvironment{lemma}{\begin{lem}\hspace*{-1ex}{\bf.}}{\end{lem}}

\newtheorem{prop}[thm]{Proposition$\!$}
\newenvironment{proposition}{\begin{prop}\hspace*{-1ex}{\bf.}}{\end{prop}}

\newtheorem{cor}[thm]{Corollary$\!$}
\newenvironment{corollary}{\begin{cor}\hspace*{-1ex}{\bf.}}{\end{cor}}

\newtheorem{conj}[thm]{Conjecture$\!$}
\newenvironment{conjecture}{\begin{conj}\hspace*{-1ex}{\bf.}}{\end{conj}}

\newtheorem{defn}{Definition$\!$}
\newenvironment{definition}{\begin{defn}\hspace*{-1ex}{\bf.}}{\end{defn}}

\setlength\theorempreskipamount{5pt plus 5pt minus 3pt}
\setlength\theorempostskipamount{5pt plus 3pt minus 1.5pt}


\newcommand{\cA}{{\cal A}}
\newcommand{\cB}{{\cal B}}
\newcommand{\cC}{{\cal C}}
\newcommand{\cD}{{\cal D}}
\newcommand{\cE}{{\cal E}}

\newcommand{\cL}{{\cal L}}

\newcommand{\cS}{{\cal S}}


\DeclareMathAlphabet{\mathbfsl}{OT1}{ppl}{b}{it} 

\newcommand{\ccc}{\mathbfsl{c}}

\newcommand{\uuu}{\mathbfsl{u}}
\newcommand{\vvv}{\mathbfsl{v}}

\newcommand{\xxx}{\mathbfsl{x}}
\newcommand{\yyy}{\mathbfsl{y}}
\newcommand{\zzz}{\mathbfsl{z}}



\newcommand{\ceil}[1]{\left\lceil #1 \right\rceil}
\newcommand{\floor}[1]{\left\lfloor #1 \right\rfloor}

\newcommand{\be}[1]{\begin{equation}\label{#1}}
\newcommand{\ee}{\end{equation}}
\newcommand{\eq}[1]{(\ref{#1})}

\renewcommand{\le}{\leqslant}
\renewcommand{\leq}{\leqslant}
\renewcommand{\ge}{\geqslant}
\renewcommand{\geq}{\geqslant}

\newcommand{\script}[1]{{\mathscr #1}}
\renewcommand{\frak}[1]{{\mathfrak #1}}

\renewcommand{\Bbb}{\mathbb}
\newcommand{\C}{{\Bbb C}}

\newcommand{\FF}{{\Bbb F}}
\newcommand{\Fq}{{{\Bbb F}}_{\!q}}
\newcommand{\Ftwo}{\smash{\Bbb{F}_{\kern-1pt2}}}
\newcommand{\Fn}{\smash{\Bbb{F}_{\kern-1pt2}^{\hspace{0.5pt}n}}}
\newcommand{\Fk}{\smash{\Bbb{F}_{\kern-1pt2}^{\hspace{0.5pt}k}}}

\newcommand{\Fkap}{\smash{\Bbb{F}_{\kern-1pt2}^{\hspace{0.5pt}\kappa}}}
\newcommand{\Fqkap}{\smash{\Bbb{F}_{\kern-1ptq}^{\hspace{0.5pt}\kappa}}}
\newcommand{\Fqn}{\smash{\Bbb{F}_{\kern-1ptq}^{\hspace{0.5pt}n}}}
\newcommand{\Fqm}{\smash{\Bbb{F}_{\kern-1ptq}^{\hspace{0.5pt}m}}}

\newcommand{\bfsl}{\bfseries\slshape}
\newcommand{\bfit}{\bfseries\itshape}

\newcommand{\dfn}{\bfseries\itshape}

\newcommand{\Dref}[1]{Def\-i\-ni\-tion\,\ref{#1}}
\newcommand{\Tref}[1]{The\-o\-rem\,\ref{#1}}

\newcommand{\Lref}[1]{Lem\-ma\,\ref{#1}}
\newcommand{\Cref}[1]{Co\-ro\-lla\-ry\,\ref{#1}}

\newcommand{\deff}{\mbox{$\stackrel{\rm def}{=}$}}

\DeclareMathOperator{\wt}{wt}
\DeclareMathOperator{\diam}{diam}
\DeclareMathOperator{\supp}{supp}

\newcommand{\zero}{{\mathbf 0}}

\newcommand{\sC}{{\script C}}

\newcommand{\sS}{{\script S}}

\newcommand{\fT}{{\frak T}}

\newcommand{\A}{{\bf A}}
\newcommand{\B}{{\bf B}}

\newcommand{\al}{\alpha}


\newcommand{\E}{{\script E}}

\newcommand{\HH}{{\mathcal H}}


\makeatletter
\let\over\@@over
\let\atop\@@atop
\let\atopwithdelims\@@atopwithdelims
\makeatother


\makeatletter
\renewcommand{\@endtheorem}{\endtrivlist}
\makeatother


\makeatletter
\renewcommand{\thefigure}{{\bf \@arabic\c@figure}}
\renewcommand{\fnum@figure}{{\bf Figure}\,\thefigure}
\makeatother



\begin{document}

\maketitle

\vspace*{-3.50ex}

\hspace*{-1pt}\begin{abstract}
High temperatures have dramatic negative effects on interconnect
performance and, hence, numerous techniques have been proposed to
reduce the power consumption of on-chip buses. However, existing
methods fall short of fully addressing the thermal challenges posed
by high-performance interconnects. In this paper, we introduce new
efficient coding schemes that make it possible to directly control
the \emph{peak temperature} of a bus by effectively cooling its
hottest wires. This is achieved by avoiding state transitions on the
hottest wires for as long as necessary until their temperature drops
off. We also reduce the \emph{average power consumption} by making
sure that the total number of state transitions on all the wires
is below a prescribed threshold. We show how each of these two features
can be coded for separately or, alternatively, how both can be achieved
at the same time. In addition, \emph{error-correction} for the
transmitted information can be provided while controlling the peak
temperature and/or the average power consumption.

\looseness=-1
In general, our cooling codes use $n > k$ wires to encode a given
$k$-bit bus. One of our goals herein is to determine the minimum
possible number of wires $n$ needed to encode $k$ bits while satisfying
any combination of the three desired properties. We provide full theoretical
analysis in each case. In particular, we show that $n = k+t+1$ suffices
to cool the $t$ hottest wires, and this is the best possible. Moreover,
although the proposed coding schemes make use of sophisticated tools
from combinatorics, discrete geometry, linear algebra, and coding
theory, the resulting encoders and decoders are fully practical.
They do not require significant computational overhead and can be
implemented without sacrificing a large circuit area.
\end{abstract}

\newpage
\section{Introduction}
\label{sec:Introduction}
\vspace{0.50ex}

\looseness=-1
\PARstart{P}{ower} and heat dissipation 
limits have emerged as a first-order design constraint for chips,
whether targeted~for~bat\-tery-powered devices or for high-end systems.
With the migrat\-ion to process geometries
of 65\,nm and below, power dissi\-pat\-ion has become as important
an issue as timing and signal in\-teg\-rity.
Aggressive technology scaling results in 
smaller feature size, greater packing density, increasing microarchitectural
complexity, and higher clock frequencies.
This is pushing~chip level power consumption to the edge.
It is not uncommon for on-chip hot spots to have
temperatures exceeding 100\textdegree C, while inter-chip temperature
differentials often exceed~20\textdegree C.

Power-aware design alone is not sufficient to
address this thermal challenge, since it does not directly target the
spatial and temporal behavior of the operating environment. For this
reason, thermally-aware approaches 
have emerged as one~of~the most
important domains of research in chip design today.

High temperatures have dramatic negative
effects~on circuit
behavior, with interconnects being among the most 
impacted circuit components. This is due, in part, to the
ever decreasing interconnect pitch and
the introduction of low-k dielectric insulation which has
low thermal conductivity. For example, as shown
in \cite{Ajamietal:2001}, the Elmore delay~\cite{Elm48} of an interconnect
increases 5\% to 6\% for every 10\textdegree C increase in temperature,
whereas the leakage current grows exponentially with temperature.
Therefore, minimizing the temperature of interconnects is 
of para\-mount importance for thermally-aware design.

$\,$\vspace*{-2.25mm}

\centerline{%
\hspace*{0.00mm}\includegraphics[width=3.05in]{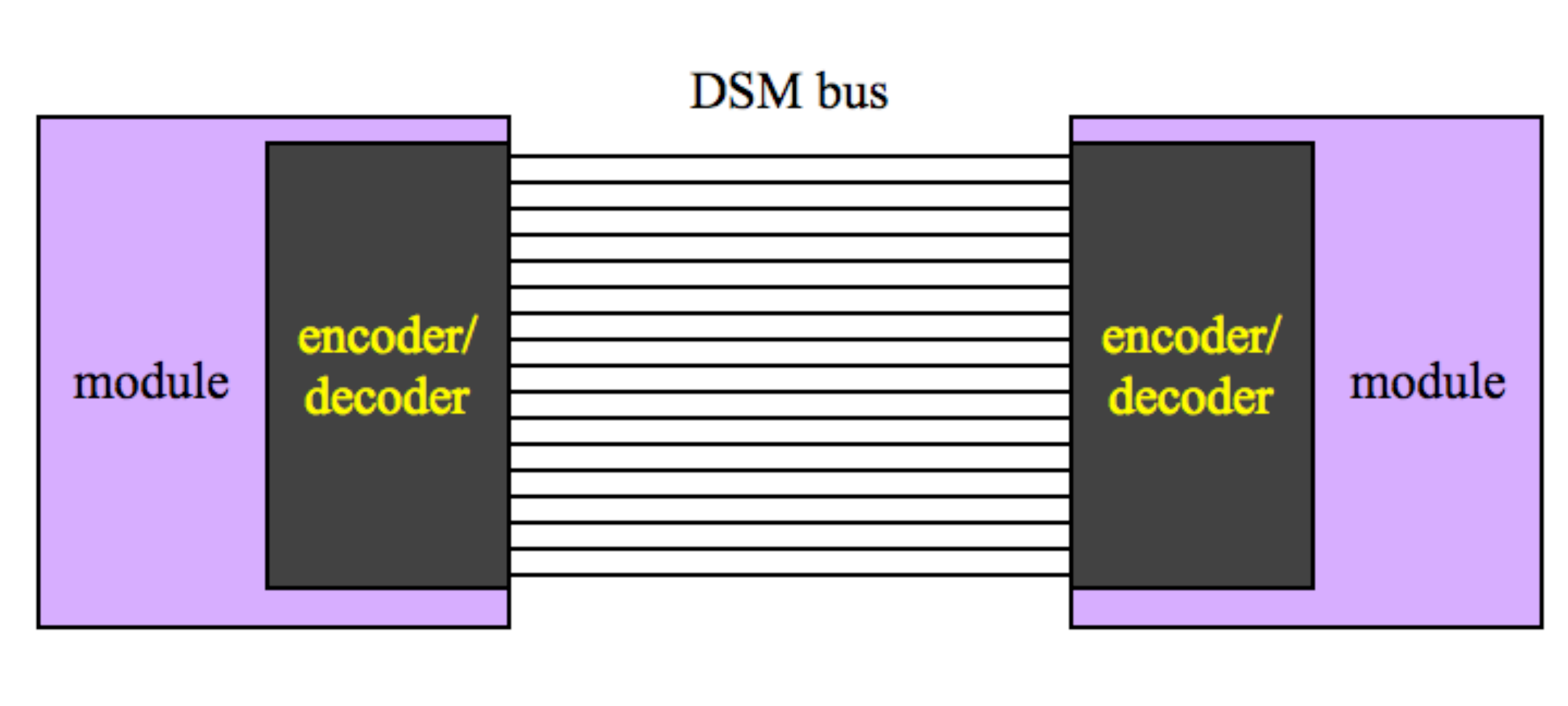}}
\vspace*{0.00mm}
\label{fig1}

\noindent
\centerline{\small {\bf Figure\,1.}
\sl Block diagram of the proposed bus architecture}
\vspace*{-0.75mm}

\subsection{Related Work}
\vspace{-0.00ex}
\label{RelatedWork}

\noindent\looseness=-1
Numerous encoding techniques have been 
proposed in the literature
\cite{SoCh01,SoCh03,SWC00,SB:1995,Benini+:1997,CCL2006,KF:2001,PO:2004,Wang+:2006,Wang+:2007,SMM:2007,Calimera+:2008}
in order to reduce the overall power dissipation
consumption of both on-chip and off-chip buses.
It is known to be of importance from practical point of view for over thirty years and
optimization of the related integrated circuits were considered by the electric companies~\cite{Fle87}.
It is well established
\cite{Wang+:2006,Wang+:2007,MPS:1998,Chiangetal:2001,SotiriadisChandrakasan:2002,SundaresanMahapatra:2005}
that bus power is directly~proportional
to the product of line capacitance and the average number of
signal state transitions on the bus wires. Thus the general idea is
to encode~the data transmitted over the bus so as~to~reduce the average
number of transitions.
For example, the ``bus-invert'' code of \cite{SB:1995}
potentially complements the data on all the wires,
according to the Hamming distance between consecuti\-ve transmissions,
thereby ensuring that the total number of state 
transitions on $n$ bus wires never exceeds $n/2$. Unfortunately,
encoding techniques designed to minimize
power consumption, do {not} directly address
peak temperature minimization. In order to reduce the 
temp\-erature of a wire, it is not sufficient to minimize its
average switching activity. Rather, it is necessary to control
the \emph{tempo\-ral distribution of the state 
transitions} on the wire. To~reduce the peak temperature of an
interconnect, it is necessary to exercise such control for
\emph{all} of its constituent wires.

In \cite{Wang+:2006,Wang+:2007}, the authors propose a
\emph{thermal spreading} approach. They present an efficient
encoding scheme that evenly spreads the switching activity among all
the bus wires, using a simple architecture consisting
of a shift-register and a crossbar logic. This is designed
to avoid the situation where a few wires get 
hot while the majority are at a lower temperature.
This spreading approach is further extended in \cite{Calimera+:2008,SMM:2007}
using on-line monitoring of the switching activity on all the wires.
Thermal spreading can be regarded as an attempt to control
peak temperature indirectly, by equalizing the distribution
of signal transitions over all the 
wires.

Finally, analysis from information theory point of view, which is highly related to our work,
including solutions with data compression are given in several papers, e.g.~\cite{KLS09,RSH99,STC03}.

\subsection{Our Contributions}
\vspace{-0.00ex}
\label{Contributions}

\noindent\looseness=-1
As technology continues to scale, 
existing methods may fall short
of fully addressing the thermal challenges posed by 
high-performance interconnects in deep submicron (DSM) circuits.
In this paper, we introduce new~efficient
coding schemes 
that simultaneously 
control both the \emph{peak temperature} and the
\emph{average power~consump\-tion} of interconnects.

The proposed coding schemes are distinguished from existing
state-of-the-art
by having some or all of the following features:

\vspace{-1.00ex}

\sl
\begin{list}{}
{
\addtolength{\leftmargin}{-2.00ex}
\setlength{\rightmargin}{\leftmargin}
}
\item\noindent
\begin{itemize}
\item[\bf A.\hspace*{-1pt}]
$\!$We directly control the peak temperature of  \vspace{-0.10ex}
a bus~by~effectively \vspace{-0.10ex}
cooling its hottest wires. This is achieved by avoiding \vspace{-0.10ex}
state transitions on the hottest wires for as long as necessary
until their temperature decreases.~~
\vspace{0.850ex}
\item[\bf B.\hspace*{-1pt}]
$\!$We reduce the overall power \vspace{-0.10ex} dissipation
by guarantee\-ing that the total number \vspace{-0.10ex} of
transitions on the bus wires is below a specified threshold
in every transmission.
\vspace{0.90ex}
\item[\bf C.\hspace*{-1pt}]
$\!$We combine
properties\kern1.0pt\ {\bf A} \kern0.5ptand/or\kern1.5pt\ {\bf B}
\vspace{-0.10ex}
with~coding 
for~improved reliability \vspace{-0.10ex}
(e.g.,~for~low-swing signaling),  
using existing error-correcting codes.\hspace*{-1.50pt}
\end{itemize}
\end{list}
\vspace{1.00ex}
\em

To achieve these desirable features, we propose to insert~at
the interface of the bus specialized circuits implementing
the encoding and decoding functions, denoted herein by $\cE$ and $\cD$,
respectively. This is illustrated in Figure\,1.
The various coding schemes introduced in this paper
employ tools from various fields such as combinatorics, graph theory,
block designs, discrete
geometry, linear algebra, and the theory of error-correcting codes.
Nonetheless, in each~case the
\emph{resulting encoders/decoders $\cE$ and $\cD$ are efficient:}
they do not require significant computational overhead and can be
implemented without sacrificing a large circuit area.
This is especially true for Property\,{\bf A}, 
where the complexity of encoding and decoding scales linearly
with the number of wires.~~ 

\looseness=-1
We 
consider both adaptive and nonadaptive (memoryless) coding schemes.
The advantage of nonadaptive schemes is that they are easier to implement
and do not require 
memory. The disadvantage is that it is not possible to 
implement Property\,{\bf A} with nonadaptive encoding. For this reason,
most of the coding schemes developed in this paper will be adaptive,
based on the idea of \emph{differential encoding}.
Notably, however, all of our schemes require the encoder and decoder
circuits 
to keep track of \emph{only one} (the most recent) previous
transmission.

\looseness=-1
Unlike the thermal spreading methods of \cite{Wang+:2006,Wang+:2007,SMM:2007}
that lead~to irredundant coding schemes, the solutions we propose do
introduce redundancy: we require $n > k$ wires to encode a given
$k$-bit bus. A key consideration in this situation 
is the \emph{area overhead due to the additional $n-k$ wires}.
Therefore, it is important to determine the theoretically minimum
possible number of wires~$n$ needed to encode $k$ bits while
satisfying the desired properties. We provide full theoretical
analysis in each~case.
We moreover show that 
the number of additional wires requir\-ed to satisfy
Property\,{\bf A} becomes negligible 
when $k$ is large.~~

\subsection{Thermal Model}
\vspace{-0.00ex}
\label{Model}

\noindent\looseness=-1
Chiang, Banerjee, and Saraswat~\cite{Chiangetal:2001} came up with
an analytic model that characterizes thermal effects due to Joule heating
in high-performance Cu/low-k interconnects, under both steady-state
and transient stress conditions. Shortly thereafter,~Soti\-riadis
and Chandrakasan~\cite{SotiriadisChandrakasan:2002} gave a power
dissipation model for DSM buses.
These two models, accounting for thermal
and power effects separately, were later unified and refined
by Sundaresan and Mahapatra~\cite{SundaresanMahapatra:2005}.
Finally, building upon this work, 
Wang, Xie, \kern-1ptVijaykrishnan, and Irwin \cite{Wang+:2006} 
proposed~a~more~accurate thermal-and-power model for DSM buses.
In all these papers, an $n$-bit bus
(illustrated in Figure\,2)~is~modeled in terms


\vspace{0.8cm}

\centerline{%
\hspace*{0.00mm}\includegraphics[width=2.65in]{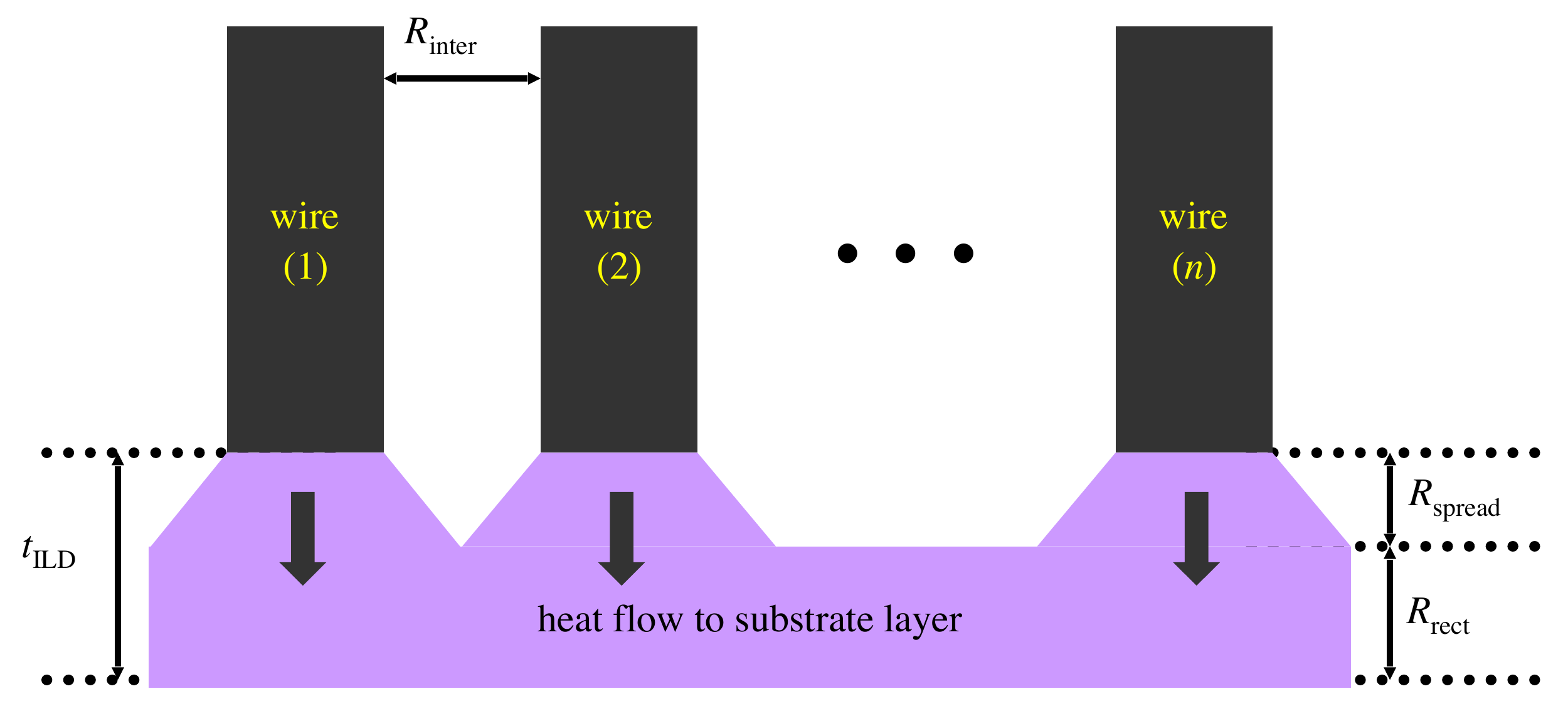}}
\vspace*{1.50mm}
\label{fig2}

\noindent
\centerline{\small {\bf Figure\,2.} \sl
Geometry used for calculating $R_{\rm spread}$, $R_{\rm rect}$, and $R_{\rm inter}$}

\vspace{0.75mm}

\centerline{%
\hspace*{-2.00mm}\includegraphics[width=3.15in]{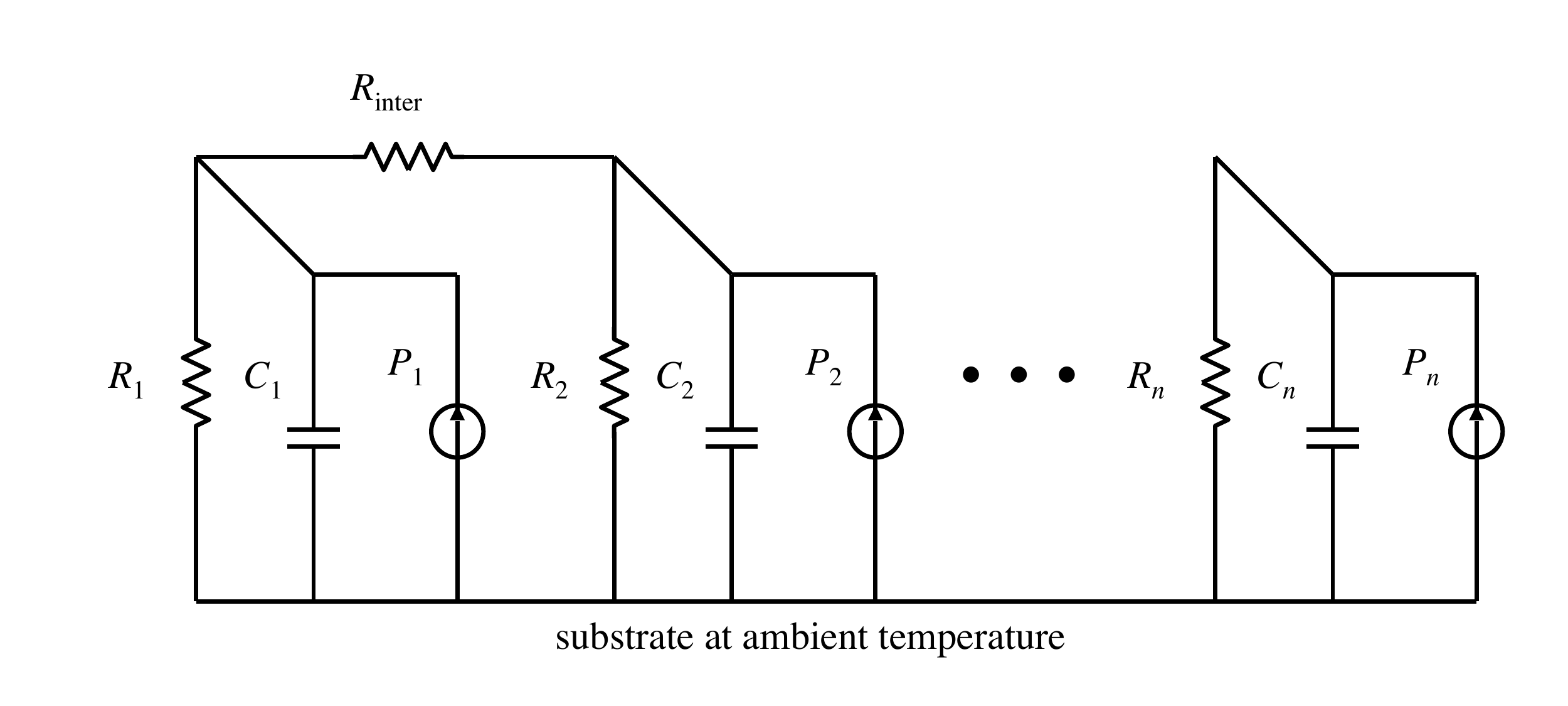}}
\vspace*{0.00mm}
\label{fig3}

\noindent
\centerline{\small {\bf Figure\,3.}
\sl Equivalent thermal RC-network for a $k$-bit bus}
\vspace*{-0.00mm}

\noindent
of the equivalent thermal-RC network 
in Figure\,3. Sunda\-resan and Mahapatra~\cite{SundaresanMahapatra:2005}
show that this thermal-RC network is governed by the following
differential equations:\vspace{0.50ex}
\begin{align}
\label{edge1}
P_1 & =\,
C_1 \frac{\partial\theta_1}{\partial t} + \frac{\theta_1-\theta_0}{R_1} +
\frac{\theta_1-\theta_2}{R_{\rm inter}}~,
\\[0.75ex]
\label{edge2}
P_n & =\,
C_k \frac{\partial \theta_k}{\partial t} + \frac{\theta_k-\theta_0}{R_k} +
\frac{\theta_k-\theta_{k-1}}{R_{\rm inter}}~,
\text{~~and}
\\[0.75ex]
\label{middle}
P_i & =\,
C_i \frac{\partial \theta_i}{\partial t} + \frac{\theta_i-\theta_0}{R_i} +
\frac{2\theta_i-\theta_{i-1}-\theta_{i+1}}{R_{\rm inter}}~,
\end{align}
for $i = 2,3,\ldots,n{-}1$, \looseness=-1
where $P_i$ is the instantaneous power~dissipated by wire $i$,
$C_i$ is the thermal capacitance per unit length of wire $i$,
$R_i=R_{\rm spread}{+}R_{\rm rect}$ is the thermal resistance per~unit
length of wire $i$ along the heat transfer path downwards,
$R_{\rm inter}$ is the lateral thermal resistance used to account
for the parallel thermal coupling effect between the wires,
$\theta_i$ is the temperature of wire $i$, and
$\theta_0$ is the substrate ambient temperature.

\looseness=-1
In any bus model for which \eq{edge1}\,--\,\eq{middle} hold,
the temperature~of a wire will increase whenever the wire
undergoes~a~state~transition; conversely, in the absence
of state transitions, the temperature will gradually decrease.
We let $\sigma_i$ denote the switching activity of wire $i$,
which is the number of times the wire changes state. Then
the power dissipated by a bus is determin\-ed by
its \emph{total switching activity} 
$\sigma_1 + \sigma_2 + \cdots + \sigma_n$.~~

In order to directly control the peak temperature of a bus~by
avoiding transitions on its hottest wires, \kern-1ptwe need to
know~\emph{which wires are the hottest} at every transmission.
There are two general ways to obtain this information.
We can use an analytical model~\cite{SundaresanMahapatra:2005}, such as \eq{edge1}\,--\,\eq{middle},
to estimate the current temperatures of the wires. For each
wire, such an estimate can be implemented with a
counter that is incremented on transition and decremented
on non-transition, where the precise magnitude of the
increments/decrements is determined by the model.
Alternatively, we can have actual temperature sensors
for each wire. For DSM buses, accurate temperature sensing
can be implemented using, for example,
ring oscillators~\cite{DB:008}.
As shown in~\cite{DB:008},
sensors based on ring oscillators
provide a resolution of 1\textdegree C while consuming
an active power of only $65$--$112\mu W$.~~

\subsection{Organization}

\noindent
The rest of this paper is organized as follows.
We begin in the next section
with a precise formulation of the coding problems that result
from the thermal-management features we propose to implement --- namely,
Properties {\bf A}, {\bf B}, and {\bf C}.
In Section\,\ref{anticodes},
we present a nonadaptive coding scheme that combines
Property\,{\bf B} (reducing the average power dissipation)
with~the~thermal spreading approach of~\cite{Wang+:2006}.
Our constructions in Section\,\ref{anticodes}
are based on the notions of \emph{anticodes} and \emph{quorum systems},
and use key results from the theory of combinatorial designs.
Section\,\ref{spreads} is devoted to Property\,{\bf A}:
we show how state transitions on the $t$ hottest wires
can be avoided by using only $t+1$ additional
bus lines. This optimal construction is based on
combining \emph{differential coding} with the notion
of \emph{spreads} and \emph{partial spreads} in projective geometry.
The optimal construction can be applied when $t+1 \leq (n+1)/2$.
When $t+1 > (n+1)/2$ we use another technique from the theory
of error-correcting codes to construct efficient codes.
The designed codes can be viewed as sunflowers, while
the partial spreads, are also sunflowers, and hence can be also viewed
as a special case of these codes.
The technique used is generalized with the notion of
generalized Hamming weights.
In Section\,\ref{sec:LPCC}, we show how
Properties~{\bf A} and {\bf B}
can be all achieved \emph{at the same time}. That is,
we design coding schemes that simultaneously control peak temperature
and average power consumption in every transmission.
For this purpose, we present three types of constructions.
The first construction is based upon
the \emph{Baranyai theorem}~on complete hypergraph decomposition into pairwise
disjoint perfect matchings. The second construction is based on concatenation
of low weights codes based on appropriate non-binary dual codes or non-binary
partial spreads. The third construction is the previous sunflower construction,
which also satisfy Property\,{\bf B}.
Section\,\ref{sec:ECC} is devoted to codes which satisfy Property\,{\bf C}
simultaneously with either Property\,{\bf A} or Property\,{\bf B} or both,
i.e. we add also correction for possible transmission errors on the bus wires.
The constructions in this section will also be of two types. The first type of
constructions is based on resolutions in block design. The second type of constructions
will be to employ the previously given constructions, where our set of transitions
is restricted to the set of codewords in a given error-correcting code.
In all these sections our bounds and constructions are applied for infinite sets
of parameters, but there is no asymptotic analysis in any of these cases.
The asymptotic analysis is postponed to
Section\,\ref{sec:asymptotic}, where the asymptotic behavior
of our codes is analyzed. In particular we analyze area overhead of our constructions and prove
that when $k$ is large enough, the additional number of wires required to satisfy the desired
properties is negligible.
Finally, Section\,\ref{sec:conclude} summarizes our comprehensive work and presents a
brief discussion of possible directions for future research.
We would like to remark that this work is mainly of theoretical nature. A follow up work
will present more constructions, especially for practical parameters. Examples and numerical
experiments will be given and also comparison between the various constructions with emphasis
on practical parameters. It will be illustrated and discussed how practical our methods
and constructions are.

\vspace{2.00ex}
\section{Problem Formulation and Preliminaries}
\label{sec:Preliminaries}
\vspace{0.50ex}

\noindent\looseness=-1
Let us now elaborate upon Properties {\bf A}, {\bf B}, {\bf C}
introduced in the previous section. For each of these properties,
we will charac\-terize the performance of the corresponding coding
scheme~by a \emph{single integer parameter}.
All of our 
coding~schemes will use $n > k$ wires to encode a 
$k$-bit bus. \kern-1ptWe assume that communication~across the bus is
synchronous,
occurring in clocked cycles called~\emph{transmissions}.
This leads to the following definition.~

\begin{definition}
Consider a coding scheme for communication over a bus
consisting of $n$ wires. Let $t$, $w$, $e$ be positive
integers less than $n$. We say that the coding scheme
has\vspace{0.25ex}
\begin{description}
\item[]
\hspace*{-11mm}\textup{\bfsl Property\,${\bf A}(t)$}{\bf:}
if every transmission does not cause state transitions on the $t$ hottest wires;\vspace{0.5ex}

\item[]
\hspace*{-11mm}\textup{\bfsl Property ${\bf B}(w)$}{\bf:}
if the total number of state transitions on all
the wires is at most $w$, in every transmission;\vspace{0.5ex}

\item[]
\hspace*{-11mm}\textup{\bfsl Property ${\bf C}(e)$}{\bf:}
if up to $e$ transmission errors
\textup{(}$0$ received as~$1$,
or\/ $1$ received as $0$\textup{)}
on the $n$ wires can be corrected.\vspace{0.25ex}
\end{description}
We presume that, at the time of transmission, it is known
which $t$ wires are the hottest; Property\,${\bf A}(t)$ is
required to hold assuming that\/ {\bfit any} $t$ wires can be
designated as the hottest. 
\end{definition}

The values of $t$, $w$, $e$ are \emph{design parameters},
to be determin\-ed by the specific thermal requirements of
specific interconnects. The proposed coding schemes will work for
various values of $t$,~$w$,~and~$e$. Nevertheless, it might be helpful to think
of $t$ as a small constant, since significant reductions in
the peak temperature can be achieved by cooling only a few
of the hottest wires. Thus, the most important values of $t$ are small ones,
say, less than half of the bus wires. But, our constructions in the following
sections will consider also solutions for large values of $t$, specifically,
any value of $t$. Similarly, $w$ is also usually small since large $w$ means a large
number of state transitions on all the wires, which might result in too many hot wires.
Finally, we also expect $e$ to be small, especially as it must be smaller than $w/2$ as otherwise
we won't be able to correct the errors.

Codes which satisfy Properties
$\A(\kern-0.5ptt\kern-0.5pt)$,\kern1pt$\B(\kern-0.5ptw\kern-0.5pt)$, and
${\bf C}(\kern-0.5pte\kern-0.5pt)$
simultaneously in every transmission, will be called
{\dfn $(n,t,w,e)$-low-power error-correcting cooling codes}
or $(n,t,w,e)$-LPECC codes for short. When a nonempty meaningful subset of the
three properties (property ${\bf C}(\kern-0.5pte\kern-0.5pt)$
is not interesting alone in our context) will be satisfied, only the parameters and description
related to this subset of properties will remain in the name. For example,
$(n,t,e)$-LPEC codes stands for $(n,t,e)$-low-power error-correcting codes.
Six such nonempty subsets exist and for each one we suggest coding schemes
and related codes. It the conclusion of Section~\ref{sec:conclude} a pointer
will be given, where each one of these six subsets was considered.

\vspace{1.00ex}
We view the collective state of
the $n$ wires during each~transmission
as a binary vector $\xxx = (x_1,x_2,\ldots,x_n)$.
The set of all such binary vectors is the \emph{Hamming $n$-space}
$\HH(n)=\{0,1\}^n$.
We will identify $\HH(n)$ with the vector space $\Fn$. Given any
$\xxx,\yyy \,{\in}\, \Fn$,~the
\emph{Hamming distance} $d(\xxx,\yyy)$ is the number of
positions~where $\xxx$ and $\yyy$ differ.
The \emph{Hamming weight}
of a vector $\xxx\,{\in}\,\Fn$,~denoted $\wt(\xxx)$,
is the number~of nonzero positions in $\xxx$.

Conventionally, a binary \emph{code \kern1pt$\C$ of length $n$} is simply
a~subset of $\Fn$. The elements of $\C$ are called \emph{codewords}.
Given a code~$\C$, its \emph{minimum distance} $d(\C)$ and
\emph{diameter} $\diam(\C)$ are defined as follows:
$$
d(\C) ~\deff~ \min_{\xxx,\yyy\in\C} d(\xxx,\yyy)
\hspace{3.00ex}\text{and}\hspace{3.00ex}
\diam(\C) ~\deff~ \max_{\xxx,\yyy\in\C} d(\xxx,\yyy)~.
$$
Later, in Sections \ref{spreads} and \ref{sec:LPCC}, we will need to modify
and generalize this conventional definition of binary codes in an important way.
This modification will be needed for codes which satisfy Property\,${\bf A}(t)$.

\vspace{2.00ex}
\section{Nonadaptive Low-Power Codes}
\label{anticodes}
\vspace{0.50ex}

\noindent\looseness=-1
The encoding schemes considered in this section belong to the
\emph{nonadaptive} kind, in that the choice which codeword to transmit
across the bus in the current transmission does not depend on codewords
that have been transmitted earlier. Such coding schemes are also
known as \emph{memoryless}.
The advantage of nonadaptive schemes is that they are 
simpler to implement: they do not need a continuously changing data model,
and they do not require memory to track the history of previous
transmissions.

In the nonadaptive case, an \emph{$n$-bit coding scheme}
for~a~source $\sS\subseteq\Fk$ is a triple
$\E= \langle \C,\cE,\cD \rangle$, where
\begin{enumerate}
\item $\C$ is a binary code of length $n$,
\item $\cE{:}~ \sS \to \C$
is a bijective map called an {\em encoding function},
\item $\cD{:}~ \C \to \sS$
is a bijective map called a {\em decoding function},
such that $\cD\bigl(\cE(\uuu)\bigr) = \uuu$ for all $\uuu \in \sS$.
\end{enumerate}
Encoding and decoding circuits that implement $\cE$ and $\cD$
are~inserted at the interface of the bus (see Figure\,1).

Suppose $\uuu,\vvv \,{\in}\, \sS$ are two words that are to be
communicated across the bus during consecutive transmissions.
In the absence
of a~coding scheme, the total switching activity
of the bus is then given by $|\{i:u_i\not=v_i\}|$.
This is precisely the Hamming\
distance $d(\uuu,\vvv)$, which could be as high as $k$.
If an $n$-bit~coding scheme is used, then $\xxx=\cE(\uuu)$
and $\yyy=\cE(\vvv)$ are transmitted instead.
The resulting total switching activity of the bus
is therefore $d(\xxx,\yyy)$, which is upper bounded~by~$\diam(\C)$.

It follows that the coding scheme \emph{satisfies
Property\,${\bf B}(w)$~if\, and only if $\diam(\C) \,{\le}\, w$}.
As the power consumption~of a~bus 
is directly related
to its total switching activity, we call such a~code $\C$
an {\dfn $(n,w)$-low-power code} ($(n,w)$-LP code for short).

In this section, we are interested in $(n,w)$-LP codes
that also achieve low peak temperatures by \emph{spreading~the
switching activities} among the bus wires as uniformly as possible.
In doing so, we are following the analysis of~\cite{CCL2006,Wang+:2006,Wang+:2007}
and the resulting \emph{thermal spreading} approach~\cite{CCL2006,Wang+:2007,SMM:2007}.
In order~to~quantify the thermal spreading achieved by a given
coding scheme $\E= \langle \C,\cE,\cD \rangle$, let us
treat the source $\sS$ as a random variable taking on values in $\Fk$,
and assume that $\sS$ is uniformly distributed. This is a common assumption
in bus analysis --- see, for example, \cite{SotiriadisChandrakasan:2002}.
With this assumption, the expected switching activity of wire $i$ is
given by
\be{mu}
\mu_i
\: = \:
\frac{1}{|\sS|^2} \hspace{-0.75ex}
\sum_{\uuu,\vvv\in \sS}  \bigl| \cE(\uuu)_i - \cE(\vvv)_i \bigr|
\: = \:
\frac{2r_i (|\C|-r_i)}{|\sS|^2}
\ee
where $r_i$ is the number of codewords $(x_1,x_2,\ldots,x_n)\,{\in}\,\C$
such that $x_i=1$.
If $\mu_1,\mu_2,\ldots,\mu_n$ are all equal, we 
say that~the~code $\C$ is {\dfn thermal-optimal}, since
the expected switching activities of the bus wires are
then uniformly distributed. This leads to the following
problem:
\be{problem1}
\begin{array}{c}
\textsl{Given $n$ and $w$, determine the maximum size of
a thermal-optimal $(n,w)$-low-power code}
\end{array}
\ee
The size of $\C$ \looseness=-1 
is important because we wish to minimize
the \emph{area overhead} introduced by our coding scheme.
This overhead is largely determined by the number $n-k$ of
additional wires that we need to encode a given source $\sS\subseteq\Fk$.
Clear\-ly, to encode such a source, we need 
a code $\C$ with $|\C| \geq 2^k$.~~

It is easy to see from \eq{mu}
that $\mu_1,\mu_2,\ldots,\mu_n$ are all equal~if
and only if $r_1,r_2,\ldots,r_n$ are all equal. Hence in
a thermal-optimal code $\C$, the number of codewords
$(x_1,x_2,\ldots,x_n)\kern1pt{\in}\,\C$~having \mbox{$x_i=1$}
is the same for all $i$. Such codes are said to be {\dfn equireplicate}
in the combinatorics literature. 
To construct such codes, we will need tools from the theory
of set systems as was suggested by Chee, Colbourn, and Ling~\cite{CCL2006}.

\subsection{Set Systems}
\vspace{-0.00ex}
\label{SetSystems}

\noindent
Given a positive integer $n$, the set $\{1,2,\ldots,n\}$
is abbreviated as $[n]$. 
For a finite set $X$ and $k\leq |X|$, we
define
$$
~~2^X ~\deff~ \bigl\{A:A\subseteq X\bigr\}
\hspace{2ex}\text{and}\hspace{2ex}
{X\choose k} ~\deff~ \bigl\{A\in2^X:|A|=k\bigr\}~~
$$
\looseness=-1
A {\em set system of order $n$} is a pair $(X,\cA)$,
where $X$ is a finite~set
of $n$ \emph{points} and $\cA\subseteq 2^X$.
The elements of $\cA$ are~called~\emph{blocks}.
A set system $(X,2^X)$ is 
a {\em complete set system}.
The \emph{replication number} of 
$x\,{\in}\,X$ is the number of
blocks containing~$x$. A set system is \emph{equireplicate} if
its replication numbers are all equal.



\looseness=-1
There is a natural one-to-one 
correspondence between the Hamming $n$-space
$\Fn$ and the complete set system $([n],2^{[n]})$ of order $n$.
For a vector $\xxx = (x_1,x_2,\ldots,x_n) \,{\in}\, \Fn$,
the~\emph{support of $\xxx$} is defined as
$$
\supp(\xxx)
\,~\deff~\,
\bigl\{i \in [n] ~:~ x_i \ne 0\bigr\}
$$
With this, the positions of vectors in $\Fn$
correspond to~points
in $[n]$, each vector $\xxx\,{\in}\,\Fn$ corresponds to the block
$\supp(\xxx)$, and $d(\xxx,\yyy)=|\supp(\xxx)\,\triangle\,\supp(\yyy)|$,
where $\triangle$ stands for the symmet\-ric difference.
It follows from the above that there is
a $1$-$1$ correspondence between the set of all 
codes of length~$n$
and the set of all set systems of order~$n$.
Thus we may speak~of the \emph{set system of a code}
or the \emph{code of a set system}.

\subsection{Thermal-Optimal Low-Power Codes}
\label{sec:LP_optimal}

\noindent
The set system $([n],\cA)$
of a thermal-optimal $(n,w)$-low-power code
is defined by the following properties:
\begin{enumerate}
\item $|A_1 \,\triangle\, A_2| \leq w$ for all $A_1,A_2\in\cA$, and
\item $([n],\cA)$ is equireplicate.
\end{enumerate}
It follows that our problem in \eq{problem1} can be recast
as an equivalent problem in extremal set systems, as follows:
\be{problem2}
\begin{array}{c}
\textsl{Given $n$ and $w$, determine $T(n,w)$, the maximum size of an equireplicate set}\\
\textsl{system $(X,\cA)$ of order $n$ such that $|A_1 \,\triangle\, A_2|\leq w$ for all $A_1,A_2\,{\in}\,\cA$}
\end{array}
\ee
If the equireplication condition is removed,
the resulting set system is known
as an \emph{anticode of length $n$ and diameter~$w$}.
Hence, thermal-optimal
low-power codes are equivalent to~\emph{equi- replicate anticodes}.
Anticodes in general, and the size of anticodes of maximum size have
been a subject of intensive research in coding theory, see~\cite{AAK01,AhBl08,BuEt15,Etz11,ScEt02,MWS}
and references therein.

The determination \looseness=-1
of equireplicate anticodes of maximum size appears to be a new problem,
also to the combinatorics and coding theory communities.
However, the maximum size~of an anticode has been completely
determined by Kleitman~\cite{Kleitman:1966}, and even earlier by
Katona \cite{Katona:1964}, in a different but equivalent setting.
Thus the following theorem is from \cite{Katona:1964}
and \cite{Kleitman:1966}.
\begin{theorem}
\label{Kleitman}
Let\/ $\fT(n,w)$ be 
the maximum number of blocks in a set system $([n],\cA)$
with $|A_1 \,\triangle\, A_2|\leq w$ for all $A_1,A_2\,{\in}\,\cA$.
Then\vspace{-1.50ex}
\begin{equation*}
\fT(n,w)
\,=\,
\begin{cases}
~\displaystyle\sum_{i=0}^{w/2} \hspace{-2pt}\binom{n}{i}
&
\text{if\, $w \equiv 0 \hspace*{-1.50ex}\pmod{2}$}
\\
\displaystyle {\kern-1ptn{-}1\kern-0.5pt\choose \frac{w-1}{2}}
\,+
\sum_{i=0}^{\frac{w-1}{2}} \hspace{-2pt}\binom{n}{i}
&
\text{if\, $w \equiv 1 \hspace*{-1.50ex}\pmod{2}$}
\end{cases}~.
\end{equation*}
For all even $w$,
an extremal set system $([n],\cA)$ with $\fT(n,w)$ blocks
is given by:
\be{even-anticode}
\cA \, = \, \bigcup_{i=0}^{w/2}{\kern-0.5pt[n]\kern-0.5pt \choose i}~.
\ee
If $w$ is odd, let $x$ be any fixed element of $[n]$. Then
an extremal set system $([n],\cA)$ 
is given by:
\be{odd-anticode}
\cA
\, = \,
\bigcup_{i=0}^{\frac{w-1}{2}}\kern-2pt{\kern-0.5pt[n]\kern-0.5pt \choose i}
~~\bigcup~
\left\{
A\cup\{x\}~:~A \in\kern-1pt
{\kern-0.5pt[n]{\setminus}\{\kern-1pt x\kern-1pt\}\kern-1pt \choose \frac{w-1}{2} }
\right\}~.
\ee
\end{theorem}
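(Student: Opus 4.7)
My plan is to handle the two directions of the theorem separately: verifying that the explicit constructions \eq{even-anticode} and \eq{odd-anticode} attain the stated sizes, and proving the matching upper bound on $\fT(n,w)$.

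For achievability in the even case $w=2r$, the family $\cA=\bigcup_{i=0}^{r}\binom{[n]}{i}$ is the Hamming ball of radius $r$ around the empty block, so any $A,B\in\cA$ satisfy $|A\triangle B|\le|A|+|B|\le 2r=w$ by the triangle inequality, and $|\cA|=\sum_{i=0}^{r}\binom{n}{i}$. For odd $w=2r+1$, I would verify the diameter of \eq{odd-anticode} by a short case analysis on pairs of blocks: both of size at most $r$; both of the form $A\cup\{x\}$ with $|A|=r$; and one of each type. The worst case is the mixed one, where $C\in\binom{[n]}{r}$ with $x\notin C$ paired with $A\cup\{x\}$ for $A$ disjoint from $C$ attains symmetric difference exactly $2r+1$. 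The cardinality $\sum_{i=0}^{r}\binom{n}{i}+\binom{n-1}{r}$ matches the stated formula, since $(w-1)/2=r$.

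For the matching upper bound (Katona~\cite{Katona:1964}, Kleitman~\cite{Kleitman:1966}), my plan is a shifting argument. For each $i<j$ in $[n]$, define the compression $S_{ij}(\cA)$ that replaces $A\in\cA$ containing $j$ but not $i$ by $A^{*}=(A\setminus\{j\})\cup\{i\}$ whenever $A^{*}$ is not already in $\cA$. A direct case analysis on the two coordinates $i$ and $j$ shows that $S_{ij}$ preserves $|\cA|$ and does not increase the diameter: the only delicate situation is when $A$ is shifted but some $B\in\cA$ satisfying the shift condition is not, because $B'=(B\setminus\{j\})\cup\{i\}$ already lies in $\cA$; in that case one computes directly that $|A^{*}\triangle B|=|A\triangle B'|$, which is bounded by the original diameter. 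Iterating all $S_{ij}$'s produces a shifted family $\cA^{*}$ with $|\cA^{*}|=|\cA|$ and $\diam(\cA^{*})\le w$, reducing the problem to bounding the size of shifted anticodes.

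The principal obstacle is this final structural step: showing that a shifted family of diameter at most $w$ fits inside an extremal configuration. For even $w=2r$, one argues by contradiction: if $\cA^{*}$ contains a block of size exceeding $r$, shiftedness forces the initial segment $[r{+}1]\in\cA^{*}$, and combining this with another block of $\cA^{*}$ (itself shifted into canonical form) produces a symmetric difference exceeding $2r$. For odd $w=2r+1$ one must further show that if $\cA^{*}$ contains any block of size $r{+}1$, then all such blocks share a common element, yielding the extra $\binom{n-1}{r}$ term and no more. An alternative and in some ways cleaner route is Katona's cycle method, which bypasses compression by averaging over cyclic orderings of $[n]$ and counting, in each ordering, the blocks of $\cA$ that appear as contiguous arcs; this reduces the problem to a one-dimensional inequality on the cycle and recovers the bound without the structural extraction step.
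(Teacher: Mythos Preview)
The paper does not prove \Tref{Kleitman} at all: it states the result and attributes it to Katona~\cite{Katona:1964} and Kleitman~\cite{Kleitman:1966}, then uses it as a black box. So there is no ``paper's own proof'' to compare your proposal against; you are supplying strictly more than the paper does.

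On the substance of your sketch: the achievability verifications for \eq{even-anticode} and \eq{odd-anticode} are fine, and your compression setup (that $S_{ij}$ preserves cardinality and does not increase diameter, with the delicate case handled via $|A^{*}\triangle B|=|A\triangle B'|$) is the standard and correct opening move. The gap is exactly where you flag it. Your proposed contradiction for even $w=2r$ --- ``if $\cA^{*}$ contains a block of size exceeding $r$, shiftedness forces $[r{+}1]\in\cA^{*}$, and combining this with another block produces a symmetric difference exceeding $2r$'' --- does not work as stated: the singleton family $\{[r{+}1]\}$ is left-compressed with diameter $0$, so the mere presence of $[r{+}1]$ yields no contradiction without additional structure or a size hypothesis you have not yet earned. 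The actual Kleitman argument at this point is an induction on $n$ (splitting on membership of the last coordinate and using that the shifted family interacts well with this split), not a one-shot contradiction. Similarly, Katona's cycle method in its usual form is tailored to intersection constraints rather than diameter constraints, so invoking it as a drop-in alternative would itself require justification. If you want a self-contained proof, the cleanest route is: left-compress, then induct on $n$ by considering $\cA_0=\{A\in\cA:n\notin A\}$ and $\cA_1=\{A\setminus\{n\}:n\in A\in\cA\}$, using that for a compressed family these sublinks inherit diameter bounds that feed the induction.
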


\vspace{1.50ex}
We observe here that when $w$ is even, the extremal set system in
\Tref{Kleitman} is equireplicate. It consists of all the vectors
of length $n$ and weight at most $w/2$. Hence, we have the following
result, which solves \eq{problem1} and \eq{problem2} for even~$w$.
\begin{corollary}\vspace{-0.50ex}
$$
T(n,w)\kern1pt= \sum_{i=0}^{w/2} \hspace{-2pt}\binom{n}{i}
\hspace{3.00ex}
\text{when\, $w \equiv 0 \hspace*{-1.75ex}\pmod{2}$}~.
$$
\end{corollary}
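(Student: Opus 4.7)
The plan is to prove the two inequalities $T(n,w) \le \fT(n,w)$ and $T(n,w) \ge \fT(n,w)$ separately, and appeal to Theorem \ref{Kleitman} for the evaluation of $\fT(n,w)$.

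First, I would observe that the upper bound $T(n,w) \le \fT(n,w)$ is immediate from the definitions: any equireplicate set system of order $n$ with pairwise symmetric difference at most $w$ is, in particular, a set system with that diameter constraint, so its number of blocks is at most $\fT(n,w)$. This step is essentially free; nothing to check.

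The substantive step is to show $T(n,w) \ge \fT(n,w)$ by verifying that the extremal construction \eqref{even-anticode} supplied by Theorem \ref{Kleitman} is in fact equireplicate when $w$ is even. Let
\[
\cA \,=\, \bigcup_{i=0}^{w/2}\binom{[n]}{i}
\]
be the family of all subsets of $[n]$ of size at most $w/2$. The key observation is that $\cA$ is invariant under the natural action of the symmetric group $S_n$ on $[n]$: a permutation of $[n]$ sends an $i$-subset to an $i$-subset, hence permutes $\binom{[n]}{i}$ among itself for each $i$. Since $S_n$ acts transitively on $[n]$, the replication numbers of any two points must agree. (Equivalently, one can simply compute the replication number of a fixed point $x \in [n]$ as $\sum_{i=1}^{w/2}\binom{n-1}{i-1}$, which manifestly does not depend on~$x$.) Thus $\cA$ is equireplicate with $|\cA| = \sum_{i=0}^{w/2}\binom{n}{i} = \fT(n,w)$ blocks, which certifies $T(n,w) \ge \fT(n,w)$.

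Combining the two inequalities yields the claimed formula. I do not expect any real obstacle here: once one recognises that Kleitman's extremal family is precisely the Hamming ball of radius $w/2$, the $S_n$-symmetry argument (or the direct binomial computation) makes the equireplication property essentially tautological, and the corollary follows. The only small subtlety worth flagging is that this argument is specific to the even case, where the ball is fully symmetric; for odd $w$, the extremal family \eqref{odd-anticode} singles out a point $x$ and is therefore not equireplicate, which is why the odd case is not settled by the same reasoning.
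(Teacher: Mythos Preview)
Your proposal is correct and follows essentially the same approach as the paper: both observe that the extremal anticode \eqref{even-anticode} from \Tref{Kleitman} is equireplicate, so that $T(n,w)=\fT(n,w)$ for even $w$. You simply supply more justification (the $S_n$-symmetry argument or the direct count $\sum_{i=1}^{w/2}\binom{n-1}{i-1}$) for the equireplication claim than the paper does.
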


\looseness=-1
The situation when $w$ is odd is much more difficult. The set system
in \eq{odd-anticode} is not equireplicate.
In particular, we do not know if there exists an equireplicate
anticode of order $n$ and diameter $w$ having size $\fT(n,w)$.
Hence,~from \Tref{Kleitman}~we can derive only that for all odd $w$, we have:
\begin{equation}
\label{bounds}
\sum_{i=0}^{\frac{w-1}{2}} \kern-2pt \binom{n}{i}
\, \leq \ T(n,w) \ \leq \
 {\kern-1ptn{-}1\kern-0.5pt\choose \frac{w-1}{2}}
\,+\,
\sum_{i=0}^{\frac{w-1}{2}} \hspace{-2pt}\binom{n}{i}~.
\end{equation}
The left hand side of the equation is obtained from a code which consists
of all the vectors of length $n$ and weight at most $(w-1)/2$. The right hand side
is obtained from the upper bound on $\fT(n,w)$ given in Theorem~\ref{Kleitman}.

%
The next three propositions
establishes some exact values~of $T(n,w)$ for odd $w$.
\begin{corollary}\vspace{-1.50ex}
\label{cor:small1}
\begin{align*}
T(n,1) &= \,1 \hspace{5.10ex}\text{for\, $n \ge 2$}~.
\end{align*}
\end{corollary}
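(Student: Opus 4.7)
The plan is to establish both directions: $T(n,1) \geq 1$ and $T(n,1) \leq 1$ when $n \geq 2$. The lower bound is trivial: taking $\cA = \{\emptyset\}$ (or equivalently the set system corresponding to the single codeword $\zero$) gives an equireplicate system of order $n$ (every point has replication $0$) with diameter $0 \leq 1$, hence a valid thermal-optimal $(n,1)$-low-power code of size $1$.

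For the upper bound, I would suppose for contradiction that $\cA$ is an equireplicate set system of order $n \geq 2$ containing at least two distinct blocks $A_1, A_2$ with $|A_1 \triangle A_2| \leq 1$. Since $A_1 \neq A_2$ forces $|A_1 \triangle A_2| \geq 1$, we must have $|A_1 \triangle A_2| = 1$, so after relabeling $A_2 = A_1 \cup \{x\}$ for some $x \notin A_1$. I would then show that these two blocks alone already violate equireplication when $n \geq 2$ by a straightforward case split on the replication numbers: the point $x$ lies in exactly one block (replication $1$), any point $y \in A_1$ lies in both (replication $2$), and any point $z \in [n]\setminus A_2$ lies in neither (replication $0$). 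Since $n \geq 2$, either $A_1 \neq \emptyset$ (yielding a point of replication $2$ alongside $x$ of replication $1$) or else $A_1 = \emptyset$, in which case $A_2 = \{x\}$ and the existence of some $z \neq x$ in $[n]$ yields a point of replication $0$ alongside $x$. In every case two distinct replication numbers occur, contradicting equireplication.

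The argument is essentially bookkeeping rather than deep combinatorics, so there is no real obstacle; the only subtlety is handling the degenerate case $A_1 = \emptyset$ separately, which is why the hypothesis $n \geq 2$ is needed (for $n = 1$ the system $\{\emptyset, \{1\}\}$ would be equireplicate with diameter $1$). Combining both directions yields $T(n,1) = 1$ for all $n \geq 2$.
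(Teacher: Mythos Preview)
Your argument has a small but genuine gap. When you write that ``the point $x$ lies in exactly one block (replication $1$), any point $y \in A_1$ lies in both (replication $2$), and any point $z \in [n]\setminus A_2$ lies in neither (replication $0$),'' you are tacitly assuming that $\cA$ consists of \emph{only} the two blocks $A_1, A_2$. But you began by assuming merely that $\cA$ contains \emph{at least} two blocks; any further blocks would also contribute to the replication counts, and the values $0,1,2$ you quote would no longer be the actual replication numbers, so the contradiction is not yet established.

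The fix is one line: any anticode of diameter $1$ has at most two blocks. Indeed, if $A_1, A_2, A_3$ were three distinct blocks with all pairwise symmetric differences of size $1$, then $A_2 = A_1 \,\triangle\, \{a\}$ and $A_3 = A_1 \,\triangle\, \{b\}$ for some $a \neq b$, whence $|A_2 \,\triangle\, A_3| = |\{a,b\}| = 2$, a contradiction. (Equivalently, invoke $\fT(n,1)=2$ from \Tref{Kleitman}.) Once $|\cA| \le 2$ is in hand, your case analysis for $|\cA|=2$ is correct and the proof goes through exactly as you wrote it. The paper's own proof is extremely terse and phrases the same obstruction slightly differently (two distinct weight-one vectors are at distance $2$), but the underlying content is identical once this missing step is supplied.
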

\begin{proof}
Since the distance between two different vectors
of length $n$ and weight one is two, it follows that $T(n,1)=1$ when $n \geq 2$.
A code with maximum size consists
of the unique all-zero vector of length $n$.
\end{proof}

\pagebreak

\begin{proposition}\vspace{-1.50ex}
\label{prop:small_n-1}
\begin{align*}
T(n,n{-}1) &= \,2^{n-1} \hspace{1.60ex}\text{for\, $n \ge 3$} \hspace{1.50ex}~.
\end{align*}
\end{proposition}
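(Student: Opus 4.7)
The plan is to prove $T(n,n-1) = 2^{n-1}$ by matching upper and lower bounds. For the upper bound, I would observe that any two complementary vectors in $\Fn$ are at Hamming distance exactly $n > n-1$, so an anticode of diameter at most $n-1$ contains at most one vector from each of the $2^{n-1}$ complementary pairs; this immediately gives $T(n,n-1)\le 2^{n-1}$. (This is consistent with Kleitman's formula in \Tref{Kleitman}, which evaluates to $\fT(n,n-1)=2^{n-1}$ for both parities of $n$, via $\binom{n}{n/2}=2\binom{n-1}{n/2-1}$ when $n$ is even.)

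For the lower bound, I would exhibit an equireplicate anticode of size $2^{n-1}$ and diameter at most $n-1$, splitting into cases by the parity of $n$. When $n$ is odd, the even-weight code $\cA=\{\vvv\in\Fn : \wt(\vvv)\equiv 0\pmod 2\}$ works: it has $2^{n-1}$ codewords; the complement of any even-weight vector has odd weight $n-\wt(\vvv)$, so $\cA$ contains no complementary pair and therefore $\diam(\cA)\le n-1$; and $\cA$ is invariant under all coordinate permutations, so it is trivially equireplicate.

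When $n$ is even, the even-weight trick fails, because the complement of an even-weight vector is again even-weight. Here I would use the asymmetric parity-check code
$$
\cA \,=\, \bigl\{\vvv\in\Fn \,:\, v_1+v_2+\cdots+v_{n-1}\equiv 0 \pmod 2\bigr\},
$$
which has size $2^{n-1}$. Since $n-1$ is odd, for $\vvv\in\cA$ the complement $\bar\vvv$ satisfies $\bar v_1+\cdots+\bar v_{n-1}=(n-1)-\sum_{i<n}v_i\equiv 1\pmod 2$, so $\bar\vvv\notin\cA$ and hence $\diam(\cA)\le n-1$. A direct count shows that each coordinate $i<n$ is a $1$ in exactly $2^{n-2}$ codewords (fixing $v_i=1$ leaves one bit of the parity determined among the other $n-2$ first-block coordinates, and $v_n$ is free), while coordinate $n$ is a $1$ in $2^{n-2}$ codewords (half of $\cA$); hence $\cA$ is equireplicate. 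The main obstacle is precisely this $n$-even case: one must break the fully symmetric construction just enough to avoid complementary pairs without destroying the equireplicate property, and the asymmetric parity check on the first $n-1$ coordinates is the cleanest way I see to achieve both simultaneously.
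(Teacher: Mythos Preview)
Your proof is correct and essentially matches the paper's approach: the same complement-pair upper bound, the same even-weight code for odd $n$, and for even $n$ your parity-check code $\{v:\sum_{i<n}v_i\equiv 0\}$ is exactly the code the paper builds recursively as $\{X:X\in\cA\}\cup\{X\cup\{n\}:X\in\cA\}$ from the even-weight code $\cA$ on $[n-1]$, just described directly rather than inductively.
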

\begin{proof}
When the distance between two codeword is at most $n-1$, the code cannot contain
two complement codewords and hence its size is at most $2^{n-1}$. For odd $n$, an
equireplicate set system of size $2^{n-1}$ is obtained from all vectors of length $n$
and even weight. For even $n$, we give the following construction (which also
works for any odd $n \geq 5$ if induction is
applied). Let $([n-1],\cA)$ be a set system which attains $T(n-1,n-2)=2^{n-2}$ and each
element is contained in $2^{n-3}$ blocks.
We define the following set system $([n],\cB)$.
$$
\cB \,~\deff~\, \{ X \cup \{n\} ~:~ X \in \cA \} \bigcup \{ X  ~:~ X \in \cA \} ~.
$$
We claim that $\cB$ is an equireplicate set system which attains $T(n,n-1)=2^{n-1}$
and each element is contained in $2^{n-2}$ blocks of $\cB$.
Clearly, $| \cB | = 2 |\cA| = 2^{n-1}$ and the fact that $\cA$ does not contain
complement blocks immediately implies that also $\cB$ does not contain complement blocks.
Finally, $\cA$ is equireplicate set system of size $2^{n-2}$ and each element of $[n-1]$ is contained
in $2^{n-3}$ blocks of~$\cA$. Therefore,
each $i \in [n]$ is contained in $2^{n-2}$ blocks in $\cB$.
Hence, $\cB$ is an equireplicate set system which attains $T(n,n-1)=2^{n-1}$.
\end{proof}

\begin{proposition}\vspace{-1.50ex}
\label{prop:small3}
\begin{align*}
T(n,3) &= \,n+1 ~~\text{for\, $n \ge 5$}~.
\end{align*}
\end{proposition}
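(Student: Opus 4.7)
The plan is to establish both inequalities. For the lower bound $T(n,3) \geq n+1$, I would exhibit the set system $\cA_0 = \{\emptyset, \{1\}, \{2\}, \ldots, \{n\}\}$: it has $n+1$ blocks, any two of them are at symmetric difference at most $2$, and each point $i \in [n]$ lies in exactly one block, so $\cA_0$ is an equireplicate anticode of diameter at most $3$.

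For the upper bound $T(n,3) \leq n+1$, I would argue by contradiction. Suppose $\cA$ is equireplicate with $m := |\cA| \geq n+2$ blocks, common replication $r$, and diameter at most $3$. The main tool is the double counting identity
$$
\sum_{A \neq B \in \cA} |A \triangle B| \;=\; 2\sum_{i=1}^{n} r_i(m - r_i) \;=\; 2nr(m-r),
$$
which together with $|A \triangle B| \leq 3$ yields $2nr(m-r) \leq 3m(m-1)$. The cases $r = 0$ and $r = m$ trivially give $m = 1$, contradicting $m \geq n+2$.

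Next I would treat $r = 1$ (and, by complementation, also $r = m-1$). In this case every point of $[n]$ lies in a unique block, so the non-empty blocks of $\cA$ form a partition of $[n]$, while $\emptyset$ may optionally be adjoined. Since disjoint blocks satisfy $|A \triangle B| = |A| + |B|$, the diameter constraint forces $|A| + |B| \leq 3$ for every pair of non-empty parts. This rules out two parts of size $\geq 2$ coexisting and any part of size $\geq 3$ coexisting with a singleton, so the partition is either $n$ singletons or one $2$-set together with $n-2$ singletons. Adjoining $\emptyset$ we obtain at most $n+1$ blocks, contradicting $m \geq n+2$.

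The remaining case is $2 \leq r \leq m - 2$, where $r(m-r) \geq 2(m-2)$; substituting into the double-counting inequality yields $4n(m-2) \leq 3m(m-1)$, which for $m = n+2$ reduces to $n^2 - 9n - 6 \leq 0$ and fails for $n \geq 10$. The main obstacle is the narrow range $5 \leq n \leq 9$, which I would dispatch by a structural argument. Fix $A_0 \in \cA$ of minimum weight and consider the translated family $\cA' := \{A \triangle A_0 : A \in \cA\}$, which contains $\emptyset$ and lies inside the Hamming ball of radius $3$. The diameter constraint then restricts $\cA'$ sharply: the $2$-blocks of $\cA'$ are pairwise intersecting (hence form a star or a triangle); the $3$-blocks of $\cA'$ pairwise share exactly two elements (hence form a pencil through a common pair or the four $3$-subsets of a fixed $4$-set); and every $1$-block of $\cA'$ is contained in every $3$-block. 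The equireplication of $\cA$ translates into a ``bi-equireplication'' of $\cA'$, with replication $m - r$ on the $|A_0|$ elements of $A_0$ and $r$ on the remaining $n - |A_0|$ elements; a finite check over the admissible decompositions of $\cA'$ by block size and over $|A_0|$ rules out every configuration for $n \in \{5, 6, 7, 8, 9\}$, completing the contradiction.
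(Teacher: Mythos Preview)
Your lower-bound construction and the treatment of the cases $r \in \{0, 1, m-1, m\}$ are fine. The gap is in the case $2 \leq r \leq m-2$ for $n \geq 10$: you derive the necessary condition $4n(m-2) \leq 3m(m-1)$ and then check that it fails at $m = n+2$, but $m := |\cA|$ is only assumed to satisfy $m \geq n+2$, not $m = n+2$. Because the right-hand side is quadratic in $m$ while the left is linear, the inequality is actually \emph{satisfied} once $m$ is moderately large; e.g.\ for $n = 10$ and $m = 13$ one has $4 \cdot 10 \cdot 11 = 440 \leq 468 = 3 \cdot 13 \cdot 12$, so no contradiction arises. Even after invoking the Kleitman bound $m \leq \fT(n,3) = 2n$, the whole range roughly $(\tfrac{4n}{3},\,2n]$ remains unaddressed, and you cannot reduce to $m = n+2$ by monotonicity, since deleting a block from an equireplicate family typically destroys equireplication.

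The paper's argument avoids counting altogether. It shows directly that for $n > 5$ an equireplicate anticode of diameter at most $3$ cannot contain two codewords at Hamming distance exactly $3$: fixing such a pair, one partitions the eight possible $3$-bit suffixes on the differing coordinates into two groups of four strings at mutual distance $2$, forces all prefixes within each group to live in a single Hamming ball of radius $1$, and concludes that at least $n-5$ of the prefix coordinates are constant across the entire code---contradicting $1 \leq r \leq |\cA|-1$. Hence the diameter is at most $2$, and the Kleitman value $\fT(n,2) = n+1$ gives the upper bound (with $n=5$ handled separately). Your own structural sketch for $5 \leq n \leq 9$ is in the same spirit; the fix is to run an argument of that type for all $n$ rather than falling back on the double count.
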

\begin{proof}
First note, that by Proposition~\ref{prop:small_n-1} we have $T(4,3)=8$. Also,
$T(n,2)= \sum_{i=0}^1 \binom{n}{i} =n+1$ for all $n \geq 2$.
Finally, it is easy to verify that $T(5,3)=6$,
These facts will be used in the current proof that $T(n,3)=n+1$ if $n>4$.
Let $\C$ be the largest possible equireplicate anticode of length $n>5$ and
diameter 3.

Let $\xxx$ and $\zzz$ be two codewords of $\C$ such that $d(\xxx,\zzz)=3$.
W.l.o.g. $\xxx$ and $\zzz$ differ in the last three coordinates
and the first $n-3$ coordinates in both have $x_1,x_2,...,x_{n-3}$.

Let $\alpha \beta \gamma$ and $\bar{\alpha} \bar{\beta} \bar{\gamma}$
be the values of the last three coordinates in $\xxx$ and $\zzz$, respectively.
There is no other codeword in $\C$
which ends with either $\alpha \beta \gamma$ or $\bar{\alpha} \bar{\beta} \bar{\gamma}$
since such a codeword should also start with
$x_1,x_2,...,x_{n-3}$, to avoid distance greater than 3 from either $\xxx$ or $\zzz$,
and hence such a codeword will be equal to either $\xxx$ or $\zzz$.
Since each one of the last three columns has at least one \emph{zero}
and at least one \emph{one} and the anticode is equireplicate, it follows that the weight of a column
is at least $1$ and at most $|\C|-1$.
The Hamming distance of any two of 100, 010, 001, 111 is 2 and hence
the prefixes of length $n-3$ related to the codewords ending with these suffices
differ in at most one coordinate. Since anticode with diameter one
has at most two codewords, it follows that all codewords with these suffices (if differ) have
different values in exactly the same coordinate (in the prefix of length $n-3$).
The same argument holds also for the codewords ending with 011, 101, 110, and 000
(they have the same values in $n-4$ out of the first $n-3$ coordinates).
Note that since the suffix of either $\xxx$ or $\zzz$ is in $\{100,010,001,111\}$
and the other suffiex is in $\{011,101,110,000\}$,
it follows that all the other $n-5$ coordinates (which don't have different
values) of $\xxx$ and $\zzz$ have the same values for all the codewords.
Since $n>5$, it follows that each one of the
$n-5$ coordinates (which exist) forms either a column of \emph{zeroes} or a column of \emph{ones}.
If the column consists of \emph{zeroes} we have a contradiction since
the weight of the column is 0 which is smaller than $1$.
If the column consists of \emph{ones} we have a contradiction since
the weight of the column is $|\C|$ which
is greater than $|\C|-1$.

Thus, for $n>5$ there are no two codewords for which the Hamming distance is three.
and hence for $n>6$, $T(n,3)=T(n,2)=n+1$, which completes the proof.
\end{proof}

For other odd values of $w$, we start with the extremal anticode $\cA$
of diameter $w-1$ in \eq{even-anticode} and add blocks to $\cA$ while
maintaining the equireplication requirement. Such blocks must contain
exactly $(w{+}1)/2$ points to make sure that their distance with the
blocks of $\cA$ with $(w-1)/2$ points, or less, will not exceed $w+1$.
Any two blocks with $(w{+}1)/2$ points must intersect in at least one point as otherwise
their distance will be $w+1$.
Interestingly, these properties precisely define
a \emph{regular uniform quorum system} of rank $(w{+}1)/2$.
Quorum systems have been studied extensively in the literature
on fault-tolerant and distributed computing --- see~\cite{Vukolic:2012}
for a recent survey. There are many types of such systems. For example, if
$(w+1)/2 =q+1$, $n=q^2 +q+1$, and $q$ is a prime power, then an optimal such system consists of
$q^2 +q+1$ blocks which form a projective plane of order $q$~\cite[p. 224]{vLWi92}.
In other words
\begin{proposition}
If $w=2q+1$, $q$ is a prime power, then
$$T(q^2 +q+1,2q+1) = \sum_{i=0}^q \binom{q^2+q+1}{i} + q^2+q+1~.$$
\end{proposition}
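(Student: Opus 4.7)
The plan is to verify the construction sketched in the paragraph preceding the proposition and to confirm that it attains the claimed maximum. Set $n = q^2+q+1$ and $w = 2q+1$.

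For the lower bound I would exhibit an equireplicate anticode $\cA$ of diameter $w$ of the stated size by taking $\cA \deff \cA_0 \cup \cA_1$, where $\cA_0$ is the collection of all subsets of $[n]$ of size at most $q$ and $\cA_1$ is the set of lines of the projective plane $PG(2,q)$, which exists because $q$ is a prime power. Recall that $PG(2,q)$ has $n = q^2+q+1$ points and $n$ lines, each line is a $(q+1)$-subset, each point lies on $q+1$ lines, and any two distinct lines meet in exactly one point. Thus $|\cA| = \sum_{i=0}^q \binom{n}{i} + (q^2+q+1)$. I would verify $\diam(\cA) \leq 2q+1$ in three cases: two blocks from $\cA_0$ give symmetric difference at most $2q$; a block of $\cA_0$ together with a line of $\cA_1$ give at most $q + (q+1) = 2q+1$; and two distinct lines from $\cA_1$ give exactly $2(q+1) - 2 = 2q$. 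Equireplication follows because every point appears in $\sum_{i=0}^{q-1}\binom{n-1}{i}$ blocks of $\cA_0$ and in exactly $q+1$ lines of $\cA_1$, giving the same total replication for every point of $[n]$.

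For the matching upper bound I would follow the structural reduction spelled out just before the proposition: any extremal equireplicate anticode of diameter $2q+1$ may be written as the union of the even-anticode of diameter $2q$ (all subsets of size at most $q$) together with a family of $(q+1)$-subsets that is pairwise intersecting and equireplicate on $[n]$, i.e., a \emph{regular uniform quorum system of rank $q+1$}. The size is then bounded by $\sum_{i=0}^q \binom{n}{i}$ plus the maximum number of blocks in such a system. For the latter I would appeal to the classical bound that for these parameters the projective plane is optimal: writing $M$ for the $b \times n$ incidence matrix, the relation $MM^T = qI_b + J_b + E$ with $E$ entrywise non-negative and zero on the diagonal, together with an eigenvalue analysis, forces $b = \mathrm{rank}(M) \leq n = q^2+q+1$, with equality forcing pairwise intersection exactly one and replication $q+1$, i.e., a symmetric $2$-$(n, q+1, 1)$ design---precisely a projective plane of order $q$.

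The main obstacle is justifying the layered-form reduction used in the upper bound: one must rule out the possibility that a larger equireplicate anticode of diameter $2q+1$ employs blocks of sizes outside $\{0, 1, \ldots, q, q+1\}$ in a non-obvious way. I would address this by a compression/symmetrization argument in the spirit of Kleitman's original proof of \Tref{Kleitman}: starting from an arbitrary extremal example, iteratively replace any block of size greater than $q+1$ by an appropriate $(q+1)$-subset of it (and correspondingly adjust smaller blocks) so that both the diameter constraint and the equireplication constraint are preserved, driving every extremal anticode into the claimed layered form. Combined with the Fisher-type quorum-system bound above, this yields the desired equality.
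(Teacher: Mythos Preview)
Your lower-bound construction (the radius-$q$ ball together with the lines of $PG(2,q)$) is correct and is exactly what the paper describes. Note that the paper does not supply a self-contained proof: immediately after the proposition it says ``For a proof of this proposition \ldots\ we refer the reader to'' the Colbourn--Dinitz--Stinson paper on quorum systems. So your upper-bound attempt already goes beyond what the paper provides, and that is where the gaps lie.

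There are two. First, the Kleitman-style compression you propose to reach the layered form preserves the diameter constraint but not equireplication; your phrase ``correspondingly adjust smaller blocks'' hides the entire difficulty of the regular case, and without this reduction there is no reason the extra blocks in an extremal example must be $(q{+}1)$-sets at all. Second, even granting the layered form, the Fisher-type step points the wrong way. From $MM^{T}=qI_b+J_b+E$ with $E\ge 0$ entrywise one cannot conclude that $MM^{T}$ is nonsingular (a zero-diagonal nonnegative symmetric matrix can have arbitrarily negative eigenvalues), so ``$b=\mathrm{rank}(M)$'' does not follow. Indeed the double count $\sum_{i,j}|B_i\cap B_j|=nr^{2}\ge b(q{+}1)+b(b{-}1)$, combined with $r=b(q{+}1)/n$ and the identity $(q{+}1)^{2}-n=q$, simplifies to $b\ge n$ --- Fisher's inequality, the \emph{opposite} of the bound $b\le n$ you claim. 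Whatever upper-bound argument the cited reference contains, it is not this one.
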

For a proof of this proposition and other
similar constructions, we refer the reader to the work in~\cite{CDS01},
where such systems are constructed from combinatorial designs.

\vspace{2.00ex}
\section{Cooling Codes}
\label{spreads}
\vspace{0.50ex}

\noindent
Unfortunately, it is not possible to satisfy Property\,$\A(t)$
with nonadaptive coding schemes, even for $t=1$. Indeed, suppose
we wish to avoid state transitions on just the one hottest wire,
say wire $i$. 
If the encoder does not know the
current state of wire $i$,
the only way to guarantee that there is no state transition is
to have $x_i = y_i$ for \emph{any} two codewords
$(x_1,x_2,\ldots,x_n)$ and $(y_1,y_2,\ldots,y_n)$.
Since \emph{any of~the $n$ wires} could~be~the hottest,
we must have
$(x_1,x_2,\ldots,x_n) = (y_1,y_2,\ldots,y_n)$.
Thus all~co\-dewords are the same,
and no communication is possible.

In this section, we shall see 
that if the encoder and decoder
keep track of just \emph{one previous transmission}
then Property\,$\A(t)$ can be satisfied for {any $t$}
with only $t+1$ additional wires if $2(t+1) \leq n$,
by using spreads and partial spreads, notions from projective geometry.
If $t+1 > n/2$ we propose a construction based on a sunflower for which the construction
for $2(t+1) \leq n$ can be viewed as a special case.
Finally, we provide a road map for our best lower bounds
on the size of $(n,t)$-cooling codes in general, and in particular
when $1 \leq t < n \leq 100$.

\subsection{Differential Encoding and Decoding}
\vspace{-0.00ex}
\label{sec:Differential}

\noindent
The main idea of our differential encoding method is to
encode the data to be communicated across
the bus in the \emph{difference} between the current transmission
and the previous one. Similar ideas have been used in digital
communications and in magnetic recording, among other applications.

Why is differential coding useful? The most useful feature
in our context is this.
When we use the differential encoding method
to transmit a codeword $\xxx = (x_1,x_2,\ldots,x_n)$,
there is a state transition on wire $i$ if and only if
$x_i = 1$, and the total number of transitions is precisely
$\wt(\xxx)$, the Hamming weight of $\xxx$.
This makes it possible to
reduce the area~overhead significantly beyond the best
overhead achievable with nonadaptive schemes.
For example, under differential encoding,
a~code $\C$ satisfies Property\,$\B(w)$ --- and so is
an $(n,w)$-LP code --- if and only if $\wt(\xxx) \le w$
for all $\xxx \,{\in}\, \C$. It~follows that the thermal-optimal
$(n,w)$-LP code of maximum size is given by
\be{J(n,w)}
J^+(n,w)
\,~\deff~\,
\bigl\{\,\xxx \in \{0,1\}^n ~:~ \wt(\xxx) \le w \,\bigr\}
\ee
This set, distinguished from the Johnson space defined by
$J(n,w) \deff \{ \xxx \in \{0,1\}^n : \wt(\xxx) = w \}$, is clearly equireplicate
and its size is much larger than the size of the largest
anticode of diameter $w$ (cf.~\Tref{Kleitman}),
for both odd and even~$w$.

\subsection{Definition of Cooling Codes}
\vspace{-0.00ex}
\label{sec:CoolDefinition}

\noindent
Even under differential encoding, it is still not possible
to satisfy Property\,$\A(t)$ with conventional binary codes.
To see this, again suppose that we wish to avoid transitions
on just the one hottest wire, say wire $i$.
With differential encoding, in order~to guarantee no state
transitions on wire $i$ while transmitting a~co\-deword
$\xxx = (x_1,x_2,\ldots,x_n)$, we must have $x_i = 0$.
But, once again, any of the $n$ wires could be the hottest,
which implies that $\xxx = \zero$. Since this must hold
for any codeword, it follows that
$\C = \{\zero\}$ and no communication is possible.

Consequently, we henceforth modify our notion of a code $\C$
as follows. The elements of $\C$ will be \emph{sets of binary vectors}
of length $n$, say $C_1,C_2,\ldots,C_M$.
We will refer to $C_1,C_2,\ldots,C_M$ as \emph{codesets}.
We do not require these codesets to be of~the~same size,
but we do require them 
to be disjoint: 
$C_i \cap C_j = \varnothing$ for all $i \ne j$.
The elements of each codeset $C_i$ will be 
called \emph{codewords}. The goal is to guarantee
that no matter which codeset $C_i$ is chosen, for
each possible designation 
of $t$ wires~as~the~hot\-test,
there is at least one codeword in $C_i$ with zeros
on the~corresponding $t$ positions. This leads to
the following definition.~~
\begin{definition}
\label{cooling-def}
\,For positive integers $n$ and $\kern1pt t\kern-1pt <\kern-1pt n$, an
{\dfn $(n,\kern-1pt t)$-cooling code} $\C$ of size $M$ is 
defined as a set $\{C_1,C_2,\ldots,C_M\}$, where $C_1,C_2,\ldots,C_M$
are disjoint subsets of\/ $\Fn$ satisfying the following property:
for any set\/ $\cS \subset [n]$ of size $|\cS| = t$ and for~all
$i \,{\in}\,\kern-1pt [M]$,
there exists a codeword\/ $\xxx \,{\in}\, C_i\kern-1pt$
with\/ $\supp(\xxx) \cap \cS\kern-1pt = \varnothing$.\vspace*{0.50ex}
\end{definition}

\hspace*{-0.75ex}Given the foregoing definition of cooling codes, we also~need
to modify our notions of an encoding function and a decoding function,
introduced in Section\,\ref{anticodes}.
As before, we assume that the data to be communicated across the bus
is represented by a source $\sS$ taking on some $M \le 2^k$ values
in $\Fk$. In contrast to Section\,\ref{anticodes}, we no longer
need to assume that $\sS$ is uniformly distributed --- in fact, no
probabilistic model for $\sS$ is required. On the other hand, the
input to the encoding function $\cE$ now comprises, in addition to
a word $\uuu \,{\in}\, \sS$, also a set $\cS \subset [n]$ of size~$t$
representing the positions of the $t$ hottest wires. We let~~
$$
\sC \:~\deff~\,
C_1 \cup C_2 \cup \cdots \cup C_M~.
$$
Then the output of the encoding function $\cE$ 
is a vector $\xxx \,{\in}\, \sC$ such that
$\supp(\xxx) \cap \cS = \varnothing$.
For every possible 
$\cS$, the function $\cE(\cdot,\cS)$ is
a bijective map from $\sS$ to $\C$.
Since the codesets $C_1,C_2,\ldots,C_M$
are disjoint, this allows 
the decoding function $\cD$ to recover 
$\uuu \in \sS$ from the encoder output $\xxx \in \sC$.
We summarize the foregoing discussion in the next 
definition.
\begin{definition}
\label{cool-encoding}
For integers $n$ and $\kern1pt t\kern-1pt <\kern-1pt n$, an
{\dfn $(n,\kern-1pt t)$-cooling coding scheme}
for a~source\/ $\sS\subseteq\Fk$ is a triple\/
$\E= \langle \C,\cE,\cD \rangle$,~where\vspace{0.50ex}
\begin{enumerate}
\item[\rm 1)] The code\/ $\C$ is an $(n,t)$-cooling code;\vspace{0.750ex}
\item[\rm 2)]
The encoding function\/
\smash{$\cE{:}~\kern1pt \sS\times\kern-2pt%
\bigl(\kern-2pt{\mbox{\scriptsize$[$}n\mbox{\scriptsize$]$} \atop t}\kern-2pt\bigr)
\kern-1pt\to\kern-1pt \sC$}
is such that for~all
$\cS \,{\subset}\, [n]$ of size $t$ 
and all $\uuu \,{\in}\, \sS$, we have
$$
\supp\bigl(\cE(\uuu,\cS)\bigr) \cap \cS \,=\, \varnothing\ ;
$$ 
\item[\rm 3)]
The decoding function\/
$\cD{:}~\kern1pt \sC \to \sS$ is such that for all $\uuu \,{\in}\, \sS$,
we have $\cD\bigl(\cE(\uuu,\cS)\bigr) = \uuu$ regardless of the value of~$\cS$.
\end{enumerate}
\end{definition}

\noindent\looseness=-1
It follows immediately from \Dref{cool-encoding} that,
under differ\-en\-tial encoding, an $(n,t)$-cooling coding scheme
satisfies Property\,$\A(t)$ by avoiding state transitions on
the $t$ hottest wires, which are represented by the subset $\cS$.~~

\noindent
{\bf Example.} \looseness=-1
Consider $n=6$ and $t=2$. An $(6,3)$-cooling code is given by the following eight codesets,
\begin{align*}
C_{000} &=\{ 100000 , 101011 , 111010 , 001011 , 010001 , 110001 , 011010 \},\\
C_{001} &=\{ 010000 , 100011 , 011101 , 110011 , 111110 , 101110 , 001101 \},\\
C_{010} &=\{ 001000 , 100111 , 111000 , 101111 , 011111 , 010111 , 110000 \},\\
C_{011} &=\{ 000100 , 100101 , 011100 , 100001 , 111001 , 111101 , 011000 \},\\
C_{100} &=\{ 000010 , 100100 , 001110 , 100110 , 101010 , 101000 , 001100 \},\\
C_{101} &=\{ 000001 , 010010 , 000111 , 010011 , 010101 , 010100 , 000110 \},\\
C_{110} &=\{ 110110 , 001001 , 110101 , 111111 , 111100 , 001010 , 000011 \},\\
C_{111} &=\{ 011011 , 110010 , 101100 , 101001 , 011110 , 000101 , 110111 \}.
\end{align*}
We index the codesets by all three-bit messages.

To verify Property\,$\A(2)$, we explicitly describe the encoding function
$\cE: \sS\times  \bigl(\kern-2pt{\mbox{\scriptsize$[$}6\mbox{\scriptsize$]$} \atop 2}\kern-2pt\bigr)
\to \sC$ via the following lookup table.

\begin{center}
\begin{tabular}{|c| cccc cccc|}
\hline
$\cS$ & 000 & 001 & 010 & 011 & 100 & 101 & 110 & 111\\
\hline
$\{ 1 , 2 \}$ & 001011 & 001101 & 001000 & 000100 & 000010 & 000001 & 001001 & 000101 \\
$\{ 1 , 3 \}$ & 010001 & 010000 & 010111 & 000100 & 000010 & 000001 & 000011 & 000101 \\
$\{ 1 , 4 \}$ & 001011 & 010000 & 001000 & 011000 & 000010 & 000001 & 001001 & 011011 \\
$\{ 1 , 5 \}$ & 010001 & 010000 & 001000 & 000100 & 001100 & 000001 & 001001 & 000101 \\
$\{ 1 , 6 \}$ & 011010 & 010000 & 001000 & 000100 & 000010 & 010010 & 001010 & 011110 \\
$\{ 2 , 3 \}$ & 100000 & 100011 & 100111 & 000100 & 000010 & 000001 & 000011 & 000101 \\
$\{ 2 , 4 \}$ & 100000 & 100011 & 001000 & 100001 & 000010 & 000001 & 001001 & 101001 \\
$\{ 2 , 5 \}$ & 100000 & 001101 & 001000 & 000100 & 100100 & 000001 & 001001 & 101100 \\
$\{ 2 , 6 \}$ & 100000 & 101110 & 001000 & 000100 & 000010 & 000110 & 001010 & 101100 \\
$\{ 3 , 4 \}$ & 100000 & 010000 & 110000 & 100001 & 000010 & 000001 & 000011 & 110010 \\
$\{ 3 , 5 \}$ & 100000 & 010000 & 110000 & 000100 & 100100 & 000001 & 110101 & 000101 \\
$\{ 3 , 6 \}$ & 100000 & 010000 & 110000 & 000100 & 000010 & 010010 & 110110 & 110010 \\
$\{ 4 , 5 \}$ & 100000 & 010000 & 001000 & 100001 & 101000 & 000001 & 001001 & 101001 \\
$\{ 4 , 6 \}$ & 100000 & 010000 & 001000 & 011000 & 000010 & 010010 & 001010 & 110010 \\
$\{ 5 , 6 \}$ & 100000 & 010000 & 001000 & 000100 & 100100 & 010100 & 111100 & 101100 \\
\hline
\end{tabular}
\end{center}

\subsection{Bounds on the Size of Cooling Codes}
\label{sec:CoolBounds}

\noindent
In this subsection, we show that realizing an $(n,t)$-cooling~coding
scheme requires at least $t+1$ additional wires. That is, the number
of bits
that can be communicated over an $n$-wire bus while satisfying
Property\,$\A(t)$ is at most $k \le n-t-1$. In the next subsection, we
will present a construction that achieves this bound.
Herein, let us begin with the following lemma. 

\begin{lemma}
\label{lem:upperCC}
Let\/ $\C$ be an $(n,t)$-cooling code of size $|\C| = M$.
Then
\be{sphere-packing}
M
\ \le \
\frac{t!(n{-}t)!}{n!}\,
\sum_{w=0}^{n-t} \kern-2pt\binom{n}{w}\kern-1pt\binom{n{-}w}{t}
\ = \,\
2^{n-t}~.
\vspace{0.50ex}
\ee
\end{lemma}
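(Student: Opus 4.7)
The plan is to prove the bound by a double-counting argument on pairs $(\xxx,\cS)$ where $\xxx$ lies in some codeset and $\cS$ is a $t$-subset of $[n]$ disjoint from $\supp(\xxx)$. The cooling property of Definition~\ref{cooling-def} gives a lower bound on the number of such pairs (at least one per codeset per $\cS$), while the pigeonhole-style upper bound comes from the fact that the codesets are disjoint subsets of $\Fn$.

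First I would verify the right-hand equality in \eq{sphere-packing}. Using the standard identity $\binom{n}{w}\binom{n-w}{t} = \binom{n}{t}\binom{n-t}{w}$ and then summing over $w$ via the binomial theorem gives
$$
\sum_{w=0}^{n-t} \binom{n}{w}\binom{n-w}{t}
\,=\, \binom{n}{t}\sum_{w=0}^{n-t}\binom{n-t}{w}
\,=\, \binom{n}{t}\,2^{n-t}.
$$
Multiplying by $t!(n-t)!/n! = 1/\binom{n}{t}$ yields $2^{n-t}$, which establishes the stated equality.

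Next I would set up the incidence count. Let
$$
P \,\deff\, \bigl\{\,(\xxx,\cS) \,:\, \xxx \in C_i \text{ for some } i\in[M],\ \cS\in\tbinom{[n]}{t},\ \supp(\xxx)\cap\cS=\varnothing\,\bigr\}.
$$
Counting by $(i,\cS)$: for every fixed $i\in[M]$ and every fixed $\cS\in\tbinom{[n]}{t}$, the cooling property guarantees at least one codeword $\xxx\in C_i$ with $\supp(\xxx)\cap\cS=\varnothing$, so $|P|\geq M\binom{n}{t}$. Counting by $\xxx$: since the codesets $C_1,\dots,C_M$ are pairwise disjoint subsets of $\Fn$, each $\xxx$ contributes to $P$ at most once on the codeset side, and the number of $t$-subsets of $[n]$ disjoint from $\supp(\xxx)$ is exactly $\binom{n-\wt(\xxx)}{t}$. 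Hence, relaxing to all of $\Fn$,
$$
|P| \,\leq\, \sum_{\xxx\in\Fn}\binom{n-\wt(\xxx)}{t} \,=\, \sum_{w=0}^{n-t}\binom{n}{w}\binom{n-w}{t}.
$$

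Combining the two bounds and dividing by $\binom{n}{t}=n!/(t!(n-t)!)$ gives exactly the inequality in \eq{sphere-packing}, which by the identity above equals $2^{n-t}$. The argument is essentially a one-step double count and I do not anticipate any real obstacle; the only conceptual step is recognizing that the correct bipartite incidence to count is between codewords and the $t$-subsets their supports avoid.
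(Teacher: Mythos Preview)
Your proposal is correct and follows essentially the same double-counting argument as the paper: both count incidences between vectors and the $t$-subsets their supports avoid, obtaining $M\binom{n}{t} \le \sum_{w}\binom{n}{w}\binom{n-w}{t}$ and then dividing through. Your additional verification of the equality via $\binom{n}{w}\binom{n-w}{t} = \binom{n}{t}\binom{n-t}{w}$ is a nice touch that the paper leaves implicit.
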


\begin{proof}
For convenience, we will refer to sets $\cS \subset [n]$~of
size $|\cS| = t$ as \emph{$t$-subsets}. Given a $t$-subset $\cS$
and a vector $\xxx \,{\in}\, \Fn$, we shall say that
\emph{$\xxx$ covers $\cS$} if
$\supp(\xxx) \cap \cS = \varnothing$.
Observe that a vector of weight $w$ covers exactly
${n{-}w \choose t}$ different $t$-subsets. Therefore,
the total number of $t$-subsets (counted with multiplicity)
covered by \emph{all} the vectors in~$\Fn$ is given by
$$ 
N(n,t)
\,~\deff~
\sum_{w=0}^{n-t} \kern-2pt\binom{n}{w}\kern-1pt\binom{n{-}w}{t}~.
$$ 
Now~~consider a codeset~~$C$ in~~$\C$.~~By definition,
for~~any  {$t$-subset}~$\cS$,
there is 
at least one codeword $\xxx \,{\in}\, C$
that covers $\cS$. Hence,~the total number of $t$-subsets
(possibly counted with multiplicity) covered by {all}
the codewords of $C$ is at least ${n \choose t}$.
Since this holds for each of the $M$ codesets in $\C$,
the total number of $t$-subsets (again, counted with multiplicity)
covered by {all} the vectors~in
$\sC = C_1 \cup C_2 \cup \cdots \cup C_M$ is at least
$M{n \choose t}$. But, since the codesets
$C_1,C_2,\ldots,C_M$ are disjoint and $\sC \subseteq \Fn$,
this number cannot exceed $N(n,t)$, i.e. $M{n \choose t} \leq N(n,t)$, and the lemma follows.
\vspace{1.25ex}
\end{proof}

\looseness=-1
The proof of Lemma~\ref{lem:upperCC} can be shorten by considering
any given $t$-subset of coordinates that must be zeroes in at least
one codeword of each codeset. There are $2^{n-t}$ such codewords of length $n$
and hence the maximum number of codesets is $2^{n-t}$.
Nevertheless, there is one advantage in presenting the longer proof.
It follows from the proof of \Lref{lem:upperCC} that an $(n,t)$-cooling
code $\C$ of size $|\C| = 2^{n-t}$ would be \emph{perfect}. In
such a code,~the codesets $C_1,C_2,\ldots,C_M$ form a partition
of $\Fn$ and each~of these codesets
is a perfect covering of ${[n] \choose t}$, i.e.
all words with exactly $t$ \emph{zeroes}.
Using these observations,
we can prove that such cooling
codes do not exist, unless $t=1$ or $t\geq n-1$.

\begin{proposition}
\label{no_perfect}
Perfect $(n,t)$-cooling codes do not exist, unless $t=1$ or $t \geq n-1$.
\end{proposition}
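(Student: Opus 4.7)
The plan is to assume a perfect $(n,t)$-cooling code exists for some $2 \leq t \leq n-2$ and derive a contradiction by tracking which codeset must contain each weight-one vector and each weight-$(n-t)$ vector. Recall from the discussion following \Lref{lem:upperCC} that in a perfect code the codesets partition $\Fn$ and each codeset meets every coordinate subspace $A_\cS = \{\xxx \in \Fn : \supp(\xxx)\cap \cS = \varnothing\}$ in exactly one vector. In particular the codeset $C_0 \ni \mathbf{0}$ satisfies $C_0 \cap A_\cS = \{\mathbf{0}\}$ for every $t$-subset $\cS$, so every \emph{other} codeword in $C_0$ has weight strictly greater than $n-t$. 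Consequently every unit vector $e_i$ lies in some codeset $C_{\sigma(i)} \neq C_0$.

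First I would show that $\sigma$ is injective. If $e_i$ and $e_j$ were in a common codeset with $i \neq j$, they would both cover every $t$-subset of $[n] \setminus \{i,j\}$; because $t \leq n-2$ we have $\binom{n-2}{t} \geq 1$, contradicting the \textquotedblleft exactly one\textquotedblright{} covering requirement.

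Next I would analyse the internal structure of $C_{\sigma(i)}$. The vector $e_i$ covers exactly the $\binom{n-1}{t}$ $t$-subsets avoiding $i$, so the remaining $\binom{n-1}{t-1}$ $t$-subsets (those containing $i$) must be covered by the other codewords of weight at most $n-t$ in $C_{\sigma(i)}$. A routine check shows that any such codeword $\xxx$ must satisfy $i \notin \supp(\xxx)$ and, to avoid double-covering with $e_i$, every $t$-subset of $[n]\setminus\supp(\xxx)$ must contain $i$. A short count with $|[n]\setminus\supp(\xxx)\setminus\{i\}| < t$ then forces $\wt(\xxx) = n-t$. Since there are exactly $\binom{n-1}{t-1}$ weight-$(n-t)$ vectors whose support misses $i$, and each covers a distinct $t$-subset containing $i$, the codeset $C_{\sigma(i)}$ is forced to contain \emph{all} of them.

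Finally, I would derive the contradiction from this rigidity. For any $i \neq j$, the hypothesis $t \geq 2$ gives $\binom{n-2}{t-2} \geq 1$, so there exists a weight-$(n-t)$ vector $\xxx$ whose support avoids both $i$ and $j$. By the previous step $\xxx$ must lie in both $C_{\sigma(i)}$ and $C_{\sigma(j)}$, yet $\sigma(i) \neq \sigma(j)$ by injectivity and distinct codesets are disjoint\,---\,contradiction. The main obstacle is the rigidity step in the third paragraph: carefully verifying that the no-double-covering condition with $e_i$ really pins both the weight and the support of every other covering codeword in $C_{\sigma(i)}$. Once this is established, the contradiction is immediate and no separate arguments are needed to treat the ranges $t$ close to $n$ versus $t$ small.
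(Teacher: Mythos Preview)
Your proposal is correct and follows essentially the same route as the paper's proof: determine that the codeset containing $e_i$ must also contain every weight-$(n-t)$ vector whose support avoids $i$, and then observe that for $2\le t\le n-2$ two such codesets (for $i\ne j$) would share a common weight-$(n-t)$ vector. Your version is somewhat more careful---you explicitly justify that distinct unit vectors lie in distinct codesets and you add the preliminary observation about the codeset containing $\mathbf{0}$---but the core argument is identical.
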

\begin{proof}
W.l.o.g. assume that the codeset $C_1$ contains the codeword $\xxx$ which starts with
a \emph{one} followed by $n-1$ \emph{zeroes}. The only words which are not covered by
this codeword are all those words which start with a \emph{zero} and have exactly $t-1$ \emph{zeroes}
in the other $n-1$ coordinates. Let $\zzz$ be one of these
$\binom{n-1}{t-1}$ words. A codeword $\yyy$ which covers $\zzz$
must starts with a \emph{zero}. If $\yyy$ has at least $t+1$
\emph{zeroes} then it covers words with $t$ \emph{zeroes}
which are also covered by $\xxx$, and therefore $C_1$ won't be a perfect covering.
Hence, $C_1$ contains $\xxx$ and the $\binom{n-1}{t-1}$ codewords,
which start with a \emph{zero}, and have exactly $t$ \emph{zeroes}.

Now, assume w.l.o.g. that $C_2$ contains the codeowrd which starts with $n-1$ \emph{zeroes}
and ends with a \emph{one}. With similar analysis as for $C_1$ this codeset contains the
$\binom{n-1}{t-1}$ codewords, which end with a \emph{zero}, and have exactly $t$ \emph{zeroes}.

Now, by this analysis, both $C_1$ and $C_2$ must contain the $\binom{n-2}{t-2}$ words which
start and end with a \emph{zero} and have exactly $t$ \emph{zeroes}. Therefore, if
$\binom{n-2}{t-2} >0$ a perfect code cannot exist.
$\binom{n-2}{t-2} >0$, unless $t=1$ or $t \geq n-1$.

\begin{itemize}
\item If $t=1$, then there exists a perfect $(n,1)$-cooling code, with
$2^{n-1}$ codesets, each one contains a pair of complement codewords.

\item If $t=n$, then there exists a perfect $(n,n)$-cooling code,
with one codeset which contains all codewords of length~$n$.

\item If $t=n-1$ then there  exists a perfect $(n,n-1)$-cooling code,
with two codesets, one contains the all-zero codeword and the second contains
all codewords of weight \emph{one}. All other words can be partitioned arbitrarily between these two codesets.
\end{itemize}
$~~~~~~~~~~~~~~~~~~~~~~~~~~~~~~~~~~~~~~~~~~~~~~~~~~~~~~~~~~~~~~$
\end{proof}

Lemma~\ref{lem:upperCC} and Proposition~\ref{no_perfect} imply the following result.

\begin{corollary}
\label{thm5}
If $1 < t < n-1$, then the size of any $(n,t)$-cooling code\/ $\C$ is bounded by
$|\C| \le 2^{n-t}\kern-1pt-1$. Consequently, such a code cannot
support the transmission of\/ $n-t$ or more bits over
an $n$-wire bus.
\end{corollary}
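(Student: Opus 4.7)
The plan is to simply assemble the two results immediately preceding the corollary. Lemma \ref{lem:upperCC} already gives $|\C| \le 2^{n-t}$ for any $(n,t)$-cooling code, and Proposition \ref{no_perfect} tells us that equality (a \emph{perfect} code) cannot hold in the range $1 < t < n-1$. Combining these two strict inputs yields the bound $|\C| \le 2^{n-t}-1$ at once, with no further combinatorics needed.

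More concretely, I would first restate the chain: by Lemma \ref{lem:upperCC}, $|\C| \le 2^{n-t}$, with equality exactly when $\C$ is perfect in the sense that the codesets $C_1,\dots,C_M$ partition $\Fn$ and each $C_i$ covers every $t$-subset of $[n]$ exactly once (this is precisely how perfection was defined in the paragraph following Lemma \ref{lem:upperCC}). Then I would invoke Proposition \ref{no_perfect}, which excludes the existence of any perfect $(n,t)$-cooling code for $1 < t < n-1$. The strict inequality $|\C| < 2^{n-t}$, i.e.\ $|\C| \le 2^{n-t}-1$, follows since $|\C|$ is an integer.

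For the second clause, I would argue by a direct counting obstruction: an $(n,t)$-cooling coding scheme for a source $\sS \subseteq \Fk$ of size $M=|\sS|$ requires, by Definition \ref{cool-encoding}, exactly $M$ pairwise disjoint codesets in $\C$, so $M \le |\C| \le 2^{n-t}-1$. Transmitting $n-t$ or more bits means $M \ge 2^{n-t}$, which contradicts the displayed bound.

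There is no real obstacle here; the corollary is essentially a bookkeeping consequence of Lemma \ref{lem:upperCC} and Proposition \ref{no_perfect}. The only subtlety worth flagging in the write-up is to make the equivalence ``$|\C|=2^{n-t}$ $\Leftrightarrow$ $\C$ is perfect'' explicit, so that Proposition \ref{no_perfect} can be applied cleanly; this was already observed in the commentary immediately after the proof of Lemma \ref{lem:upperCC}, so it suffices to cite that observation rather than reprove it.
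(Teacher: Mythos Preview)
Your proposal is correct and matches the paper's approach exactly: the paper simply states that Lemma~\ref{lem:upperCC} and Proposition~\ref{no_perfect} imply the corollary, without giving any further argument. Your write-up spells out the easy logic connecting them (equality in Lemma~\ref{lem:upperCC} means perfection, which Proposition~\ref{no_perfect} rules out for $1<t<n-1$), which is precisely what the paper leaves implicit.
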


Denote the maximum size of an $(n,t)$-cooling code by $C(n,t)$.
In the following subsection we will obtain lower bounds on $C(n,t)$.
We start in this subsection with a few simple bounds and values
of $C(n,t)$.

\begin{corollary}
\label{cor:boundsC}
$~$
\begin{enumerate}
\item $C(n,1)=2^{n-1}$;
\item $C(n,n-1)=2$;
\item If $2 \leq t \leq n-2$, then $C(n,t) \leq 2^{n-t}-1$.
\end{enumerate}
\end{corollary}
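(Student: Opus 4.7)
The plan is to derive each of the three parts directly from results already established in this section, namely Lemma~\ref{lem:upperCC}, Proposition~\ref{no_perfect}, and Corollary~\ref{thm5}. The three claims split cleanly along the lines already drawn by the perfection analysis: the two boundary cases $t=1$ and $t=n-1$ admit perfect cooling codes and so meet the sphere-packing bound with equality, while the intermediate range $2 \le t \le n-2$ is the one that Proposition~\ref{no_perfect} rules out, forcing a strict inequality.

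For part~3, no real work is needed. I would simply note that Corollary~\ref{thm5} was just proved in the lines immediately preceding the statement, and its conclusion is exactly the inequality $C(n,t) \le 2^{n-t}-1$ for $2 \le t \le n-2$.

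For part~1, the upper bound $C(n,1) \le 2^{n-1}$ is Lemma~\ref{lem:upperCC} specialized to $t=1$. For the matching lower bound, I would invoke the explicit construction given in the first bullet at the end of the proof of Proposition~\ref{no_perfect}: partitioning $\Fn$ into $2^{n-1}$ pairs of complementary vectors $\{\xxx,\overline{\xxx}\}$ and declaring each such pair a codeset. Each codeset covers every singleton $\cS = \{i\}$, since in any complementary pair exactly one of $\xxx$ or $\overline{\xxx}$ has a zero in coordinate $i$. Hence $C(n,1) \ge 2^{n-1}$, and combined with the upper bound this gives equality.

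For part~2, the upper bound is again Lemma~\ref{lem:upperCC} with $t=n-1$, yielding $C(n,n-1) \le 2^{n-(n-1)} = 2$. The lower bound is realized by the two-codeset construction in the third bullet of the proof of Proposition~\ref{no_perfect}: take $C_1 = \{\zero\}$ and let $C_2$ contain all $n$ vectors of weight one; then $C_1$ covers every $(n{-}1)$-subset $\cS$ trivially (via $\zero$), while $C_2$ covers every $(n{-}1)$-subset $\cS = [n]\setminus\{i\}$ via the unit vector $\eee_i$. Thus $C(n,n-1) \ge 2$, completing the equality.

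The main ``obstacle'' is only bookkeeping: one must make sure to cite the right pieces of Proposition~\ref{no_perfect} for the constructive lower bounds in parts~1 and~2, rather than re-deriving them. Once that is done, the corollary is an immediate consequence of the three results above and requires no new ideas.
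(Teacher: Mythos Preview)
Your proposal is correct and matches the paper's own treatment: the paper states Corollary~\ref{cor:boundsC} without proof, as an immediate consequence of Lemma~\ref{lem:upperCC}, Proposition~\ref{no_perfect}, and Corollary~\ref{thm5}, and you have correctly identified precisely which piece of each result supplies each part. There is nothing to add.
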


\vspace{0.1cm}

\begin{proposition}
\label{pro:simpleC}
If $2 \leq t \leq n-2$, then $C(n,t) \geq n-t+1$.
\end{proposition}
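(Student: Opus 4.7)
The plan is to exhibit an explicit $(n,t)$-cooling code of size $n-t+1$ by partitioning vectors according to Hamming weight. Concretely, I would define, for each $i \in \{0,1,\ldots,n-t\}$, the codeset
$$
C_i \,\deff\, \bigl\{\xxx \in \Ftwo^n \,:\, \wt(\xxx) = i\bigr\}
$$
and propose $\C = \{C_0, C_1, \ldots, C_{n-t}\}$ as the candidate code of size $n-t+1$. Disjointness is immediate because the $C_i$'s are indexed by distinct Hamming weights. Non-emptiness of each $C_i$ follows from $0 \leq i \leq n-t \leq n$. So the only thing to check is the covering property from \Dref{cooling-def}.

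For the covering property, I would fix an arbitrary $t$-subset $\cS \subset [n]$ and an arbitrary codeset $C_i$, and exhibit a witness $\xxx \in C_i$ with $\supp(\xxx) \cap \cS = \varnothing$. For $i = 0$ the zero vector does the job trivially. For $1 \le i \le n-t$, observe that $|[n]\setminus \cS| = n-t \ge i$, so one may choose any $i$-element subset $U \subseteq [n]\setminus\cS$ and take $\xxx$ to be the characteristic vector of $U$; then $\wt(\xxx) = i$ and $\supp(\xxx) = U$ is disjoint from $\cS$. This verifies that each $C_i$ is covering, completing the construction. There is essentially no obstacle here: the only structural constraint is that the construction halts at weight $n-t$, since a vector of weight $n-t+1$ or more has support too large to fit inside $[n]\setminus\cS$ for any $t$-subset $\cS$ — which is exactly why the bound of $n-t+1$ is the natural limit of this simple weight-layer argument. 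The leftover vectors of weight $> n-t$ are irrelevant and may be left out (or appended arbitrarily to any one codeset, since adjoining extra vectors never destroys the covering property).
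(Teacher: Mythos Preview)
Your proposal is correct and is essentially the same construction as the paper's: both partition by Hamming weight, taking the $n-t+1$ weight classes $0,1,\ldots,n-t$ as the codesets (the paper just indexes them in reverse order as weights $n-t-i+1$ for $i\in[n-t+1]$).
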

\begin{proof}
For $i\in [n-t+1]$, let $C_i$ contain the set of all binary words of length $n$ and weight $n-t-i+1$, i.e.
each codeword in $C_i$ has $t+i-1 \geq t$ \emph{zeroes}. Clearly, the code $\C$ with these
codesets is an $(n,t)$-cooling code.
\end{proof}

By Corollary~\ref{cor:boundsC} and Proposition~\ref{pro:simpleC} we have the following
\begin{corollary}
$C(n,n-2)=3$.
\end{corollary}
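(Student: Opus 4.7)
The plan is to obtain this equality by sandwiching $C(n,n-2)$ between matching upper and lower bounds that are already established just above the statement.

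First, I would substitute $t = n-2$ into the third part of Corollary~\ref{cor:boundsC}. Since $n-2$ lies in the range $2 \le t \le n-2$ (assuming $n \ge 4$, which is the only case where the statement is nontrivial), that bound applies and gives
\[
C(n,n-2) \ \le\ 2^{n-(n-2)}-1 \ =\ 2^{2}-1 \ =\ 3.
\]
So the upper bound reduces to a one-line arithmetic substitution.

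Next, I would substitute $t = n-2$ into Proposition~\ref{pro:simpleC}, which gives the lower bound
\[
C(n,n-2) \ \ge\ n-(n-2)+1 \ =\ 3.
\]
Combining the two yields $C(n,n-2)=3$, and the corollary follows.

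There is essentially no obstacle: the only thing to keep in mind is the range of validity. The upper bound from Corollary~\ref{cor:boundsC}(3) requires $2 \le t \le n-2$, and the lower bound from Proposition~\ref{pro:simpleC} requires the same range, so both apply precisely when $n \ge 4$. For completeness one may note that the explicit construction underlying Proposition~\ref{pro:simpleC} specialises here to the three codesets consisting, respectively, of all weight-$1$, weight-$2$, and weight-$0$ vectors (or equivalently, the three codesets of words having exactly $n-2$, $n-3$, and $n-1$ zeros), which makes the lower bound transparent and explicit.
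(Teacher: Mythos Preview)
Your proposal is correct and matches the paper's approach exactly: the corollary is stated immediately after Corollary~\ref{cor:boundsC} and Proposition~\ref{pro:simpleC} precisely as a consequence of combining those two bounds with $t=n-2$. Your observation about the range $n\ge 4$ is the only caveat, and it is implicit in the hypotheses of the cited results.
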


\subsection{Construction of Optimal Cooling Codes}
\label{sec:CoolCodes}

\noindent
In this subsection, we construct $(n,t)$-cooling codes that support
the transmission of up to $n-t-1$ bits over an $n$-wire bus.
By Corollary~\ref{thm5}, such cooling codes are \emph{optimal}.
Our construction is based on the notion of
\emph{spreads} and \emph{partial spreads} in projective geometry.
We will give the related equivalent definition for vector spaces. In this
section we need only spread and partial spread over $\Ftwo$, but
we will define and discuss them over $\Fq$, as those will be
required later in our exposition.~~~

Loosely speaking, a partial $\tau$-spread of the vector
space $\Fqn$ is a collection of disjoint
$\tau$-dimensional subspaces of~$\Fqn$. Formally,
a collection $V_1,V_2,\ldots,V_M$ of $\tau$-dimensional
subspaces of $\Fqn$ is said to be a \emph{partial $\tau$-spread of\/ $\Fqn$} if
\be{disjoint}
\hspace*{2ex}V_i \cap \kern-.5ptV_j\kern1pt = \{\zero\}
\hspace{2ex}\text{for all\, $i \ne j$} \hspace*{1ex}~,
\vspace{-0.75ex}
\ee
\be{union}
\Fqn \supseteq \kern1pt V_1 \cup V_2 \cup \cdots \cup V_M \hspace*{1ex}~.
\ee
If the $\tau$-dimensional subspaces form a partition of $\Fqn$ then
the partial $\tau$-spread is called a $\tau$-spread.
It is well known that such $\tau$-spreads exist if and
only~if~$\tau$~divides $n$, in which case
$M = (q^n{-}1)/(q^\tau{-}1) > q^{n-\tau}$.
For~the case where $\tau$ does not divide $n$,
\emph{partial $\tau$-spreads} with~\mbox{$M \ge q^{n-\tau}$}
have been constructed in~\cite[Theorem\,11]{EV:2011}.
Let $M_q(n,\tau)$ be the maximum size of a partial $\tau$-spread.
The value of $M_q(n,\tau)$ has been considered for many years in projective geometry,
and a survey with the known results is given in~\cite{EtSt16}.
Recently, there has been lot of activity and the question has been almost completely
solved~\cite{Kur15,Kur16,NaSi16}.~~

\begin{theorem}
\label{thm:spread_cool}
Let\/ $V_1,\kern-1ptV_2,\ldots,V_M$ be a partial $(t{+}1)$-spread of~$\Fn$, and
define the code\kern1.5pt\ $\C = \{V^*_1,V^*_2,\ldots,V^*_M\}$,
where\/
$V^*_i \kern-1.5pt=\kern-1pt V_i \kern1.5pt{\setminus} \{\zero\}$
for all\/ $i$.
Then\/ $\C$ is an $(n,t)$-cooling code of size $M \ge 2^{n-t-1}$.
\end{theorem}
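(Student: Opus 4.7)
The plan is to verify the two defining requirements of an $(n,t)$-cooling code for the collection $\C = \{V_1^*, V_2^*, \ldots, V_M^*\}$, and then to invoke known existence results for $(t{+}1)$-spreads in order to establish the size bound $M \ge 2^{n-t-1}$.

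First, I would check that the codesets are pairwise disjoint. This is immediate from the partial-spread condition \eq{disjoint}: if $i \ne j$ then $V_i \cap V_j = \{\zero\}$, so after removing $\zero$ the sets $V_i^* = V_i \setminus \{\zero\}$ and $V_j^*$ are disjoint, as required by \Dref{cooling-def}.

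The main step is to verify the cooling property: for every $t$-subset $\cS \subset [n]$ and every index $i \in [M]$, there must exist a nonzero vector $\xxx \in V_i^*$ with $\supp(\xxx) \cap \cS = \varnothing$. For this I would use a simple dimension argument. Fix $i$ and $\cS$, and consider the projection map
\[
\pi_\cS:\, V_i \,\longrightarrow\, \Ftwo^{\,t}
\hspace{3ex}
\xxx \,\longmapsto\, (x_j)_{j \in \cS}\,.
\]
This is an $\FF_2$-linear map from a space of dimension $t+1$ into a space of dimension $t$, so
\[
\dim \ker(\pi_\cS) \,\ge\, (t+1) - t \,=\, 1\,.
\]
Hence $\ker(\pi_\cS)$ contains a nonzero vector $\xxx \in V_i^*$, and by construction $\xxx$ is zero on every coordinate in $\cS$, i.e. $\supp(\xxx) \cap \cS = \varnothing$. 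This is the key observation and I expect it to be the only real content of the argument; the rest is bookkeeping.

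Finally, I would establish the size bound. If $(t+1) \mid n$, then a genuine $(t{+}1)$-spread of $\Fn$ exists and has size $M = (2^n - 1)/(2^{t+1} - 1)$; a direct check gives $(2^n-1) \ge 2^{n-t-1}(2^{t+1}-1)$ since this reduces to $2^{n-t-1} \ge 1$, which holds whenever $t \le n-1$. If $(t+1) \nmid n$, then by \cite[Theorem\,11]{EV:2011} there exists a partial $(t{+}1)$-spread of $\Fn$ with at least $2^{n-t-1}$ subspaces. In either case $M \ge 2^{n-t-1}$, completing the argument. The only delicate point is the invocation of the partial-spread bound in the non-divisible case, but since this is quoted from the literature it requires no further work here.
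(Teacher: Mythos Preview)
Your argument is correct and is essentially identical to the paper's: the disjointness is handled the same way, and your rank-nullity argument for the projection $\pi_\cS$ is just a slightly cleaner phrasing of the paper's linear-dependence argument on the projected basis vectors. The size bound is also handled the same way, by invoking the known existence of large partial $(t{+}1)$-spreads (the paper cites the same result from~\cite{EV:2011}).
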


\begin{proof}
\hspace*{-0.50ex}It
is obvious from \eq{disjoint} that the $M$ codesets~of~$\C$~are
disjoint subsets of $\Fn$.
It remains to verify that for any set $\cS \,{\subset}\, [n]$ of
size $t$, each of $V^*_1,\kern-1ptV^*_2,\ldots,V^*_M$ contains at
least~one vector whose support is disjoint from $\cS$.
To this end, consider an arbitrary \kern-1pt$(t{+}1)$-dimensional
subspace $V\kern-1pt$ of $\Fn$, and suppose
$\{\vvv_1,\vvv_2,\ldots,\vvv_{t+1}\}$ is a basis for $V$.
Let $\vvv'_1,\vvv'_2,\ldots,\vvv'_{t+1}$ denote the projections
of the basis vectors on the $t$ positions~in~$\cS$. These
$t+1$ vectors lie in a $t$-dimensional vector space --- the
projection of $\Fn$ on $\cS$. Hence, these vectors must be
linearly dependent, and there exist 
binary coefficients $a_1,a_2,\ldots,a_{t+1}$, not all zero, with
$
a_1\vvv'_1 + a_2\vvv'_2 + \cdots + a_{t+1}\vvv'_{t+1} \kern-1pt= \zero
$.
\kern1ptBut~then
$
\xxx = a_1\vvv_1 + a_2\vvv_2 + \cdots + a_{t+1}\vvv_{t+1}
$
is a nonzero vector in $V$ whose support does not include
any of the positions in $\cS$.~As this holds for
an arbitrary $(t{+}1)$-dimensional subspace, it must
hold for each of the subspaces $V_1,\kern-1ptV_2,\ldots,V_M$
in the partial spread.
\end{proof}

\vspace{1cm}

Whether we start with a spread or a partial spread, the~size
of the code $\C$ constructed in \Tref{thm:spread_cool} will usually be
strictly larger than $2^{n-t-1}$. We omit also the possibility for adding
another codset which contains the all-zero codeword since the encoding of
$k$-bit data generated by the source $\sS$, requires only $2^k$ codesets. In order to use such a code
to communicate $k = n-t-1$ bits, one can choose a subset
of $\C$, with $2^k$ codesets, arbitrarily. We illustrate this point in the next
subsection. Nevertheless, sometimes we will be interested in the exact number of
codesets to find bounds on $C(n,t)$, which will be discussed later in this section.

\subsection{Efficient
Encoding and Decoding of Cooling Codes}
\label{sec:CoolEncoding}

\noindent
Several efficient algorithms for coding into spreads are known.
In this subsection, we describe a particularly simple and powerful
method that was originally developed by Dumer~\cite{Dumer:1989} in
the context of coding for memories with defects.
This method involves computations in the finite field $\FF_{2^{t+1})}$.
Since $t$ is a~small constant, such computations are very efficient.
In fact, for most applications of cooling codes, $t \le 7$ will
be~more~than sufficient to cool the bus wires.
Thus the proposed encoder $\cE$ and decoder $\cD$ 
work with bytes, or even nibbles, of data.

As in the previous subsection, we set $\tau = t+1$. For simplicity and w.l.o.g. we assume
that $\tau$ divides $n$ and, hence, also $k = n-\tau$.
Presented with a $k$-bit data word $\uuu$ generated by the source $\sS$,
the encoder $\cE$ first partitions $\uuu$ into $m = k/\tau$ blocks
$\uuu_1,\uuu_2,\ldots,\uuu_m$, each consisting of $\tau$ bits. We will
refer to these blocks as \emph{$\tau$-nibbles} and think of them
as elements in the finite field $\FF_{2^\tau}$. The output of the
encoder is the $n$-bit vector:
\be{encoder}
\cE(\uuu,\cS)
\, = \,
(\beta\uuu_1,\beta\uuu_2,\ldots,\beta\uuu_m,\beta)
\in \Fn
\ee
for \looseness=-1
a carefully chosen nonzero element $\beta$ of $\FF_{2^\tau}$
that~depends on both $\uuu$ and $\cS$. Note that, given $\beta$,
computing $\cE(\uuu,\cS)$ amounts to $k/\kern-1pt\tau\kern-0.5pt$
multiplications in $\FF_{2^\tau\kern-1pt}$.
\kern-0.5ptThe operation of the decoder $\cD$ is equally
simple. Given its input \mbox{$\xxx = \cE(\uuu,\cS)$},
the decoder first partitions $\xxx$ into $\tau$-nibbles
$\xxx_1,\xxx_2,\ldots,\xxx_{m+1}$~and reads off $\beta = \xxx_{m+1}$.
The decoder then computes $\beta^{-1}$ from $\beta$ in $\FF_{2^\tau}$,
and recovers the original data word $\uuu$ as follows:~~
$$
\cD(\xxx)
\, = \,
(\beta^{-1}\xxx_1,\beta^{-1}\xxx_2,\ldots,\beta^{-1}\xxx_m)
\,=\,
(\uuu_1,\uuu_2,\ldots,\uuu_m)~.
$$
Thus decoding amounts to one inversion and $k/\tau$ multiplications
in $\FF_{2^\tau}$. It remains to explain how $\beta$ is computed.~~~

Computing $\beta$ from $\uuu$ and $\cS$ is equivalent
to solving a system of $t$
linear equations in $\tau$ unknowns over $\FF_2$.
We illustrate this with the following example for the
case $\tau = 3$.
\vspace{1.00ex}

\noindent
{\bf Example.} \looseness=-1
In this example, we will work with the finite field
$
\FF_{2^3} = \{0,1,\al,\al^2,\al^3,\al^4,\al^5,\al^6\}
$,
defined by $\al^3 = 1 + \al$. Notice that $1,\al,\al^2$
is a basis for $\FF_{2^3\kern-1pt}$ 
over $\FF_2$.
Suppose that the $t = 2$ hottest wires,
presented to the encoder via the set~$\cS$,
fall in the $3$-nibbles
$\xxx_i = \beta \uuu_i$ and $\xxx_j = \beta \uuu_j$
(if both 
fall in the same $3$-nibble, the situation is
similar). Let us write
$$
\uuu_i = 
u_0 + u_1 \al + u_2 \al^2
\hspace{2.50ex}\text{and}\hspace{3.00ex}
\uuu_j = 
u'_0 + u'_1 \al + u'_2 \al^2~.
$$
Let~us~also write
$\beta = b_0 + b_1 \al + b_2 \al^2$; our goal is to determine
the $\tau = 3$ unknowns $b_0,b_1,b_2$ from the $t=2$ constraints.
Computing $(x_0,x_1,x_2) = \beta \uuu_i$ and $(x'_0,x'_1,x'_2) = \beta \uuu_j$,
we have:~~~~~~~\vspace{-0.50ex}
\begin{align}
\label{example-first}
 x_0 &=\, u_0 b_0\,+\,u_2 b_1\,+\,u_1 b_2 \\
 x_1 &=\, u_1 b_0\,+\,(u_0 {+} u_2) b_1\,+\,(u_1 {+} u_2) b_2 \\
 x_2 &=\, u_2 b_0\,+\,u_1 b_1\,+\,(u_0 {+} u_2) b_2 \\
x'_0 &=\, u'_0 b_0\,+\,u'_2 b_1\,+\,u'_1 b_2 \\
x'_1 &=\, u'_1 b_0\,+\,(u'_0{+}u'_2) b_1\,+\,(u'_1{+}u'_2) b_2 \hspace*{2.00ex}\\
\label{example-last}
x'_2 &=\, u'_2 b_0\,+\,u'_1 b_1 \,+\,(u'_0 {+} u'_2) b_2~.
\end{align}
Equating any two of $x_0,x_1,x_2,x'_0,x'_1,x'_2$ in
\eq{example-first}\,--\,\eq{example-last} to zero ge\-nerates 
a system of 
$t=2$ linear equations in the unknowns
$b_0,b_1,b_2$ with coefficients determined by
$u_0,u_1,u_2,u'_0,u'_1,u'_2$.
Such a system can be easily solved by Gaussian elimination~or,
if necessary, more efficient methods.
\hfill\raisebox{-0.50ex}{$\Box$}\vspace{1.00ex}

In general, the complexity of computing $\beta$ from $\uuu$ and $\cS$
is $O(t^3)$, which is very small for constant $t$. The overall
complexity of encoding/decoding is \emph{linear} in the number
of wires~$n$.


\subsection{Cooling Codes for Large $t$}
\label{sec:LargetCoolCodes}

The cooling codes constructed based on spreads or partial spreads
can be used only for small $t$, or more precisely
when $t+1 \leq n/2$ since subspaces of dimension
greater that $t+1$ have a nontrivial intersection if $t+1>n/2$.
Fortunately, this is probably what is usually required for the design parameters of a
thermal system. Nevertheless, we want to find efficient cooling codes also
for larger values of $t$.
Therefore, when ${t+1 > n/2}$ we have to use another construction
which generates cooling codes of a large size. In this subsection we present a construction
which forms a sunflower
whose heart of seeds (\emph{kernel}) is a linear code $\C$
and his flowers are codes obtained by the sum of $\C$
and elements of a spread. The construction of subsection~\ref{sec:CoolCodes}, which is
based on a partial spreads can be viewed also as such a sunflower, where
the kernel is a trivial linear code (only the all-zero codeword), and the flowers
are the elements of the partial spread without the all-zero codeword.

For the new construction and for some constructions which follow we will use some
basic and more sophisticated elements in coding theory which will be defined.
First, an $[n,\kappa,d]$ code $\C$ over $\Fq$ is a $\kappa$-dimensional subspace of $\Fqn$, with minimum
Hamming distance $d$. This is the only concept we need for the basic construction.
The more sophisticated elements will be needed for a generalization of the construction
which will be given later.

\begin{theorem}
\label{thm:sunF}
Let $n,t,s,r,d$ be integers, such that $r+t \leq (n+s)/2$.
If there exists a binary $[n,s,d]$ code and a binary $[n-t,r,d]$ code does not exist,
then there exists an $(n,t)$-cooling code of size
$M > 2^{n-t-r}$.
\end{theorem}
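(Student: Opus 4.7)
The plan is to realize the ``sunflower'' construction outlined just before the theorem, with the given $[n,s,d]$ binary code playing the role of the kernel $\C$ and with petals chosen via a partial spread in the quotient space $\Ftwo^n/\C$. Concretely, I would produce $\tau$-dimensional subspaces $V_1,\ldots,V_M$ of $\Ftwo^n$ (for a carefully chosen $\tau$) whose images $\bar V_i$ in $\Ftwo^n/\C$ form a partial $\tau$-spread, and declare the codesets to be $C_0 = \C$ together with $C_i = (V_i+\C)\setminus\C$ for $i=1,\ldots,M$. Pairwise disjointness is then automatic: $C_0$ avoids every $C_i$ by construction, and for $i\neq j$ the relation $\bar V_i\cap\bar V_j=\{\zero\}$ translates to $(V_i+\C)\cap(V_j+\C)=\C$, so $C_i\cap C_j=\varnothing$.

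The conceptual key is a \emph{shortening lemma}: for every $t$-subset $\cS\subset[n]$, if $K_\cS$ denotes the subspace of $\Ftwo^n$ of vectors that vanish on $\cS$, then $\dim(\C\cap K_\cS)\le r-1$. Indeed, viewing $\C\cap K_\cS$ as a code of length $n-t$ by deleting its forced zero coordinates, it has dimension $\dim(\C\cap K_\cS)$ and inherits minimum distance at least $d$ from $\C$; if its dimension were at least $r$, any $r$-dimensional subspace would furnish an $[n-t,r,d]$ code, contradicting the hypothesis. Rank-nullity applied to $\pi_\cS|_\C\colon\C\to\Ftwo^t$ also gives $\dim(\C\cap K_\cS)\ge s-t$, so the lemma forces $s-t\le r-1$, i.e., $\tau:=t+r-s\ge 1$.

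I would then fix exactly $\tau=t+r-s$ as the petal dimension. The hypothesis $r+t\le(n+s)/2$ rearranges to $2\tau\le n-s$, which is precisely the condition for a non-trivial partial $\tau$-spread in the $(n-s)$-dimensional space $\Ftwo^n/\C$ to exist; by \cite[Theorem 11]{EV:2011} (or the standard spread construction when $\tau$ divides $n-s$), such a partial spread of size at least $2^{(n-s)-\tau}=2^{n-t-r}$ is available. Each $\bar V_i$ lifts to a $\tau$-dimensional subspace $V_i\subset\Ftwo^n$ with $V_i\cap\C=\{\zero\}$ simply by lifting a basis vector by vector.

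It remains to verify Property\,$\A(t)$, i.e., that each $C_i$ contains a codeword with support disjoint from $\cS$ for every $t$-subset $\cS$. The codeset $C_0=\C$ contains $\zero$, whose support is empty. For $i\ge 1$, set $U_\cS:=V_i\cap(\C+K_\cS)$; the standard dimension inequality yields $\dim U_\cS \ge \dim V_i+\dim(\C+K_\cS)-n = \tau+\bigl((n-t)+\dim\pi_\cS(\C)\bigr)-n \ge \tau+s-r+1-t = 1$, using $\dim\pi_\cS(\C)=s-\dim(\C\cap K_\cS)\ge s-r+1$ from the shortening lemma. A non-zero $v\in U_\cS$ then satisfies $\pi_\cS(v)=\pi_\cS(c)$ for some $c\in\C$, and $v+c\in C_i$ has support disjoint from $\cS$. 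The final count is $M+1\ge 2^{n-t-r}+1>2^{n-t-r}$, as required. The main obstacle in writing this up is really just the bookkeeping between $\Ftwo^n$, its quotient by $\C$, and the shortened codes; once the shortening lemma is in hand, everything else is a clean dimension count.
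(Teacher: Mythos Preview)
Your proof is correct and follows essentially the same sunflower construction as the paper: the paper takes the partial $(r{+}t{-}s)$-spread inside a fixed complement $B$ of $K$ rather than in the quotient, and verifies Property~$\A(t)$ by looking at the (at least) $r$-dimensional subspace $(V_i+K)\cap K_{\cS}$ and arguing it cannot lie entirely in $K$---which is precisely the contrapositive of your shortening lemma. Your inclusion of $C_0=\C$ as an extra codeset is a small bonus that makes the strict inequality $M>2^{n-t-r}$ immediate without appealing to sharper partial-spread bounds.
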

\begin{proof}
Let $n,t,s,r,d$ be integers, such that $r+t \leq (n+s)/2$,
$K$ a binary $[n,s,d]$ code, and assume that
a binary $[n-t,r,d]$ code does not exist. Let $B$ be
an $(n-s)$-dimensional subspace of $\Fn$ such that
$$
K + B = \{ a+b ~:~ a \in K , ~ b \in B \} = \Fn~.
$$
The fact that $r+t \leq (n+s)/2$ implies that $r+t-s \leq (n-s)/2$ and hence
there exists a partial $(r+t-s)$-spread of $B$ whose size $M$ is
$M_2 (n-s,r+t-s) \geq 2^{n-r-t}$. Let $V_1,V_2,\ldots , V_M$ be the subspaces
of a related partial spread and construct the following $M$ sets:
$$
C_i = (V_i + K ) \setminus K  ~~ \text{for} ~~ i \in [M]~.
$$
Since $V_i \cap V_j$ is the null space for $i \neq j$, it follows that
$(V_i + K) \cap (V_j +K) =K$ and hence $C_i \cap C_j = \varnothing$.
Hence, we can take the $C_i$'s as the codesets in a code $\C$. To prove
that $\C$ is an $(n,t)$-cooling code it remains to be shown that for
any given subset $S \in [n]$ of size $t$ and a codeset $C_i$, $1 \in [M]$, there
exists a codeword $\xxx \in C_i$, such that $\supp (\xxx) \cap S = \varnothing$.

The code $K$ is an $s$-dimensional subspace, $V_i$ in an
$(r+t-s)$-dimensional subspace, and $K \cap V_i = \{ {\bf 0} \}$
(since $V_i \subset B$ and $K \cap B = \{ {\bf 0} \}$). Hence,
$V_i + K$ is an $(r+t)$-dimensional subspace. Let
$\{\vvv_1,\vvv_2,\ldots,\vvv_{r+t}\}$ is a basis for $V_i + K$.
Let $\vvv'_1,\vvv'_2,\ldots,\vvv'_{r+t}$ denote the projections
of the basis vectors on the $t$ positions~in~$\cS$. These
$r+t$ vectors lie in a $t$-dimensional vector space --- the
projection of $\Fn$ on $\cS$. Hence, there exists an $r$-dimensional
subspace $U_i$ spanned by these $r+t$ basis vectors (which span $V_i + K$),
such that for each $\zzz \in U_i$, we have $\supp (\zzz) \cap \cS = \varnothing$.
We can remove the $t$ coordinates which only have \emph{zero} elements in $U_i$,
from all the vectors of $U_i$, to obtain an $[n-t,r,\delta]$ code $U'_i$.
Since an $[n-t,r,d]$ code does not exist
it follows that $\delta < d$. Hence, $U_i$ contains a vector $\xxx$ which is not contained in $K$.
Since $\xxx \in U_i$ and $\xxx \notin K$, it follows that $\xxx \in C_i$, and
hence $\C$ is an $(n,t)$-cooling code and the proof has been completed.
\end{proof}
Theorem~\ref{thm:sunF} can be applied in various ways. For example,
we derive the following result.
\begin{corollary}
If $n=2^m$, $m>1$, and $n/2 < t \leq (n+m-1)/2$,
then there exists an $(n,t)$-cooling code of size
$M > 2^{n-t-1}$.
\end{corollary}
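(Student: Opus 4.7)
The plan is to apply Theorem~\ref{thm:sunF} with an appropriate choice of the parameters $s$, $r$, $d$. Since we want the conclusion $M > 2^{n-t-1}$ and Theorem~\ref{thm:sunF} gives $M > 2^{n-t-r}$, the natural choice is $r = 1$.

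With $r = 1$ fixed, the kernel code we seek must be an $[n, s, d]$ binary code that is both (i) long enough in dimension so that $1 + t \leq (n+s)/2$, and (ii) has minimum distance $d$ so large that no $[n-t, 1, d]$ code exists. For (ii), observe that a one-dimensional binary code of length $n-t$ consists of the zero vector and a single nonzero vector, so its minimum distance is exactly the weight of that nonzero vector; hence a $[n-t, 1, d]$ code exists if and only if $d \leq n-t$. The non-existence we need therefore amounts to requiring $d > n - t$.

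Taking $n = 2^m$, the first-order Reed-Muller code $RM(1,m)$ is a $[2^m,\, m+1,\, 2^{m-1}]$ binary linear code, so I would choose $s = m+1$ and $d = 2^{m-1} = n/2$. Now I would verify the two conditions of Theorem~\ref{thm:sunF} in turn:
\begin{itemize}
\item \emph{Non-existence of a $[n-t,1,d]$ code:} since $t > n/2$ by hypothesis, we have $n - t < n/2 = d$, so no one-dimensional code of length $n-t$ can have minimum distance $d$.
\item \emph{The inequality $r + t \leq (n+s)/2$:} with $r = 1$ and $s = m+1$, this reads $1 + t \leq (n + m + 1)/2$, equivalently $t \leq (n + m - 1)/2$, which is exactly the upper bound assumed on $t$.
\end{itemize}

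With both hypotheses of Theorem~\ref{thm:sunF} verified, the theorem produces an $(n,t)$-cooling code of size $M > 2^{n-t-r} = 2^{n-t-1}$, as required. There is no real obstacle here beyond spotting the right off-the-shelf code; once $RM(1,m)$ is chosen, the inequality $t \leq (n+m-1)/2$ in the hypothesis is precisely tuned to make the Sunflower construction applicable, and the bound $t > n/2$ is precisely what forces the non-existence side condition.
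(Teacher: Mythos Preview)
Your proposal is correct and matches the paper's own proof essentially line for line: the paper also applies Theorem~\ref{thm:sunF} with the first-order Reed--Muller code $[2^m,\,m+1,\,2^{m-1}]$ as the kernel, takes $r=1$, and checks the same two conditions in the same way.
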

\begin{proof}
Apply Theorem~\ref{thm:sunF} with the $[n,m+1,n/2]$ first order Reed-Muller
code as the kernel code $K$, i.e. $s =m+1$.
If $r=1$ then $r+t \leq 1 + (n+m-1)/2 \leq (n+m+1)/2 = (r+ s )/2$. Clearly, an $[n-t,1,n/2]$ code
does not exist since $n-t < n/2$. Hence, by Theorem~\ref{thm:sunF} we obtain an $(n,t)$-cooling
code whose size is greater than $2^{n-t-1}$.
This code is optimal by Corollary~\ref{thm5}.
\end{proof}

Theorem~\ref{thm:sunF} can be generalized by using the concept
of generalized Hamming weights which was defined by Wei~\cite{Wei91}. For this definition
we generalize the notion of support from a word to a subcode. The \emph{support}
of a subcode $\C'$ of $\C$, $\supp (\C')$ is defined as the set of coordinates
on which $\C'$ contains codewords with nonzero coordinates, i.e.
$$
\supp (\C') ~ \deff ~ \bigl\{i \in [n] ~:~ \exists (x_1,x_2,\ldots , x_n ) \in \C',~ x_i \ne 0\bigr\}~.
$$
Now, the $r$th \emph{generalized Hamming weight}, $d_r (\C)$, of $\C$ is defined as the
minimum number of coordinates in $\supp (\C')$, for an $r$-dimensional subcode $\C'$ of $\C$, i.e.
$$
d_r (\C) ~ \deff ~ \min {\bigl\{|\supp (\C') | ~:~  \C'~\text{is~an}~r\text{-dimensional~subcode~of}~\C \bigr\}}~.
$$

\begin{theorem}
\label{thm:sunG}
If $n,t,s,r,d$ are integers, such that $r+t \leq (n+s)/2$,
and the following two requirements are satisfied:
\begin{itemize}
\item[(R1)] There exists a binary $[n,s,d]$ code $K$.

\item[(R2)] The $r$th generalized Hamming weight of $K$ is larger than $n-t$.
\end{itemize}
Then there exists an $(n,t)$-cooling code of size
$M > 2^{n-t-r}$.
\end{theorem}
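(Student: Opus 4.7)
The plan is to essentially repeat the construction and proof of Theorem~\ref{thm:sunF}, replacing only the final appeal to the nonexistence of an $[n-t,r,d]$ code with an appeal to the generalized Hamming weight hypothesis~(R2). Starting from the $[n,s,d]$ code $K$ guaranteed by~(R1), I would fix an $(n-s)$-dimensional subspace $B$ of $\Fn$ with $K \cap B = \{\zero\}$ and $K + B = \Fn$. Since $r+t \leq (n+s)/2$ implies $r+t-s \leq (n-s)/2$, the partial-spread bound invoked in Theorem~\ref{thm:sunF} yields a partial $(r+t-s)$-spread $V_1, V_2, \ldots, V_M$ of $B$ of size $M \geq M_2(n-s,\, r+t-s) \geq 2^{n-t-r}$. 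The codesets are then defined exactly as before, namely $C_i = (V_i + K) \setminus K$ for $i \in [M]$.

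Disjointness of the $C_i$ follows verbatim as in Theorem~\ref{thm:sunF}: any element common to $C_i$ and $C_j$ would give $v_i - v_j \in K \cap B = \{\zero\}$, forcing $v_i = v_j \in V_i \cap V_j = \{\zero\}$ and hence placing the common element in $K$, a contradiction. For the cooling property, fix an arbitrary $t$-subset $\cS \subset [n]$ and a codeset $C_i$. Because $V_i \cap K \subseteq B \cap K = \{\zero\}$, the subspace $W_i := V_i + K$ has dimension $(r+t-s) + s = r+t$. Projecting $W_i$ onto the $t$ coordinates indexed by $\cS$ lands in a space of dimension $t$, so the kernel of this projection contains an $r$-dimensional subspace $U_i \subseteq W_i$ all of whose vectors have support disjoint from~$\cS$.

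The genuinely new step is the verification that $U_i \not\subseteq K$. I would argue this directly from~(R2): if $U_i$ were contained in $K$, then $U_i$ would be an $r$-dimensional subcode of $K$ whose support is contained in $[n] \setminus \cS$, so $|\supp(U_i)| \leq n-t$, contradicting $d_r(K) > n-t$. Hence some $\xxx \in U_i$ lies outside $K$; such a vector belongs to $(V_i + K) \setminus K = C_i$ and satisfies $\supp(\xxx) \cap \cS = \varnothing$, establishing that $\{C_1, \ldots, C_M\}$ is an $(n,t)$-cooling code of the claimed size.

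I do not expect any serious obstacle, since once the correct generalized-Hamming-weight hypothesis is isolated the argument runs in parallel with Theorem~\ref{thm:sunF}. The conceptual content of the generalization is the observation that the role of the nonexistence of an $[n-t,r,d]$ code in the earlier proof is really to forbid an $r$-dimensional subcode of $K$ from being supported on any given $n-t$ coordinates, and this forbidden-support condition is precisely what $d_r(K) > n-t$ expresses, in sharper and more flexible form. The only minor point of care is to check that the partial-spread size can be claimed strictly larger than $2^{n-t-r}$, which follows from the exact formula $(2^{n-s}-1)/(2^{r+t-s}-1)$ in the divisible case and from the sharper bounds discussed after equation~(\ref{union}) in general.
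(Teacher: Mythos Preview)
Your proposal is correct and follows essentially the same route as the paper's own proof: the paper explicitly says the argument runs along the lines of Theorem~\ref{thm:sunF} up to the point where the $r$-dimensional subspace $U_i$ is obtained, and then uses (R2) exactly as you do to conclude that $U_i$ cannot lie entirely in $K$. Your explanation of why $U_i \subseteq K$ would contradict $d_r(K) > n-t$ is in fact slightly more explicit than the paper's, which simply asserts the conclusion.
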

\begin{proof}
The proof is along the same lines as the proof of Theorem~\ref{thm:sunF},
up to the point in which an $[n-t,r,\delta]$ code $U'_i$ is obtained from $U_i$.
Since the $r$th generalized Hamming weight of $K$ is larger than $n-t$,
it follows that the $r$-dimensional subspace $U_i$
contains a vector $\xxx$ which is not contained in $K$.
Since $\xxx \in U_i$ and $\xxx \notin K$, it follows that $\xxx \in C_i$, and
hence $\C$ is an $(n,t)$-cooling code and the proof has been completed.
\end{proof}

Theorem~\ref{thm:sunG} is stronger than and generalizes Theorem~\ref{thm:sunF}. Indeed,
consider the $[n,s,d]$ code $K$ in Theorem~\ref{thm:sunF}. If
there does not exist an $[n-t,r,d]$ code, then the
$r$th generalized Hamming weight of $K$ must be larger than $n-t$.
But, if for a given $[n,s,d]$ code $K$ the
$r$th generalized Hamming weight of $K$ is larger than $n-t$,
then an $[n-t,r,d]$ code might still exists.

An independent question is whether Theorem~\ref{thm:sunG} can improve on the results
implied by Theorem~\ref{thm:sunF}, or all the results that can be
obtained from Theorem~\ref{thm:sunG} can also be obtained from Theorem~\ref{thm:sunF}.
The answer is that Theorem~\ref{thm:sunG} improves on some of the results obtained
by Theorem~\ref{thm:sunF}. This is demonstrated with the following examples.

The first interesting example is considered in the range $1 \leq t < n \leq 100$.
Consider the following $8\times 20$ parity-check matrix $H$ over $\mathbb{F}_2$,
\[
H=
\left(
 \renewcommand\arraystretch{1}
 \arraycolsep=1.2pt
 \begin{array}{rrrrrrrrrrrrrrrrrrrr}
1 & 0 & 0 & 0 & 0 & 0 & 0 & 0 & 1 & 0 & 1 & 0 & 0 & 0 & 0 & 0 & 1 & 1 & 1 & 1 \\
0 & 1 & 0 & 0 & 0 & 0 & 0 & 0 & 1 & 1 & 0 & 0 & 0 & 0 & 1 & 1 & 0 & 0 & 1 & 1 \\
0 & 0 & 1 & 0 & 0 & 0 & 0 & 0 & 1 & 1 & 1 & 1 & 0 & 0 & 0 & 1 & 0 & 1 & 0 & 1 \\
0 & 0 & 0 & 1 & 0 & 0 & 0 & 0 & 0 & 0 & 1 & 1 & 0 & 1 & 1 & 1 & 1 & 0 & 0 & 1 \\
0 & 0 & 0 & 0 & 1 & 0 & 0 & 0 & 0 & 1 & 0 & 1 & 0 & 1 & 1 & 0 & 1 & 1 & 0 & 1 \\
0 & 0 & 0 & 0 & 0 & 1 & 0 & 0 & 0 & 1 & 1 & 0 & 0 & 1 & 0 & 1 & 0 & 1 & 1 & 0 \\
0 & 0 & 0 & 0 & 0 & 0 & 1 & 0 & 1 & 1 & 0 & 1 & 1 & 0 & 0 & 0 & 0 & 1 & 1 & 0 \\
0 & 0 & 0 & 0 & 0 & 0 & 0 & 1 & 1 & 0 & 1 & 1 & 1 & 0 & 1 & 0 & 1 & 1 & 0 & 0
\end{array}\right).
\]

One can verify that every six columns have rank at least 5.
Therefore, there exists an $[n,n-8,d]$ linear code with $d_2\ge 7$ for $n\in \{18,19,20\}$.
Thus, by Theorem~\ref{thm:sunG}, there exists an $(n,n-6)$ cooling code of size
greater than $2^4=16$ for $n\in \{18,19,20\}$.

On other hand, suppose that Theorem~\ref{thm:sunF} is used
to obtain an $(n,n-6)$-cooling code of size greater than $2^4$ for $n\in \{18,19,20\}$.
It is required by Theorem~\ref{thm:sunF} that $t+r\le (n+s)/2$,
where $t=n-6$ and $r=2$, and hence $s\ge n-8$.
Using the online codetables.de~\cite{Grassl}, the best code with
dimension $n-8$ for length $n\in\{18,19,20\}$ has minimum Hamming distance $4$.
But, there exists a $[6,2,4]$ code, and hence Theorem~\ref{thm:sunF} cannot be applied.

Another example is derived from Feng et al.~\cite{FTW92} who computed
the following bounds for the generalized Hamming weights for cyclic codes of length 255.
\begin{enumerate}[(i)]
\item The primitive double-error-correcting BCH $[255,239,5]$ code has $d_9\ge 18$.
If $t=238$, then by Theorem~\ref{thm:sunG}, there exists a $(255,238)$-cooling code of size greater than $2^8$.

On other hand, suppose that Theorem~\ref{thm:sunF} is used
to obtain a $(255,238)$-cooling code of size greater than $2^8$.
It is required by Theorem~\ref{thm:sunF} that $t+r\le (n+s)/2$,
where $n=255$, $t=238$ and $r=9$, and hence $s\ge 239$.
Using the online codetables.de~\cite{Grassl}, the best code with
dimension $239$ for length $255$ has minimum Hamming distance $5$.
But, there exists a $[17,9,5]$ code, and hence Theorem~\ref{thm:sunF} cannot apply.

\item The reversible cyclic double-error-correcting BCH $[255,238,5]$ code has $d_9\ge 19$~\cite{FTW92}.
If $t=237$, then by Theorem~\ref{thm:sunG},
there exists a $(255,237)$-cooling code with size greater than $2^9$.

On other hand, suppose that  Theorem~\ref{thm:sunF} is used
to obtain a $(255,237)$-cooling code of size greater than $2^9$.
It is required by Theorem~\ref{thm:sunF} that $237+9\le (255+s)/2$ and hence $s\ge 237$.
Using the online codetables.de~\cite{Grassl},
the best code with dimension $237$ has minimum Hamming distance $6$.
But, there exists an $[18,9,6]$ code, and hence Theorem~\ref{thm:sunF} cannot be applied.
\end{enumerate}

\subsection{Best Lower Bounds on $C(n,t)$}
\label{sec:BestLB}

Now, we will summarize our constructions by
providing our best lower bounds on $C(n,t)$ in general
and when $1 \leq t < n \leq 100$ in particular.
\[
C(n,t)~
\begin{cases}
=2^{n-t}, & \mbox{if $t=1$ or $t=n-1$,}\\
> 2^{n-t-1}, & \mbox{if $t+1\le n/2$,}\\
> 2^{n-t-2}, & \mbox{if $(n,t)\in\{(18,12),(19,13),(20,14)\}$,}\\
> M_2(n-s,r+t-s), & \mbox{if $n$, $t$, $s$, and $r$ appears in TABLE I in the Appendix,}\\
& \mbox{and $(n,t)\notin\{(18,12),(19,13),(20,14)\}$,}\\
\ge n-t+1, & \mbox{otherwise.}
\end{cases}
\]

\vspace{2.00ex}
\section{Low-Power Cooling Codes}
\label{sec:LPCC}
\vspace{0.50ex}

\noindent\looseness=-1
In this section, we present coding schemes that satisfy Properties
$\A(\kern-0.5ptt\kern-0.5pt)$ and $\B(\kern-0.5ptw\kern-0.5pt)$
simultaneously in every transmission\mbox{.\kern-1pt}
The corresponding codes are called {\dfn $(n,t,w)$-low-power cooling codes}
(or $(n,t,w)$-LPC codes for short). We suggest two
types of constructions. The first type is based on decomposition
of the complete hypergraph into disjoint perfect matchings.
The second type is based on cooling codes over GF($q$),
dual codes of $[n,\kappa,t+1]$ codes, MDS codes, spreads,
$J^+ (r,\omega)$, and concatenation codes. Finally, we will show that
also the sunflower construction can be used to obtain $(n,t,w)$-LPC codes.

As before, we assume that the coding schemes
constructed in what follows are augmented by differential encoding.
Since the codes are also $(n,\kern-1ptt)$-cooling codes,
they~conform to \Dref{cooling-def}. Thus a code $\C$ is
a collection of codesets $C_1,C_2,\ldots,C_M$, which are disjoint
subsets of $\Fn$. In~order to satisfy Property\,$\B(w)$, the codesets
of the code $\C$ must satisfy that
\be{C_i-embed}
C_1,C_2,\ldots,C_M \,\subset\, J^+ (n,w)~.
\ee
As shown in Section\,\ref{sec:Differential}, this guarantees that
the total number of state transitions on the $n$ bus wires
is at most~$w$.

\subsection{Decomposition of the Complete Hypergraph}
\label{sec:Resolutions}

The first construction applies the well known Baranyai's theorem~\cite[p. 536]{vLWi92}.
The theorem can be stated in terms of set systems, but it is
more known in the context of the decomposition of complete hypergraph on $n$ vertices.
The hyperedges of the complete hypergraph consist of all subsets of $w$ vertices.
In other words, the set of vertices is $[n]$ and the set of edges consists of all
$w$-subsets of $[n]$.
If $w$ divides $n$ then the Baranyai's theorem asserts that the hyperedges of this complete hypergraph
can be decomposed into pairwise disjoint perfect matchings, where each matching consists of disjoint
hyperedges (two hyperedges do not contain the same vertex) and each vertex is contained in exactly one
hyperedge of the matching. Clearly, each perfect matching contains $n/w$ hyperedges,
and such a decomposition contains $\frac{w}{n} \binom{n}{w} = \binom{n-1}{w-1}$
disjoint matchings. How can such a decomposition can be used to construct $(n,t,w)$-LPC codes?
The answer lies on a right choice of $n$, e.g., if $n=w(t+1+\epsilon)$, where $\epsilon$ is a nonnegative
integer, we can use each matching in such a decomposition as a codeset.
To prove that such a code is an $(n,t,w)$-LPC code, we have to prove that it
satisfies Property $\A(\kern-0.5ptt\kern-0.5pt)$. Given a set $\cS \subset [n]$ of size $t$ with
the numbers of the $t$ hottest wires,
and a perfect matching $C$, the elements of $\cS$ are contained in at most $t$ hyperedges. But, since a
matching contains $t+1+ \epsilon \geq t+1$ hyperedges, it follows that there is at least one hyperedge
in the matching which does not contain any element from~$\cS$.
Such a hyperedge is the codeword $\xxx \in  C$ for which $\supp (\xxx) \cap \cS = \varnothing$.
The most effective sets of parameters on which this construction can be
applied are when $\epsilon =0$, i.e. $(w(t+1),t,w)$-LPC code.
Clearly, we can add codesets with codewords whose weight is less than~$w$. When~$n$ is large, the contribution
of such codesets is minor. When~$n$ is small such a contribution can be important. For example, if $n=12$, $w=3$,
and $t=3$, then there are 55 codesets if we restrict ourself only to codewords of weight 3. If we use
all words of weight at most 3, then we can have a code with 81 codesets. The advantage is reduced for larger $n$ --- for example,
if $n=21$, $w=3$ and $t=6$. There are 190 codesets if only codewords of weight 3 are used. If there are codewords
of weight at most 3, then we can have a code with 224 codesets, which is less dramatic
improvements compared to the previous example.

What about encoding and decoding of this code? Here lies the big disadvantage of this method.
Unless the parameters are relatively small there is
no known efficient encoding and decoding algorithms.
Anyway, in practice, usually small parameters are used and for many of these sets
of parameters this method (code) is probably the most effective one.

\subsection{Constructions Based on Cooling Codes over $\Fq$}
\label{sec:CoolCodesFq}

The low-power codes and the cooling codes are all binary codes, as a
codeword indicates which transitions should be made on the bus-wires
during the transmission. Hence, there is
no use for non-binary codes as thermal codes. Nevertheless, non-binary codes might be useful
in constructions of binary codes as will be demonstrated in this subsection.

\begin{definition}
\label{coolingFq-def}
\,For positive integers $n$ and $\kern1pt t\kern-1pt <\kern-1pt n$, and
for a prime power $q$, an
{\dfn $(n,\kern-1pt t)_q$-cooling code} $\C$ of size $M$ is
defined as a set $\{C_1,C_2,\ldots,C_M\}$, where $C_1,C_2,\ldots,C_M$
are disjoint subsets of\/ $\Fqn$ satisfying the following property:
for any set\/ $\cS \subset [n]$ of size $|\cS| = t$ and for~all
$i \,{\in}\,\kern-1pt [M]$,
there exists a codeword\/ $\xxx \,{\in}\, C_i\kern-1pt$
with\/ $\supp(\xxx) \cap \cS\kern-1pt = \varnothing$.\vspace*{0.50ex}
\end{definition}

Similarly to Theorem~\ref{thm:spread_cool} we can prove that

\begin{lemma}
\label{thm:spread_cool_q}
Let\/ $V_1,\kern-1ptV_2,\ldots,V_M$ be a partial $(t{+}1)$-spread of\/ $\Fqn$, and
define the code\kern1.5pt\ $\C = \{V^*_1,V^*_2,\ldots,V^*_M\}$,
where\/
$V^*_i \kern-1.5pt=\kern-1pt V_i \kern1.5pt{\setminus} \{\zero\}$
for all\/ $i$, $i = 1,2,\ldots,M$.
Then\/ $\C$ is an $(n,t)_q$-cooling code of size $M \ge q^{n-t-1}$.
\end{lemma}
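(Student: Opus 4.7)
The plan is to mimic, essentially verbatim, the proof of Theorem~\ref{thm:spread_cool} (the binary analogue), since none of its arguments depended on the ground field being $\Ftwo$. There are three things to verify: disjointness of the codesets, the cooling property, and the lower bound on $M$.

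First, disjointness of $V^*_1,\ldots,V^*_M$ as subsets of $\Fqn$ follows immediately from the defining property $V_i\cap V_j=\{\zero\}$ of a partial $(t{+}1)$-spread, since $V^*_i$ was obtained by deleting the zero vector from $V_i$. Next, to check the cooling property, I would fix any $t$-subset $\cS\subset[n]$ and consider the coordinate projection $\pi_\cS:\Fqn\to\Fq^t$. Restricting $\pi_\cS$ to a $(t{+}1)$-dimensional subspace $V_i$ yields an $\Fq$-linear map from a space of dimension $t+1$ into a space of dimension $t$, so by the rank-nullity theorem its kernel contains a nonzero vector $\xxx\in V_i$. By construction $\supp(\xxx)\cap\cS=\varnothing$, and since $\xxx\ne\zero$ it belongs to $V^*_i$. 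Applying this to every $i\in[M]$ gives the required cooling property and shows $\C$ is an $(n,t)_q$-cooling code.

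For the size bound I would split into two cases. When $(t+1)\mid n$, a full $(t{+}1)$-spread of $\Fqn$ exists and has size $M=(q^n-1)/(q^{t+1}-1)$, which is strictly larger than $q^{n-t-1}$. When $(t+1)\nmid n$, I would invoke the construction of partial $(t{+}1)$-spreads with at least $q^{n-t-1}$ elements established in \cite[Theorem 11]{EV:2011}, which works over arbitrary $\Fq$. Either way, $M\ge q^{n-t-1}$.

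There is no real obstacle here: the only place where the ground field entered the binary argument was in the choice of coefficients $a_1,\ldots,a_{t+1}\in\Ftwo$ exhibiting a nontrivial linear dependence among the projected basis vectors, and this generalizes immediately to coefficients in $\Fq$. The only substantive input that must exist over $\Fq$ is the quantitative partial spread bound $M_q(n,t+1)\ge q^{n-t-1}$, which is precisely what \cite{EV:2011} (and the more recent refinements mentioned in the paragraph preceding Theorem~\ref{thm:spread_cool}) provide. Consequently the proof reduces to a line-by-line translation of the binary case, with the rank-nullity step doing all the work.
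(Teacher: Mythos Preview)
Your proposal is correct and follows essentially the same approach as the paper: the paper simply states that the lemma is proved ``similarly to Theorem~\ref{thm:spread_cool},'' and your argument is precisely the natural $q$-ary translation of that proof, with the rank--nullity formulation being an equivalent (and slightly cleaner) way of phrasing the linear-dependence step among the projected basis vectors. Your treatment of the size bound via \cite[Theorem~11]{EV:2011} is also exactly what the paper relies on in the discussion preceding Theorem~\ref{thm:spread_cool}.
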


Theorem~\ref{thm:spread_cool_q} can be applied when $t+1 \leq n/2$. When
$t+1 > n/2$ we can generalize Theorem~\ref{thm:sunF} to
obtain $(n,t)_q$-cooling codes. The generalization is straightforward
and hence it will be omitted.

A third construction, which is not a generalization of previous constructions,
for $(n,t)_q$-cooling codes is based on the cosets
of a dual code for a linear code over $\Fq$, whose minimum Hamming distance is at least $t+1$.
We start with the related definitions and properties. For more information and proofs of
the claims given, the reader can consult with~\cite{MWS}.

Each $[n,\kappa,d]$ code $\C$ over $\Fq$ induces a partition of $\Fqn$, where
$\C_{\zzz}$, $\zzz \in \Fqn$, is a part in this partition if
$$
\C_{\zzz} = \bigl\{ \zzz + \ccc ~:~ \ccc \in \C  \bigr\}~.
$$
Each such part is called a \emph{coset of $\C$} and this partition contains $q^{n-\kappa}$
pairwise disjoint cosets.

Each $[n,\kappa,d]$ code $\C$ over $\Fq$ has a \emph{dual code} $\C^\perp$, which is the
dual subspace of $\C$. $\C^\perp$ is an $[n,n-\kappa,d']$ code.

A $\lambda \cdot q^t \times n$ matrix $\cA$ with elements from $\Fq$ is called an \emph{orthogonal array}
with \emph{strength} $t$ if each $t$-tuple over $\Fq$ appears exactly $\lambda$ times
in each projection on any $t$ columns of~$\cA$. The dual code $\C^\perp$ of an $[n,\kappa,d]$ code $\C$,
is an orthogonal array of strength $d-1$.

Finally, the \emph{Singleton bound} for an $[n,\kappa,d]$ code over $\Fq$ asserts that $d \leq n-\kappa +1$.
A code which attains this bound, with equality,
is called a \emph{maximum distance separable} code (an \emph{MDS} code in short).

\begin{lemma}
\label{lem:dual_cool}
If there exists an $[n,\kappa,t+1]$ code over $\Fq$, then there exists
an $(n,t)_q$-cooling code of size $q^\kappa$.
\end{lemma}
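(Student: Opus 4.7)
The plan is to use the $q^\kappa$ cosets of the dual code $\C^\perp$ as the codesets of the desired cooling code. Assume we are given an $[n,\kappa,t+1]$ code $\C$ over $\Fq$, and consider its dual $\C^\perp$, which is an $[n,n-\kappa,d']$ code for some $d'$. Since $\C^\perp$ partitions $\Fqn$ into exactly $q^{n-(n-\kappa)} = q^\kappa$ cosets $\C^\perp + \zzz$, these cosets are automatically pairwise disjoint subsets of $\Fqn$ and their number matches the required size.

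Next I would verify the cooling property. The key ingredient is the orthogonal-array fact stated just before the lemma: because $\C$ has minimum Hamming distance $t+1$, the dual code $\C^\perp$ is an orthogonal array of strength $t$ over $\Fq$, with some index $\lambda$. This means that for every set $\cS\subset[n]$ of size $t$, the projection of $\C^\perp$ onto the coordinates in $\cS$ hits each $t$-tuple of $\Fq^t$ exactly $\lambda$ times. I would then observe that a coset $\C^\perp+\zzz$ restricted to $\cS$ is simply the additive translate by $\zzz|_\cS$ of the restriction of $\C^\perp$, and hence every element of $\Fq^t$ still appears (indeed $\lambda$ times) when we project the coset to $\cS$. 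In particular, the all-zero $t$-tuple appears, which yields a codeword $\xxx\in\C^\perp+\zzz$ with $\supp(\xxx)\cap\cS=\varnothing$.

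Putting this together: declare $C_1,C_2,\ldots,C_{q^\kappa}$ to be the $q^\kappa$ cosets of $\C^\perp$ in $\Fqn$. They are disjoint, and for any $t$-subset $\cS\subset[n]$ and any index $i$ there exists $\xxx\in C_i$ with $\supp(\xxx)\cap\cS=\varnothing$, so by \Dref{coolingFq-def} the collection $\{C_1,\ldots,C_{q^\kappa}\}$ is an $(n,t)_q$-cooling code of size $q^\kappa$, as claimed.

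The argument is short and there is no real obstacle beyond a careful citation of the orthogonal-array correspondence; the one point that deserves attention is the translation step, namely that the orthogonal-array property of $\C^\perp$ is preserved under coset shifts, which is what guarantees that \emph{every} codeset (not only the one containing $\zero$) contains a vector vanishing on any prescribed set $\cS$ of $t$ coordinates.
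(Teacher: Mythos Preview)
Your proposal is correct and follows essentially the same approach as the paper: take the $q^\kappa$ cosets of $\C^\perp$ as the codesets, invoke the orthogonal-array property of $\C^\perp$ (strength $t$ because $d(\C)=t+1$), and observe that this property is preserved under coset translation so that every coset contains a vector vanishing on any prescribed $t$-subset. The only minor point the paper makes explicit that you leave implicit is that the index $\lambda=q^{n-\kappa-t}$ is strictly positive by the Singleton bound, but this is immediate from the orthogonal-array statement you cite.
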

\begin{proof}
Let $\C$ be an $[n,\kappa,t+1]$ code over $\Fq$ and $\C^\perp$ be its dual code.
$\C^\perp$ is an orthogonal array of strength $t$,
i.e., each $t$-tuple over $\Fq$ appears the same number of times, in each projection
of $t$ columns of $\C^\perp$. Since the size of~$\C^\perp$ is $q^{n-\kappa}$, it follows
that each such $t$-tuple (including the all-zero $n$-tuple) appears $q^{n-\kappa -t}$ times
in each projection (note that $q^{n-\kappa-t} >0$ by the Singleton bound). Since a coset of $\C^\perp$ is
formed by adding a fixed vector of length $n$ to all the codewords of $\C^\perp$, it follows that
each coset is also an orthogonal array of strength $t$. Therefore, the code $\C^\perp$ and its $q^\kappa -1$ cosets,
can be taken as $q^\kappa$ codesets, to
form an $(n,t)_q$-cooling code of size $q^\kappa$.
\end{proof}

Lemma~\ref{lem:dual_cool} has some interesting consequences. First, one might asks why
the construction implied by Lemma~\ref{lem:dual_cool} was not given in Section~\ref{spreads}?
The answer is very simple. The binary codes obtained by this construction are not good enough
as the codes presented in the constructions of Section~\ref{spreads}.

A second and more important observation from Lemma~\ref{lem:dual_cool}
is a construction of a large $(n,t)_q$-cooling code if we use
an MDS code as the original code $\C$. Recall that an $[n,\kappa ,d]$ code is an MDS code
if and only if $d=n-\kappa +1$. Moreover, the dual code $\C^\perp$ of an MDS code is also
an MDS code. Finally, there is a well known conjecture about the range in which
MDS codes can exist and there are MDS codes for all parameters in this range.

\begin{conjecture}
\label{conj:MDS}
If $d \geq 3$, then there exists an $[n,\kappa ,d]$ MDS code over $\Fq$
if and only if $n \leq q+1$ for all $q$ and $2 \leq \kappa \leq q-1$, except
when $q$ is even and $\kappa \in \{ 3,q-1 \}$, in which case $n \leq q+2$.
\end{conjecture}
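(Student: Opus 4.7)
The plan is to split the biconditional into its two directions, existence and nonexistence. For the existence (``if'') direction, the standard construction is via Reed-Solomon codes: given $n \le q+1$ distinct points $\alpha_1, \ldots, \alpha_n$ in $\Fq \cup \{\infty\}$, the evaluation map $f \mapsto (f(\alpha_1), \ldots, f(\alpha_n))$ on the space of polynomials over $\Fq$ of degree below $\kappa$ (with evaluation at $\infty$ reading off the coefficient of $x^{\kappa-1}$) yields an $[n,\kappa,n-\kappa+1]$ MDS code, since a nonzero polynomial of degree less than $\kappa$ has at most $\kappa-1$ roots. For $q$ even and $\kappa = 3$, the extension to length $q+2$ is obtained by adjoining the nucleus of a conic to the evaluation set, i.e., the classical hyperoval construction in $\mathrm{PG}(2,q)$; the case $\kappa = q-1$ then follows by taking duals, since the dual of an MDS code is MDS.

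For the nonexistence (``only if'') direction, the plan is to translate the problem into projective geometry: the columns of a generator matrix of an $[n,\kappa,n-\kappa+1]$ MDS code form an $n$-arc in $\mathrm{PG}(\kappa-1,q)$, namely a set of $n$ points no $\kappa$ of which lie in a common hyperplane. Showing nonexistence of MDS codes outside the stated range is equivalent to bounding the size of such an arc by $q+1$ (or $q+2$ in the exceptional case). By MDS duality one may restrict attention to $\kappa \le (n+1)/2$. For $\kappa = 2$ the bound is a direct counting argument on slopes in $\Fq^2$, and for $\kappa = 3$ it is Segre's celebrated theorem on ovals in planes of odd order, together with the hyperoval construction for even order.

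The main obstacle, and the reason this is stated as a conjecture, is handling arbitrary $\kappa$ and arbitrary prime power $q$ simultaneously: the statement is the celebrated \emph{MDS conjecture} of Segre, dating back to 1955. The strongest partial result I am aware of is due to Ball, who proved the conjecture when $q$ is prime by a polynomial method: from a hypothetically too-large arc one extracts a low-degree polynomial vanishing on a combinatorially determined set of points derived from secant hyperplanes, and then derives a contradiction by analyzing binomial coefficients modulo $p$ together with a Newton-identity-style argument. My proposal would present this argument in detail for the prime case. Extending the method to composite $q = p^h$ is the true hard point: the presence of proper subfields of $\Fq$ introduces algebraic flexibility that the polynomial identities used for prime $q$ do not immediately eliminate, and for general prime powers the full conjecture is, to the best of my knowledge, still open. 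My sketch would therefore conclude by explicitly acknowledging this gap and stating the result as a conjecture together with the infinite family of parameters in which it is known to hold, which is exactly what the paper does.
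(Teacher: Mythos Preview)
Your assessment is accurate: the paper states this as a \emph{conjecture} and offers no proof whatsoever. There is nothing to compare your attempt against, because the paper makes no attempt. Immediately following the conjecture, the paper records only the known existence direction as a separate theorem (Reed--Solomon and its extensions give $[n,\kappa,n-\kappa+1]$ MDS codes whenever $n \le q+1$, or $n \le q+2$ in the exceptional even-$q$ cases), and then moves on to use only that direction via Corollary~\ref{cor:MDS_cool}. Your sketch of the existence direction via evaluation codes and hyperovals, your translation of the nonexistence direction into bounding arcs in $\mathrm{PG}(\kappa-1,q)$, your citation of Ball's resolution for prime $q$, and your explicit acknowledgment that the general prime-power case remains open are all correct and go well beyond what the paper itself says. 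In short: you have not missed anything, and your final sentence is exactly right.
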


\begin{theorem}
If $d \geq 3$, then there exists an $[n,\kappa ,d]$ MDS code over $\Fq$
if $n \leq q+1$ for all $q$ and $2 \leq \kappa \leq q-1$, except
when $q$ is even and $\kappa \in \{ 3,q-1 \}$, in which case $n \leq q+2$..
\end{theorem}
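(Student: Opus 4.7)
The plan is to construct MDS codes by explicit Reed--Solomon-type constructions and then extend or shorten them to fill the stated parameter range. The starting point is the standard Reed--Solomon construction: fix $\kappa$ with $2 \leq \kappa \leq q-1$, pick $n$ distinct elements $\alpha_1,\ldots,\alpha_n \in \Fq$, and let $\C$ consist of all vectors $(f(\alpha_1),\ldots,f(\alpha_n))$ with $f \in \Fq[x]$ of degree less than $\kappa$. A Vandermonde-determinant argument shows that any $\kappa$ columns of the natural generator matrix are linearly independent, giving an $[n,\kappa,n-\kappa+1]$ MDS code for every $n \leq q$. To reach $n = q+1$, I would pass to the singly extended Reed--Solomon code by adjoining the coefficient of $x^{\kappa-1}$ in $f$ as an additional ``evaluation at infinity'' coordinate; the generator matrix then has columns $(1,\alpha,\alpha^2,\ldots,\alpha^{\kappa-1})^T$ for $\alpha \in \Fq$ together with $(0,\ldots,0,1)^T$, and another Vandermonde argument confirms that every $\kappa \times \kappa$ submatrix is nonsingular, producing $[q+1,\kappa,q+2-\kappa]$ MDS codes for $2 \leq \kappa \leq q-1$.

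For the exceptional range $q$ even, $\kappa = 3$, $n = q+2$, I would exhibit the $3 \times (q+2)$ matrix whose columns are $(1,\alpha,\alpha^2)^T$ for $\alpha \in \Fq$, together with $(0,1,0)^T$ and $(0,0,1)^T$, and verify that every three columns are linearly independent. The three-column determinants reduce to expressions of the form $(\alpha+\beta)(\alpha+\gamma)(\beta+\gamma)$ or $\alpha+\beta$ or $\alpha^2-\beta^2$; in characteristic two the last of these factors as $(\alpha+\beta)^2$, which is nonzero whenever $\alpha \neq \beta$, whereas in odd characteristic it vanishes on $\alpha = -\beta$ and the construction collapses. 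Equivalently, $x \mapsto x^2$ is a field automorphism in characteristic two, so the set of $q+2$ columns is a hyperoval in the projective plane of even order $q$, and its row space is the desired $[q+2,3,q]$ MDS code. The companion case $\kappa = q-1$ for $q$ even then follows by duality: since the dual of an $[n,\kappa,n-\kappa+1]$ MDS code is an $[n,n-\kappa,\kappa+1]$ MDS code, dualising the hyperoval code produces a $[q+2,q-1,4]$ MDS code.

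To fill in all remaining $(n,\kappa)$ pairs, I would use the fact that shortening and puncturing preserve the MDS property. From an $[N,K,N-K+1]$ MDS code, restricting to codewords with a prescribed coordinate equal to zero and then deleting that coordinate yields an $[N-1,K-1,N-K+1]$ MDS code, while puncturing that coordinate yields an $[N-1,K,N-K]$ MDS code. Iterating these operations from the longest code available (length $q+1$ in general, or length $q+2$ in the two exceptional cases) reaches every pair $(n,\kappa)$ in the stated range.

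The main obstacle is the even-characteristic extension to length $q+2$. For $n \leq q+1$ the Vandermonde calculations are routine, and the shortening step is entirely mechanical; the only genuinely delicate verification is that the hyperoval construction in characteristic two really does make all $\binom{q+2}{3}$ triples of columns independent, which is precisely the property that fails in odd characteristic and thereby justifies why the $q+2$ extension appears in the statement only for the two exceptional values of $\kappa$.
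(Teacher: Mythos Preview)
Your construction is correct and is the standard textbook proof (generalized Reed--Solomon codes, singly and doubly extended, with the hyperoval construction for even $q$ and the $\kappa=q-1$ case obtained by duality). The paper, however, does not prove this theorem at all: it is stated immediately after the MDS conjecture as a well-known classical fact, with no argument given, and is used only to feed into the subsequent corollary on $(n,t)_q$-cooling codes. So there is nothing to compare your approach against; you have simply supplied the proof that the paper omits by citing common knowledge (see, e.g., MacWilliams--Sloane, which the paper references as \cite{MWS}).
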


\begin{corollary}
\label{cor:MDS_cool}
If $n \leq q+1$, then there exist an $(n,t)_q$-cooling code of size $q^{n-t}$.
\end{corollary}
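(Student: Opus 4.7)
The plan is to simply combine Lemma~\ref{lem:dual_cool} with the MDS existence theorem that immediately precedes the corollary. Given $n \le q+1$, I would set $\kappa := n-t$ and $d := t+1$; these satisfy the MDS equality $d = n - \kappa + 1$, so what is needed is an $[n, n-t, t+1]$ MDS code over $\Fq$.

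First I would invoke the preceding existence theorem (the ``forward'' half of the MDS conjecture) with these parameters to produce such a code $\C$. Then I would feed $\C$ directly into Lemma~\ref{lem:dual_cool}: since $\C$ has minimum distance $t+1$, its dual $\C^\perp$ is an orthogonal array of strength $t$, and each of its $q^\kappa$ cosets inherits this property. Taking the $q^\kappa$ cosets as codesets yields an $(n,t)_q$-cooling code of size $q^\kappa = q^{n-t}$, which is exactly the claimed bound.

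The only non-routine step, and hence the main (minor) obstacle, is checking that the triple $(n, n-t, t+1)$ actually lies in the parameter range covered by the cited MDS existence theorem: one needs $d \ge 3$, i.e.\ $t \ge 2$, together with $2 \le \kappa \le q-1$, plus the exceptional window $n \le q+2$ when $q$ is even and $\kappa \in \{3, q-1\}$. The inequality $\kappa \le q-1$ follows from $n \le q+1$ and $t \ge 2$, while $\kappa \ge 2$ is the mild assumption that we are not in a degenerate regime. The truly small cases ($t = 0$ or $t = 1$, or $\kappa \in \{0, 1, n\}$) are handled by trivial MDS codes: the whole space $\Fqn$ is $[n,n,1]$, a parity-check code is $[n,n-1,2]$, and a repetition code is $[n,1,n]$, all of which can be plugged into Lemma~\ref{lem:dual_cool} to cover the boundary. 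Once this bookkeeping is done, the corollary follows with no further computation.
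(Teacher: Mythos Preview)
Your proposal is correct and follows exactly the route the paper intends: the corollary is stated immediately after Lemma~\ref{lem:dual_cool} and the MDS existence theorem precisely so that one plugs an $[n,n-t,t+1]$ MDS code (available whenever $n\le q+1$) into the lemma to obtain the $q^{n-t}$ codesets. Your careful bookkeeping on the degenerate parameter ranges is more explicit than the paper's presentation, but the underlying argument is identical.
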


We continue to present a construction which transform an $(n,t)_q$-cooling code
into an $(ns,t,nw)$-LPC code. This construction will be called a concatenation
construction since it perform concatenations of the elements in $J^+(s,w)$
implied by an $(n,t)_q$-cooling code.

\begin{theorem}
\label{thm:concatenate}
If $q \leq \sum_{i=0}^w \binom{s}{i}$ and there exists an $(m,t)_q$-cooling
code of size $M$, then there exists an $(ms,t,mw)$-LPC code of size $M$.
\end{theorem}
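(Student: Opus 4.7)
The plan is to use a concatenation construction, where each symbol of $\Fq$ is replaced by a binary string of length $s$ and weight at most $w$, sending the zero symbol to the all-zero string. The hypothesis $q \leq \sum_{i=0}^w \binom{s}{i} = |J^+(s,w)|$ is exactly what guarantees that an injection $\phi : \Fq \to J^+(s,w)$ with $\phi(0) = \zero$ exists. I would fix such a $\phi$ and then define the block-wise map $\Phi : \Fqm \to \Ftwo^{ms}$ by $\Phi(c_1,c_2,\ldots,c_m) = (\phi(c_1),\phi(c_2),\ldots,\phi(c_m))$, viewing the output as partitioned into $m$ consecutive blocks of $s$ bits each.

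Given an $(m,t)_q$-cooling code $\C = \{C_1,C_2,\ldots,C_M\}$, I would take as the proposed LPC code $\C' = \{\Phi(C_1),\Phi(C_2),\ldots,\Phi(C_M)\}$. The easy verifications are disjointness and the low-power property. Disjointness of the sets $\Phi(C_i)$ follows because $\phi$ is injective on $\Fq$, so $\Phi$ is injective on $\Fqm$, and the original codesets $C_i$ were disjoint. For Property $\B(mw)$, any codeword $\Phi(\ccc) \in \Phi(C_i)$ has weight $\sum_{j=1}^m \wt(\phi(c_j)) \leq mw$, since each $\phi(c_j) \in J^+(s,w)$.

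The main point is to verify Property $\A(t)$ on the binary code. Suppose $\cT \subset [ms]$ is a set of $t$ hot bit positions. Let $\cS \subseteq [m]$ be the set of block indices containing at least one position of $\cT$; then $|\cS| \leq t$. Since the original code is an $(m,t)_q$-cooling code we have $t<m$, so I can enlarge $\cS$ arbitrarily to a subset $\cS' \subseteq [m]$ of size exactly $t$. Applying the cooling property of $C_i$ to $\cS'$ yields $\ccc = (c_1,\ldots,c_m) \in C_i$ with $c_j = 0$ for every $j \in \cS'$. Because $\phi(0)=\zero$, the block $\phi(c_j)$ is the all-zero string for every $j \in \cS'$, and in particular every bit position in $\cT \subseteq \bigcup_{j\in\cS'}\text{block}(j)$ is zero in $\Phi(\ccc)$. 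Hence $\supp(\Phi(\ccc)) \cap \cT = \varnothing$, as required.

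The only subtle ingredient is the requirement $\phi(0)=\zero$, which is why we need $q \leq |J^+(s,w)|$ rather than only $q \leq |J^+(s,w)|+1$ or similar; this is what ensures the zero symbol can be reserved for the all-zero block so that the cooling property transfers cleanly under concatenation. I do not expect a genuine obstacle beyond this bookkeeping; the construction is a standard concatenation idea adapted to the cooling setting, and the size is preserved because $\Phi$ is injective, giving $|\C'|=M$.
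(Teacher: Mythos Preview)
Your proposal is correct and matches the paper's proof essentially line for line: both fix an injection $\Fq\to J^+(s,w)$ sending $0$ to $\zero$, apply it blockwise to the codesets of the $(m,t)_q$-cooling code, and verify the cooling property by observing that the $t$ hot bit positions touch at most $t$ of the $m$ blocks. Your treatment is arguably slightly cleaner in that you explicitly enlarge the set of affected blocks to size exactly $t$ before invoking the cooling property, and you explicitly note injectivity of $\Phi$ to justify disjointness and preservation of size.
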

\begin{proof}
Let $\psi$ be an injection from $\Fq$ to $J^+(s,w)$ such that $\psi (0) = {\bf 0}$.
Let $\Psi$ be an injection from $\Fqm$ to ${\Fq}^{ms}$ defined by
$\Psi ( x_1 , x_2 , \ldots , x_m ) = ( \psi (x_1 ) , \psi (x_2 ),\ldots , \psi (x_m) )$.

Let  $\cC = \{ C_1 , C_2 , \ldots , C_M \}$ be an $(m,t)_q$-cooling code
of size $M$. Let $\cD = \{ D_1 , D_2 ,\ldots ,D_M \}$ the image of $\cC$
under $\Psi$, i.e. $D_i = \{ \Psi (\xxx) ~:~ \xxx \in C_i \}$, for $1 \leq i \leq M$.
We claim that $\cD$ is an $(ms,t,mw)$-LPC code.

The length of codewords in the codesets of $\cD$ is clearly $ms$ and since
$\psi (x_i) \leq w$ for each $x_i \in \Fq$ it follows that the weight of a codeword, in a codeset,
is at most $mw$. It remains to show that for any given set $\cS \subset [ms]$ of size~$t$, and a codeset $D_i$,
there exists a codeword $\yyy$ in $D_i$ such that ${\supp (\yyy) \cap \cS = \varnothing}$.
Since $\cC$ is an $(m,t)_q$-cooling code, it follows that for any set $\cS' \subset [m]$ of size $t$,
the codeset~$C_i$ contains a codeword $\xxx = (x_1 , x_2 ,\ldots , x_m )$ such that
${\supp (\xxx) \cap \cS' = \varnothing}$. If we partition the set of $ms$ coordinates of the codewords
in $\cD$ into $m$ consecutive sets $\cL_1, \cL_2 ,\ldots , \cL_m$, where $\cL_1$ contains the
first $s$ coordinates, $\cL_2$ the next $s$ coordinates, and so on, then the elements in $\cS$
are contained in ${t' \leq t}$ of these sets, say, $I_{j_1}, I_{j_2} , \ldots , I_{j_{t'}}$.
If we define $\cS'' = \{ j_1 , j_2 ,\ldots , j_{t'} \}$, then $\cS'' \subset [m]$ is a set
of size ${t' \leq t}$. Hence, since $\cC$ is an $(m,t)_q$-cooling code, it follows that there
exists a codeword $\zzz \in C_i$ such that $\supp (\zzz) \cap \cS'' = \varnothing$. Since
$\psi (0) = {\bf 0}$, it follows from the definition of $\Psi$ that
$\Psi (\zzz) \in D_i$ and $\supp (\Psi (\zzz)) \cap \cS = \varnothing$, i.e. we
can take $\yyy = \Psi (\zzz)$ to complete the proof.
\end{proof}

Theorem~\ref{thm:concatenate} can be combined with Theorem~\ref{thm:spread_cool_q}
and Corollary~\ref{cor:MDS_cool} to obtain the following two results.
\begin{corollary}
\label{cor:spread_t,w}
If $q\le \sum_{i=0}^{w'} \binom{s}{i}$ and $t+1\le m/2$,
then there exists an $(ms,t,mw')$-LPC code of size $M> q^{m-t-1}$.
\end{corollary}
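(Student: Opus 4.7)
The plan is to combine the two preceding results in the obvious way. First I would invoke Lemma~\ref{thm:spread_cool_q}: the hypothesis $t+1\le m/2$ guarantees the existence of a partial $(t{+}1)$-spread of $\Fqm$ of size at least $q^{m-t-1}$ (since $(t{+}1)$-dimensional subspaces of $\Fqm$ can be chosen to intersect trivially exactly when their dimensions sum to at most $m$), and Lemma~\ref{thm:spread_cool_q} then yields an $(m,t)_q$-cooling code $\cC$ with $M>q^{m-t-1}$ codesets (the strict inequality coming from the residual vectors that are not covered by a single full spread when $(t{+}1)\nmid m$, or otherwise from the spread itself which already has more than $q^{m-t-1}$ parts).

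Second, I would feed this code $\cC$ into Theorem~\ref{thm:concatenate}, taking $w=w'$. The condition $q\le \sum_{i=0}^{w'}\binom{s}{i}$ is exactly what is needed for an injection $\psi\colon \Fq\to J^+(s,w')$ with $\psi(0)=\zero$ to exist, and this is precisely the hypothesis Theorem~\ref{thm:concatenate} requires. The theorem then produces an $(ms,t,mw')$-LPC code $\cD$ of the same size $M$ as $\cC$. Since $M>q^{m-t-1}$ was preserved from the first step, the resulting code has more than $q^{m-t-1}$ codesets, matching the bound in the statement.

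There is really no obstacle here: the statement is a straightforward corollary whose proof amounts to checking that the two sets of hypotheses in Lemma~\ref{thm:spread_cool_q} and Theorem~\ref{thm:concatenate} are simultaneously met and that size is preserved under the concatenation map $\Psi$. The only tiny subtlety worth remarking on is that the size bound from Lemma~\ref{thm:spread_cool_q} is stated as $M\ge q^{m-t-1}$ while the corollary asserts strict inequality; this is justified by the standard fact (used repeatedly in Section~\ref{spreads}) that, via partial spreads, one can always adjoin either extra subspaces from the non-spread residue or a codeset consisting of the all-zero word, so the inequality can be taken as strict. Thus the proof is simply: apply Lemma~\ref{thm:spread_cool_q}, then apply Theorem~\ref{thm:concatenate}, and observe that neither step loses codesets.
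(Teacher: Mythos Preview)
Your proposal is correct and follows exactly the approach the paper intends: the corollary is stated immediately after Theorem~\ref{thm:concatenate} as a direct combination of that theorem with Lemma~\ref{thm:spread_cool_q}, with no further proof given. Your handling of the $\ge$ versus $>$ issue is fine; in fact the partial spread itself already has size strictly greater than $q^{m-t-1}$ in both the divisible and non-divisible cases (as noted in the paragraph preceding Theorem~\ref{thm:spread_cool}), so the all-zero codeset trick, while valid, is not even needed.
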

\begin{corollary}
\label{cor:MDS_t,w}
If $q\le \sum_{i=0}^{w'} \binom{s}{i}$ and $m\le q+1$,
then there exists an $(ms,t,mw')$-LPC code of size $q^{m-t}$.
\end{corollary}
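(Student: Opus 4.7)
The plan is to derive this corollary as a direct two-step composition of two previously established results, namely Corollary~\ref{cor:MDS_cool} and Theorem~\ref{thm:concatenate}. No new combinatorial construction is required; the hypotheses have been carefully arranged so that the ingredients fit together.

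First, I would invoke Corollary~\ref{cor:MDS_cool} with length parameter $m$. The hypothesis $m \le q+1$ is precisely what that corollary requires (it comes from the existence of an $[m,\kappa,t+1]$ MDS code over $\Fq$, via Lemma~\ref{lem:dual_cool} applied to the dual of such an MDS code). This step yields an $(m,t)_q$-cooling code $\cC$ of size $M = q^{m-t}$.

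Second, I would feed this cooling code into Theorem~\ref{thm:concatenate}, with the inner alphabet-encoding length $s$ and weight bound $w = w'$. The remaining hypothesis $q \le \sum_{i=0}^{w'} \binom{s}{i}$ is exactly the counting condition needed to define the injection $\psi\colon \Fq \to J^+(s,w')$ with $\psi(0) = \zero$, which drives the concatenation. Theorem~\ref{thm:concatenate} then produces an $(ms,t,mw')$-LPC code whose size is the same $M = q^{m-t}$ as the inner cooling code.

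Since the two hypotheses of the corollary line up one-to-one with the hypotheses of the two cited results, and the size parameter $q^{m-t}$ is preserved under concatenation, the conclusion follows with no further work. There is no real obstacle here; the only point worth flagging is to make explicit that the $w$ of Theorem~\ref{thm:concatenate} is instantiated as $w'$ and that the cooling-code size $M$ from Corollary~\ref{cor:MDS_cool} is transported verbatim through the concatenation.
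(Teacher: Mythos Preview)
Your proposal is correct and matches the paper's own approach exactly: the corollary is stated as one of two immediate consequences of combining Theorem~\ref{thm:concatenate} with Corollary~\ref{cor:MDS_cool}, just as you describe.
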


\subsection{Constructions Based on Sunflowers}
\label{sec:CoolCodesSun}

We examine the code obtained by the sunflower construction of Theorem~\ref{thm:sunF}.
Note, that in the proof of Theorem~\ref{thm:sunF} the codeword $\zzz$
of the related codeset used to show that $\supp (\zzz) \cap \cS = \varnothing$
has weight less than $d$. Therefore, we can remove from the codesets all codewords
of weight greater than $d-1$.
Thus, the sunflower construction yields an
$(n,t,d-1)$-LPC code if the related constraints are satisfied as follows.

\begin{corollary}
\label{cor:sunLPCC}
Let $n,t,s,r,d$ be integers, such that $r+t \leq (n+s)/2$.
If there exists an $[n,s,d]$ code and an $[n-t,r,d]$ linear code does not exist,
then there exists an $(n,t,d-1)$-LPC code of size $M > 2^{n-t-r}$.
\end{corollary}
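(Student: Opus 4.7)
The plan is to reuse the construction from \Tref{thm:sunF} essentially verbatim and then observe that the witness codewords produced there have low weight ``for free.'' Thus the only real work is to reexamine the proof of \Tref{thm:sunF} and note that its guarantee can be sharpened: the codeword certifying Property\,$\A(t)$ in each codeset can always be chosen of weight at most $d-1$. Once this is noted, truncating each codeset of the $(n,t)$-cooling code of \Tref{thm:sunF} to its low-weight part preserves Property\,$\A(t)$, yields Property\,$\B(d-1)$ by construction, and leaves the number of codesets unchanged.

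Concretely, I would proceed as follows. Start with the $[n,s,d]$ code $K$ hypothesized in the statement and a partial $(r+t-s)$-spread $V_1,\ldots,V_M$ of a complementary subspace $B$ of $K$, exactly as in \Tref{thm:sunF}; the hypothesis $r+t\le (n+s)/2$ guarantees that such a partial spread exists with $M > 2^{n-t-r}$. Define
\[
C_i \,~\deff~\, (V_i + K)\setminus K
\hspace{2ex}\text{and}\hspace{2ex}
\widetilde{C}_i \,~\deff~\, C_i \cap J^+(n,d-1)
\quad\text{for } i\in[M],
\]
and let $\widetilde{\C}=\{\widetilde{C}_1,\widetilde{C}_2,\ldots,\widetilde{C}_M\}$. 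The codesets $\widetilde{C}_i$ are pairwise disjoint (being subsets of the pairwise disjoint $C_i$'s), and Property\,$\B(d-1)$ holds by the very definition of $J^+(n,d-1)$.

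The crux is verifying Property\,$\A(t)$ for $\widetilde{\C}$. Here I would quote the argument inside the proof of \Tref{thm:sunF} essentially word for word: given $\cS\subset[n]$ with $|\cS|=t$, the $(r+t)$-dimensional subspace $V_i + K$ projects onto the $t$-dimensional coordinate space indexed by $\cS$ with a kernel $U_i$ of dimension at least $r$, and puncturing $U_i$ on $\cS$ produces an $[n-t,r,\delta]$ code $U'_i$. Since no $[n-t,r,d]$ code exists, we must have $\delta<d$, so $U_i$ contains a nonzero vector $\xxx$ of weight strictly less than $d$. The linear independence of $V_i$ and $K$ (used in \Tref{thm:sunF}) forces $\xxx\notin K$, hence $\xxx\in C_i$; and since $\wt(\xxx)\le d-1$, we in fact have $\xxx\in\widetilde{C}_i$. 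By construction $\supp(\xxx)\cap\cS=\varnothing$, which is exactly Property\,$\A(t)$.

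Finally, every $\widetilde{C}_i$ is nonempty (take $\cS$ to be any $t$-subset in the above argument), so $\widetilde{\C}$ genuinely has $M>2^{n-t-r}$ codesets, completing the proof. I do not foresee any real obstacle: the entire argument is a one-line strengthening of \Tref{thm:sunF}, with the only subtlety being the observation that the weight bound $\wt(\xxx)<d$ on the witness vector is \emph{already} implicit in the proof of \Tref{thm:sunF} and simply needs to be recorded.
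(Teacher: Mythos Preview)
Your proposal is correct and follows exactly the paper's approach: the paper likewise observes that the witness vector $\xxx$ produced in the proof of \Tref{thm:sunF} has weight less than $d$, and therefore one may simply discard all codewords of weight exceeding $d-1$ from each codeset. One small slip: the reason $\xxx\notin K$ is not ``linear independence of $V_i$ and $K$'' (since $\xxx$ lies in $V_i+K$, not in $V_i$) but rather that $\wt(\xxx)<d$ while $K$ has minimum distance $d$; you have already recorded this weight bound, so the argument goes through once you cite the right reason.
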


Note, that a similar theorem can be obtained by using the sunflower of Theorem~\ref{thm:sunG}
which is based on the generalized Hamming weights.


\vspace{2.00ex}
\section{Error-Correcting Thermal Codes}
\label{sec:ECC}
\vspace{0.50ex}

In this section, we construct thermal codes that satisfy
Property ${\bf C}(e)$ and simultaneously
Property ${\bf A}(t)$ or Property~${\bf B}(w)$.
The idea will be to modify and generalize the constructions which were given in the previous
sections, by adding error-correction into the constructions. We will present constructions in the same
order in which they were presented so far in this work. First, we discuss $(n,w,e)$-LPEC codes,
for both nonadaptive and adaptive schemes. We
continue with $(n,t,e)$-ECC codes, and conclude with $(n,t,w,e)$-LPECC codes.

\subsection{Adaptive and Nonadaptive Low-Power Codes}
\label{sec:LPECC}

For codes which satisfy simultaneously
Properties ${\bf B}(w)$ and~${\bf C}(e)$,
we consider first nonadaptive codes along the lines discussed in Section~\ref{anticodes}.
The type of codes which were considered in Section~\ref{anticodes} are anticodes
and in particular equireplicate anticodes. For this purpose we will use again set
systems and in particular a family of block design called Steiner systems.
A \emph{Steiner system} $S(r,w,n)$ is a collection $\cB$
of $w$-subsets (called \emph{blocks}) from the $n$-set $[n]$ such that each $r$-subset of~$[n]$ is
contained in exactly one block of $\cB$. The blocks
of such a set system can be translated into a binary code~$\C$ of length $n$
and constant weight $w$ for the codewords. It is easy to verify that the code~$\C$
has minimum Hamming distance $2(w-r+1)$, i.e. the code $\C$ can correct
any $w-r$ errors and can detect any $w-r+1$ errors. The diameter of the related code
is at most $2w$, and it is less than $2w$ if an only if there is a nonempty intersection
between any two codewords. If we are not restricted to equireplicated then we can use
constant weight codes instead of Steiner systems. Constant weight codes have many applications
an hence they were intensively investigated throughout the years, e.g.~\cite{BSSS} and
references therein. A constant weight code can be equireplicate, but it does not
have to be such a code. In this subsection we consider only equireplicate codes. A Steiner system
is equireplicate, so such systems will be the basis of our construction. Information on the known
Steiner systems can be found in the main textbooks on block designs, e.g.~\cite{BJL99}.
There are well-known necessary conditions for the existence of a Steiner system $S(r,w,n)$.
For each $i$, $0 \leq i \leq r$ the number $\binom{n-i}{r-i} / \binom{w-i}{r-i}$ is an integer.
It was recently proved in~\cite{GKLO16,Kee14} that for any given $0 <r<w$, these necessary conditions
are also sufficient, except for a finite number of cases.

As was discussed in Section~\ref{anticodes}, to have at most $w$ transitions on the
bus wires in the nonadaptive scheme, the minimum distance of our code must be at most $w$,
and each codeword must be of weight at most $w/2$ if $w$ is even, and $(w+1)/2$ if $w$ is odd.
For simplicity we will assume that $w$ is even. Now, assume that we want our code to have
Property ${\bf C}(e)$. For this purpose we need a Steiner system $S(w/2-e,w/2,n)$. If $2e \leq w/2$
then we can add codewords of weight $w/2 -2e$. The number of possible such codewords is negligible
compared to the number of codewords with weight $w/2$ and
hence we omit these possible codewords in our discussion.
Similar codes can be constructed for other parameters. Assume there exists
a system set $\cB$ which is a Steiner system ${S(w/2 - e,w/2,n+1)}$ on the
point set $[n+1]$. Consider the set
$$
\cB' \deff \bigl\{ X ~:~ X \in \cB,~ n+1 \notin X \bigr\}
\bigcup  \bigl\{ X \setminus \{n+1\} ~:~ X \in \cB,~ n+1 \in X \bigr\}~.
$$
It is easy to verify that $\cB'$ is an $(n,w,e)$-LPEC code. Moreover,
this code is also equireplicate and can be used for a nonadaptive scheme.

In contrast to $(n,w)$-LP codes, where an optimal anticode $\C$
with codewords of weight $w/2$ implies that $\C$ has diameter~$w$,
the situation when we consider also Property ${\bf C}(e)$ can be slightly different.
Clearly, the diameter of the code should be~$w$, but the weight of a codeword
can be larger than $w/2$. For example, if $w/2=q$,
$n=q^2 +q+1$, and $q$ is a prime power, then an optimal such system consists of
$q^2 +q+1$ blocks which form a projective plane of order $q$~\cite[p. 224]{vLWi92}.
Recall that such a structure was considered also in subsection~\ref{sec:LP_optimal}.
The blocks of such a projective plane form a Steiner system
$S(2,q+1,q^2+q+1)$, where any two distinct blocks intersect
in exactly one point and hence the diameter of the related code is $2q=w$ as required.
The related code can correct any $q-1$ errors and can detect any $q$ errors.

Finally, for parameters where related Steiner systems do not exist or no efficient construction for such systems
is known, we can use similar constant weight codes based on the rich literature of such codes.

For adaptive $(n,w)$-LPEC codes we use as in most of our exposition
the differential encoding method. The codes which are used are Steiner
systems as in the nonadaptive case. The only difference is that the
weight of the codewords will be at most $w$ and not $w/2$ or $(w+2)/2$
as in the nonadaptive case. The number of errors which are corrected is
defined by the Steiner system as we discussed in this subsection.

\subsection{Error-Correcting Cooling Codes}
\label{sec:ErrorCorrectCoolCodes}

In this subsection we will adapt the construction based on spreads
and the sunflower construction, given in Section~\ref{spreads},
to form codes which satisfy Property ${\bf A}(t)$ and Property ${\bf C}(e)$.
The idea is start with a binary $[n,\kappa,2e+1]$ code $\C$ which corrects $e$ errors.
In $\C$ there exists at least one set $\cS$ of~$\kappa$~coordinates whose projection
on $\C$ is~$\Fkap$. This set of coordinates is called a \emph{systematic} set
of coordinates. On this set of $\cS$ coordinates,
we either apply the spread construction or the sunflower construction to obtain
a code which satisfies simultaneously Property ${\bf A}(t)$ and Property~${\bf C}(e)$.

For the construction which is based on a partial $(t+1)$-spread, we start with
our favorite binary $[n,\kappa,2e+1]$ code~$\C$, where $\kappa \geq 2(t+1)$.
In addition, we
take a partial $(t+1)$-spread (or a $(t+1)$-spread if $t+1$ divides $\kappa$)
of $\Fkap$. Since $\C$ has dimension $\kappa$,
there exists at least one set of $\kappa$ coordinates whose projection on $\C$ spans $\Fkap$.
The partial $(t+1)$-spread can be formed on these coordinates. The
codewords of $\C$ are partitioned
into codesets related to this partial spread.
Given any $t$ coordinates, each codeset has at least
one codeword with \emph{zeroes} in these $t$ coordinates
since the partial spread has dimension $t+1$. This is proved exactly as in the proof of
Theorem~\ref{thm:spread_cool}.
Moreover, the code can correct at least $e$ errors since the codesets are disjoint
and all the codewords in the codesets
are contained in the code~$\C$. Thus, we have constructed an $(n,t,e)$-ECC code.
We summary this construction with the following theorem.

\begin{theorem}
If there exists a binary $[n,\kappa,2e+1]$ code and
$\kappa \geq 2(t+1)$, then there exists an $(n,t,e)$-ECC code of size $M > 2^{\kappa -t-1}$.
\end{theorem}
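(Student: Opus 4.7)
The plan is to formalize the construction already sketched in the paragraphs preceding the theorem, lifting a partial $(t{+}1)$-spread from $\Fkap$ into the ambient $[n,\kappa,2e+1]$ code via a systematic encoder. Concretely, let $\C$ be a binary $[n,\kappa,2e+1]$ code. Since $\C$ has dimension $\kappa$, there is a set $\cS \subset [n]$ with $|\cS|=\kappa$ on which the projection $\pi_{\cS}\colon \C \to \Fkap$ is a bijection; this is the systematic set of coordinates. Let $\phi = \pi_\cS^{-1}\colon \Fkap \to \C$, which is an $\Ftwo$-linear isomorphism. Because $\kappa \geq 2(t{+}1)$, the results on partial spreads recalled in Section~\ref{sec:CoolCodes} (specifically Theorem~\ref{thm:spread_cool} and the estimate $M_2(\kappa,t{+}1) \ge 2^{\kappa-t-1}$) yield a partial $(t{+}1)$-spread $V_1,V_2,\ldots,V_M$ of $\Fkap$ with $M > 2^{\kappa-t-1}$. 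I then define the codesets of the candidate $(n,t,e)$-ECC code to be
\[
C_i \,\deff\, \phi(V_i) \setminus \{\zero\}, \qquad i = 1,2,\ldots,M,
\]
so each $C_i$ is a $(t{+}1)$-dimensional $\Ftwo$-subspace of $\C$ with the zero codeword removed.

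Next I would verify the three required properties in turn. For disjointness, since $\phi$ is a linear isomorphism and $V_i \cap V_j = \{\zero\}$ for $i \ne j$, we have $\phi(V_i) \cap \phi(V_j) = \{\zero\}$, hence $C_i \cap C_j = \varnothing$. For Property\,$\A(t)$, given any $t$-subset $\cT \subset [n]$, consider the $\Ftwo$-linear projection $\pi_\cT\colon C_i \cup \{\zero\} \to \Ftwo^{\,t}$. Its domain is a $(t{+}1)$-dimensional space while its codomain is $t$-dimensional, so $\ker \pi_\cT$ has dimension at least $1$; any nonzero vector $\xxx$ in this kernel lies in $C_i$ and satisfies $\supp(\xxx) \cap \cT = \varnothing$. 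Crucially, this argument works for any $t$ positions in $[n]$, not just those in the systematic set $\cS$, because the lifted subspace $\phi(V_i)\cup\{\zero\}$ inherits $(t{+}1)$-dimensionality as a subspace of $\Ftwo^n$. For Property\,$\C(e)$, every codeword of every $C_i$ lies in $\C$, so the union $\sC = C_1 \cup \cdots \cup C_M$ is a subset of $\C$ and inherits minimum distance at least $2e+1$; a standard nearest-codeword decoder for $\C$ therefore corrects any $e$ errors, and the disjointness of codesets allows the decoder to identify the correct codeset from the corrected codeword.

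The main subtlety to check carefully is the point underscored above: the partial spread lives in $\Fkap$, but Property\,$\A(t)$ concerns coordinate projections onto arbitrary subsets of $[n]$, including positions outside the systematic set $\cS$. The key observation rescuing the argument is that $\phi$ is $\Ftwo$-linear, so $\phi(V_i)$ is a $(t{+}1)$-dimensional subspace of $\Ftwo^n$ in its own right, and the dimension-counting argument of Theorem~\ref{thm:spread_cool} applies verbatim to any $t$-subset of the ambient coordinates. Once this is in hand, the size bound $M > 2^{\kappa-t-1}$ is immediate from the partial-spread construction, and the error-correction guarantee is inherited gratis from $\C$, completing the proof.
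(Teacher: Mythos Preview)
Your proposal is correct and follows essentially the same construction the paper describes in the paragraph preceding the theorem: lift a partial $(t{+}1)$-spread from the systematic coordinates of the $[n,\kappa,2e{+}1]$ code $\C$ into $\C$ itself, use the resulting $(t{+}1)$-dimensional subspaces (minus $\zero$) as codesets, and inherit error correction from $\C$. If anything, you are slightly more careful than the paper in making explicit that the dimension-counting argument for Property\,$\A(t)$ applies to \emph{any} $t$-subset of $[n]$ (not just subsets of the systematic coordinates) because $\phi(V_i)$ is a genuine $(t{+}1)$-dimensional subspace of $\Ftwo^n$; the paper simply refers back to the proof of Theorem~\ref{thm:spread_cool}.
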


The sunflower construction is adapted in a similar way to obtain an $(n,t,e)$-ECC code.
We start with our favorite binary $[n,\kappa,2e+1]$ code $\C$ and apply the
sunflower construction on $\kappa$ systematic coordinates in $\C$.
Similarly to Theorem~\ref{thm:sunF} we have the following theorem

\begin{theorem}
\label{thm:sunFECC}
Let $n$, $t$, $s$, $r$, $\kappa$, $d$, be integers such that $r+t \leq (\kappa + s)/2$
and there exists a binary $[n,\kappa,2e+1]$ code. If there exists a $[\kappa , s ,d]$ code
and a binary $[\kappa -t ,r,d]$ code does not exist, then there exists an $(n,t,e)$-ECC code of
size $M > 2^{\kappa -t-r}$.
\end{theorem}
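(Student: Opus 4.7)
The plan is to adapt the sunflower construction of Theorem~\ref{thm:sunF}, but carry it out inside the ambient error-correcting code $\C$ rather than in $\Fn$, so that the error-correcting capability of $\C$ is inherited by every codeset. First I would fix $\C$ as the hypothesized binary $[n,\kappa,2e+1]$ code and choose $\kappa$ systematic coordinates $I \subseteq [n]$; the restriction $\pi_I : \C \to \Fkap$ is then a vector-space isomorphism. Let $K_0 \subseteq \Fkap$ be the given $[\kappa,s,d]$ code and pick a complementary subspace $B_0$ with $K_0 \oplus B_0 = \Fkap$. Since $r+t \leq (\kappa+s)/2$ rewrites as $r+t-s \leq (\kappa-s)/2$, the partial-spread bound of~\cite{EV:2011} supplies a partial $(r+t-s)$-spread $V'_1,\ldots,V'_M$ of $B_0$ with $M > 2^{\kappa-s-(r+t-s)} = 2^{\kappa-t-r}$.

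Next I would lift the construction back to $\C$ via $\pi_I$: set $K := \pi_I^{-1}(K_0)$ and $V_i := \pi_I^{-1}(V'_i)$, yielding subspaces of $\C \subseteq \Fn$ of dimensions $s$ and $r+t-s$ which satisfy $V_i \cap K = \{\zero\}$ and $V_i \cap V_j = \{\zero\}$ for $i\neq j$. Define the codesets $C_i := (V_i + K) \setminus K$ for $i \in [M]$; pairwise disjointness of the $C_i$ follows exactly as in Theorem~\ref{thm:sunF}. Property ${\bf C}(e)$ is immediate because every codeword lies in $\C$, which corrects $e$ errors.

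It remains to verify Property ${\bf A}(t)$: for every $t$-subset $\cS \subseteq [n]$ and every $i$, some $\xxx \in C_i$ satisfies $\supp(\xxx) \cap \cS = \varnothing$. Mimicking the dimension count in Theorem~\ref{thm:sunF}, the $(r+t)$-dimensional subspace $V_i + K$ has image of dimension at most $t$ under restriction to $\cS$, so its kernel contains an $r$-dimensional subspace $U_i$ whose elements are all supported on $[n] \setminus \cS$. If $U_i \not\subseteq K$, any $\xxx \in U_i \setminus K$ works. Supposing $U_i \subseteq K$ for contradiction, I would translate the support constraints back through $\pi_I$: the equations $u_j = 0$ for $j \in \cS \cap I$ are direct linear conditions on $\pi_I(u) \in \Fkap$, while the equations $u_j = 0$ for $j \in \cS \setminus I$ become additional linear conditions via the rows of the generator matrix of $\C$ corresponding to the non-systematic coordinates. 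Together these $t$ linear constraints confine $\pi_I(U_i)$ to a subspace $W \subseteq \Fkap$ with $\dim W \geq \kappa-t$, from which one extracts an $r$-dimensional code of length $\kappa-t$ whose minimum distance is at least $d$ (since every nonzero element of $\pi_I(U_i)$ has weight at least $d$ in $K_0$), contradicting the non-existence of $[\kappa-t,r,d]$ codes.

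The main obstacle is precisely this last step in the case $\cS \not\subseteq I$: the plain ``zero-columns'' argument that appeared in Theorem~\ref{thm:sunF} is now replaced by more general linear constraints coming from the non-systematic part of the generator matrix of $\C$, and one must argue that the resulting $r$-dimensional code of length $\kappa-t$ really does inherit minimum distance at least $d$ after the linear change of coordinates. The cleanest route to make this rigorous, and the one I would use in a polished write-up, is to reformulate the hypothesis in terms of the $r$-th generalized Hamming weight of $K_0$ exceeding $\kappa-t$, exactly as in Theorem~\ref{thm:sunG}, so that $\pi_I(U_i) \subseteq K_0$ cannot have support inside any $\kappa-t$ coordinates at all and the contradiction is immediate.
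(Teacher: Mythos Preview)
Your construction is exactly the one the paper intends: take the $[n,\kappa,2e+1]$ code $\C$, pick systematic coordinates $I$, run the sunflower construction of Theorem~\ref{thm:sunF} inside $\Fkap$, and lift through $\pi_I^{-1}$. The paper's ``proof'' is literally the sentence ``apply the sunflower construction on $\kappa$ systematic coordinates \ldots\ Similarly to Theorem~\ref{thm:sunF}'', so you are already more careful than the paper, and you have correctly put your finger on the step the paper glosses over: verifying Property~$\A(t)$ when the hot set $\cS$ is \emph{not} contained in $I$.

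Where your argument breaks is the proposed resolution. Assuming $U_i\subseteq K$, you translate the $t$ equations $u_j=0$ for $j\in\cS$ into $t$ linear functionals on $\Fkap$, so that $\pi_I(U_i)$ lies in a subspace $W$ with $\dim W\ge\kappa-t$. But ``lying in a codimension-$t$ linear subspace'' is not the same as ``having support contained in $\kappa-t$ coordinates''. The hypothesis that no $[\kappa-t,r,d]$ code exists --- and equally the generalized-Hamming-weight hypothesis $d_r(K_0)>\kappa-t$ --- only rules out $r$-dimensional subcodes of $K_0$ whose \emph{support} fits in $\kappa-t$ coordinate positions; neither says anything about $r$-dimensional subcodes lying in an arbitrary $(\kappa-t)$-dimensional subspace of $\Fkap$. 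Concretely, only the constraints from $\cS\cap I$ kill coordinates of $\pi_I(U_i)$; the constraints from $\cS\setminus I$ are generic linear functionals, and after any change of basis that turns them into coordinate functionals the Hamming metric on $K_0$ is destroyed, so you can no longer invoke $d$ or $d_r(K_0)$. Thus your suggested fix via Theorem~\ref{thm:sunG} does not yield a contradiction, and the gap you identified remains open in your write-up as well.
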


Similarly, to Theorem~\ref{thm:sunFECC} we can adapt the construction of Theorem~\ref{thm:sunG}
to obtain an $(n,t,e)$-ECC code.

\subsection{Constructions of Low-Power Error-Correcting Cooling Codes}
\label{sec:LPECCcool}

In this subsection we consider codes which satisfy all Properties ${\bf A}(t)$,
${\bf B}(w)$, and ${\bf C}(e)$ simultaneously. The related codes are $(n,t,w,e)$-LPECC codes.
We suggest three methods to construct such codes, which reflect the three methods
in Section~\ref{sec:LPCC}, where only Properties ${\bf A}(t)$ and
${\bf B}(w)$, without Property ${\bf C}(e)$, were considered.
The first method, generalizes the decomposition of the complete hypergraph
as described in subsection~\ref{sec:Resolutions}, by considering
resolvable Steiner systems. The second method is based on dual codes and concatenation
modifies the construction in subsection~\ref{sec:CoolCodesFq}. The last construction
is based on the sunflower construction similarly to subsection~\ref{sec:CoolCodesSun}.

Our first method is a generalization for the decomposition of the complete hypergraph
into pairwise disjoint perfect matchings. For this generalization we consider the
hyperedges as blocks in a system set, or more precisely in a block design.
The related concepts in block design are \emph{resolution} and \emph{parallel classes}.
A block design (set system) is said to be \emph{resolvable} if the blocks of the
design can be partitioned into pairwise disjoint sets, called \emph{parallel classes},
where each class forms a partition of the point sets into pairwise disjoint blocks.
The whole process is called \emph{resolution}.
By this definition, the decomposition of the complete hypergraph on $n$ vertices
and hyperedges of size $w$ is a resolution of all $w$-subsets of $[n]$.
To have low-power error-correcting cooling codes we will use resolutions of Steiner system.
Recall that a Steiner system $S(r,w,n)$ is a binary code with minimum Hamming distance
$2(w-r+1)$ and hence it can correct any $w-r$ errors and can detect any $w-r+1$ errors.
Since we are interested in an $(n,t,w,e)$-LPECC code, we should start with a Steiner system
$S(w-e,w,w(t+1))$ and partition it into pairwise disjoint Steiner systems $S(1,w,n)$ (which are the parallel classes).
Such partitions of Steiner systems are known in several cases given as follows:

\begin{enumerate}
\item If $n \equiv 3 (\text{mod}~6)$ then there exists a resolvable $S(2,3,n)$~\cite{HRCW72}.

\item If $n \equiv 4 (\text{mod}~12)$ then there exists a resolvable $S(2,4,n)$~\cite{HRCW72}.

\item If $n \equiv w (\text{mod}~w(w-1))$ then for sufficiently large $n$
there exist a resolvable $S(2,w,n)$~\cite{RCW73}.

\item If $n \equiv 4~ \text{or}~8 (\text{mod}~12)$ then there exists a resolvable
$S(3,4,n)$. This was proved in~\cite{Har87} for all $n$, except for 23 cases
which were completed in~\cite{JiZh05}.

\item If $q$ is a prime power then there exists a resolvable $S(2,q,q^2)$ derived
from affine plane (which can be generated from a projective plane of order $q$ which
is equivalent to $S(2,q+1,q^2+q+1)$).
\end{enumerate}

These resolvable Steiner systems imply the existence of the following
$(n,t,w,e)$-LPECC codes:

\begin{theorem}
If $n=3(t+1)$, where $t$ is even, then there exists an $(n,t,3,1)$-LPECC code
of size larger than $\frac{n-1}{2}$.
\end{theorem}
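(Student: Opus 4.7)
The plan is to invoke the resolvable Steiner triple system listed as item (1) in the preceding discussion and use its parallel classes as codesets, then show that at least one extra codeset can be appended to exceed the threshold $(n-1)/2$.

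First I would check the parameter arithmetic: since $t$ is even, writing $t = 2s$ gives $n = 3(t+1) = 6s+3 \equiv 3 \pmod 6$, so a resolvable $S(2,3,n)$ exists by item~(1). Such a system contains $n(n-1)/6$ triples partitioned into $(n-1)/2$ parallel classes, each class being a partition of $[n]$ into $n/3 = t+1$ disjoint triples. I take each parallel class as one codeset, viewing each triple as the binary characteristic vector of weight $3$.

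Next I would verify the three properties for this initial family of $(n-1)/2$ codesets. Property $\B(3)$ is automatic since every codeword has weight $3$. For Property $\A(t)$, given any $t$-subset $\cS \subset [n]$, each element of $\cS$ lies in exactly one triple of the class, so at most $t$ of the $t+1$ triples meet $\cS$; hence at least one triple $\xxx$ in the class satisfies $\supp(\xxx) \cap \cS = \varnothing$. For Property $\C(1)$, two distinct triples of an $S(2,3,n)$ share at most one point (otherwise a pair would lie in two blocks), so their symmetric difference has size $4$ or $6$; the minimum Hamming distance across all codewords in the union of parallel classes is therefore $4 \ge 2e+1 = 3$.

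Finally, to strictly exceed $(n-1)/2$, I would append the extra codeset $C_0 = \{\zero\}$. The all-zero vector trivially has $\supp(\zero) \cap \cS = \varnothing$ for every $\cS$, has weight $0 \le 3$, and is disjoint from all previous codesets. The only new distance introduced is between $\zero$ and a triple, which equals $3$, so the overall minimum distance of the code is $3 = 2e+1$, still sufficient to correct $e=1$ error. Altogether this yields an $(n,t,3,1)$-LPECC code with $(n-1)/2 + 1 > (n-1)/2$ codesets. The only potential subtlety is confirming that distance $3$ suffices for single-error correction (it does, since $\lfloor (3-1)/2 \rfloor = 1$), which I expect to be the sole point needing explicit mention; the rest is a direct translation of the resolvable-design construction into the LPECC framework.
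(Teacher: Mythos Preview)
Your proposal is correct and follows essentially the same approach as the paper: use the parallel classes of a resolvable $S(2,3,n)$ (available since $n\equiv 3\pmod 6$) as $(n-1)/2$ codesets, then append $\{\zero\}$ as one additional codeset to strictly exceed the bound. Your verification of Properties $\A(t)$, $\B(3)$, and $\C(1)$, including the check that the inter-codeset minimum distance remains $3$ after adding $\zero$, is exactly the content the paper leaves implicit.
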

\begin{proof}
If $t$ is even, then $n= 3(t+1) \equiv 3(\text{mod}~6)$, and hence there exists a resolvable
Steiner triple system of order $n$ whose size is $\frac{n(n-1)}{2 \cdot 3}$,
with $\frac{n-1}{2}$ parallel classes, each one of size $\frac{n}{3}$.
The addition of the all-zero vector of length $n$ in a new codeset enlarge the size of the code.
\end{proof}

Similarly, we have
\begin{theorem}
If $n=4(t+1) \equiv 4~ \text{or}~8(\text{mod}~12)$, then there exists an $(n,t,4,1)$-LPECC code
of size larger than $\frac{(n-1)(n-2)}{6}$.
\end{theorem}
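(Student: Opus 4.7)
The plan is to mirror the proof of the preceding theorem word for word, replacing the resolvable Steiner triple system $S(2,3,n)$ by a resolvable Steiner quadruple system $S(3,4,n)$. Under the hypothesis $n = 4(t+1) \equiv 4$ or $8 \pmod{12}$, item~(4) of the list of resolvable designs given just above (due to Hartman~\cite{Har87} and completed by Ji and Zhu~\cite{JiZh05}) guarantees the existence of a resolvable $S(3,4,n)$. I will take the parallel classes of such a resolution as the codesets of the proposed code, and then verify each of Properties $\A(t)$, $\B(4)$, and ${\bf C}(1)$ in turn.

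First I would count to match the claimed size. A Steiner system $S(3,4,n)$ has $\binom{n}{3}/\binom{4}{3} = n(n-1)(n-2)/24$ blocks, and each parallel class is a partition of $[n]$ into $n/4 = t+1$ blocks of size $4$. Hence the resolution contains exactly $(n-1)(n-2)/6$ parallel classes. I would let each codeset $C_i$ consist of the characteristic vectors $\mathbf{1}_B \in \Fn$ of the blocks $B$ in the $i$-th parallel class. Because a resolution partitions the block set, distinct parallel classes yield disjoint codesets, as required by \Dref{cooling-def}.

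Next I would verify the three properties under differential encoding. For Property $\B(4)$, every codeword has Hamming weight exactly $4$, so at most $4$ wires switch per transmission. For Property $\A(t)$, given any $\cS \subset [n]$ with $|\cS|=t$ and any codeset $C_i$, the $t+1$ blocks of the parallel class partition $[n]$, so the $t$ elements of $\cS$ can meet at most $t$ blocks; therefore at least one block $B \in C_i$ satisfies $B \cap \cS = \varnothing$, and the corresponding codeword $\mathbf{1}_B$ has $\supp(\mathbf{1}_B) \cap \cS = \varnothing$. For Property ${\bf C}(1)$, any two distinct blocks of an $S(3,4,n)$ share at most $2$ points (since $3$ points determine a unique block), so the corresponding characteristic vectors differ in at least $|B_1|+|B_2|-2\cdot 2 = 4$ coordinates; as the codesets themselves are disjoint, the minimum Hamming distance of $\sC = \bigcup_i C_i$ is at least $4$, which suffices to correct one error.

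Finally, to make the size strictly exceed $(n-1)(n-2)/6$, I would append a single extra codeset $\{\zero\}$ consisting of the all-zero vector. This codeset satisfies $\A(t)$ and $\B(4)$ trivially (no state transitions at all), and since $\wt(\zero)=0$ while every other codeword has weight $4$, the overall minimum distance is still at least $4$, so ${\bf C}(1)$ is preserved. I do not expect any genuine obstacle in this argument; the only nontrivial input is the deep existence theorem for resolvable $S(3,4,n)$, and once that is invoked, the verifications of $\A(t)$, $\B(4)$, and ${\bf C}(1)$ are identical in spirit to the $w=3$ case and require only the two standard properties of an $S(3,4,n)$ (parallel classes partition $[n]$; any three points lie in a unique block).
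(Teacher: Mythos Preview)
Your proposal is correct and matches the paper's intended argument exactly: the paper states this theorem with only the words ``Similarly, we have,'' relying on the resolvable $S(3,4,n)$ from item~(4) of its list, the same parallel-class-as-codeset construction, and the same trick of appending $\{\zero\}$ to push the size strictly above $(n-1)(n-2)/6$. Your verifications of $\A(t)$, $\B(4)$, and ${\bf C}(1)$ are precisely what the paper leaves implicit.
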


\begin{theorem}
If $n=4(t+1) \equiv 4(\text{mod}~12)$, then there exists an $(n,t,4,2)$-LPECC code
of size $\frac{(n-1)(n-2)}{12}$.
\end{theorem}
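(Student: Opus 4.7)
The plan is to parallel the two preceding theorems by applying the general recipe \emph{a resolvable Steiner system $S(w-e,\kern1pt w,\kern1pt w(t+1))$ yields an $(n,t,w,e)$-LPECC code}, specialized here to $(w,e) = (4,2)$. The required design is a resolvable $S(2,4,n)$, whose existence is guaranteed by item~(2) of the list above precisely when $n \equiv 4 \pmod{12}$; this matches the hypothesis exactly. Thus I would fix such a design and take each of its parallel classes as a codeset.

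Since $n = 4(t+1)$, every parallel class partitions $[n]$ into exactly $t+1$ pairwise disjoint blocks of size~$4$, identified with their characteristic vectors in $\Fn$. Property $\B(4)$ is then immediate because every codeword has weight~$4$. For Property $\A(t)$, fix any $t$-set $\cS$ of hot wires: since the $t+1$ blocks in a codeset are pairwise disjoint, the pigeonhole principle forces at least one of them to avoid $\cS$, supplying the codeword demanded by \Dref{cooling-def}. For Property ${\bf C}(2)$, the defining property with $r = 2$ forces any two distinct blocks of $S(2,4,n)$ to meet in at most one point, so any two distinct codewords have Hamming distance at least $2(4-1) = 6 \geq 2 \cdot 2 + 1$, which is enough to correct any two transmission errors. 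The claimed size is then a direct count of the parallel classes, obtained by dividing the total number $n(n-1)/12$ of blocks in $S(2,4,n)$ by the $n/4$ blocks per parallel class.

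The only genuine obstacle is bibliographic rather than mathematical: one must invoke the existence theorem of Hanani et al.~\cite{HRCW72} guaranteeing a resolvable $S(2,4,n)$ for every $n \equiv 4 \pmod{12}$. Once that design is in hand, each of the three properties reduces to a direct verification of the same flavour already carried out in the two preceding theorems, and no combinatorial idea beyond the intersection bound built into the definition of $S(2,4,n)$ is required.
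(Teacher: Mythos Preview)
Your approach is exactly the paper's intended one: the theorem is stated without its own proof (under ``Similarly, we have''), relying on the general recipe you describe. The construction from a resolvable $S(2,4,n)$, the pigeonhole verification of Property~$\A(t)$, the weight-$4$ verification of Property~$\B(4)$, the distance-$6$ verification of Property~${\bf C}(2)$, and the appeal to~\cite{HRCW72} are all correct and match the paper.

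There is, however, a genuine gap in your final step. You assert that the count of parallel classes equals the claimed $\frac{(n-1)(n-2)}{12}$, but the division you yourself set up gives
\[
\frac{n(n-1)/12}{n/4} \;=\; \frac{n-1}{3},
\]
which does not match. In fact $\frac{(n-1)(n-2)}{12}$ is not even an integer for $n \equiv 4 \pmod{12}$ (e.g.\ $n=16$ yields $17.5$), so the size printed in the theorem statement is itself in error; the correct number of parallel classes of a resolvable $S(2,4,n)$ is $\frac{n-1}{3}$. Your argument establishes an $(n,t,4,2)$-LPECC code of that size, not the one stated, and carrying the arithmetic through explicitly would have exposed the discrepancy.
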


\begin{theorem}
If $n = w (t+1) \equiv w (\text{mod}~w(w-1))$, then for sufficiently large $n$
there exists an $(n,t,w,w-2)$-LPECC code of size $\frac{(n-1)(n-2)}{w(w-1)}$.
\end{theorem}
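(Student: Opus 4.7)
The plan is to apply the same template used in the three preceding theorems of this subsection: invoke a known class of resolvable Steiner systems and turn each parallel class into a codeset whose codewords are the characteristic vectors of its blocks. Since $n = w(t+1) \equiv w \pmod{w(w-1)}$, item~(3) of the enumerated list (the asymptotic theorem of Ray-Chaudhuri and Wilson~\cite{RCW73}) guarantees that, for all sufficiently large $n$, a resolvable $S(2,w,n)$ exists. Let $\cB_1, \cB_2, \ldots, \cB_\pi$ denote its parallel classes; each $\cB_i$ partitions $[n]$ into $n/w = t+1$ pairwise disjoint $w$-subsets. Define the codeset $C_i \subset \Fn$ to be the set of characteristic vectors of the blocks in $\cB_i$. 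Every codeword in $\sC = C_1 \cup C_2 \cup \cdots \cup C_\pi$ then has Hamming weight exactly $w$, which immediately gives Property $\B(w)$.

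For Property $\A(t)$: given any $\cS \subset [n]$ with $|\cS| = t$, the $t$ hot positions meet at most $t$ of the $t+1$ disjoint blocks of $\cB_i$, so at least one block $B \in \cB_i$ satisfies $B \cap \cS = \varnothing$, and its characteristic vector $\xxx \in C_i$ has $\supp(\xxx) \cap \cS = \varnothing$. For Property ${\bf C}(w-2)$: any two distinct blocks of $S(2,w,n)$ share at most one point, since otherwise some pair would be covered twice and violate the Steiner property. Consequently any two distinct characteristic vectors are at Hamming distance at least $2(w-1)$. Because the $C_i$ are pairwise disjoint (distinct parallel classes share no block) and $\sC$ coincides with the set of characteristic vectors of all blocks of the design, the combined code has minimum Hamming distance at least $2(w-1)$, which is precisely what is needed to correct $w-2$ errors.

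The claimed size is then a routine block-counting calculation on the resolvable $S(2,w,n)$, exactly parallel to the counts performed in the three earlier theorems of this subsection. The main obstacle in the argument is not in the code construction, which mirrors the earlier templates essentially line for line; rather, it lies in the appeal to the deep asymptotic existence theorem for resolvable $2$-designs from~\cite{RCW73}, which is precisely the reason for the congruence $n \equiv w \pmod{w(w-1)}$ and the "sufficiently large $n$" qualifier in the hypothesis. Once that existence is granted, the three properties $\A(t)$, $\B(w)$, and ${\bf C}(w-2)$ follow from the partition structure of each parallel class and the pairwise intersection property of $S(2,w,n)$, with no further combinatorial work required.
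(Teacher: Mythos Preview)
Your approach is exactly the paper's: invoke the Ray-Chaudhuri--Wilson asymptotic existence of a resolvable $S(2,w,n)$, take the parallel classes as codesets, and verify $\A(t)$, $\B(w)$, ${\bf C}(w-2)$ precisely as you do. The paper states the theorem as an immediate consequence of item~(3) in the enumerated list and gives no further argument, so your write-up is in fact more detailed than the paper's.

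One point you should not defer as ``routine,'' however: the block-count for the number of parallel classes of a resolvable $S(2,w,n)$ is
\[
\frac{\text{total blocks}}{\text{blocks per class}} \;=\; \frac{n(n-1)/\bigl(w(w-1)\bigr)}{n/w} \;=\; \frac{n-1}{w-1},
\]
not $\dfrac{(n-1)(n-2)}{w(w-1)}$ as the theorem statement asserts. (The same discrepancy appears in the preceding theorem about $(n,t,4,2)$-LPECC codes; the formula $\frac{(n-1)(n-2)}{w(w-1)}$ is the parallel-class count for a resolvable $S(3,w,n)$, not $S(2,w,n)$.) Your construction is correct, but it yields size $\frac{n-1}{w-1}$; the stated size in the paper appears to be a typo carried over from the $S(3,4,n)$ case.
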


\begin{theorem}
If $q$ is a prime power, then
there exists a $(q^2,q-1,q,q-2)$-LPECC code of size $q^2$.
\end{theorem}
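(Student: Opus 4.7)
The plan is to invoke item 5 of the list of resolvable Steiner systems preceding the theorem: for any prime power $q$, the affine plane $AG(2,q)$ realizes a resolvable $S(2,q,q^2)$. This design has $q^2$ points and $q(q+1)$ blocks of size $q$, and admits a partition of its block set into $q+1$ parallel classes $P_1,\ldots,P_{q+1}$, each $P_i$ partitioning $[q^2]$ into $q$ pairwise disjoint blocks. The construction is to take each parallel class as a codeset, that is, $C_i = \{\chi_B : B \in P_i\}$, where $\chi_B \in \Fn$ is the characteristic vector whose support is $B$. Because different parallel classes share no block, the codesets are pairwise disjoint subsets of $\Fn$.

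I would then verify the three required properties in succession. For Property $\A(q-1)$, fix any hot set $\cS \subset [q^2]$ with $|\cS|=q-1$ and any parallel class $P_i$: since the $q$ blocks of $P_i$ partition the $q^2$ points and $\cS$ contains only $q-1$ elements, at least one block $B \in P_i$ must satisfy $B \cap \cS = \varnothing$, and the corresponding codeword $\chi_B$ then has $\supp(\chi_B) \cap \cS = \varnothing$. Property $\B(q)$ is immediate because every codeword has weight exactly $q$, so under differential encoding each transmission triggers at most $w=q$ state transitions. For Property ${\bf C}(q-2)$, the defining property of $S(2,q,q^2)$ guarantees that any two distinct blocks meet in at most one point, so any two distinct codewords have Hamming distance at least $2q-2 \geq 2(q-2)+1$, which suffices to correct $e=q-2$ transmission errors.

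The main obstacle will be reconciling the claimed code size of $q^2$ with the natural parallel-class count of $q+1 = (q^2-1)/(q-1)$ produced by the construction above; the four preceding theorems in this subsection all report size equal to the number of parallel classes (optionally plus a single all-zero codeset), so the asserted $q^2$ suggests a more intricate enumeration. Adjoining the all-zero codeset $\{\zero\}$ adds one, but this is admissible only when $q \leq 3$, since for $q \geq 4$ the distance $d(\zero,\chi_B) = q$ falls short of the required minimum distance $2q-3$. Consequently, reaching size $q^2$ would appear to demand exploiting further structure of $AG(2,q)$---for instance, the translation group $\mathbb{F}_q^2$ acting on the point set, or a careful selection of additional codewords of weight strictly less than $q$ that preserve the minimum-distance bound. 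I would allocate the bulk of the proof-writing effort to this counting step, since the verification of Properties $\A(q-1)$, $\B(q)$, and ${\bf C}(q-2)$ is straightforward within the resolvable-Steiner-system framework already developed in this section.
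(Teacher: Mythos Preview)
Your approach is exactly the one the paper intends: item~5 in the list immediately preceding the theorem supplies the resolvable $S(2,q,q^2)$ from the affine plane, and the codesets are its $q+1$ parallel classes. Your verifications of Properties $\A(q-1)$, $\B(q)$, and ${\bf C}(q-2)$ are correct and match the framework the paper sets up (``start with a Steiner system $S(w-e,w,w(t+1))$ and partition it into pairwise disjoint Steiner systems $S(1,w,n)$'').

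You are also right to flag the size $q^2$ as problematic. The construction yields exactly $q+1$ parallel classes, and no additional mechanism in the paper produces more codesets here. The paper offers no separate proof of this theorem beyond the resolvable-Steiner-system recipe, so the stated size $q^2$ appears to be a misprint for $q+1$. (Note that the two immediately preceding theorems also state sizes---$(n-1)(n-2)/12$ and $(n-1)(n-2)/(w(w-1))$---that do not equal the parallel-class counts $(n-1)/3$ and $(n-1)/(w-1)$ of the relevant resolvable $S(2,w,n)$, so the discrepancy you found is part of a pattern of arithmetic slips, not a hidden extra construction.) You should not spend effort trying to reach $q^2$ via translations or low-weight codewords; the paper's own method gives $q+1$, and that is the correct size for the construction described.
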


The constructions derived, in subsection~\ref{sec:CoolCodesFq},
from codes over $\Fq$ are based on the codes
constructed in Corollaries~\ref{cor:spread_t,w} and~\ref{cor:MDS_t,w}.
These corollaries are derived from the concatenation construction presented in
Theorem~\ref{thm:concatenate}. To adapt these results to form
$(n,t,w,e)$-LPECC codes, we use constructions of two types.
The first one starts with an error-correcting code $\C$ over $\Fq$,
partition of the codewords of $\C$ by using a partial spread, and apply the concatenation construction
on the codesets derived from the partial spread. The second type of construction is based
the $(n,t,w)$-LPC codes derived from dual codes of MDS codes (see Corollary~\ref{cor:MDS_t,w}).

For the first construction we start with an $[m,\kappa,2e+1]$ code $\C$ over $\Fq$, such that
$\kappa \geq 2(t+1)$. In addition, we
take a partial $(t+1)$-spread
of $\Fqkap$. Since $\C$ has dimension $\kappa$, it follows that
there exists at least one set of $\kappa$ systematic coordinates whose projection spans $\Fqkap$.
The partial $(t+1)$-spread can be formed on these $\kappa$ systematic coordinates. The
codewords of $\C$ can be partitioned
into codesets related to this partial spread and form a new code $\C'$.
Given any set $\cS$ of size $t$, each codeset has at least
one codeword with \emph{zeroes} in the $t$ coordinates of $\cS$,
since the subspaces of the partial spread have
dimension $t+1$. The proof is exactly as the proof of
Theorem~\ref{thm:spread_cool}.
Moreover, the code $\C'$ can correct $e$ errors, since the codesets of $\C'$ are disjoint
and all the codewords in the codesets
are contained in the code $\C$ which can correct any $e$ errors. Finally, let $q\le \sum_{i=0}^{w'} \binom{s}{i}$,
and we use the concatenation construction given in Theorem~\ref{thm:concatenate} and Corollary~\ref{cor:spread_t,w}.
Thus, we obtain an $(ms,t,mw',e)$-LPECC code.
The size of the code $\C'$ depends on the largest dimension of an $[m,\kappa,2e+1]$ code $\C$,
subject to the requirement that $\kappa \geq 2(t+1)$. The resulting code $\C'$
will have size $q^{\kappa -t-1}$.

The second construction is an immediate consequence of Corollary~\ref{cor:MDS_t,w}.
The minimum Hamming distance of the low-power cooling code obtained
in Corollary~\ref{cor:MDS_t,w} is the same as the minimum Hamming distance of the related MDS code
of length $m$, $m \leq q+1$.
Therefore, the size of the code is $q^{m-t}$ and its minimum Hamming distance $t+1$.

\begin{corollary}
\label{cor:MDS_t,w,e}
If $q\le \sum_{i=0}^{w'} \binom{s}{i}$ and $m\le q+1$,
then there exists an $(ms,t,mw',\lfloor t/2 \rfloor)$-LPECC code of size $q^{m-t}$.
\end{corollary}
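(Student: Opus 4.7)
The plan is to verify that the $(ms,t,mw')$-LPC code produced by Corollary~\ref{cor:MDS_t,w} already satisfies Property\,${\bf C}(\lfloor t/2\rfloor)$, so that the corollary follows without any new construction.

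First I would recall the structure of that code. Since $m\le q+1$, the MDS code $\C=[m,m-t,t+1]_q$ exists, and Lemma~\ref{lem:dual_cool} uses its $q^{m-t}$ cosets of $\C^{\perp}$ in $\Fqm$ as the codesets of an $(m,t)_q$-cooling code. Theorem~\ref{thm:concatenate}, applicable because $q\le\sum_{i=0}^{w'}\binom{s}{i}$, then lifts this via an injection $\psi\colon\Fq\to J^+(s,w')$ with $\psi(0)=\zero$ to a binary code of length $ms$, codeword weight at most $mw'$, and $q^{m-t}$ disjoint codesets $D_i=\Psi(C_i)$; this is exactly the code of Corollary~\ref{cor:MDS_t,w}.

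The key step is to show that this binary code has minimum Hamming distance at least $t+1$. For any two codewords $\Psi(\xxx)$ and $\Psi(\xxx')$ drawn from distinct codesets, the underlying $q$-ary vectors $\xxx,\xxx'\in\Fqm$ lie in different cosets of $\C^{\perp}$; using the MDS structure of $\C$ (whose codewords form a natural transversal of $\C^{\perp}$), the coset labels differ by a nonzero codeword of $\C$, so the $q$-ary Hamming distance between $\xxx$ and $\xxx'$ is at least the minimum distance $t+1$ of $\C$. Injectivity of $\psi$ together with $\psi(0)=\zero$ guarantees that every coordinate at which $\xxx$ and $\xxx'$ differ contributes at least one differing binary coordinate under $\Psi$, yielding $d(\Psi(\xxx),\Psi(\xxx'))\ge t+1$. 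Invoking the standard sphere-packing bound $e\le\lfloor(d-1)/2\rfloor$ with $d=t+1$ then produces exactly $\lfloor t/2\rfloor$ correctable binary errors, which is Property\,${\bf C}(\lfloor t/2\rfloor)$.

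The subtle point, and the one I expect to require the most care, is the distance-inheritance claim above: one must track how the cooling adjustment $\vvv\in\C^{\perp}$ chosen within a coset interacts with the $t+1$ MDS gap supplied by $\C$, and confirm that the concatenation $\Psi$ does not collapse that gap at the binary level. Once this is pinned down, Properties\,${\bf A}(t)$ and ${\bf B}(mw')$ are already inherited from Corollary~\ref{cor:MDS_t,w}, and the resulting code is the claimed $(ms,t,mw',\lfloor t/2\rfloor)$-LPECC code of size $q^{m-t}$.
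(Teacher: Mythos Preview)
Your approach mirrors the paper's: the paper also asserts, without further argument, that the minimum Hamming distance of the code from Corollary~\ref{cor:MDS_t,w} equals $t+1$ (the minimum distance of the underlying MDS code) and derives Property~${\bf C}(\lfloor t/2\rfloor)$ from that. So at the level of strategy you and the paper agree.

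However, the step you flag as ``subtle'' is not merely subtle --- as written it is false. You claim that for $\xxx\in C_i$ and $\xxx'\in C_j$ with $i\ne j$ the $q$-ary distance satisfies $d_q(\xxx,\xxx')\ge t+1$, because the coset representatives in $\C$ differ by a nonzero codeword of $\C$. But $\xxx-\xxx'$ equals that nonzero element of $\C$ \emph{plus} an arbitrary element of $\C^{\perp}$, and nothing forces the sum to have weight at least $t+1$. Concretely: the zero vector lies in $\C^{\perp}$, while any weight-$1$ vector in $\Fqm$ lies outside $\C^{\perp}$ (whose minimum distance is $m-t+1\ge 2$); these sit in distinct codesets yet are at $q$-ary distance $1$. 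After concatenation via an injection $\psi$ with $\psi(0)=\zero$ and, say, $\wt(\psi(1))=1$, the corresponding binary codewords are at Hamming distance $1$, which precludes correcting even one error once $t\ge 2$. Thus the distance-inheritance claim cannot be ``pinned down'' --- it fails outright, and with it your route to Property~${\bf C}(\lfloor t/2\rfloor)$. The paper's one-line justification makes the same leap and offers no argument that avoids this obstruction.
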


Finally, we want to adapt the sunflower construction
to form an $(n,t,w,e)$-LPECC code.
The idea is to use the arguments of Corollary~\ref{cor:sunLPCC},
in the construction implied by Theorem~\ref{thm:sunFECC}, which
yields the following result

\begin{corollary}
\label{cor:sunFECCt,w}
Let $n$, $t$, $s$, $r$, $\kappa$, $d$, be integers such that $r+t \leq (\kappa + s)/2$
and there exists a binary $[n,\kappa,2e+1]$ code. If there exists a $[\kappa , s ,d]$ code
and a binary $[\kappa -t ,r,d]$ code does not exist, then there exists an $(n,t,d-1,e)$-LPECC code of
size $M > 2^{\kappa -t-r}$.
\end{corollary}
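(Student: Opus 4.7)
The plan is to combine the sunflower error-correcting construction of Theorem~\ref{thm:sunFECC} with the weight-pruning trick of Corollary~\ref{cor:sunLPCC}. Under the stated hypotheses, I would first invoke Theorem~\ref{thm:sunFECC} to obtain an $(n,t,e)$-ECC code $\C=\{C_1,C_2,\ldots,C_M\}$ with $M>2^{\kappa-t-r}$ codesets. Since the codesets are disjoint subsets of the ambient binary $[n,\kappa,2e+1]$ code, any two distinct codewords (across or within codesets) are at Hamming distance at least $2e+1$, so Property\,${\bf C}(e)$ holds automatically.

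Next I would re-examine how Property\,${\bf A}(t)$ is certified in that construction. For any admissible $t$-subset $\cS$ and any codeset $C_i$, the dimension-counting step produces an $r$-dimensional subspace $U_i\subseteq V_i+K$ all of whose vectors vanish on $\cS$. Puncturing $U_i$ on $\cS$ would yield a $[\kappa-t,r,\delta]$ linear code; since a $[\kappa-t,r,d]$ code is assumed \emph{not} to exist, we must have $\delta\leq d-1$. Consequently $U_i$ contains a nonzero vector $\xxx\notin K$ with $\wt(\xxx)\leq d-1$ and $\supp(\xxx)\cap\cS=\varnothing$; this $\xxx$ lies in $C_i$. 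Thus every codeset retains a low-weight witness for every choice of $\cS$, exactly the point exploited in Corollary~\ref{cor:sunLPCC}.

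Now I would prune each codeset to $C_i':=\{\xxx\in C_i:\wt(\xxx)\leq d-1\}$ and set $\C':=\{C_1',\ldots,C_M'\}$. The previous step guarantees $C_i'\neq\varnothing$ for every $i$ and every admissible $\cS$, so $\C'$ still satisfies Property\,${\bf A}(t)$ and still has $M>2^{\kappa-t-r}$ codesets. The weight restriction, paired with differential encoding, immediately delivers Property\,${\bf B}(d-1)$, while Property\,${\bf C}(e)$ survives the pruning since $\C'\subseteq\C$ and the codesets remain disjoint. The main obstacle I anticipate is confirming that the weight bound harvested from the sunflower analysis — which natively lives on the $\kappa$ systematic coordinates of $K$ and $V_i$ — really controls the Hamming weight of the witness as a vector of $\Fn$, since it is the latter that governs bus switching activity under differential encoding. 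The cleanest way to handle this is to carry out the sunflower construction inside $\Fn$, with $K$, $V_i$, and $U_i$ realized as subspaces of $\Fn$ supported on the chosen systematic coordinate set; then systematic-side weight coincides with Hamming weight in $\Fn$, and the three-line inheritance outlined above closes the proof.
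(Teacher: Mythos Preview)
Your overall approach---invoke Theorem~\ref{thm:sunFECC} and then prune by weight as in Corollary~\ref{cor:sunLPCC}---is exactly the paper's approach, and you have correctly isolated the one nontrivial point: the weight bound ``$\wt(\xxx)\le d-1$'' is produced by the sunflower analysis on the $\kappa$ systematic coordinates, whereas Property\,${\bf B}(d-1)$ requires a bound on the Hamming weight in $\Fn$.

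However, your proposed resolution does not work. If you realize $K$ and the $V_i$ as subspaces of $\Fn$ supported only on the systematic coordinate set (i.e., zero-padded outside those $\kappa$ positions), then the codesets $(V_i+K)\setminus K$ are \emph{not} subsets of the ambient $[n,\kappa,2e+1]$ code $\C$: a nonzero vector supported on the systematic block is a codeword of $\C$ only if its parity part vanishes, which is generically false. You therefore lose the distance-$(2e{+}1)$ guarantee and with it Property\,${\bf C}(e)$. Conversely, if you keep the construction of Theorem~\ref{thm:sunFECC} intact---lifting $K$ and $V_i$ into $\C$ via the systematic encoding so that the codesets do sit inside $\C$---then every nonzero witness acquires parity-check bits, and the systematic-side bound no longer controls its weight in $\Fn$; indeed, any nonzero codeword of $\C$ has weight at least $2e+1$, so $d-1<2e+1$ would leave you with nothing to prune to. Either way, the ``systematic-side weight equals Hamming weight in $\Fn$'' claim and Property\,${\bf C}(e)$ cannot both hold under your fix, so the argument as written has a genuine gap at exactly the place you flagged.
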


Similarly, to Corollary~\ref{cor:sunFECCt,w} we can adapt the construction of Theorem~\ref{thm:sunG}
to obtain an $(n,t,w,e)$-LPECC code.

\vspace{2.00ex}
\section{Asymptotic Behavior}
\label{sec:asymptotic}
\vspace{0.50ex}

In this section we will analyze the asymptotic behavior of the thermal codes
which were constructed in the previous sections. We start in subsection~\ref{sec:asymp_cool}
where we consider only cooling codes. As was proved in Section~\ref{spreads},
when $t+1 \leq n/2$ our codes which use only $t+1$ redundancy bits are optimal and
the number of additional wires used is negligible when $k$ or $n$ are large enough.
Hence, the interesting case is the asymptotic behavior when $t+1 > n/2$ and the
sunflower construction is used.
The case is even more complicated when we consider low-power cooling codes.
Our methods with efficient encoding and decoding algorithms are not optimal,
but this does not exclude the possibility of being asymptotically optimal.
The asymptotic behavior of these codes will be considered in subsection~\ref{sec:asymp_LP_cool}.
In addition, to find asymptotically good codes,
we will consider in this subsection a new Gilbert-Varshamov type method,
called the expurgation method.

%
\subsection{\hspace*{-2pt}Asymptotic Behavior of Cooling Codes}
\vspace{-0.00ex}
\label{sec:asymp_cool}

Let $B(n,d)$ be the largest dimension of a binary $[n,\kappa,d]$ code.
For $0 < \delta,~ \tau < 1$, define
\begin{align*}
\beta(\delta) & ~\deff~ \limsup_{n\to\infty} \frac{B(n,\floor{\delta n})}{n}, \\
\varrho(\tau) & ~\deff~ \limsup_{n\to\infty} \frac{\log_2 C(n,\floor{\tau n})}{n}.
\end{align*}

By Lemma~\ref{lem:upperCC}, we have that any $(n,t)$-cooling code of size $M$,
satisfies $M \leq 2^{n-t}$. This immediately implies the following
asymptotic upper bound on the rate of an $(n,t)$-cooling code.

\begin{corollary}
\label{cor:asymp_upperCC}
If $0 < \tau < 1$, then
$$\varrho(\tau) \leq 1 - \tau~.$$
\end{corollary}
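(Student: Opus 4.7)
The plan is essentially a one-line limit argument starting from the sphere-packing-type bound already established. First I would recall that Lemma~\ref{lem:upperCC} gives $C(n,t) \le 2^{n-t}$ for every admissible pair $(n,t)$. Specializing $t = \lfloor \tau n \rfloor$ yields $C(n,\lfloor \tau n\rfloor) \le 2^{n - \lfloor \tau n\rfloor}$.

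Next I would take base-$2$ logarithms and divide by $n$, obtaining
\[
\frac{\log_2 C(n,\lfloor \tau n\rfloor)}{n} \;\le\; \frac{n - \lfloor \tau n\rfloor}{n} \;=\; 1 - \frac{\lfloor \tau n\rfloor}{n}.
\]
Since $\lfloor \tau n\rfloor / n \to \tau$ as $n \to \infty$, passing to the limit superior on both sides gives
\[
\varrho(\tau) \;=\; \limsup_{n\to\infty} \frac{\log_2 C(n,\lfloor \tau n\rfloor)}{n} \;\le\; \limsup_{n\to\infty} \left(1 - \frac{\lfloor \tau n\rfloor}{n}\right) \;=\; 1-\tau,
\]
which is the desired inequality.

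There is no real obstacle here: the only mild subtlety is the floor function, which is handled by the standard squeeze $\tau - 1/n < \lfloor \tau n\rfloor/n \le \tau$. The result is simply the asymptotic translation of a finite-length combinatorial bound, and no further machinery (such as entropy estimates or code-theoretic bounds like $\beta(\delta)$) is needed for this particular corollary.
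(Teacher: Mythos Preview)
Your proof is correct and follows exactly the route the paper takes: the corollary is stated without a formal proof, with only the remark that it ``immediately implies'' from Lemma~\ref{lem:upperCC}, which is precisely the one-line limit argument you wrote out.
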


The following proposition for computing an asymptotic lower bound
on the rate of an $(n,t)$-cooling code.


\begin{proposition}
\label{prop:sunflower-asymp}
Assume $0< \epsilon <1$ satisfies the following conditions:
\begin{align}
\epsilon & > \beta(\delta) \label{eq:eps-lower}\\
\tau +\epsilon(1-\tau) & < \frac{1+\beta(\delta(1-\tau))}{2} \label{eq:eps-upper}.
\end{align}
Then $\varrho(\tau)\ge (1-\tau)(1-\epsilon)$.
\end{proposition}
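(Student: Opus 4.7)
The plan is to apply Theorem~\ref{thm:sunF} along a suitable sequence of $n$'s with the sunflower parameters $t$, $s$, $r$, $d$ chosen to scale linearly with~$n$. Specifically, I would set $t = \lfloor \tau n \rfloor$, $n' = n - t$, $d = \lceil \delta(1-\tau)n \rceil$, and $r = \lceil \epsilon n' \rceil$; then choose $s$ to be just below $\beta(\delta(1-\tau))\,n$ on a subsequence where the relevant limsup is (nearly) attained. The output of Theorem~\ref{thm:sunF} will be an $(n,t)$-cooling code of size exceeding $2^{n-t-r}$, and since $n - t - r = (1-\tau)(1-\epsilon)\,n - o(n)$, this is exactly the bound claimed.

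I would verify the three hypotheses of Theorem~\ref{thm:sunF} in turn. First, the nonexistence of an $[n', r, d]$ code: the relative distance $d/n'$ tends to $\delta$ and the required rate $r/n'$ tends to $\epsilon$, so condition~\eqref{eq:eps-lower} combined with the limsup definition of $\beta$ gives $B(n', d) < r$ for every sufficiently large $n$. Second, the existence of an $[n, s, d]$ code: since $d/n \to \delta(1-\tau)$, the limsup definition of $\beta(\delta(1-\tau))$ produces an infinite set $\mathcal{N} \subseteq \mathbb{N}$ and a choice of $s$ along $\mathcal{N}$ with $s/n$ arbitrarily close to $\beta(\delta(1-\tau))$ from below. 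Third, the arithmetic constraint $r + t \leq (n+s)/2$: after rearrangement this demands $s/n \geq 2(\tau + \epsilon(1-\tau)) - 1 + o(1)$, which by condition~\eqref{eq:eps-upper} leaves strictly positive slack below $\beta(\delta(1-\tau))$, so $s$ can simultaneously satisfy the existence condition and the arithmetic one.

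Once Theorem~\ref{thm:sunF} delivers an $(n,t)$-cooling code of size exceeding $2^{n-t-r}$ for each $n \in \mathcal{N}$, I would divide by $n$ and pass to the limsup along $\mathcal{N}$ to conclude $\varrho(\tau) \geq (1-\tau)(1-\epsilon)$.

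The main obstacle I anticipate is reconciling the two directions of the limsup definition of $\beta$: condition~(i) requires the nonexistence bound to hold \emph{eventually}, while condition~(ii) provides the existence bound only along an infinite \emph{subsequence}. Luckily these are simultaneously usable because $\varrho(\tau)$ is itself a limsup, so lower-bounding $\log_2 C(n, \lfloor \tau n \rfloor)/n$ along a single infinite sequence already suffices. A secondary piece of care is that $d/n'$ and $d/n$ are only asymptotically equal to $\delta$ and $\delta(1-\tau)$ respectively; this $o(1)$ perturbation can be absorbed into the strict inequalities of~\eqref{eq:eps-lower} and~\eqref{eq:eps-upper} by introducing a small slack $\eta > 0$ and letting $\eta \downarrow 0$ at the end, a standard maneuver that I do not expect to cause real difficulty.
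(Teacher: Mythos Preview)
Your proposal is correct and follows essentially the same approach as the paper's proof: both apply Theorem~\ref{thm:sunF} with $t\approx\tau n$, $r\approx\epsilon(n-t)$, $d\approx\delta(1-\tau)n$, and $s$ chosen just below $\beta(\delta(1-\tau))\,n$ along a subsequence realizing the limsup, then verify the three hypotheses exactly as you outline. You have also correctly identified and handled the one genuine subtlety---that the nonexistence condition must hold eventually while the existence condition is only available along a subsequence---by observing that $\varrho(\tau)$ is itself a limsup.
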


\begin{proof}
By the definition of $\beta$, there exists $N_1$ such that
\begin{equation}
\label{pf:one}
B(M,\floor{\delta M})<\epsilon M ,
\end{equation}
for all $M \ge N_1$.

If $\delta'=\delta(1-\tau)$ and $0 < \epsilon_1 <1$, then
for any given large enough integer $N_2$, there exists an integer $N \ge N_2$, such that
\begin{equation}
\label{pf:two}
B(N,\floor{\delta'N})\ge \beta(\delta')N-\epsilon_1{N}~.
\end{equation}

Set $s=\floor{\beta(\delta')N-\epsilon_1 N}$, $d=\floor{\delta'N}$,
$t=\lceil{\tau N}\rceil$, $M=N-t$, and $r=\lceil{\epsilon M}\rceil$.
We claim that the conditions of Theorem~\ref{thm:sunF} are met for these parameters.

Observe that
\begin{align}
t+r &= \lceil{\tau N}\rceil +\lceil{\epsilon M}\rceil = \lceil{\tau N}\rceil +\lceil{\epsilon (N-t)}\rceil \notag \\
&=  \lceil{\tau N}\rceil +\lceil{\epsilon (N-\lceil{\tau N}\rceil)}\rceil \notag\\
&\le (\tau N+1)+ (\epsilon(1-\tau)N+1)\notag\\
&= \tau N+\epsilon(1-\tau)N+2. \label{pf:three}
\end{align}
On the other hand,
\begin{align}
\frac{N+s}{2}& =\frac{N+\floor{\beta(\delta')N -\epsilon_1 N}}{2} \notag\\
&\ge \frac{N+\beta(\delta')N-\epsilon_1N-1}{2}. \label{pf:four}
\end{align}

Now, let
\begin{equation}
\label{pf:five}
N_2\ge \frac{5/2}{{(1+\beta(\delta')-\epsilon_1)}/{2}-(\tau +\epsilon(1-\tau ))}.
\end{equation}
Furthermore, since~\eqref{eq:eps-upper} holds, it follows that there exists
$0 < \epsilon_2 <1$, such that for all $0< \epsilon_1 <\epsilon_2$
the denominator in the right hand side of~\eqref{pf:five} is strictly positive.
Hence, since $N > N_2$, we have that
\[\left({(1+\beta(\delta')-\epsilon_1)}/{2}-(\tau+\epsilon(1-\tau))\right)N\ge 5/2, \]
or,
\begin{equation}
\label{eq:sum_condF1}
\tau N+\epsilon(1-\tau)N+2\le  \frac{N+\beta(\delta')N-\epsilon_1N-1}{2}.
\end{equation}
Combining~\eqref{eq:sum_condF1} with Inequalities \eqref{pf:three} and \eqref{pf:four},
we have that $t+r\le (N+s)/2$, and the first condition of
Theorem~\ref{thm:sunF} is satisfied.

Next, since $s \leq {\beta(\delta')N} - \epsilon_1 N$, it follows from \eqref{pf:two}
that there exists an $[N,s,d]$ code, and the second condition of
Theorem~\ref{thm:sunF} is satisfied.
Finally, observe that
\begin{equation}
\label{eq:M}
M=N-t\ge N - (\tau N+1)= (1-\tau)N-1.
\end{equation}
If we choose
\begin{equation}
\label{pf:six}
N_2\ge \frac{N_1+1}{1-\tau},
\end{equation}
then since $N \geq N_2$, it follows by~\eqref{pf:six} that $N(1-\tau ) \geq N_1 +1$
which implies by~\eqref{eq:M} that $M \geq N_1$.
Therefore, from~\eqref{pf:one}, we infer that $B(M,\floor{\delta M})<\epsilon M\le r$.
Since $\floor{\delta M}\le \floor{\delta(1-\tau) N}=d$,
it follows that an $[N-t,r,d]$ code does not exist, and the third condition of
Theorem~\ref{thm:sunF} is satisfied.

In summary, to satisfy the three conditions of Theorem~\ref{thm:sunF},
we need $N_2$ to satisfy~\eqref{pf:five} and~\eqref{pf:six}.
Additionally, to guarantee that $N\ge N_0$ we require that $N_2 \geq N_0$
and hence we have that

{\small \[N_2= \max \left\{N_0, \frac{N_1+1}{1-\lambda},
 \frac{5/2}{{(1+\beta(\delta')-\epsilon_1)}/{2}-(\lambda+\epsilon(1-\lambda))}\right\}.\]}
Therefore, by Theorem~\ref{thm:sunF} there exists an $(N,t)$-cooling
code of size greater than $2^{N-t-r}$. Now,
\begin{align}
N-t-r & = N - \lceil{\tau N}\rceil - \lceil{\epsilon M}\rceil \notag \\
&= N - \lceil{\tau N}\rceil - \lceil{\epsilon (N- \lceil \tau N \rceil )}\rceil \notag \\
&\geq N - \tau N - \epsilon N- \epsilon \tau N -2 \notag \\
&= N (1- \tau)(1 - \epsilon) -2 \notag .
\end{align}
This implies that there exists an $(N,t)$-cooling code of size greater than
$2^{(1-\lambda)(1-\epsilon)N-2}$.
Since $t=\lceil{\lambda N}\rceil \ge \floor{\lambda N}$, it follows that there also exists an
$(N,\floor{\lambda N})$-cooling code of size greater than $2^{(1-\lambda)(1-\epsilon)N-2}$.
Thus,
$$
\frac{\log_2 C(N,\floor{\lambda N})}{N} \ge (1-\lambda)(1-\epsilon)-\frac2N ,
$$
which implies that $\varrho(\tau)\ge (1-\tau)(1-\epsilon)$.
\end{proof}

To apply Proposition~\ref{prop:sunflower-asymp} we have to use
the best known lower bound on $\beta (\cdot)$ in~\eqref{eq:eps-lower} and
the best known upper bound on $\beta (\cdot)$ in~\eqref{eq:eps-upper}.
For lower bound we will use the Gilbert-Varshamov bound~\cite{Gil52,Var57}.
For upper bound we will use the McEliece-Rodemich-Rumsey-Welch (MRRW) bound~\cite{MRRW}.
Specifically, we have

\begin{enumerate}
\item For the lower bound, the Gilbert-Varshamov bound implies that $\beta(\delta)\ge 1-H(\delta) +o(1)$,
where $H(\cdot)$ is the binary entropy function, given by
$$
H (x)\, \ \deff\  -x \log_2 \!x \,-\, (1-x) \log_2(1-x)~.
$$

\item As for the upper bound, the MRRW bound (equation (1.4) in~\cite{MRRW}) implies that
\begin{align}
\label{eq:mrrw}
& \beta (\delta) \leq {\rm MRRW}(\delta) \notag \\
& =\min_{0\le u\le 1-2\delta} 1+g\left(u^2\right)-g\left(u^2+2\delta u+2\delta\right),
\end{align}
where $g(x)=H((1-\sqrt{1-x})/2)$.

\end{enumerate}

Hence, we have the following corollary.

\begin{corollary}
\label{cor-sunflower-asymp-2}
Suppose that $0 <\epsilon ,~ \delta, ~\tau <1$ satisfy the following conditions:
\begin{align}
\epsilon & > {\rm MRRW}(\delta), \label{eq:eps-lower-22}\\
\tau+\epsilon(1-\tau) & < \frac{2-H(\delta(1-\tau))}{2}~.
\label{eq:eps-upper-23}
\end{align}
Then $\varrho (\tau)\ge (1-\tau)(1-\epsilon)$.
\end{corollary}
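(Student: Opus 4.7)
The plan is to derive Corollary~\ref{cor-sunflower-asymp-2} directly from Proposition~\ref{prop:sunflower-asymp} by substituting the two classical asymptotic bounds on $\beta(\delta)$ listed immediately before the statement. Proposition~\ref{prop:sunflower-asymp} already delivers the conclusion $\varrho(\tau)\ge (1-\tau)(1-\epsilon)$ whenever
\[
\epsilon>\beta(\delta)\qquad\text{and}\qquad \tau+\epsilon(1-\tau)<\frac{1+\beta(\delta(1-\tau))}{2},
\]
so the entire task is to show that the two hypotheses of the corollary force these two inequalities to hold.

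For the first inequality, I would invoke the MRRW upper bound, which gives $\beta(\delta)\le\mathrm{MRRW}(\delta)$. Hypothesis~\eqref{eq:eps-lower-22} states $\epsilon>\mathrm{MRRW}(\delta)$, and chaining these immediately yields $\epsilon>\beta(\delta)$, as required.

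For the second inequality, I would use the Gilbert--Varshamov lower bound $\beta(\delta')\ge 1-H(\delta')+o(1)$ applied with $\delta'=\delta(1-\tau)$. This yields
\[
\frac{1+\beta(\delta(1-\tau))}{2}\;\ge\;\frac{1+\bigl(1-H(\delta(1-\tau))\bigr)}{2}\;=\;\frac{2-H(\delta(1-\tau))}{2},
\]
so hypothesis~\eqref{eq:eps-upper-23} implies the corresponding strict inequality with $\beta$ in place of the entropy expression. Invoking Proposition~\ref{prop:sunflower-asymp} then completes the argument.

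There is no real obstacle here; the corollary is essentially a packaging step that turns the abstract bound of Proposition~\ref{prop:sunflower-asymp} into something expressed in terms of the standard and widely tabulated functions $H$ and $\mathrm{MRRW}$. The only minor subtlety is the $o(1)$ term coming from Gilbert--Varshamov: since both hypotheses of the corollary are strict inequalities and the hypotheses of Proposition~\ref{prop:sunflower-asymp} only need to hold for all sufficiently large $n$ (which is already built into the $\limsup$ defining $\beta$), the vanishing additive term is absorbed without affecting either bound, and the proof is complete.
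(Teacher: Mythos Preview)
Your proposal is correct and follows essentially the same approach as the paper's own proof: invoke the MRRW upper bound to verify \eqref{eq:eps-lower}, invoke the Gilbert--Varshamov lower bound at $\delta(1-\tau)$ to verify \eqref{eq:eps-upper}, and then apply Proposition~\ref{prop:sunflower-asymp}. Your remark about absorbing the $o(1)$ term via the strict inequalities is a slight elaboration beyond what the paper writes, but the argument is otherwise identical.
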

\begin{proof}
Since $\epsilon > {\rm MRRW}(\delta)$ by~\eqref{eq:eps-lower-22} and
$\beta(\delta)\le {\rm MRRW}(\delta)$ by~\eqref{eq:mrrw}, it follows that
$\epsilon > \beta (\delta)$ and hence Inequality~\eqref{eq:eps-lower} is satisfied.

From the Gilbert-Varshamov bound, we have that $\beta(\delta(1-\tau))\ge 1-H(\delta(1-\tau))+o(1)$.
Hence, we have
\begin{align*}
\tau+\epsilon(1-\tau) & < \frac{2-H(\delta(1-\tau))}{2} \\
& = \frac{1+(1-H(\delta(1-\tau)))}{2}\\
& \le \frac{1+\beta(\delta(1-\tau))}{2}.
\end{align*}
\noindent Therefore, Inequality~\eqref{eq:eps-upper} is satisfied.

Since the conditions in Proposition~\ref{prop:sunflower-asymp} are satisfied,
it follows from the proposition that $\varrho (\tau)\ge (1-\tau)(1-\epsilon)$.
\end{proof}

Now, fix $0 < \tau < 1$ and in what follows, we compute a lower bound on $\varrho(\tau)$
implied by Corollary~\ref{cor-sunflower-asymp-2}. Define the set
\vspace{-5mm}

{\small
\begin{equation*}
E(\tau) ~\deff~ \{\epsilon: \mbox{there exists a $\delta$ such that
$\epsilon,\delta$ satisfy \eqref{eq:eps-lower-22} and \eqref{eq:eps-upper-23}}\}.
\end{equation*}
}

\noindent This definition implies that by Corollary~\ref{cor-sunflower-asymp-2},
for all $\epsilon\in E(\tau)$,
we can have that
$\varrho(\tau)\ge (1-\tau)(1-\epsilon)$.
Maximizing the right hand side, we obtain
\begin{equation}\label{eq:asymp-sunflower}
\varrho(\tau)\ge (1-\tau)(1-\epsilon^*),
\end{equation}
where $\epsilon^*$ is the value of $\epsilon$ for which $(1-\tau)(1-\epsilon)$ is maximized
when $\epsilon^*$ is substituted for $\epsilon$. For the next computation
we define $E^*(\tau) ~\deff~ \epsilon^*$.
Via numerical computations, we have that
\[
E^*(\tau)=
\begin{cases}
\approx 0, & \mbox{if $0 < \tau \le 0.687$}\\
\mbox{between 0 and 1}, & \mbox{if $0.687\le \tau \le 0.737$}\\
\approx 1 & \mbox{if $0.737\le \tau < 1$}.
\end{cases}
\]
\noindent The plot of the lower bound \eqref{eq:asymp-sunflower} for $\varrho(\tau)$ is given
in Figure~\ref{asymp-cooling}. Without giving the formal analyze we also add a curve for the case that
the Gilbert-Varshamov bound is tight (at least in the binary case).

\begin{figure}
\begin{center}
\includegraphics[width=9cm]{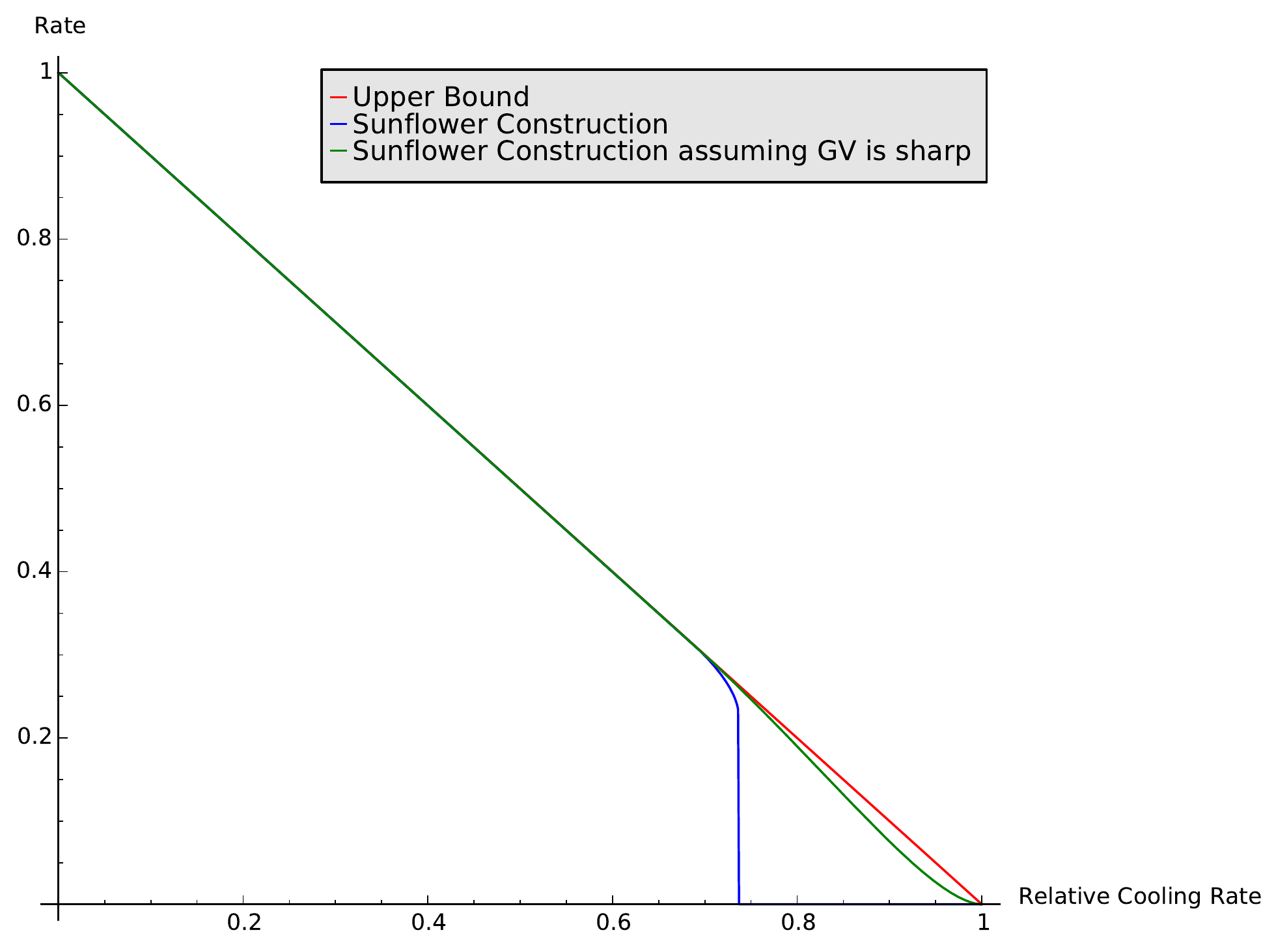}
\end{center}
\caption{Asymptotic Rates for Cooling Codes}
\label{asymp-cooling}
\end{figure}

Finally, we obtain the following result which is consistent with the figure,
and proves that for $\tau \leq 0.687$ our constructions are asymptotically optimal.

\begin{corollary}
\label{cor:less687}
If $\tau\le 0.687$, then $\varrho(\tau)=1-\tau$.
\end{corollary}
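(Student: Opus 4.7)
The upper bound $\varrho(\tau)\le 1-\tau$ is already supplied by Corollary~\ref{cor:asymp_upperCC}, so the only task is to establish the matching lower bound $\varrho(\tau)\ge 1-\tau$ whenever $\tau\le 0.687$. The plan is to show that under this hypothesis, Corollary~\ref{cor-sunflower-asymp-2} can be applied with an $\epsilon$ that may be chosen arbitrarily close to $0$; sending $\epsilon\to 0$ in \eqref{eq:asymp-sunflower} then gives $\varrho(\tau)\ge 1-\tau$, and combined with the upper bound yields equality.

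The main ingredient is the behavior of the MRRW bound near $\delta=1/2$. Observe that at $u=0$ and $\delta=1/2$, the right-hand side of \eqref{eq:mrrw} evaluates to $1+g(0)-g(1)=1+H(0)-H(1/2)=0$, and by continuity we have ${\rm MRRW}(\delta)\to 0$ as $\delta\to 1/2^{-}$. Therefore, given any target $\epsilon>0$, one can choose $\delta$ sufficiently close to $1/2$ so that condition~\eqref{eq:eps-lower-22}, namely $\epsilon>{\rm MRRW}(\delta)$, is satisfied. The second condition~\eqref{eq:eps-upper-23} is
$$\tau+\epsilon(1-\tau)\,<\,1-\tfrac{1}{2}H\bigl(\delta(1-\tau)\bigr).$$
As $\epsilon\to 0$ and $\delta\to 1/2$, this inequality tends to the limiting condition
$$2\tau+H\!\left(\tfrac{1-\tau}{2}\right)\,<\,2. \qquad (\star)$$
Thus the whole argument reduces to verifying $(\star)$ for $\tau\le 0.687$.

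The function $\varphi(\tau)\deff 2\tau+H((1-\tau)/2)$ is continuous on $[0,1]$, equals $1$ at $\tau=0$ and $2$ at $\tau=1$, and is strictly increasing on this interval (its derivative is $2+\frac{1}{2}\log_2\!\frac{1-(1-\tau)/2}{(1-\tau)/2}$, which is positive for $\tau<1$). Hence $\varphi(\tau)=2$ has a unique root $\tau^{*}\in(0,1)$, and a direct numerical evaluation confirms that $\tau^{*}\approx 0.6870$, so $\varphi(\tau)<2$ for all $\tau\le 0.687$. Consequently, given any $\tau\le 0.687$, one can pick a margin $m>0$ with $\varphi(\tau)\le 2-m$; then choose $\delta$ close enough to $1/2$ and $\epsilon$ small enough so that both \eqref{eq:eps-lower-22} and \eqref{eq:eps-upper-23} hold. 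Corollary~\ref{cor-sunflower-asymp-2} then gives $\varrho(\tau)\ge (1-\tau)(1-\epsilon)$, and letting $\epsilon\to 0$ completes the lower bound.

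The principal obstacle is the delicate handling of the boundary value $\tau=0.687$, which is merely a numerical approximation of $\tau^{*}$. I would resolve this by stating the result for the exact $\tau^{*}$ solving $\varphi(\tau)=2$, using strict inequality $\tau<\tau^{*}$ in the main argument, and then extending to $\tau=\tau^{*}$ via the fact that $\varrho$ is non-increasing in $\tau$ (any $(n,t')$-cooling code is automatically an $(n,t)$-cooling code for $t\le t'$): this gives $\varrho(\tau^{*})\ge \lim_{\tau\uparrow\tau^{*}}(1-\tau)=1-\tau^{*}$. A second, more minor subtlety is verifying that the limit in passing from \eqref{eq:eps-upper-23} to $(\star)$ is uniform; this follows from the continuity of $H$ on $(0,1)$, but should be noted explicitly when writing out the argument.
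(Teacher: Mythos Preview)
Your approach is essentially the paper's: take $\delta=1/2$ so that ${\rm MRRW}(\delta)=0$ and condition~\eqref{eq:eps-lower-22} becomes $\epsilon>0$, then verify that~\eqref{eq:eps-upper-23} reduces to $2\tau+H((1-\tau)/2)<2$, which holds numerically for $\tau\le 0.687$; letting $\epsilon\downarrow 0$ in Corollary~\ref{cor-sunflower-asymp-2} gives the lower bound. One small slip worth fixing: your formula for $\varphi'(\tau)$ drops the chain-rule factor $-\tfrac12$, so in fact $\varphi'(\tau)=2-\tfrac{1}{2}\log_2\tfrac{1+\tau}{1-\tau}$, and $\varphi$ is increasing only on $[0,15/17)$ and decreasing thereafter (with $\varphi(1)=2$); this does not damage the argument, since $\tau^{*}\approx 0.687<15/17$ and $\varphi$ is still strictly increasing on the relevant range.
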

\begin{proof}
If $\tau\le 0.687$, then the value of $1- H((1-\tau)/2)/(1-\tau)$ is strictly positive.
If we set ${\delta=1/2}$, then \eqref{eq:eps-lower-22} and \eqref{eq:eps-upper-23} are reduced to
\begin{align}
\epsilon & >0\\
\tau+\epsilon(1-\tau) & <\frac{2-H((1-\tau)/2)}{2}.
\end{align}
As a consequence, $E(\tau)$ contains the interval $(0,1- H((1-\tau)/2)/(1-\tau))$.
Hence, the expression ${(1-\tau)(1-\epsilon)}$ is maximized for
$\epsilon$ which tends to 0. Therefore, $\varrho(\tau) \geq 1-\tau$
and since by Corollary~\ref{cor:asymp_upperCC}, we have $\varrho(\tau) \leq 1-\tau$,
it follows that $\varrho(\tau)=1-\tau$.
\end{proof}

%
\subsection{\hspace*{-2pt}Asymptotic Behavior of low-power Cooling Codes}
\vspace{-0.00ex}
\label{sec:asymp_LP_cool}

The asymptotic analysis when the codes have Property ${\bf B}(w)$
in addition to Property ${\bf A}(t)$ is more complicated, needless to
say that our methods are slightly less efficient compared to the methods
used to construct codes when only ${\bf A}(t)$ is satisfied.
Let $C(n,t,w)$ be the largest size of an $(n,t,w)$-LPC code and
for a fixed $0 < \tau <1$ and a fixed $0 < \omega <1$, consider the asymptotic rate

$$
\varrho (\tau,\omega) ~\deff~ \limsup_{n\to\infty} \frac{\log_2 C(n,\floor{\tau n},\floor{\omega n})}{n}.
$$

If $t=o(n)$ and $0 < \omega < 1$,
then we claim that Corollary~\ref{cor:spread_t,w} provides a family of $(n,t,\floor{\omega n})$-LPC codes
whose rates approach at least $H(\omega)$.
Formally, we claim that if $\C_n$ is such an $(n,t,\floor{\omega n})$-LPC code, then
\begin{equation}
\label{eq:remark}
\lim_{n\to \infty} \frac{\log_2|\C_n|}{n} \geq H(\omega).
\end{equation}
To prove~\eqref{eq:remark}, consider any given $\epsilon >0$, a prime power~$q$ and an integer $s >1$ such that
$$
q\le \sum_{i=0}^{\floor{\omega s}} \binom{s}{i}~\text{and}~\frac{\log_2q}{s}\ge H(\omega)-\epsilon.
$$
For each $n$, set $m=\floor{n/s}$. Since $t=o(n)$, it follows that $t+1\le m/2$ for sufficiently large $n$.
Applying Corollary~\ref{cor:spread_t,w}, we obtain an $(n,t,\floor{\omega n})$-LPC code $\C_n$ of size at least
$q^{m-t-1}$. Now, we have that
\begin{align*}
\lim_{n\to \infty} \frac{\log|\C_n|}{n}
& \ge \lim_{n\to\infty} \frac{\log_2q}{s}\left(1-\frac {t+1}m\right) \\
&\ge \left(H(\omega)-\epsilon\right)\cdot \lim_{n\to\infty}\left( 1-\frac {t+1}m\right) \\
&=H(\omega)-\epsilon,
\end{align*}
\noindent and hence $\lim_{n\to \infty} \frac{\log|\C_n|}{n} \geq H(\omega)$,
where the last equality follows since $t=o(n)$.

The next theorem describes a very simple construction which will be called
in the sequel the \emph{expurgation construction}.

\begin{theorem}
\label{thm:expurgation}
If $M > \sum_{i=w+1}^n \binom{n}{i}$ and an $(n,t)_q$-cooling code~$\C$ of size $M$ exists,
then there exists an $(n,t,w)$-LPC code of size
at least $M-\sum_{i=w+1}^n \binom{n}{i}$.
\end{theorem}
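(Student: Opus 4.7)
The plan is to apply an expurgation argument at the level of entire codesets. Starting from the given cooling code $\C = \{C_1, C_2, \ldots, C_M\}$, I would form the sub-collection
\[
\C' \,\deff\, \bigl\{ C_i \in \C ~:~ \wt(\xxx) \le w \text{ for every } \xxx \in C_i \bigr\}
\]
and show that (i) $\C'$ is still an $(n,t)$-cooling code, (ii) every one of its codewords has weight at most $w$, and (iii) the number of codesets removed is at most $\sum_{i=w+1}^n \binom{n}{i}$. Together these three points yield an $(n,t,w)$-LPC code of size at least $M-\sum_{i=w+1}^n \binom{n}{i}$, which the hypothesis $M > \sum_{i=w+1}^n \binom{n}{i}$ guarantees is positive.

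The counting step (iii) is where the disjointness of codesets from \Dref{cooling-def} is used. Since $C_i \cap C_j = \varnothing$ for $i \ne j$, each binary vector of weight greater than $w$ can appear in at most one codeset. Thus the number of codesets containing at least one such ``heavy'' codeword is at most the total number of vectors in $\Fn$ of weight greater than $w$, namely $\sum_{i=w+1}^n \binom{n}{i}$. Subtracting this many codesets from $M$ gives the stated lower bound on $|\C'|$.

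For (i), observe that $\C'$ is obtained from $\C$ by removing entire codesets without modifying those that remain. Hence for every $\cS \subset [n]$ with $|\cS| = t$ and every $C_i \in \C'$, the original codeword $\xxx \in C_i$ with $\supp(\xxx) \cap \cS = \varnothing$ supplied by the cooling property of $\C$ is still present. For (ii), the defining property of $\C'$ is precisely that every surviving codeword has Hamming weight at most $w$, which under the differential encoding scheme of Section \ref{sec:Differential} translates directly into at most $w$ state transitions per bus transmission, i.e.\ Property\,$\B(w)$.

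There is no real obstacle here --- the argument is a one-line pigeonhole. The only subtlety worth emphasizing is that the expurgation must be done at the codeset granularity rather than at the codeword granularity: if one were to delete just the heavy codewords, a surviving codeset could lose its unique witness for some $t$-subset $\cS$ and thereby violate the cooling property. Removing the whole codeset whenever it contains any heavy codeword is what preserves (i) exactly, while disjointness of codesets is exactly what keeps the price of doing so bounded by $\sum_{i=w+1}^n \binom{n}{i}$.
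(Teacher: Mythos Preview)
Your proof is correct and essentially the same as the paper's. The paper's own proof is a single sentence (``remove all codewords of weight larger than $w$ from all the codesets of $\C$''), leaving the disjointness-based pigeonhole count implicit; your version phrases the expurgation at the codeset level and spells out the counting, which is cleaner but amounts to the identical argument.
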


\begin{proof}
The theorem follows as an immediate consequence implied by removing all codewords of weight larger than $w$
from all the codesets of~$\C$.
\end{proof}

\begin{corollary}
\label{cor-expurgation}
If $\tau\le 0.687$, $\omega\ge 1/2$, and $H(\omega)<1-\tau$,
then $\varrho(\tau,\omega) \geq 1-\tau -o(1)$.
\end{corollary}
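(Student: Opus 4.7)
The plan is to combine the asymptotically optimal cooling codes supplied by Corollary~\ref{cor:less687} with the expurgation construction of Theorem~\ref{thm:expurgation}, exploiting the hypothesis $H(\omega) < 1-\tau$ to ensure that the expurgation removes only a negligible fraction of the codesets. First, I would invoke Corollary~\ref{cor:less687}: since $\tau \le 0.687$, we have $\varrho(\tau) = 1 - \tau$, and hence by definition of $\limsup$, for every $\epsilon > 0$ there exists an infinite sequence of integers $n$ for which a binary $(n, \lfloor \tau n \rfloor)$-cooling code $\C_n$ exists with $|\C_n| \ge 2^{(1-\tau-\epsilon)n}$.

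Second, I would apply Theorem~\ref{thm:expurgation} (with $q = 2$ and $w = \lfloor \omega n \rfloor$) to each such $\C_n$, discarding every codeword of Hamming weight exceeding $w$. Provided the hypothesis $|\C_n| > \sum_{i=w+1}^n\binom{n}{i}$ of the theorem is met, this yields an $(n, \lfloor \tau n \rfloor, \lfloor \omega n \rfloor)$-LPC code $\C'_n$ with
\[
|\C'_n| \,\ge\, |\C_n| \,-\, \sum_{i=w+1}^n \binom{n}{i}.
\]
The quantitative heart of the argument is to upper bound the subtracted sum. Since $\omega \ge 1/2$ we have $n-w-1 \le (1-\omega)n \le n/2$, so the standard entropy bound $\sum_{j=0}^{m}\binom{n}{j} \le 2^{nH(m/n)}$, valid for $m/n \le 1/2$, combined with the monotonicity of $H$ on $[0,1/2]$, yields
\[
\sum_{i=w+1}^n \binom{n}{i} \,=\, \sum_{j=0}^{n-w-1}\binom{n}{j} \,\le\, 2^{nH(1-\omega)} \,=\, 2^{nH(\omega)}.
\]

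Third, I would invoke the hypothesis $H(\omega) < 1 - \tau$. Choosing $\epsilon > 0$ small enough that $H(\omega) < 1 - \tau - 2\epsilon$, the expurgated mass $2^{nH(\omega)}$ is of exponentially smaller order than $|\C_n|$, which both verifies the hypothesis of Theorem~\ref{thm:expurgation} and gives $|\C'_n| \ge |\C_n|\,(1 - 2^{-\epsilon n}) \ge 2^{(1-\tau-2\epsilon)n}$ for all sufficiently large $n$ in the sequence. Taking logarithms and dividing by $n$ gives $\varrho(\tau,\omega) \ge 1 - \tau - 2\epsilon$, and letting $\epsilon \to 0$ establishes the claim $\varrho(\tau,\omega) \ge 1 - \tau - o(1)$.

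The main obstacle is essentially bookkeeping rather than conceptual: once the two cited tools are in place, the only point that genuinely requires care is verifying that the entropy bound is applied on the correct side of $1/2$, which is exactly the role of the hypothesis $\omega \ge 1/2$ (direct application to $\sum_{i=w+1}^n\binom{n}{i}$ without first flipping to the tail $\sum_{j=0}^{n-w-1}\binom{n}{j}$ would be useless when $w$ is large). Morally, the three hypotheses line up perfectly: $\tau \le 0.687$ gives the asymptotically optimal cooling code to start from, $\omega \ge 1/2$ keeps the entropy estimate on its monotone regime, and $H(\omega) < 1-\tau$ is precisely the condition guaranteeing that expurgation wastes asymptotically fewer codewords than the code contains.
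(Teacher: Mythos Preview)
Your proposal is correct and follows essentially the same route as the paper: start from the asymptotically optimal cooling codes guaranteed when $\tau\le 0.687$ (the paper invokes the sunflower construction directly, you cite Corollary~\ref{cor:less687} which packages the same result), then expurgate via Theorem~\ref{thm:expurgation}, using the entropy bound on the upper tail $\sum_{i=w+1}^n\binom{n}{i}\le 2^{nH(\omega)}$ together with $H(\omega)<1-\tau$ to show the loss is exponentially negligible. Your treatment is in fact slightly more careful than the paper's in making explicit why $\omega\ge 1/2$ is needed (to flip to the lower tail before applying the standard entropy estimate) and in handling the $\epsilon$ bookkeeping.
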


\begin{proof}
By the sunflower construction (see Theorem~\ref{thm:sunF})
there exists a family of $(n,t)$-cooling
codes whose size is at least $2^{n(1-\tau-\epsilon')}$,
where $t=\floor{\tau n}$, $r =\floor{\epsilon' n}$, and $\epsilon' = o(1)$
(see subsection~\ref{sec:asymp_cool}).
It is well known by using the Stirling's approximation~\cite[p. 310]{MWS} that if
$w=\floor{\omega n}$, then $\sum_{i=w+1}^n \binom{n}{i}\le 2^{n H(\omega)}$.

Now, applying the expurgation construction on a code from this
family yields an $(n,t,w)$-LPC code whose size
is at least $2^{n(1-\tau -\epsilon')}\left(1-2^{n(H(\omega)-1+\tau +\epsilon')}\right)$.
In other words, the rate, $\varrho (\tau , \omega )$,
of this family of $(n,t)$-cooling codes is at least
$$
(1-\tau -\epsilon')+\frac{\log \left(1-2^{n(H(\omega)-1+\tau +\epsilon')}\right)}{n},
$$
which is at least $1-\tau - \epsilon'$.
\end{proof}

Clearly, $\varrho (\tau,\omega) \leq \varrho (\tau)$ and hence combining
Corollaries~\ref{cor:asymp_upperCC} and~\ref{cor-expurgation} implies that
\begin{corollary}
\label{cor:expurgation2}
If $\tau\le 0.687$, $\omega\ge 1/2$, and $H(\omega)<1-\tau$,
then $\varrho(\tau,\omega) =1-\tau$.
\end{corollary}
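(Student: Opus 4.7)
The plan is to combine three ingredients already established in the paper: the universal upper bound $\varrho(\tau)\le 1-\tau$ from Corollary~\ref{cor:asymp_upperCC}, the monotonicity observation $\varrho(\tau,\omega)\le \varrho(\tau)$, and the matching lower bound $\varrho(\tau,\omega)\ge 1-\tau-o(1)$ supplied by Corollary~\ref{cor-expurgation} in precisely the parameter regime $\tau\le 0.687$, $\omega\ge 1/2$, $H(\omega)<1-\tau$.

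First, I would justify the monotonicity inequality. Every $(n,t,w)$-LPC code is, by definition, an $(n,t)$-cooling code whose codewords satisfy the additional weight restriction imposed by Property~$\B(w)$. Thus $C(n,t,w)\le C(n,t)$, and taking $\limsup$ of $(\log_2 C)/n$ with $t=\floor{\tau n}$ and $w=\floor{\omega n}$ yields $\varrho(\tau,\omega)\le \varrho(\tau)$. Combined with Corollary~\ref{cor:asymp_upperCC}, this gives the upper bound $\varrho(\tau,\omega)\le 1-\tau$ with no extra work.

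Next, I would invoke Corollary~\ref{cor-expurgation}, whose hypotheses are \emph{exactly} those of the statement ($\tau\le 0.687$, $\omega\ge 1/2$, $H(\omega)<1-\tau$), to conclude that $\varrho(\tau,\omega)\ge 1-\tau-o(1)$. Since $\varrho$ is defined as a $\limsup$ of a sequence of real numbers and the $o(1)$ term is independent of the quantity being bounded, the lower bound can be tightened to $\varrho(\tau,\omega)\ge 1-\tau$.

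Sandwiching the two inequalities yields $\varrho(\tau,\omega)=1-\tau$, which is the desired conclusion. There is essentially no obstacle here: the corollary is a direct packaging of the preceding work, and the only small subtlety is recording the monotonicity step that lets the cooling-code upper bound be transferred to the low-power cooling regime. The substantive content — producing sunflower-based cooling codes of rate approaching $1-\tau$ and then expurgating the high-weight codewords without losing an appreciable fraction of the codesets — has already been absorbed into Corollary~\ref{cor-expurgation}.
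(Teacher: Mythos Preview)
Your proposal is correct and matches the paper's own argument essentially verbatim: the paper simply notes that $\varrho(\tau,\omega)\le\varrho(\tau)$ and then combines Corollary~\ref{cor:asymp_upperCC} with Corollary~\ref{cor-expurgation}. Your explicit justification of the monotonicity step is, if anything, slightly more detailed than what the paper provides.
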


Unfortunately, Theorem \ref{thm:expurgation} is nonconstructive and
the domain of $(\tau,\omega)$, where Corollary \ref{cor-expurgation} is applicable, is limited.
In the sequel we will consider other parameters outside this domain.
It should be no surprise that in most cases, we observed that codes obtained by the sunflower construction
have larger size than those obtained by the expurgation construction.
Nevertheless, in certain instances, the LPC codes obtained from Theorem~\ref{thm:expurgation}
has a larger size as compared to LPC codes resulting from the sunflower construction.
For such an example, consider $t=1$, ${w=\frac{2}{3}(n-1)}$, where $3$ divides $n-1$.
By Proposition~\ref{no_perfect},
there exists an $(n,1)$-cooling code of size $2^{n-1}$.
Theorem~\ref{thm:expurgation} yields an $(n,1,\frac{2}{3}(n-1))$-LPC code of size
at least $2^{n-1}-\sum_{i=w+1}^n \binom{n}{i}$.
Note that
\begin{equation}\label{ex:lpcc}
2^{n-1}-\sum_{i=w+1}^n \binom{n}{i}\ge 2^{n-1}-2^{nH(2/3)}\ge 2^{n-1}-2^{0.92n}.
\end{equation}

By the Griesmer bound~\cite[p. 546]{MWS}, there is no $[n-1,2,\frac{2}{3}(n-1)+1]$ code and
hence Theorem~\ref{thm:sunF} yields an $(n,1,\frac{2}{3}(n-1))$-LPC code of size
$M_2(n,3)+1$.

Since \[ \frac{5}{14}2^n>2^{0.92n}+1 \mbox{ for }n \geq 19,\]
it follows that $2^{n-1}-2^{0.92n}>M_2 (n,3) +1$, and hence
the expurgation construction yields an $(n,t,w)$-LPC code of larger size,
compared to the code obtained by the sunflower construction, in this case.

Recall, that Corollary~\ref{cor:sunLPCC} is used to apply the sunflower construction
and obtain $(n,t,d-1)$-LPC codes of size greater than $M_2(n-s,r+t-s)>2^{n-t-r}$,
where $r$ is given in Corollary~\ref{cor:sunLPCC}.
Furthermore, for $0 < \tau,~\omega < 1$, let $\delta=\omega/(1-\tau)$ and
define $\epsilon=\epsilon(\tau,\delta)$ as
\begin{equation*}
\label{eq:Etau}
\epsilon(\tau,\delta) ~\deff~ \inf \{\epsilon: \mbox{$\epsilon$ satisfy \eqref{eq:eps-lower-22} and \eqref{eq:eps-upper-23}}\}.
\end{equation*}
With this setting,
Corollary~\ref{cor-sunflower-asymp-2} implies that $\varrho(\tau,\omega)\ge (1-\tau)(1-\epsilon)$,
and by Corollary~\ref{cor:less687} and its proof we have
\begin{corollary}
If $\tau\le 0.687$ and $\omega\ge (1-\tau)/2$, then $\varrho(\tau,\omega)=1-\tau$.
\end{corollary}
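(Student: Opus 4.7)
The plan is to mimic the proof of Corollary~\ref{cor:less687}, using Corollary~\ref{cor:sunLPCC} (the LPC version of the sunflower construction) in place of Theorem~\ref{thm:sunF}. The upper bound $\varrho(\tau,\omega)\le 1-\tau$ is immediate, since imposing the low-power requirement only shrinks the family, so $\varrho(\tau,\omega)\le\varrho(\tau)\le 1-\tau$ by Corollary~\ref{cor:asymp_upperCC}. All the work sits in the matching lower bound.

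For the lower bound, I would re-run the asymptotic argument of Proposition~\ref{prop:sunflower-asymp} and Corollary~\ref{cor-sunflower-asymp-2} verbatim, except that the concluding invocation of Theorem~\ref{thm:sunF} is replaced by Corollary~\ref{cor:sunLPCC}. Both results yield the same size bound $M>2^{n-t-r}$ under identical hypotheses on the existence of the outer $[n,s,d]$ code and the non-existence of an inner $[n-t,r,d]$ code; the only new feature of the LPC version is the certificate that every codeword has weight at most $d-1$. This produces, for each valid triple $(\tau,\delta,\epsilon)$ satisfying~\eqref{eq:eps-lower-22} and~\eqref{eq:eps-upper-23}, a family of $(n,\floor{\tau n},d_n-1)$-LPC codes with $d_n\sim\delta(1-\tau)n$ and rate tending to $(1-\tau)(1-\epsilon)$.

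I would then take $\delta=1/2$, the same choice made in the proof of Corollary~\ref{cor:less687}. With $\delta=1/2$, the Plotkin limit gives ${\rm MRRW}(1/2)=0$, so~\eqref{eq:eps-lower-22} is satisfied for every $\epsilon>0$, and~\eqref{eq:eps-upper-23} reduces to $\tau+\epsilon(1-\tau)<1-H((1-\tau)/2)/2$. The threshold $\tau\le 0.687$ is defined precisely so that $H((1-\tau)/2)<2(1-\tau)$, giving strict slack in the displayed inequality; this is exactly the computation already carried out in Corollary~\ref{cor:less687}. Hence every sufficiently small $\epsilon>0$ is admissible, and letting $\epsilon\to 0$ yields $\varrho(\tau,\omega)\ge 1-\tau$, which combined with the upper bound gives the desired equality.

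The one point to verify, and what I see as the main (if modest) obstacle, is that the codes produced with $\delta=1/2$ really meet the LPC constraint $w=\floor{\omega n}$: their maximum codeword weight is $d_n-1\sim(1-\tau)n/2$, and the hypothesis $\omega\ge(1-\tau)/2$ ensures $d_n-1\le\floor{\omega n}$ for all sufficiently large $n$, with care at the boundary $\omega=(1-\tau)/2$ handled by choosing $d_n=\floor{(1-\tau)n/2}+1$. Once this bookkeeping is in place, the same family of codes serves as $(n,\floor{\tau n},\floor{\omega n})$-LPC codes, and no new combinatorial obstacle arises beyond what was already addressed for cooling codes.
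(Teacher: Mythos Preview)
Your proposal is correct and follows essentially the same approach as the paper. The paper also invokes Corollary~\ref{cor:sunLPCC} in place of Theorem~\ref{thm:sunF}, appeals to Corollary~\ref{cor-sunflower-asymp-2}, and then points to Corollary~\ref{cor:less687} and its proof (with $\delta=1/2$) to conclude; your write-up is in fact more explicit than the paper's, particularly in checking that the weight bound $d_n-1\sim(1-\tau)n/2\le\floor{\omega n}$ is satisfied under the hypothesis $\omega\ge(1-\tau)/2$.
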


We continue to examine the asymptotic behavior of $(n,t,w)$-LPC codes
constructed from cooling codes over $\Fq$ with concatenation.

\begin{corollary}
\label{cor:asym_LPCCq}
For given $0 <\tau, ~\omega<1$,
suppose that there exists a prime power $q$ and an integer $s>1$ such that
\begin{equation}
\label{eq:qr}
q\le \sum_{i=0}^{\floor{\omega s}} \binom{s}{i} \text{~~and~~} \tau s \le \frac 12.
\end{equation}
Then $\varrho(\tau,\omega)\ge (1-\tau s)\frac{\log_2 q}{s}$.
\end{corollary}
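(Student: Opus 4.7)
The plan is a direct reduction to Corollary~\ref{cor:spread_t,w} followed by straightforward asymptotic bookkeeping. Fix a prime power $q$ and an integer $s>1$ as given by the hypothesis, and set $w' = \lfloor \omega s \rfloor$, so that the first half of the hypothesis gives $q \le \sum_{i=0}^{w'} \binom{s}{i}$. For each large $n$, put $m = \lfloor n/s \rfloor$ and $t_n = \lfloor \tau n \rfloor$. The assumption $\tau s \le 1/2$ forces $m/t_n \to 1/(\tau s) \ge 2$, so the spread-dimension condition $t_n + 1 \le m/2$ of Corollary~\ref{cor:spread_t,w} is met for all sufficiently large $n$ (in the extreme case $\tau s = 1/2$ a small additive shrinking of $m$ is needed, but this affects only lower-order terms in the rate). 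Invoking Corollary~\ref{cor:spread_t,w} then yields an $(ms, t_n, m w')$-LPC code of size greater than $q^{m - t_n - 1}$.

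Next I would lift this code from length $ms$ to length $n$ by appending $n - ms < s$ all-zero coordinates to each codeword. Padding preserves size, supports, and weights. Moreover it preserves the cooling property: for any hot $t_n$-subset $\cS \subseteq [n]$, its restriction $\cS \cap [ms]$ has size at most $t_n$, and the underlying code provides, in each codeset, a codeword with support disjoint from that restriction---appending zeros produces a codeword with support disjoint from all of $\cS$. Since $m w' \le m \omega s \le \omega n$, the extended code is in particular an $(n, t_n, \lfloor \omega n \rfloor)$-LPC code.

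Collecting these bounds, $\log_2 C(n, \lfloor \tau n \rfloor, \lfloor \omega n \rfloor) \ge (m - t_n - 1)\log_2 q$. Dividing by $n$ and letting $n \to \infty$ with $m/n \to 1/s$ and $t_n/n \to \tau$ gives
\[
\varrho(\tau,\omega) \;\ge\; \left(\frac{1}{s}-\tau\right)\log_2 q \;=\; (1-\tau s)\,\frac{\log_2 q}{s},
\]
as claimed. The conceptual content lives entirely in Corollary~\ref{cor:spread_t,w}; the only items requiring modest verification are that zero-padding retains the cooling property and that the integer-floor choices $m = \lfloor n/s \rfloor$, $t_n = \lfloor \tau n \rfloor$ indeed satisfy $t_n + 1 \le m/2$ for all large $n$ under $\tau s \le 1/2$. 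The minor obstacle is the precise behavior at the boundary $\tau s = 1/2$, where one must replace $m$ by $m - O(1)$ to maintain the spread condition; since $m - O(1) \sim n/s$, the asymptotic rate is unchanged.
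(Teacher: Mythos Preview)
Your approach is essentially the paper's: reduce directly to Corollary~\ref{cor:spread_t,w} and do the asymptotic bookkeeping. The only substantive difference is in the choice of $m$. The paper sets $m = \lceil n/s + 1/(\tau s) \rceil$, so that
\[
t+1 \;\le\; \tau n + 1 \;=\; \Bigl(\tfrac{n}{s} + \tfrac{1}{\tau s}\Bigr)\tau s \;\le\; m\cdot \tau s \;\le\; \tfrac{m}{2}
\]
holds uniformly in $n$, including at the boundary $\tau s = 1/2$; it then computes the rate directly at length $ms$ rather than $n$. Your choice $m = \lfloor n/s \rfloor$ followed by zero-padding is arguably cleaner when $\tau s < 1/2$, since it lands exactly at length $n$ and makes the verification of the cooling and weight constraints explicit.

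One slip: at the boundary $\tau s = 1/2$ you propose to repair the spread constraint by \emph{shrinking} $m$, but that makes $t_n + 1 \le m/2$ harder, not easier. The fix goes the other way---\emph{increase} $m$ by an $O(1)$ amount, as the paper does. Doing so forces $ms > n$, so the zero-padding step is no longer available and one simply computes the rate at length $ms$, exactly as in the paper. Since the adjustment is $O(1)$, the asymptotic rate is unaffected.
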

\begin{proof}
For a given $n$, set $m=\ceil{\frac{n}{s}+\frac{1}{\tau s}}$ and $t=\floor{\tau n}$.
Since $\tau s \le 1/2$, if follows that
$$
t+1\le \tau n+1= \left( \frac{n}{s}+\frac{1}{\tau s} \right) \tau s\le m/2~.
$$
Hence, by Corollary~\ref{cor:spread_t,w},
there exists an $(ms,t,m \floor{\omega s})$-LPC code of size at least $q^{m-t-1}$.
Therefore, we have that
\begin{align*}
\varrho(\tau,\omega) & \ge \lim_{{ms}\to\infty} \frac{m-t-1}{ms}\log_2 q\\
&=  \frac{\log_2 q}{s}\lim_{{ms}\to\infty} \frac{sm-st}{ms}\\
&=  \frac{\log_2 q}{s}\lim_{{ms}\to\infty} \left( 1 - \frac tm \right) \\
&=  \frac{\log_2 q}{s}\lim_{{ms}\to\infty}\left(1-\frac {\tau n}{n/s+1/(\tau s)}\right)\\
&>  \frac{\log_2 q}{s}\lim_{{ms}\to\infty}\left(1-\frac {\tau n}{n/s}\right)\\
&=\frac{\log_2 q}{s}(1-\tau s).
\end{align*}
\end{proof}

Note, that to apply Corollary~\ref{cor:asym_LPCCq} we have to be careful in choosing $q$ and $s$
such that equation~\eqref{eq:qr} is satisfied.

\vspace{2.00ex}
\section{Conclusion and Future Research}
\label{sec:conclude}
\vspace{0.50ex}

High temperatures have dramatic negative effects on interconnect
performance and, hence, it is important to suggest techniques to reduce
the power consumption of on-chip buses. We have suggested coding techniques to
balance and reduce the power consumption of on-chip buses. Our codes can have three
features (properties).
A code is a "cooling code" if there are no transitions on the $t$ hottest wires (Property ${\bf A} (t)$).
A code has "low-power" if it reduces the number of transitions on
the bus wires to at most $w$ transitions (Property ${\bf B} (w)$).
A code is "error-correcting" if it can correct any $e$ errors of bus transitions (Property ${\bf C} (e)$). A code can
have some of these properties. Six subsets out of the possible eight subsets of these properties
are interesting in our context and they are considered in the following sections and subsections:

\begin{enumerate}
\item $(n,w)$-low-power codes ($(n,w)$-LP codes) were considered in
Section~\ref{anticodes} and subsection~\ref{sec:Differential}.

\item $(n,t)$-cooling codes were considered in Section~\ref{spreads}.

\item $(n,t,w)$-low-power cooling codes ($(n,t,w)$-LPC codes) were considered
in Section~\ref{sec:LPCC}.

\item $(n,w,e)$-low-power error-correcting codes ($(n,w,e)$-LPEC codes)
were considered in subection~\ref{sec:LPECC}).

\item $(n,t,e)$-error-correcting cooling codes ($(n,t,e)$-ECC codes)
were considered in subsection~\ref{sec:ErrorCorrectCoolCodes}.

\item $(n,t,w,e)$-low-power error-correcting cooling codes ($(n,t,w,e)$-LPECC codes)
were considered in\linebreak subsection~\ref{sec:LPECCcool}.
\end{enumerate}

Our cooling codes without error-correction are optimal when $t+1 \leq n/2$ and
can be proved to be optimal also in some cases when $t+1 > n/2$. In all these cases the
redundancy of these codes is $t+1$. This redundancy, compared to the related best known
error-correcting code, is also obtained for error-correcting
cooling codes. Finally, the asymptotic analysis shows that in most cases our codes are
asymptotically optimal.

For the combination of low-power cooling code with or without error-correction, our
codes fall short of the known upper bounds and closing this gap is one of the problems for future research.
In these cases, we would like to see efficient encoding and decoding algorithms.
We would also like to improve our bounds and have asymptotic optimal codes
for the case were $\tau = \frac{t}{n} > 0.687$. Finally, we would like to see more cases where sunflower
construction with the generalized Hamming weights implies a large code.
As we mention in the Introduction, a follow up work
will present more construction, especially for practical parameters. Examples and numerical
experiments will be given and also comparison between the various constructions with emphasis
on practical parameters. It will be illustrated and discussed how practical our methods
and constructions are.

\section*{Acknowledgement}

The authors would like to thank Chaoping Xing and Eitan Yaakobi for many helpful discussions.
They also thank Shuangqing Liu for pointing on an error in the proof of Proposition 4 in an earlier draft.

{\newpage
\begin{center}
{\sc Appendix}\\
TABLE I. Admissible Parameters for Sunflower Constructions
\end{center}
\begin{multicols*}{6}
\TrickSupertabularIntoMulticols
\renewcommand{\arraystretch}{0.7521}
\centering
\tiny
\tablehead{\hline
$n$ & $t$ & $r$ & $s$ & $d$ \\
\hline
}
\tabletail{
\hline
}
\vspace{-5mm}
\begin{supertabular}{|c c c c c|}

5 & 2 & 1 & 1 & 5 \\
\hline
6 & 3 & 1 & 2 & 4 \\
\hline
7 & 3 & 1 & 1 & 7 \\
7 & 4 & 1 & 3 & 4 \\
\hline
8 & 4 & 1 & 2 & 5 \\
8 & 5 & 1 & 4 & 4 \\
\hline
9 & 4 & 1 & 1 & 9 \\
9 & 5 & 2 & 5 & 3 \\
\hline
10 & 5 & 1 & 2 & 6 \\
10 & 6 & 2 & 6 & 3 \\
\hline
11 & 5 & 1 & 1 & 11 \\
11 & 6 & 1 & 3 & 6 \\
11 & 7 & 2 & 7 & 3 \\
\hline
12 & 6 & 1 & 2 & 8 \\
12 & 7 & 1 & 4 & 6 \\
12 & 8 & 2 & 8 & 3 \\
\hline
13 & 6 & 1 & 1 & 13 \\
13 & 7 & 1 & 3 & 7 \\
13 & 8 & 2 & 7 & 4 \\
13 & 9 & 2 & 9 & 3 \\
\hline
14 & 7 & 1 & 2 & 9 \\
14 & 8 & 1 & 4 & 7 \\
14 & 9 & 2 & 8 & 4 \\
14 & 10 & 2 & 10 & 3 \\
\hline
15 & 7 & 1 & 1 & 15 \\
15 & 8 & 1 & 3 & 8 \\
15 & 9 & 1 & 5 & 7 \\
15 & 10 & 2 & 9 & 4 \\
15 & 11 & 2 & 11 & 3 \\
\hline
16 & 8 & 1 & 2 & 10 \\
16 & 9 & 1 & 4 & 8 \\
16 & 10 & 2 & 8 & 5 \\
16 & 11 & 2 & 10 & 4 \\
\hline
17 & 8 & 1 & 1 & 17 \\
17 & 9 & 1 & 3 & 9 \\
17 & 10 & 1 & 5 & 8 \\
17 & 11 & 2 & 9 & 5 \\
17 & 12 & 2 & 11 & 4 \\
\hline
18 & 9 & 1 & 2 & 12 \\
18 & 10 & 2 & 9 & 6 \\
18 & 11 & 1 & 6 & 8 \\
18 & 12 & 3$^*$ & 12 & 4 \\
18 & 13 & 2 & 12 & 4 \\
\hline
19 & 9 & 1 & 1 & 19 \\
19 & 10 & 1 & 3 & 10 \\
19 & 11 & 2 & 7 & 8 \\
19 & 12 & 1 & 7 & 8 \\
19 & 13 & 3$^*$ & 13 & 4 \\
19 & 14 & 2 & 13 & 4 \\
\hline
20 & 10 & 1 & 2 & 13 \\
20 & 11 & 1 & 4 & 10 \\
20 & 12 & 2 & 8 & 8 \\
20 & 13 & 1 & 8 & 8 \\
20 & 14 & 3$^*$ & 14 & 4 \\
20 & 15 & 2 & 14 & 4 \\
\hline
21 & 10 & 1 & 1 & 21 \\
21 & 11 & 1 & 3 & 12 \\
21 & 12 & 1 & 5 & 10 \\
21 & 13 & 2 & 9 & 8 \\
21 & 14 & 1 & 9 & 8 \\
21 & 15 & 3 & 15 & 4 \\
21 & 16 & 2 & 15 & 4 \\
\hline
22 & 11 & 1 & 2 & 14 \\
22 & 12 & 1 & 4 & 11 \\
22 & 13 & 2 & 8 & 8 \\
22 & 14 & 2 & 10 & 8 \\
22 & 15 & 1 & 10 & 8 \\
22 & 16 & 3 & 16 & 4 \\
22 & 17 & 2 & 16 & 4 \\
\hline
23 & 11 & 1 & 1 & 23 \\
23 & 12 & 1 & 3 & 12 \\
23 & 13 & 1 & 5 & 11 \\
23 & 14 & 2 & 9 & 8 \\
23 & 15 & 2 & 11 & 8 \\
23 & 16 & 1 & 11 & 8 \\
23 & 17 & 3 & 17 & 4 \\
23 & 18 & 2 & 17 & 4 \\
\hline
24 & 12 & 1 & 2 & 16 \\
24 & 13 & 1 & 4 & 12 \\
24 & 14 & 2 & 12 & 8 \\
24 & 15 & 2 & 10 & 8 \\
24 & 16 & 2 & 12 & 8 \\
24 & 17 & 1 & 12 & 8 \\
24 & 18 & 3 & 18 & 4 \\
24 & 19 & 2 & 18 & 4 \\
\hline
25 & 12 & 1 & 1 & 25 \\
25 & 13 & 1 & 3 & 14 \\
25 & 14 & 1 & 5 & 12 \\
25 & 15 & 2 & 9 & 8 \\
25 & 16 & 2 & 11 & 8 \\
25 & 17 & 2 & 13 & 6 \\
25 & 18 & 2 & 15 & 5 \\
25 & 19 & 3 & 19 & 4 \\
25 & 20 & 2 & 19 & 4 \\
\hline
26 & 13 & 1 & 2 & 17 \\
26 & 14 & 1 & 4 & 13 \\
26 & 15 & 1 & 6 & 12 \\
26 & 16 & 2 & 10 & 8 \\
26 & 17 & 2 & 12 & 8 \\
26 & 18 & 2 & 14 & 6 \\
26 & 19 & 2 & 16 & 5 \\
26 & 20 & 3 & 20 & 4 \\
26 & 21 & 2 & 20 & 4 \\
\hline
27 & 13 & 1 & 1 & 27 \\
27 & 14 & 1 & 3 & 15 \\
27 & 15 & 1 & 5 & 13 \\
27 & 16 & 1 & 7 & 12 \\
27 & 17 & 2 & 11 & 8 \\
27 & 18 & 2 & 13 & 8 \\
27 & 19 & 2 & 15 & 6 \\
27 & 20 & 2 & 17 & 5 \\
27 & 21 & 3 & 21 & 4 \\
27 & 22 & 2 & 21 & 4 \\
\hline
28 & 14 & 1 & 2 & 18 \\
28 & 15 & 1 & 4 & 14 \\
28 & 16 & 2 & 8 & 11 \\
28 & 17 & 2 & 10 & 10 \\
28 & 18 & 2 & 12 & 8 \\
28 & 19 & 2 & 14 & 8 \\
28 & 20 & 2 & 16 & 6 \\
28 & 21 & 2 & 18 & 5 \\
28 & 22 & 3 & 22 & 4 \\
28 & 23 & 2 & 22 & 4 \\
\hline
29 & 14 & 1 & 1 & 29 \\
29 & 15 & 1 & 3 & 16 \\
29 & 16 & 1 & 5 & 14 \\
29 & 17 & 2 & 9 & 11 \\
29 & 18 & 2 & 11 & 9 \\
29 & 19 & 2 & 13 & 8 \\
29 & 20 & 2 & 15 & 7 \\
29 & 21 & 2 & 17 & 6 \\
29 & 22 & 2 & 19 & 5 \\
29 & 23 & 3 & 23 & 4 \\
29 & 24 & 2 & 23 & 4 \\
\hline
30 & 15 & 1 & 2 & 20 \\
30 & 16 & 1 & 4 & 16 \\
30 & 17 & 1 & 6 & 14 \\
30 & 18 & 2 & 10 & 11 \\
30 & 19 & 2 & 12 & 9 \\
30 & 20 & 2 & 14 & 8 \\
30 & 21 & 2 & 16 & 7 \\
30 & 22 & 2 & 18 & 6 \\
30 & 23 & 2 & 20 & 5 \\
30 & 24 & 3 & 24 & 4 \\
30 & 25 & 2 & 24 & 4 \\
\hline
31 & 15 & 1 & 1 & 31 \\
31 & 16 & 1 & 3 & 17 \\
31 & 17 & 1 & 5 & 16 \\
31 & 18 & 2 & 9 & 12 \\
31 & 19 & 2 & 11 & 11 \\
31 & 20 & 2 & 13 & 9 \\
31 & 21 & 2 & 15 & 8 \\
31 & 22 & 2 & 17 & 7 \\
31 & 23 & 2 & 19 & 6 \\
31 & 24 & 2 & 21 & 5 \\
31 & 25 & 3 & 25 & 4 \\
31 & 26 & 2 & 25 & 4 \\
\hline
32 & 16 & 1 & 2 & 21 \\
32 & 17 & 1 & 4 & 16 \\
32 & 18 & 1 & 6 & 16 \\
32 & 19 & 2 & 10 & 12 \\
32 & 20 & 2 & 12 & 10 \\
32 & 21 & 2 & 14 & 8 \\
32 & 22 & 2 & 16 & 8 \\
32 & 23 & 3 & 20 & 6 \\
32 & 24 & 2 & 20 & 6 \\
32 & 25 & 2 & 22 & 5 \\
32 & 26 & 3 & 26 & 4 \\
32 & 27 & 2 & 26 & 4 \\
\hline
33 & 16 & 1 & 1 & 33 \\
33 & 17 & 1 & 3 & 18 \\
33 & 18 & 1 & 5 & 16 \\
33 & 19 & 2 & 9 & 12 \\
33 & 20 & 2 & 11 & 12 \\
33 & 21 & 2 & 13 & 10 \\
33 & 22 & 2 & 15 & 8 \\
33 & 23 & 2 & 17 & 8 \\
33 & 24 & 3 & 21 & 6 \\
33 & 25 & 2 & 21 & 6 \\
33 & 26 & 2 & 23 & 5 \\
\hline
34 & 17 & 1 & 2 & 22 \\
34 & 18 & 1 & 4 & 17 \\
34 & 19 & 1 & 6 & 16 \\
34 & 20 & 2 & 10 & 12 \\
34 & 21 & 2 & 12 & 12 \\
34 & 22 & 2 & 14 & 10 \\
34 & 23 & 2 & 16 & 8 \\
34 & 24 & 2 & 18 & 8 \\
34 & 25 & 3 & 22 & 6 \\
34 & 26 & 2 & 22 & 6 \\
\hline
35 & 17 & 1 & 1 & 35 \\
35 & 18 & 1 & 3 & 20 \\
35 & 19 & 2 & 7 & 16 \\
35 & 20 & 1 & 7 & 16 \\
35 & 21 & 2 & 11 & 12 \\
35 & 22 & 2 & 13 & 11 \\
35 & 23 & 2 & 15 & 10 \\
35 & 24 & 2 & 17 & 8 \\
35 & 25 & 2 & 19 & 8 \\
35 & 26 & 3 & 23 & 6 \\
35 & 27 & 2 & 23 & 6 \\
\hline
36 & 18 & 1 & 2 & 24 \\
36 & 19 & 1 & 4 & 18 \\
36 & 20 & 2 & 8 & 16 \\
36 & 21 & 1 & 8 & 16 \\
36 & 22 & 2 & 12 & 12 \\
36 & 23 & 2 & 14 & 11 \\
36 & 24 & 2 & 16 & 10 \\
36 & 25 & 2 & 18 & 8 \\
36 & 26 & 2 & 20 & 8 \\
36 & 27 & 3 & 24 & 6 \\
36 & 28 & 2 & 24 & 6 \\
\hline
37 & 18 & 1 & 1 & 37 \\
37 & 19 & 1 & 3 & 20 \\
37 & 20 & 1 & 5 & 18 \\
37 & 21 & 2 & 9 & 15 \\
37 & 22 & 2 & 11 & 13 \\
37 & 23 & 2 & 13 & 12 \\
37 & 24 & 2 & 15 & 10 \\
37 & 25 & 2 & 17 & 9 \\
37 & 26 & 2 & 19 & 8 \\
37 & 27 & 2 & 21 & 8 \\
37 & 28 & 3 & 25 & 6 \\
37 & 29 & 2 & 25 & 6 \\
\hline
38 & 19 & 1 & 2 & 25 \\
38 & 20 & 1 & 4 & 20 \\
38 & 21 & 1 & 6 & 18 \\
38 & 22 & 2 & 10 & 14 \\
38 & 23 & 2 & 12 & 13 \\
38 & 24 & 2 & 14 & 12 \\
38 & 25 & 2 & 16 & 10 \\
38 & 26 & 2 & 18 & 9 \\
38 & 27 & 2 & 20 & 8 \\
38 & 28 & 2 & 22 & 8 \\
38 & 29 & 3 & 26 & 6 \\
38 & 30 & 2 & 26 & 6 \\
\hline
39 & 19 & 1 & 1 & 39 \\
39 & 20 & 1 & 3 & 22 \\
39 & 21 & 1 & 5 & 19 \\
39 & 22 & 2 & 9 & 16 \\
39 & 23 & 2 & 11 & 14 \\
39 & 24 & 2 & 13 & 12 \\
39 & 25 & 2 & 15 & 12 \\
39 & 26 & 2 & 17 & 10 \\
39 & 27 & 2 & 19 & 9 \\
39 & 28 & 2 & 21 & 8 \\
39 & 29 & 2 & 23 & 7 \\
39 & 30 & 3 & 27 & 6 \\
39 & 31 & 2 & 27 & 6 \\
\hline
40 & 20 & 1 & 2 & 26 \\
40 & 21 & 1 & 4 & 20 \\
40 & 22 & 2 & 8 & 16 \\
40 & 23 & 2 & 10 & 16 \\
40 & 24 & 2 & 12 & 14 \\
40 & 25 & 2 & 14 & 12 \\
40 & 26 & 2 & 16 & 12 \\
40 & 27 & 2 & 18 & 10 \\
40 & 28 & 2 & 20 & 9 \\
40 & 29 & 2 & 22 & 8 \\
40 & 30 & 2 & 24 & 7 \\
40 & 31 & 3 & 28 & 6 \\
40 & 32 & 2 & 28 & 6 \\
\hline
41 & 20 & 1 & 1 & 41 \\
41 & 21 & 1 & 3 & 23 \\
41 & 22 & 1 & 5 & 20 \\
41 & 23 & 2 & 9 & 16 \\
41 & 24 & 2 & 11 & 16 \\
41 & 25 & 2 & 13 & 14 \\
41 & 26 & 2 & 15 & 12 \\
41 & 27 & 2 & 17 & 12 \\
41 & 28 & 2 & 19 & 10 \\
41 & 29 & 2 & 21 & 9 \\
41 & 30 & 2 & 23 & 8 \\
41 & 31 & 2 & 25 & 7 \\
41 & 32 & 3 & 29 & 6 \\
41 & 33 & 2 & 29 & 6 \\
\hline
42 & 21 & 1 & 2 & 28 \\
42 & 22 & 1 & 4 & 22 \\
42 & 23 & 1 & 6 & 20 \\
42 & 24 & 2 & 10 & 16 \\
42 & 25 & 2 & 12 & 15 \\
42 & 26 & 2 & 14 & 13 \\
42 & 27 & 2 & 16 & 12 \\
42 & 28 & 2 & 18 & 12 \\
42 & 29 & 2 & 20 & 10 \\
42 & 30 & 3 & 24 & 8 \\
42 & 31 & 2 & 24 & 8 \\
42 & 32 & 2 & 26 & 7 \\
42 & 33 & 3 & 30 & 6 \\
42 & 34 & 2 & 30 & 6 \\
\hline
43 & 21 & 1 & 1 & 43 \\
43 & 22 & 1 & 3 & 24 \\
43 & 23 & 1 & 5 & 21 \\
43 & 24 & 1 & 7 & 20 \\
43 & 25 & 2 & 11 & 16 \\
43 & 26 & 2 & 13 & 15 \\
43 & 27 & 2 & 15 & 13 \\
43 & 28 & 2 & 17 & 12 \\
43 & 29 & 2 & 19 & 12 \\
43 & 30 & 2 & 21 & 10 \\
43 & 31 & 3 & 25 & 8 \\
43 & 32 & 2 & 25 & 8 \\
43 & 33 & 2 & 27 & 7 \\
43 & 34 & 3 & 31 & 6 \\
43 & 35 & 2 & 31 & 6 \\
\hline
44 & 22 & 1 & 2 & 29 \\
44 & 23 & 1 & 4 & 23 \\
44 & 24 & 1 & 6 & 21 \\
44 & 25 & 2 & 10 & 17 \\
44 & 26 & 2 & 12 & 16 \\
44 & 27 & 2 & 14 & 15 \\
44 & 28 & 2 & 16 & 13 \\
44 & 29 & 2 & 18 & 12 \\
44 & 30 & 2 & 20 & 12 \\
44 & 31 & 2 & 22 & 10 \\
44 & 32 & 3 & 26 & 8 \\
44 & 33 & 2 & 26 & 8 \\
44 & 34 & 2 & 28 & 7 \\
44 & 35 & 3 & 32 & 6 \\
44 & 36 & 2 & 32 & 6 \\
\hline
45 & 22 & 1 & 1 & 45 \\
45 & 23 & 1 & 3 & 25 \\
45 & 24 & 1 & 5 & 22 \\
45 & 25 & 2 & 9 & 18 \\
45 & 26 & 2 & 11 & 16 \\
45 & 27 & 2 & 13 & 16 \\
45 & 28 & 2 & 15 & 14 \\
45 & 29 & 2 & 17 & 12 \\
45 & 30 & 2 & 19 & 12 \\
45 & 31 & 2 & 21 & 12 \\
45 & 32 & 2 & 23 & 10 \\
45 & 33 & 3 & 27 & 8 \\
45 & 34 & 2 & 27 & 8 \\
45 & 35 & 2 & 29 & 7 \\
45 & 36 & 3 & 33 & 6 \\
45 & 37 & 2 & 33 & 6 \\
\hline
46 & 23 & 1 & 2 & 30 \\
46 & 24 & 1 & 4 & 24 \\
46 & 25 & 1 & 6 & 22 \\
46 & 26 & 2 & 10 & 18 \\
46 & 27 & 2 & 12 & 16 \\
46 & 28 & 2 & 14 & 16 \\
46 & 29 & 2 & 16 & 14 \\
46 & 30 & 2 & 18 & 12 \\
46 & 31 & 2 & 20 & 12 \\
46 & 32 & 2 & 22 & 12 \\
46 & 33 & 2 & 24 & 10 \\
46 & 34 & 3 & 28 & 8 \\
46 & 35 & 2 & 28 & 8 \\
46 & 36 & 2 & 30 & 7 \\
46 & 37 & 3 & 34 & 6 \\
46 & 38 & 2 & 34 & 6 \\
\hline
47 & 23 & 1 & 1 & 47 \\
47 & 24 & 1 & 3 & 26 \\
47 & 25 & 1 & 5 & 24 \\
47 & 26 & 1 & 7 & 22 \\
47 & 27 & 2 & 11 & 18 \\
47 & 28 & 2 & 13 & 16 \\
47 & 29 & 2 & 15 & 16 \\
47 & 30 & 2 & 17 & 14 \\
47 & 31 & 2 & 19 & 12 \\
47 & 32 & 2 & 21 & 12 \\
47 & 33 & 2 & 23 & 12 \\
47 & 34 & 2 & 25 & 9 \\
47 & 35 & 3 & 29 & 8 \\
47 & 36 & 2 & 29 & 8 \\
47 & 37 & 2 & 31 & 7 \\
47 & 38 & 3 & 35 & 6 \\
47 & 39 & 2 & 35 & 6 \\
\hline
48 & 24 & 1 & 2 & 32 \\
48 & 25 & 1 & 4 & 24 \\
48 & 26 & 1 & 6 & 24 \\
48 & 27 & 1 & 8 & 22 \\
48 & 28 & 2 & 12 & 17 \\
48 & 29 & 2 & 14 & 16 \\
48 & 30 & 2 & 16 & 16 \\
48 & 31 & 2 & 18 & 13 \\
48 & 32 & 2 & 20 & 12 \\
48 & 33 & 2 & 22 & 12 \\
48 & 34 & 2 & 24 & 12 \\
48 & 35 & 2 & 26 & 9 \\
48 & 36 & 3 & 30 & 8 \\
48 & 37 & 2 & 30 & 8 \\
48 & 38 & 3 & 34 & 6 \\
48 & 39 & 3 & 36 & 6 \\
48 & 40 & 2 & 36 & 6 \\
\hline
49 & 24 & 1 & 1 & 49 \\
49 & 25 & 1 & 3 & 28 \\
49 & 26 & 1 & 5 & 24 \\
49 & 27 & 1 & 7 & 23 \\
49 & 28 & 2 & 11 & 19 \\
49 & 29 & 2 & 13 & 17 \\
49 & 30 & 2 & 15 & 16 \\
49 & 31 & 2 & 17 & 14 \\
49 & 32 & 2 & 19 & 12 \\
49 & 33 & 2 & 21 & 12 \\
49 & 34 & 2 & 23 & 12 \\
49 & 35 & 2 & 25 & 10 \\
49 & 36 & 2 & 27 & 9 \\
49 & 37 & 3 & 31 & 8 \\
49 & 38 & 2 & 31 & 8 \\
49 & 39 & 3 & 35 & 6 \\
49 & 40 & 3 & 37 & 5 \\
49 & 41 & 5 & 43 & 3 \\
\hline
50 & 25 & 1 & 2 & 33 \\
50 & 26 & 1 & 4 & 26 \\
50 & 27 & 1 & 6 & 24 \\
50 & 28 & 1 & 8 & 23 \\
50 & 29 & 2 & 12 & 18 \\
50 & 30 & 2 & 14 & 17 \\
50 & 31 & 2 & 16 & 16 \\
50 & 32 & 2 & 18 & 14 \\
50 & 33 & 2 & 20 & 12 \\
50 & 34 & 2 & 22 & 12 \\
50 & 35 & 2 & 24 & 12 \\
50 & 36 & 2 & 26 & 10 \\
50 & 37 & 3 & 30 & 8 \\
50 & 38 & 3 & 32 & 8 \\
50 & 39 & 2 & 32 & 8 \\
50 & 40 & 3 & 36 & 6 \\
50 & 41 & 3 & 38 & 5 \\
50 & 42 & 5 & 44 & 3 \\
\hline
51 & 25 & 1 & 1 & 51 \\
51 & 26 & 1 & 3 & 28 \\
51 & 27 & 1 & 5 & 25 \\
51 & 28 & 1 & 7 & 24 \\
51 & 29 & 2 & 11 & 20 \\
51 & 30 & 2 & 13 & 18 \\
51 & 31 & 2 & 15 & 16 \\
51 & 32 & 2 & 17 & 16 \\
51 & 33 & 2 & 19 & 14 \\
51 & 34 & 2 & 21 & 12 \\
51 & 35 & 2 & 23 & 12 \\
51 & 36 & 2 & 25 & 11 \\
51 & 37 & 2 & 27 & 10 \\
51 & 38 & 3 & 31 & 8 \\
51 & 39 & 3 & 33 & 8 \\
51 & 40 & 2 & 33 & 8 \\
51 & 41 & 3 & 37 & 6 \\
51 & 42 & 3 & 39 & 5 \\
51 & 43 & 5 & 45 & 3 \\
\hline
52 & 26 & 1 & 2 & 34 \\
52 & 27 & 1 & 4 & 27 \\
52 & 28 & 1 & 6 & 25 \\
52 & 29 & 1 & 8 & 24 \\
52 & 30 & 2 & 12 & 20 \\
52 & 31 & 2 & 14 & 18 \\
52 & 32 & 2 & 16 & 16 \\
52 & 33 & 2 & 18 & 16 \\
52 & 34 & 2 & 20 & 14 \\
52 & 35 & 2 & 22 & 12 \\
52 & 36 & 2 & 24 & 12 \\
52 & 37 & 3 & 28 & 10 \\
52 & 38 & 2 & 28 & 10 \\
52 & 39 & 3 & 32 & 8 \\
52 & 40 & 3 & 34 & 8 \\
52 & 41 & 2 & 34 & 8 \\
52 & 42 & 3 & 38 & 6 \\
52 & 43 & 3 & 40 & 5 \\
52 & 44 & 5 & 46 & 3 \\
\hline
53 & 26 & 1 & 1 & 53 \\
53 & 27 & 1 & 3 & 30 \\
53 & 28 & 1 & 5 & 26 \\
53 & 29 & 2 & 9 & 23 \\
53 & 30 & 2 & 11 & 20 \\
53 & 31 & 2 & 13 & 20 \\
53 & 32 & 2 & 15 & 18 \\
53 & 33 & 2 & 17 & 16 \\
53 & 34 & 2 & 19 & 15 \\
53 & 35 & 2 & 21 & 13 \\
53 & 36 & 2 & 23 & 12 \\
53 & 37 & 2 & 25 & 12 \\
53 & 38 & 3 & 29 & 10 \\
53 & 39 & 2 & 29 & 10 \\
53 & 40 & 3 & 33 & 8 \\
53 & 41 & 3 & 35 & 8 \\
53 & 42 & 2 & 35 & 8 \\
53 & 43 & 3 & 39 & 6 \\
53 & 44 & 3 & 41 & 5 \\
53 & 45 & 5 & 47 & 3 \\
\hline
54 & 27 & 1 & 2 & 36 \\
54 & 28 & 1 & 4 & 28 \\
54 & 29 & 1 & 6 & 26 \\
54 & 30 & 2 & 10 & 22 \\
54 & 31 & 2 & 12 & 20 \\
54 & 32 & 2 & 14 & 20 \\
54 & 33 & 2 & 16 & 18 \\
54 & 34 & 2 & 18 & 16 \\
54 & 35 & 2 & 20 & 15 \\
54 & 36 & 2 & 22 & 13 \\
54 & 37 & 2 & 24 & 12 \\
54 & 38 & 2 & 26 & 12 \\
54 & 39 & 3 & 30 & 10 \\
54 & 40 & 2 & 30 & 10 \\
54 & 41 & 3 & 34 & 8 \\
54 & 42 & 3 & 36 & 8 \\
54 & 43 & 2 & 36 & 8 \\
54 & 44 & 3 & 40 & 6 \\
54 & 45 & 3 & 42 & 5 \\
54 & 46 & 5 & 48 & 3 \\
\hline
55 & 27 & 1 & 1 & 55 \\
55 & 28 & 1 & 3 & 31 \\
55 & 29 & 1 & 5 & 28 \\
55 & 30 & 2 & 9 & 24 \\
55 & 31 & 2 & 11 & 22 \\
55 & 32 & 2 & 13 & 20 \\
55 & 33 & 2 & 15 & 20 \\
55 & 34 & 2 & 17 & 16 \\
55 & 35 & 2 & 19 & 16 \\
55 & 36 & 2 & 21 & 15 \\
55 & 37 & 2 & 23 & 13 \\
55 & 38 & 2 & 25 & 12 \\
55 & 39 & 2 & 27 & 12 \\
55 & 40 & 3 & 31 & 10 \\
55 & 41 & 2 & 31 & 10 \\
55 & 42 & 3 & 35 & 8 \\
55 & 43 & 3 & 37 & 8 \\
55 & 44 & 2 & 37 & 8 \\
55 & 45 & 3 & 41 & 6 \\
55 & 46 & 3 & 43 & 5 \\
55 & 47 & 5 & 49 & 3 \\
\hline
56 & 28 & 1 & 2 & 37 \\
56 & 29 & 1 & 4 & 29 \\
56 & 30 & 1 & 6 & 28 \\
56 & 31 & 2 & 10 & 24 \\
56 & 32 & 2 & 12 & 22 \\
56 & 33 & 2 & 14 & 20 \\
56 & 34 & 2 & 16 & 20 \\
56 & 35 & 2 & 18 & 16 \\
56 & 36 & 2 & 20 & 16 \\
56 & 37 & 2 & 22 & 14 \\
56 & 38 & 3 & 26 & 12 \\
56 & 39 & 2 & 26 & 12 \\
56 & 40 & 2 & 28 & 12 \\
56 & 41 & 3 & 32 & 9 \\
56 & 42 & 4 & 36 & 8 \\
56 & 43 & 3 & 36 & 8 \\
56 & 44 & 3 & 38 & 8 \\
56 & 45 & 2 & 38 & 8 \\
56 & 46 & 3 & 42 & 6 \\
56 & 47 & 3 & 44 & 5 \\
56 & 48 & 5 & 50 & 3 \\
\hline
57 & 28 & 1 & 1 & 57 \\
57 & 29 & 1 & 3 & 32 \\
57 & 30 & 1 & 5 & 28 \\
57 & 31 & 2 & 9 & 24 \\
57 & 32 & 2 & 11 & 23 \\
57 & 33 & 2 & 13 & 21 \\
57 & 34 & 2 & 15 & 20 \\
57 & 35 & 2 & 17 & 18 \\
57 & 36 & 2 & 19 & 16 \\
57 & 37 & 2 & 21 & 16 \\
57 & 38 & 2 & 23 & 14 \\
57 & 39 & 3 & 27 & 12 \\
57 & 40 & 2 & 27 & 12 \\
57 & 41 & 2 & 29 & 12 \\
57 & 42 & 3 & 33 & 9 \\
57 & 43 & 4 & 37 & 8 \\
57 & 44 & 3 & 37 & 8 \\
57 & 45 & 3 & 39 & 8 \\
57 & 46 & 2 & 39 & 8 \\
57 & 47 & 3 & 43 & 6 \\
57 & 48 & 3 & 45 & 5 \\
57 & 49 & 5 & 51 & 3 \\
\hline
58 & 29 & 1 & 2 & 38 \\
58 & 30 & 1 & 4 & 30 \\
58 & 31 & 1 & 6 & 28 \\
58 & 32 & 2 & 10 & 24 \\
58 & 33 & 2 & 12 & 23 \\
58 & 34 & 2 & 14 & 20 \\
58 & 35 & 2 & 16 & 20 \\
58 & 36 & 2 & 18 & 18 \\
58 & 37 & 2 & 20 & 16 \\
58 & 38 & 2 & 22 & 16 \\
58 & 39 & 2 & 24 & 14 \\
58 & 40 & 3 & 28 & 12 \\
58 & 41 & 2 & 28 & 12 \\
58 & 42 & 2 & 30 & 12 \\
58 & 43 & 3 & 34 & 9 \\
58 & 44 & 4 & 38 & 8 \\
58 & 45 & 3 & 38 & 8 \\
58 & 46 & 3 & 40 & 8 \\
58 & 47 & 2 & 40 & 8 \\
58 & 48 & 3 & 44 & 6 \\
58 & 49 & 3 & 46 & 5 \\
58 & 50 & 5 & 52 & 3 \\
\hline
59 & 29 & 1 & 1 & 59 \\
59 & 30 & 1 & 3 & 33 \\
59 & 31 & 1 & 5 & 30 \\
59 & 32 & 1 & 7 & 28 \\
59 & 33 & 2 & 11 & 24 \\
59 & 34 & 2 & 13 & 23 \\
59 & 35 & 2 & 15 & 20 \\
59 & 36 & 2 & 17 & 19 \\
59 & 37 & 2 & 19 & 17 \\
59 & 38 & 2 & 21 & 16 \\
59 & 39 & 2 & 23 & 16 \\
59 & 40 & 2 & 25 & 14 \\
59 & 41 & 3 & 29 & 12 \\
59 & 42 & 2 & 29 & 12 \\
59 & 43 & 2 & 31 & 12 \\
59 & 44 & 3 & 35 & 9 \\
59 & 45 & 4 & 39 & 8 \\
59 & 46 & 3 & 39 & 8 \\
59 & 47 & 3 & 41 & 8 \\
59 & 48 & 2 & 41 & 8 \\
59 & 49 & 3 & 45 & 6 \\
59 & 50 & 3 & 47 & 5 \\
59 & 51 & 5 & 53 & 3 \\
\hline
60 & 30 & 1 & 2 & 40 \\
60 & 31 & 1 & 4 & 32 \\
60 & 32 & 1 & 6 & 30 \\
60 & 33 & 2 & 10 & 25 \\
60 & 34 & 2 & 12 & 24 \\
60 & 35 & 2 & 14 & 22 \\
60 & 36 & 2 & 16 & 20 \\
60 & 37 & 2 & 18 & 18 \\
60 & 38 & 2 & 20 & 17 \\
60 & 39 & 2 & 22 & 16 \\
60 & 40 & 2 & 24 & 16 \\
60 & 41 & 2 & 26 & 14 \\
60 & 42 & 3 & 30 & 12 \\
60 & 43 & 2 & 30 & 12 \\
60 & 44 & 2 & 32 & 12 \\
60 & 45 & 3 & 36 & 9 \\
60 & 46 & 4 & 40 & 8 \\
60 & 47 & 3 & 40 & 8 \\
60 & 48 & 3 & 42 & 8 \\
60 & 49 & 2 & 42 & 8 \\
60 & 50 & 3 & 46 & 6 \\
60 & 51 & 3 & 48 & 5 \\
60 & 52 & 5 & 54 & 3 \\
\hline
61 & 30 & 1 & 1 & 61 \\
61 & 31 & 1 & 3 & 34 \\
61 & 32 & 1 & 5 & 31 \\
61 & 33 & 1 & 7 & 29 \\
61 & 34 & 2 & 11 & 24 \\
61 & 35 & 2 & 13 & 24 \\
61 & 36 & 2 & 15 & 22 \\
61 & 37 & 2 & 17 & 20 \\
61 & 38 & 2 & 19 & 18 \\
61 & 39 & 2 & 21 & 16 \\
61 & 40 & 2 & 23 & 16 \\
61 & 41 & 2 & 25 & 16 \\
61 & 42 & 2 & 27 & 14 \\
61 & 43 & 3 & 31 & 12 \\
61 & 44 & 2 & 31 & 12 \\
61 & 45 & 2 & 33 & 12 \\
61 & 46 & 3 & 37 & 9 \\
61 & 47 & 4 & 41 & 8 \\
61 & 48 & 3 & 41 & 8 \\
61 & 49 & 3 & 43 & 8 \\
61 & 50 & 2 & 43 & 8 \\
61 & 51 & 3 & 47 & 6 \\
61 & 52 & 3 & 49 & 5 \\
61 & 53 & 5 & 55 & 3 \\
\hline
62 & 31 & 1 & 2 & 41 \\
62 & 32 & 1 & 4 & 32 \\
62 & 33 & 1 & 6 & 31 \\
62 & 34 & 2 & 10 & 26 \\
62 & 35 & 2 & 12 & 24 \\
62 & 36 & 2 & 14 & 24 \\
62 & 37 & 2 & 16 & 22 \\
62 & 38 & 2 & 18 & 20 \\
62 & 39 & 2 & 20 & 18 \\
62 & 40 & 2 & 22 & 16 \\
62 & 41 & 2 & 24 & 16 \\
62 & 42 & 2 & 26 & 16 \\
62 & 43 & 2 & 28 & 14 \\
62 & 44 & 3 & 32 & 12 \\
62 & 45 & 2 & 32 & 12 \\
62 & 46 & 2 & 34 & 12 \\
62 & 47 & 3 & 38 & 9 \\
62 & 48 & 4 & 42 & 8 \\
62 & 49 & 3 & 42 & 8 \\
62 & 50 & 3 & 44 & 8 \\
62 & 51 & 2 & 44 & 8 \\
62 & 52 & 3 & 48 & 6 \\
62 & 53 & 3 & 50 & 5 \\
62 & 54 & 5 & 56 & 3 \\
\hline
63 & 31 & 1 & 1 & 63 \\
63 & 32 & 1 & 3 & 36 \\
63 & 33 & 1 & 5 & 32 \\
63 & 34 & 1 & 7 & 31 \\
63 & 35 & 2 & 11 & 26 \\
63 & 36 & 2 & 13 & 24 \\
63 & 37 & 2 & 15 & 24 \\
63 & 38 & 2 & 17 & 22 \\
63 & 39 & 2 & 19 & 20 \\
63 & 40 & 2 & 21 & 18 \\
63 & 41 & 2 & 23 & 16 \\
63 & 42 & 2 & 25 & 16 \\
63 & 43 & 2 & 27 & 16 \\
63 & 44 & 2 & 29 & 14 \\
63 & 45 & 3 & 33 & 12 \\
63 & 46 & 2 & 33 & 12 \\
63 & 47 & 2 & 35 & 12 \\
63 & 48 & 3 & 39 & 9 \\
63 & 49 & 4 & 43 & 8 \\
63 & 50 & 3 & 43 & 8 \\
63 & 51 & 3 & 45 & 8 \\
63 & 52 & 2 & 45 & 8 \\
63 & 53 & 3 & 49 & 6 \\
63 & 54 & 3 & 51 & 5 \\
63 & 55 & 5 & 57 & 3 \\
\hline
64 & 32 & 1 & 2 & 42 \\
64 & 33 & 1 & 4 & 33 \\
64 & 34 & 1 & 6 & 32 \\
64 & 35 & 2 & 10 & 28 \\
64 & 36 & 2 & 12 & 25 \\
64 & 37 & 2 & 14 & 24 \\
64 & 38 & 2 & 16 & 24 \\
64 & 39 & 2 & 18 & 22 \\
64 & 40 & 2 & 20 & 19 \\
64 & 41 & 2 & 22 & 18 \\
64 & 42 & 2 & 24 & 16 \\
64 & 43 & 2 & 26 & 16 \\
64 & 44 & 2 & 28 & 16 \\
64 & 45 & 2 & 30 & 14 \\
64 & 46 & 3 & 34 & 12 \\
64 & 47 & 2 & 34 & 12 \\
64 & 48 & 2 & 36 & 12 \\
64 & 49 & 3 & 40 & 9 \\
64 & 50 & 4 & 44 & 8 \\
64 & 51 & 3 & 44 & 8 \\
64 & 52 & 3 & 46 & 8 \\
64 & 53 & 2 & 46 & 8 \\
64 & 54 & 3 & 50 & 6 \\
64 & 55 & 3 & 52 & 5 \\
\hline
65 & 32 & 1 & 1 & 65 \\
65 & 33 & 1 & 3 & 36 \\
65 & 34 & 1 & 5 & 32 \\
65 & 35 & 1 & 7 & 32 \\
65 & 36 & 2 & 11 & 27 \\
65 & 37 & 2 & 13 & 25 \\
65 & 38 & 2 & 15 & 24 \\
65 & 39 & 2 & 17 & 23 \\
65 & 40 & 2 & 19 & 20 \\
65 & 41 & 2 & 21 & 19 \\
65 & 42 & 2 & 23 & 18 \\
65 & 43 & 2 & 25 & 16 \\
65 & 44 & 2 & 27 & 16 \\
65 & 45 & 2 & 29 & 15 \\
65 & 46 & 2 & 31 & 13 \\
65 & 47 & 3 & 35 & 12 \\
65 & 48 & 2 & 35 & 12 \\
65 & 49 & 2 & 37 & 11 \\
65 & 50 & 5 & 45 & 8 \\
65 & 51 & 4 & 45 & 8 \\
65 & 52 & 3 & 45 & 8 \\
65 & 53 & 3 & 47 & 7 \\
65 & 54 & 4 & 51 & 6 \\
65 & 55 & 3 & 51 & 6 \\
65 & 56 & 3 & 53 & 5 \\
\hline
66 & 33 & 1 & 2 & 44 \\
66 & 34 & 1 & 4 & 34 \\
66 & 35 & 1 & 6 & 32 \\
66 & 36 & 2 & 10 & 28 \\
66 & 37 & 2 & 12 & 26 \\
66 & 38 & 2 & 14 & 24 \\
66 & 39 & 2 & 16 & 24 \\
66 & 40 & 2 & 18 & 23 \\
66 & 41 & 2 & 20 & 20 \\
66 & 42 & 2 & 22 & 18 \\
66 & 43 & 2 & 24 & 18 \\
66 & 44 & 2 & 26 & 16 \\
66 & 45 & 2 & 28 & 16 \\
66 & 46 & 2 & 30 & 15 \\
66 & 47 & 2 & 32 & 13 \\
66 & 48 & 3 & 36 & 12 \\
66 & 49 & 2 & 36 & 12 \\
66 & 50 & 2 & 38 & 11 \\
66 & 51 & 5 & 46 & 8 \\
66 & 52 & 4 & 46 & 8 \\
66 & 53 & 3 & 46 & 8 \\
66 & 54 & 3 & 48 & 7 \\
66 & 55 & 4 & 52 & 6 \\
66 & 56 & 3 & 52 & 6 \\
66 & 57 & 5 & 58 & 4 \\
\hline
67 & 33 & 1 & 1 & 67 \\
67 & 34 & 1 & 3 & 38 \\
67 & 35 & 1 & 5 & 33 \\
67 & 36 & 1 & 7 & 32 \\
67 & 37 & 2 & 11 & 28 \\
67 & 38 & 2 & 13 & 26 \\
67 & 39 & 2 & 15 & 24 \\
67 & 40 & 2 & 17 & 24 \\
67 & 41 & 2 & 19 & 20 \\
67 & 42 & 2 & 21 & 20 \\
67 & 43 & 2 & 23 & 18 \\
67 & 44 & 2 & 25 & 16 \\
67 & 45 & 2 & 27 & 16 \\
67 & 46 & 2 & 29 & 16 \\
67 & 47 & 2 & 31 & 15 \\
67 & 48 & 2 & 33 & 13 \\
67 & 49 & 3 & 37 & 12 \\
67 & 50 & 2 & 37 & 12 \\
67 & 51 & 2 & 39 & 11 \\
67 & 52 & 5 & 47 & 8 \\
67 & 53 & 4 & 47 & 8 \\
67 & 54 & 3 & 47 & 8 \\
67 & 55 & 3 & 49 & 7 \\
67 & 56 & 4 & 53 & 6 \\
67 & 57 & 3 & 53 & 6 \\
67 & 58 & 5 & 59 & 4 \\
\hline
68 & 34 & 1 & 2 & 45 \\
68 & 35 & 1 & 4 & 36 \\
68 & 36 & 2 & 8 & 32 \\
68 & 37 & 1 & 8 & 32 \\
68 & 38 & 2 & 12 & 28 \\
68 & 39 & 2 & 14 & 25 \\
68 & 40 & 2 & 16 & 24 \\
68 & 41 & 2 & 18 & 24 \\
68 & 42 & 2 & 20 & 20 \\
68 & 43 & 2 & 22 & 20 \\
68 & 44 & 2 & 24 & 18 \\
68 & 45 & 2 & 26 & 16 \\
68 & 46 & 2 & 28 & 16 \\
68 & 47 & 2 & 30 & 16 \\
68 & 48 & 2 & 32 & 15 \\
68 & 49 & 2 & 34 & 13 \\
68 & 50 & 3 & 38 & 12 \\
68 & 51 & 2 & 38 & 12 \\
68 & 52 & 2 & 40 & 11 \\
68 & 53 & 5 & 48 & 8 \\
68 & 54 & 4 & 48 & 8 \\
68 & 55 & 3 & 48 & 8 \\
68 & 56 & 3 & 50 & 7 \\
68 & 57 & 4 & 54 & 6 \\
68 & 58 & 3 & 54 & 6 \\
68 & 59 & 5 & 60 & 4 \\
\hline
69 & 34 & 1 & 1 & 69 \\
69 & 35 & 1 & 3 & 39 \\
69 & 36 & 1 & 5 & 34 \\
69 & 37 & 2 & 9 & 31 \\
69 & 38 & 2 & 11 & 29 \\
69 & 39 & 2 & 13 & 27 \\
69 & 40 & 2 & 15 & 25 \\
69 & 41 & 2 & 17 & 24 \\
69 & 42 & 2 & 19 & 22 \\
69 & 43 & 2 & 21 & 20 \\
69 & 44 & 2 & 23 & 20 \\
69 & 45 & 2 & 25 & 18 \\
69 & 46 & 2 & 27 & 16 \\
69 & 47 & 2 & 29 & 16 \\
69 & 48 & 2 & 31 & 16 \\
69 & 49 & 2 & 33 & 15 \\
69 & 50 & 2 & 35 & 13 \\
69 & 51 & 3 & 39 & 12 \\
69 & 52 & 2 & 39 & 12 \\
69 & 53 & 2 & 41 & 11 \\
69 & 54 & 5 & 49 & 8 \\
69 & 55 & 4 & 49 & 8 \\
69 & 56 & 3 & 49 & 8 \\
69 & 57 & 5 & 55 & 6 \\
69 & 58 & 4 & 55 & 6 \\
69 & 59 & 3 & 55 & 6 \\
69 & 60 & 5 & 61 & 4 \\
\hline
70 & 35 & 1 & 2 & 46 \\
70 & 36 & 1 & 4 & 36 \\
70 & 37 & 1 & 6 & 34 \\
70 & 38 & 2 & 10 & 31 \\
70 & 39 & 2 & 12 & 28 \\
70 & 40 & 2 & 14 & 26 \\
70 & 41 & 2 & 16 & 25 \\
70 & 42 & 2 & 18 & 24 \\
70 & 43 & 2 & 20 & 21 \\
70 & 44 & 2 & 22 & 20 \\
70 & 45 & 2 & 24 & 20 \\
70 & 46 & 2 & 26 & 18 \\
70 & 47 & 2 & 28 & 16 \\
70 & 48 & 2 & 30 & 16 \\
70 & 49 & 2 & 32 & 16 \\
70 & 50 & 2 & 34 & 15 \\
70 & 51 & 2 & 36 & 13 \\
70 & 52 & 3 & 40 & 12 \\
70 & 53 & 2 & 40 & 12 \\
70 & 54 & 3 & 44 & 10 \\
70 & 55 & 5 & 50 & 8 \\
70 & 56 & 4 & 50 & 8 \\
70 & 57 & 3 & 50 & 8 \\
70 & 58 & 5 & 56 & 6 \\
70 & 59 & 4 & 56 & 6 \\
70 & 60 & 3 & 56 & 6 \\
70 & 61 & 5 & 62 & 4 \\
\hline
71 & 35 & 1 & 1 & 71 \\
71 & 36 & 1 & 3 & 40 \\
71 & 37 & 1 & 5 & 36 \\
71 & 38 & 1 & 7 & 34 \\
71 & 39 & 2 & 11 & 31 \\
71 & 40 & 2 & 13 & 28 \\
71 & 41 & 2 & 15 & 26 \\
71 & 42 & 2 & 17 & 24 \\
71 & 43 & 2 & 19 & 23 \\
71 & 44 & 2 & 21 & 21 \\
71 & 45 & 2 & 23 & 20 \\
71 & 46 & 2 & 25 & 20 \\
71 & 47 & 2 & 27 & 18 \\
71 & 48 & 2 & 29 & 16 \\
71 & 49 & 2 & 31 & 16 \\
71 & 50 & 2 & 33 & 16 \\
71 & 51 & 2 & 35 & 15 \\
71 & 52 & 3 & 39 & 12 \\
71 & 53 & 3 & 41 & 12 \\
71 & 54 & 2 & 41 & 12 \\
71 & 55 & 3 & 45 & 10 \\
71 & 56 & 5 & 51 & 8 \\
71 & 57 & 4 & 51 & 8 \\
71 & 58 & 3 & 51 & 8 \\
71 & 59 & 5 & 57 & 6 \\
71 & 60 & 4 & 57 & 6 \\
71 & 61 & 3 & 57 & 6 \\
71 & 62 & 5 & 63 & 4 \\
\hline
72 & 36 & 1 & 2 & 48 \\
72 & 37 & 1 & 4 & 38 \\
72 & 38 & 1 & 6 & 35 \\
72 & 39 & 2 & 10 & 32 \\
72 & 40 & 2 & 12 & 30 \\
72 & 41 & 2 & 14 & 28 \\
72 & 42 & 2 & 16 & 26 \\
72 & 43 & 2 & 18 & 24 \\
72 & 44 & 2 & 20 & 22 \\
72 & 45 & 2 & 22 & 20 \\
72 & 46 & 2 & 24 & 20 \\
72 & 47 & 2 & 26 & 20 \\
72 & 48 & 2 & 28 & 18 \\
72 & 49 & 2 & 30 & 16 \\
72 & 50 & 2 & 32 & 16 \\
72 & 51 & 2 & 34 & 16 \\
72 & 52 & 2 & 36 & 15 \\
72 & 53 & 3 & 40 & 12 \\
72 & 54 & 3 & 42 & 12 \\
72 & 55 & 2 & 42 & 12 \\
72 & 56 & 3 & 46 & 10 \\
72 & 57 & 5 & 52 & 8 \\
72 & 58 & 4 & 52 & 8 \\
72 & 59 & 3 & 52 & 8 \\
72 & 60 & 5 & 58 & 6 \\
72 & 61 & 4 & 58 & 6 \\
72 & 62 & 3 & 58 & 6 \\
72 & 63 & 5 & 64 & 4 \\
\hline
73 & 36 & 1 & 1 & 73 \\
73 & 37 & 1 & 3 & 41 \\
73 & 38 & 1 & 5 & 36 \\
73 & 39 & 2 & 9 & 32 \\
73 & 40 & 2 & 11 & 32 \\
73 & 41 & 2 & 13 & 29 \\
73 & 42 & 2 & 15 & 28 \\
73 & 43 & 2 & 17 & 26 \\
73 & 44 & 2 & 19 & 24 \\
73 & 45 & 2 & 21 & 22 \\
73 & 46 & 2 & 23 & 20 \\
73 & 47 & 2 & 25 & 20 \\
73 & 48 & 2 & 27 & 20 \\
73 & 49 & 2 & 29 & 17 \\
73 & 50 & 2 & 31 & 16 \\
73 & 51 & 2 & 33 & 16 \\
73 & 52 & 2 & 35 & 16 \\
73 & 53 & 2 & 37 & 14 \\
73 & 54 & 3 & 41 & 12 \\
73 & 55 & 4 & 45 & 10 \\
73 & 56 & 3 & 45 & 10 \\
73 & 57 & 3 & 47 & 10 \\
73 & 58 & 5 & 53 & 8 \\
73 & 59 & 4 & 53 & 8 \\
73 & 60 & 3 & 53 & 8 \\
73 & 61 & 5 & 59 & 6 \\
73 & 62 & 4 & 59 & 6 \\
73 & 63 & 3 & 59 & 6 \\
73 & 64 & 5 & 65 & 4 \\
\hline
74 & 37 & 1 & 2 & 49 \\
74 & 38 & 1 & 4 & 39 \\
74 & 39 & 1 & 6 & 36 \\
74 & 40 & 2 & 10 & 32 \\
74 & 41 & 2 & 12 & 32 \\
74 & 42 & 2 & 14 & 28 \\
74 & 43 & 2 & 16 & 28 \\
74 & 44 & 2 & 18 & 25 \\
74 & 45 & 2 & 20 & 24 \\
74 & 46 & 2 & 22 & 22 \\
74 & 47 & 2 & 24 & 20 \\
74 & 48 & 2 & 26 & 20 \\
74 & 49 & 2 & 28 & 18 \\
74 & 50 & 2 & 30 & 17 \\
74 & 51 & 2 & 32 & 16 \\
74 & 52 & 2 & 34 & 16 \\
74 & 53 & 2 & 36 & 16 \\
74 & 54 & 2 & 38 & 14 \\
74 & 55 & 3 & 42 & 12 \\
74 & 56 & 4 & 46 & 10 \\
74 & 57 & 3 & 46 & 10 \\
74 & 58 & 3 & 48 & 9 \\
74 & 59 & 5 & 54 & 8 \\
74 & 60 & 4 & 54 & 8 \\
74 & 61 & 3 & 54 & 8 \\
74 & 62 & 5 & 60 & 6 \\
74 & 63 & 4 & 60 & 6 \\
74 & 64 & 3 & 60 & 6 \\
74 & 65 & 5 & 66 & 4 \\
\hline
75 & 37 & 1 & 1 & 75 \\
75 & 38 & 1 & 3 & 42 \\
75 & 39 & 1 & 5 & 38 \\
75 & 40 & 1 & 7 & 36 \\
75 & 41 & 2 & 11 & 32 \\
75 & 42 & 2 & 13 & 30 \\
75 & 43 & 2 & 15 & 28 \\
75 & 44 & 2 & 17 & 26 \\
75 & 45 & 2 & 19 & 24 \\
75 & 46 & 2 & 21 & 24 \\
75 & 47 & 2 & 23 & 22 \\
75 & 48 & 2 & 25 & 20 \\
75 & 49 & 2 & 27 & 20 \\
75 & 50 & 2 & 29 & 18 \\
75 & 51 & 2 & 31 & 17 \\
75 & 52 & 2 & 33 & 16 \\
75 & 53 & 2 & 35 & 16 \\
75 & 54 & 3 & 39 & 13 \\
75 & 55 & 3 & 41 & 12 \\
75 & 56 & 3 & 43 & 12 \\
75 & 57 & 4 & 47 & 10 \\
75 & 58 & 3 & 47 & 10 \\
75 & 59 & 3 & 49 & 9 \\
75 & 60 & 5 & 55 & 8 \\
75 & 61 & 4 & 55 & 8 \\
75 & 62 & 3 & 55 & 8 \\
75 & 63 & 5 & 61 & 6 \\
75 & 64 & 4 & 61 & 6 \\
75 & 65 & 3 & 61 & 6 \\
75 & 66 & 5 & 67 & 4 \\
\hline
76 & 38 & 1 & 2 & 50 \\
76 & 39 & 1 & 4 & 40 \\
76 & 40 & 1 & 6 & 37 \\
76 & 41 & 2 & 10 & 32 \\
76 & 42 & 2 & 12 & 32 \\
76 & 43 & 2 & 14 & 30 \\
76 & 44 & 2 & 16 & 28 \\
76 & 45 & 2 & 18 & 26 \\
76 & 46 & 2 & 20 & 24 \\
76 & 47 & 2 & 22 & 23 \\
76 & 48 & 2 & 24 & 21 \\
76 & 49 & 2 & 26 & 20 \\
76 & 50 & 2 & 28 & 20 \\
76 & 51 & 2 & 30 & 18 \\
76 & 52 & 2 & 32 & 17 \\
76 & 53 & 2 & 34 & 16 \\
76 & 54 & 2 & 36 & 16 \\
76 & 55 & 4 & 42 & 12 \\
76 & 56 & 3 & 42 & 12 \\
76 & 57 & 3 & 44 & 11 \\
76 & 58 & 4 & 48 & 10 \\
76 & 59 & 3 & 48 & 10 \\
76 & 60 & 3 & 50 & 9 \\
76 & 61 & 5 & 56 & 8 \\
76 & 62 & 4 & 56 & 8 \\
76 & 63 & 3 & 56 & 8 \\
76 & 64 & 5 & 62 & 6 \\
76 & 65 & 4 & 62 & 6 \\
76 & 66 & 3 & 62 & 6 \\
76 & 67 & 5 & 68 & 4 \\
\hline
77 & 38 & 1 & 1 & 77 \\
77 & 39 & 1 & 3 & 44 \\
77 & 40 & 1 & 5 & 39 \\
77 & 41 & 2 & 9 & 34 \\
77 & 42 & 2 & 11 & 32 \\
77 & 43 & 2 & 13 & 32 \\
77 & 44 & 2 & 15 & 29 \\
77 & 45 & 2 & 17 & 28 \\
77 & 46 & 2 & 19 & 24 \\
77 & 47 & 2 & 21 & 24 \\
77 & 48 & 2 & 23 & 23 \\
77 & 49 & 2 & 25 & 21 \\
77 & 50 & 2 & 27 & 20 \\
77 & 51 & 2 & 29 & 18 \\
77 & 52 & 2 & 31 & 18 \\
77 & 53 & 2 & 33 & 17 \\
77 & 54 & 2 & 35 & 16 \\
77 & 55 & 2 & 37 & 16 \\
77 & 56 & 4 & 43 & 12 \\
77 & 57 & 3 & 43 & 12 \\
77 & 58 & 3 & 45 & 11 \\
77 & 59 & 4 & 49 & 10 \\
77 & 60 & 3 & 49 & 10 \\
77 & 61 & 3 & 51 & 9 \\
77 & 62 & 5 & 57 & 8 \\
77 & 63 & 4 & 57 & 8 \\
77 & 64 & 3 & 57 & 8 \\
77 & 65 & 5 & 63 & 6 \\
77 & 66 & 4 & 63 & 6 \\
77 & 67 & 3 & 63 & 6 \\
77 & 68 & 5 & 69 & 4 \\
\hline
78 & 39 & 1 & 2 & 52 \\
78 & 40 & 1 & 4 & 40 \\
78 & 41 & 1 & 6 & 38 \\
78 & 42 & 2 & 10 & 34 \\
78 & 43 & 2 & 12 & 32 \\
78 & 44 & 2 & 14 & 31 \\
78 & 45 & 2 & 16 & 29 \\
78 & 46 & 2 & 18 & 27 \\
78 & 47 & 2 & 20 & 24 \\
78 & 48 & 2 & 22 & 24 \\
78 & 49 & 2 & 24 & 22 \\
78 & 50 & 2 & 26 & 20 \\
78 & 51 & 2 & 28 & 20 \\
78 & 52 & 2 & 30 & 18 \\
78 & 53 & 2 & 32 & 18 \\
78 & 54 & 2 & 34 & 17 \\
78 & 55 & 2 & 36 & 16 \\
78 & 56 & 2 & 38 & 16 \\
78 & 57 & 4 & 44 & 12 \\
78 & 58 & 3 & 44 & 12 \\
78 & 59 & 3 & 46 & 11 \\
78 & 60 & 4 & 50 & 10 \\
78 & 61 & 3 & 50 & 10 \\
78 & 62 & 6 & 58 & 8 \\
78 & 63 & 5 & 58 & 8 \\
78 & 64 & 4 & 58 & 8 \\
78 & 65 & 3 & 58 & 8 \\
78 & 66 & 5 & 64 & 6 \\
78 & 67 & 4 & 64 & 6 \\
78 & 68 & 3 & 64 & 6 \\
78 & 69 & 5 & 70 & 4 \\
\hline
79 & 39 & 1 & 1 & 79 \\
79 & 40 & 1 & 3 & 44 \\
79 & 41 & 1 & 5 & 40 \\
79 & 42 & 1 & 7 & 38 \\
79 & 43 & 2 & 11 & 33 \\
79 & 44 & 2 & 13 & 32 \\
79 & 45 & 2 & 15 & 31 \\
79 & 46 & 2 & 17 & 28 \\
79 & 47 & 2 & 19 & 26 \\
79 & 48 & 2 & 21 & 24 \\
79 & 49 & 2 & 23 & 24 \\
79 & 50 & 2 & 25 & 22 \\
79 & 51 & 2 & 27 & 20 \\
79 & 52 & 2 & 29 & 20 \\
79 & 53 & 2 & 31 & 18 \\
79 & 54 & 2 & 33 & 18 \\
79 & 55 & 2 & 35 & 17 \\
79 & 56 & 2 & 37 & 16 \\
79 & 57 & 2 & 39 & 16 \\
79 & 58 & 4 & 45 & 12 \\
79 & 59 & 3 & 45 & 12 \\
79 & 60 & 4 & 49 & 10 \\
79 & 61 & 4 & 51 & 10 \\
79 & 62 & 3 & 51 & 10 \\
79 & 63 & 6 & 59 & 8 \\
79 & 64 & 5 & 59 & 8 \\
79 & 65 & 4 & 59 & 8 \\
79 & 66 & 3 & 59 & 8 \\
79 & 67 & 5 & 65 & 6 \\
79 & 68 & 4 & 65 & 6 \\
79 & 69 & 3 & 65 & 6 \\
79 & 70 & 5 & 71 & 4 \\
\hline
80 & 40 & 1 & 2 & 53 \\
80 & 41 & 1 & 4 & 42 \\
80 & 42 & 1 & 6 & 40 \\
80 & 43 & 2 & 10 & 35 \\
80 & 44 & 2 & 12 & 32 \\
80 & 45 & 2 & 14 & 32 \\
80 & 46 & 2 & 16 & 30 \\
80 & 47 & 2 & 18 & 28 \\
80 & 48 & 2 & 20 & 25 \\
80 & 49 & 2 & 22 & 24 \\
80 & 50 & 2 & 24 & 24 \\
80 & 51 & 2 & 26 & 22 \\
80 & 52 & 2 & 28 & 20 \\
80 & 53 & 2 & 30 & 20 \\
80 & 54 & 2 & 32 & 18 \\
80 & 55 & 2 & 34 & 18 \\
80 & 56 & 2 & 36 & 17 \\
80 & 57 & 2 & 38 & 16 \\
80 & 58 & 2 & 40 & 16 \\
80 & 59 & 4 & 46 & 12 \\
80 & 60 & 3 & 46 & 12 \\
80 & 61 & 4 & 50 & 10 \\
80 & 62 & 4 & 52 & 10 \\
80 & 63 & 3 & 52 & 10 \\
80 & 64 & 6 & 60 & 8 \\
80 & 65 & 5 & 60 & 8 \\
80 & 66 & 4 & 60 & 8 \\
80 & 67 & 3 & 60 & 8 \\
80 & 68 & 5 & 66 & 6 \\
80 & 69 & 4 & 66 & 6 \\
80 & 70 & 3 & 66 & 6 \\
80 & 71 & 5 & 72 & 4 \\
\hline
81 & 40 & 1 & 1 & 81 \\
81 & 41 & 1 & 3 & 46 \\
81 & 42 & 1 & 5 & 40 \\
81 & 43 & 1 & 7 & 39 \\
81 & 44 & 2 & 11 & 34 \\
81 & 45 & 2 & 13 & 32 \\
81 & 46 & 2 & 15 & 32 \\
81 & 47 & 2 & 17 & 28 \\
81 & 48 & 2 & 19 & 26 \\
81 & 49 & 2 & 21 & 24 \\
81 & 50 & 2 & 23 & 24 \\
81 & 51 & 2 & 25 & 24 \\
81 & 52 & 2 & 27 & 22 \\
81 & 53 & 2 & 29 & 20 \\
81 & 54 & 2 & 31 & 20 \\
81 & 55 & 2 & 33 & 18 \\
81 & 56 & 2 & 35 & 18 \\
81 & 57 & 2 & 37 & 17 \\
81 & 58 & 2 & 39 & 16 \\
81 & 59 & 4 & 45 & 12 \\
81 & 60 & 4 & 47 & 12 \\
81 & 61 & 3 & 47 & 12 \\
81 & 62 & 4 & 51 & 10 \\
81 & 63 & 4 & 53 & 10 \\
81 & 64 & 3 & 53 & 10 \\
81 & 65 & 6 & 61 & 8 \\
81 & 66 & 5 & 61 & 8 \\
81 & 67 & 4 & 61 & 8 \\
81 & 68 & 3 & 61 & 8 \\
81 & 69 & 5 & 67 & 6 \\
81 & 70 & 4 & 67 & 6 \\
81 & 71 & 3 & 67 & 6 \\
81 & 72 & 5 & 73 & 4 \\
\hline
82 & 41 & 1 & 2 & 54 \\
82 & 42 & 1 & 4 & 43 \\
82 & 43 & 1 & 6 & 40 \\
82 & 44 & 2 & 10 & 36 \\
82 & 45 & 2 & 12 & 33 \\
82 & 46 & 2 & 14 & 32 \\
82 & 47 & 2 & 16 & 32 \\
82 & 48 & 2 & 18 & 28 \\
82 & 49 & 2 & 20 & 26 \\
82 & 50 & 2 & 22 & 24 \\
82 & 51 & 2 & 24 & 24 \\
82 & 52 & 2 & 26 & 23 \\
82 & 53 & 2 & 28 & 21 \\
82 & 54 & 2 & 30 & 20 \\
82 & 55 & 2 & 32 & 19 \\
82 & 56 & 2 & 34 & 18 \\
82 & 57 & 2 & 36 & 18 \\
82 & 58 & 2 & 38 & 17 \\
82 & 59 & 2 & 40 & 16 \\
82 & 60 & 4 & 46 & 12 \\
82 & 61 & 4 & 48 & 12 \\
82 & 62 & 3 & 48 & 12 \\
82 & 63 & 4 & 52 & 10 \\
82 & 64 & 4 & 54 & 10 \\
82 & 65 & 3 & 54 & 10 \\
82 & 66 & 6 & 62 & 8 \\
82 & 67 & 5 & 62 & 8 \\
82 & 68 & 4 & 62 & 8 \\
82 & 69 & 3 & 62 & 8 \\
82 & 70 & 5 & 68 & 6 \\
82 & 71 & 4 & 68 & 6 \\
82 & 72 & 3 & 68 & 6 \\
82 & 73 & 5 & 74 & 4 \\
\hline
83 & 41 & 1 & 1 & 83 \\
83 & 42 & 1 & 3 & 47 \\
83 & 43 & 1 & 5 & 42 \\
83 & 44 & 1 & 7 & 40 \\
83 & 45 & 2 & 11 & 36 \\
83 & 46 & 2 & 13 & 32 \\
83 & 47 & 2 & 15 & 32 \\
83 & 48 & 2 & 17 & 29 \\
83 & 49 & 2 & 19 & 28 \\
83 & 50 & 2 & 21 & 26 \\
83 & 51 & 2 & 23 & 24 \\
83 & 52 & 2 & 25 & 24 \\
83 & 53 & 2 & 27 & 23 \\
83 & 54 & 2 & 29 & 21 \\
83 & 55 & 2 & 31 & 20 \\
83 & 56 & 2 & 33 & 19 \\
83 & 57 & 2 & 35 & 18 \\
83 & 58 & 2 & 37 & 18 \\
83 & 59 & 2 & 39 & 17 \\
83 & 60 & 3 & 43 & 14 \\
83 & 61 & 4 & 47 & 12 \\
83 & 62 & 4 & 49 & 12 \\
83 & 63 & 3 & 49 & 12 \\
83 & 64 & 4 & 53 & 10 \\
83 & 65 & 4 & 55 & 10 \\
83 & 66 & 3 & 55 & 10 \\
83 & 67 & 6 & 63 & 8 \\
83 & 68 & 5 & 63 & 8 \\
83 & 69 & 4 & 63 & 8 \\
83 & 70 & 3 & 63 & 8 \\
83 & 71 & 5 & 69 & 5 \\
83 & 72 & 7 & 75 & 4 \\
83 & 73 & 6 & 75 & 4 \\
83 & 74 & 5 & 75 & 4 \\
\hline
84 & 42 & 1 & 2 & 56 \\
84 & 43 & 1 & 4 & 44 \\
84 & 44 & 1 & 6 & 41 \\
84 & 45 & 1 & 8 & 40 \\
84 & 46 & 2 & 12 & 35 \\
84 & 47 & 2 & 14 & 32 \\
84 & 48 & 2 & 16 & 32 \\
84 & 49 & 2 & 18 & 28 \\
84 & 50 & 2 & 20 & 28 \\
84 & 51 & 2 & 22 & 25 \\
84 & 52 & 2 & 24 & 24 \\
84 & 53 & 2 & 26 & 24 \\
84 & 54 & 2 & 28 & 22 \\
84 & 55 & 2 & 30 & 20 \\
84 & 56 & 2 & 32 & 20 \\
84 & 57 & 2 & 34 & 19 \\
84 & 58 & 2 & 36 & 18 \\
84 & 59 & 2 & 38 & 18 \\
84 & 60 & 2 & 40 & 17 \\
84 & 61 & 3 & 44 & 14 \\
84 & 62 & 4 & 48 & 12 \\
84 & 63 & 4 & 50 & 12 \\
84 & 64 & 3 & 50 & 12 \\
84 & 65 & 4 & 54 & 10 \\
84 & 66 & 4 & 56 & 10 \\
84 & 67 & 3 & 56 & 10 \\
84 & 68 & 6 & 64 & 8 \\
84 & 69 & 5 & 64 & 8 \\
84 & 70 & 4 & 64 & 8 \\
84 & 71 & 3 & 64 & 8 \\
84 & 72 & 5 & 70 & 5 \\
84 & 73 & 7 & 76 & 4 \\
84 & 74 & 6 & 76 & 4 \\
84 & 75 & 5 & 76 & 4 \\
\hline
85 & 42 & 1 & 1 & 85 \\
85 & 43 & 1 & 3 & 48 \\
85 & 44 & 1 & 5 & 43 \\
85 & 45 & 2 & 9 & 39 \\
85 & 46 & 2 & 11 & 36 \\
85 & 47 & 2 & 13 & 34 \\
85 & 48 & 2 & 15 & 32 \\
85 & 49 & 2 & 17 & 31 \\
85 & 50 & 2 & 19 & 28 \\
85 & 51 & 2 & 21 & 28 \\
85 & 52 & 2 & 23 & 24 \\
85 & 53 & 2 & 25 & 24 \\
85 & 54 & 2 & 27 & 24 \\
85 & 55 & 2 & 29 & 22 \\
85 & 56 & 2 & 31 & 20 \\
85 & 57 & 2 & 33 & 20 \\
85 & 58 & 2 & 35 & 19 \\
85 & 59 & 2 & 37 & 18 \\
85 & 60 & 2 & 39 & 18 \\
85 & 61 & 2 & 41 & 17 \\
85 & 62 & 3 & 45 & 14 \\
85 & 63 & 4 & 49 & 12 \\
85 & 64 & 4 & 51 & 12 \\
85 & 65 & 3 & 51 & 12 \\
85 & 66 & 4 & 55 & 10 \\
85 & 67 & 4 & 57 & 10 \\
85 & 68 & 3 & 57 & 10 \\
85 & 69 & 6 & 65 & 8 \\
85 & 70 & 5 & 65 & 8 \\
85 & 71 & 4 & 65 & 8 \\
85 & 72 & 3 & 65 & 8 \\
85 & 73 & 5 & 71 & 5 \\
85 & 74 & 7 & 77 & 4 \\
85 & 75 & 6 & 77 & 4 \\
85 & 76 & 5 & 77 & 4 \\
\hline
86 & 43 & 1 & 2 & 57 \\
86 & 44 & 1 & 4 & 45 \\
86 & 45 & 1 & 6 & 42 \\
86 & 46 & 2 & 10 & 39 \\
86 & 47 & 2 & 12 & 36 \\
86 & 48 & 2 & 14 & 33 \\
86 & 49 & 2 & 16 & 32 \\
86 & 50 & 2 & 18 & 29 \\
86 & 51 & 2 & 20 & 28 \\
86 & 52 & 2 & 22 & 26 \\
86 & 53 & 2 & 24 & 24 \\
86 & 54 & 2 & 26 & 24 \\
86 & 55 & 2 & 28 & 24 \\
86 & 56 & 2 & 30 & 22 \\
86 & 57 & 2 & 32 & 20 \\
86 & 58 & 2 & 34 & 20 \\
86 & 59 & 2 & 36 & 19 \\
86 & 60 & 2 & 38 & 18 \\
86 & 61 & 2 & 40 & 18 \\
86 & 62 & 2 & 42 & 17 \\
86 & 63 & 3 & 46 & 14 \\
86 & 64 & 4 & 50 & 12 \\
86 & 65 & 4 & 52 & 12 \\
86 & 66 & 3 & 52 & 12 \\
86 & 67 & 4 & 56 & 10 \\
86 & 68 & 4 & 58 & 10 \\
86 & 69 & 3 & 58 & 10 \\
86 & 70 & 6 & 66 & 8 \\
86 & 71 & 5 & 66 & 8 \\
86 & 72 & 4 & 66 & 8 \\
86 & 73 & 3 & 66 & 8 \\
86 & 74 & 5 & 72 & 5 \\
86 & 75 & 7 & 78 & 4 \\
86 & 76 & 6 & 78 & 4 \\
86 & 77 & 5 & 78 & 4 \\
\hline
87 & 43 & 1 & 1 & 87 \\
87 & 44 & 1 & 3 & 49 \\
87 & 45 & 1 & 5 & 44 \\
87 & 46 & 1 & 7 & 42 \\
87 & 47 & 2 & 11 & 38 \\
87 & 48 & 2 & 13 & 35 \\
87 & 49 & 2 & 15 & 33 \\
87 & 50 & 2 & 17 & 32 \\
87 & 51 & 2 & 19 & 28 \\
87 & 52 & 2 & 21 & 28 \\
87 & 53 & 2 & 23 & 26 \\
87 & 54 & 2 & 25 & 24 \\
87 & 55 & 2 & 27 & 24 \\
87 & 56 & 2 & 29 & 24 \\
87 & 57 & 2 & 31 & 22 \\
87 & 58 & 2 & 33 & 20 \\
87 & 59 & 2 & 35 & 20 \\
87 & 60 & 3 & 39 & 18 \\
87 & 61 & 2 & 39 & 18 \\
87 & 62 & 2 & 41 & 18 \\
87 & 63 & 2 & 43 & 17 \\
87 & 64 & 3 & 47 & 14 \\
87 & 65 & 4 & 51 & 12 \\
87 & 66 & 4 & 53 & 12 \\
87 & 67 & 3 & 53 & 12 \\
87 & 68 & 4 & 57 & 10 \\
87 & 69 & 4 & 59 & 10 \\
87 & 70 & 3 & 59 & 10 \\
87 & 71 & 6 & 67 & 8 \\
87 & 72 & 5 & 67 & 8 \\
87 & 73 & 4 & 67 & 8 \\
87 & 74 & 3 & 67 & 8 \\
87 & 75 & 5 & 73 & 5 \\
87 & 76 & 7 & 79 & 4 \\
87 & 77 & 6 & 79 & 4 \\
87 & 78 & 5 & 79 & 4 \\
\hline
88 & 44 & 1 & 2 & 58 \\
88 & 45 & 1 & 4 & 46 \\
88 & 46 & 1 & 6 & 44 \\
88 & 47 & 2 & 10 & 40 \\
88 & 48 & 2 & 12 & 37 \\
88 & 49 & 2 & 14 & 34 \\
88 & 50 & 2 & 16 & 32 \\
88 & 51 & 2 & 18 & 30 \\
88 & 52 & 2 & 20 & 28 \\
88 & 53 & 2 & 22 & 28 \\
88 & 54 & 2 & 24 & 25 \\
88 & 55 & 2 & 26 & 24 \\
88 & 56 & 2 & 28 & 24 \\
88 & 57 & 2 & 30 & 23 \\
88 & 58 & 3 & 34 & 20 \\
88 & 59 & 2 & 34 & 20 \\
88 & 60 & 2 & 36 & 20 \\
88 & 61 & 3 & 40 & 18 \\
88 & 62 & 2 & 40 & 18 \\
88 & 63 & 2 & 42 & 18 \\
88 & 64 & 2 & 44 & 17 \\
88 & 65 & 3 & 48 & 14 \\
88 & 66 & 4 & 52 & 12 \\
88 & 67 & 4 & 54 & 12 \\
88 & 68 & 3 & 54 & 12 \\
88 & 69 & 4 & 58 & 10 \\
88 & 70 & 4 & 60 & 10 \\
88 & 71 & 3 & 60 & 10 \\
88 & 72 & 6 & 68 & 8 \\
88 & 73 & 5 & 68 & 8 \\
88 & 74 & 4 & 68 & 8 \\
88 & 75 & 3 & 68 & 8 \\
88 & 76 & 5 & 74 & 5 \\
88 & 77 & 7 & 80 & 4 \\
88 & 78 & 6 & 80 & 4 \\
88 & 79 & 5 & 80 & 4 \\
\hline
89 & 44 & 1 & 1 & 89 \\
89 & 45 & 1 & 3 & 50 \\
89 & 46 & 1 & 5 & 45 \\
89 & 47 & 1 & 7 & 43 \\
89 & 48 & 2 & 11 & 40 \\
89 & 49 & 2 & 13 & 36 \\
89 & 50 & 2 & 15 & 34 \\
89 & 51 & 2 & 17 & 32 \\
89 & 52 & 2 & 19 & 30 \\
89 & 53 & 2 & 21 & 28 \\
89 & 54 & 2 & 23 & 28 \\
89 & 55 & 2 & 25 & 25 \\
89 & 56 & 2 & 27 & 24 \\
89 & 57 & 2 & 29 & 24 \\
89 & 58 & 2 & 31 & 23 \\
89 & 59 & 3 & 35 & 20 \\
89 & 60 & 2 & 35 & 20 \\
89 & 61 & 2 & 37 & 20 \\
89 & 62 & 3 & 41 & 18 \\
89 & 63 & 2 & 41 & 18 \\
89 & 64 & 2 & 43 & 18 \\
89 & 65 & 2 & 45 & 17 \\
89 & 66 & 3 & 49 & 14 \\
89 & 67 & 4 & 53 & 12 \\
89 & 68 & 4 & 55 & 12 \\
89 & 69 & 3 & 55 & 12 \\
89 & 70 & 4 & 59 & 10 \\
89 & 71 & 4 & 61 & 10 \\
89 & 72 & 3 & 61 & 10 \\
89 & 73 & 6 & 69 & 8 \\
89 & 74 & 5 & 69 & 8 \\
89 & 75 & 4 & 69 & 8 \\
89 & 76 & 3 & 69 & 8 \\
89 & 77 & 5 & 75 & 5 \\
89 & 78 & 7 & 81 & 4 \\
89 & 79 & 6 & 81 & 4 \\
89 & 80 & 5 & 81 & 4 \\
\hline
90 & 45 & 1 & 2 & 60 \\
90 & 46 & 1 & 4 & 48 \\
90 & 47 & 1 & 6 & 44 \\
90 & 48 & 2 & 10 & 40 \\
90 & 49 & 2 & 12 & 39 \\
90 & 50 & 2 & 14 & 36 \\
90 & 51 & 2 & 16 & 33 \\
90 & 52 & 2 & 18 & 32 \\
90 & 53 & 2 & 20 & 30 \\
90 & 54 & 2 & 22 & 28 \\
90 & 55 & 2 & 24 & 26 \\
90 & 56 & 2 & 26 & 24 \\
90 & 57 & 2 & 28 & 24 \\
90 & 58 & 2 & 30 & 24 \\
90 & 59 & 2 & 32 & 22 \\
90 & 60 & 3 & 36 & 20 \\
90 & 61 & 2 & 36 & 20 \\
90 & 62 & 2 & 38 & 20 \\
90 & 63 & 3 & 42 & 18 \\
90 & 64 & 2 & 42 & 18 \\
90 & 65 & 2 & 44 & 18 \\
90 & 66 & 3 & 48 & 14 \\
90 & 67 & 3 & 50 & 14 \\
90 & 68 & 4 & 54 & 12 \\
90 & 69 & 4 & 56 & 12 \\
90 & 70 & 3 & 56 & 12 \\
90 & 71 & 4 & 60 & 10 \\
90 & 72 & 4 & 62 & 10 \\
90 & 73 & 3 & 62 & 10 \\
90 & 74 & 6 & 70 & 7 \\
90 & 75 & 7 & 74 & 6 \\
90 & 76 & 6 & 74 & 6 \\
90 & 77 & 5 & 74 & 6 \\
90 & 78 & 5 & 76 & 5 \\
90 & 79 & 7 & 82 & 4 \\
90 & 80 & 6 & 82 & 4 \\
90 & 81 & 5 & 82 & 4 \\
\hline
91 & 45 & 1 & 1 & 91 \\
91 & 46 & 1 & 3 & 52 \\
91 & 47 & 1 & 5 & 46 \\
91 & 48 & 1 & 7 & 44 \\
91 & 49 & 2 & 11 & 40 \\
91 & 50 & 2 & 13 & 37 \\
91 & 51 & 2 & 15 & 36 \\
91 & 52 & 2 & 17 & 32 \\
91 & 53 & 2 & 19 & 31 \\
91 & 54 & 2 & 21 & 29 \\
91 & 55 & 2 & 23 & 28 \\
91 & 56 & 2 & 25 & 26 \\
91 & 57 & 2 & 27 & 24 \\
91 & 58 & 2 & 29 & 24 \\
91 & 59 & 2 & 31 & 24 \\
91 & 60 & 2 & 33 & 22 \\
91 & 61 & 3 & 37 & 20 \\
91 & 62 & 2 & 37 & 20 \\
91 & 63 & 2 & 39 & 20 \\
91 & 64 & 3 & 43 & 18 \\
91 & 65 & 2 & 43 & 18 \\
91 & 66 & 2 & 45 & 18 \\
91 & 67 & 3 & 49 & 14 \\
91 & 68 & 3 & 51 & 14 \\
91 & 69 & 4 & 55 & 12 \\
91 & 70 & 4 & 57 & 11 \\
91 & 71 & 5 & 61 & 10 \\
91 & 72 & 4 & 61 & 10 \\
91 & 73 & 4 & 63 & 10 \\
91 & 74 & 3 & 63 & 10 \\
91 & 75 & 6 & 71 & 7 \\
91 & 76 & 7 & 75 & 6 \\
91 & 77 & 6 & 75 & 6 \\
91 & 78 & 5 & 75 & 6 \\
91 & 79 & 5 & 77 & 5 \\
91 & 80 & 7 & 83 & 4 \\
91 & 81 & 6 & 83 & 4 \\
91 & 82 & 5 & 83 & 4 \\
\hline
92 & 46 & 1 & 2 & 61 \\
92 & 47 & 1 & 4 & 48 \\
92 & 48 & 1 & 6 & 46 \\
92 & 49 & 1 & 8 & 44 \\
92 & 50 & 2 & 12 & 40 \\
92 & 51 & 2 & 14 & 36 \\
92 & 52 & 2 & 16 & 34 \\
92 & 53 & 2 & 18 & 32 \\
92 & 54 & 2 & 20 & 31 \\
92 & 55 & 2 & 22 & 29 \\
92 & 56 & 2 & 24 & 26 \\
92 & 57 & 2 & 26 & 25 \\
92 & 58 & 2 & 28 & 24 \\
92 & 59 & 2 & 30 & 24 \\
92 & 60 & 2 & 32 & 22 \\
92 & 61 & 3 & 36 & 20 \\
92 & 62 & 3 & 38 & 20 \\
92 & 63 & 2 & 38 & 20 \\
92 & 64 & 2 & 40 & 20 \\
92 & 65 & 3 & 44 & 18 \\
92 & 66 & 2 & 44 & 18 \\
92 & 67 & 4 & 50 & 14 \\
92 & 68 & 3 & 50 & 14 \\
92 & 69 & 3 & 52 & 13 \\
92 & 70 & 4 & 56 & 12 \\
92 & 71 & 4 & 58 & 11 \\
92 & 72 & 5 & 62 & 10 \\
92 & 73 & 4 & 62 & 10 \\
92 & 74 & 4 & 64 & 10 \\
92 & 75 & 3 & 64 & 10 \\
92 & 76 & 6 & 72 & 7 \\
92 & 77 & 7 & 76 & 6 \\
92 & 78 & 6 & 76 & 6 \\
92 & 79 & 5 & 76 & 6 \\
92 & 80 & 5 & 78 & 5 \\
92 & 81 & 7 & 84 & 4 \\
92 & 82 & 6 & 84 & 4 \\
92 & 83 & 5 & 84 & 4 \\
\hline
93 & 46 & 1 & 1 & 93 \\
93 & 47 & 1 & 3 & 52 \\
93 & 48 & 1 & 5 & 48 \\
93 & 49 & 1 & 7 & 46 \\
93 & 50 & 2 & 11 & 40 \\
93 & 51 & 2 & 13 & 38 \\
93 & 52 & 2 & 15 & 36 \\
93 & 53 & 2 & 17 & 34 \\
93 & 54 & 2 & 19 & 32 \\
93 & 55 & 2 & 21 & 31 \\
93 & 56 & 2 & 23 & 29 \\
93 & 57 & 2 & 25 & 26 \\
93 & 58 & 2 & 27 & 24 \\
93 & 59 & 2 & 29 & 24 \\
93 & 60 & 2 & 31 & 24 \\
93 & 61 & 2 & 33 & 22 \\
93 & 62 & 3 & 37 & 20 \\
93 & 63 & 3 & 39 & 20 \\
93 & 64 & 2 & 39 & 20 \\
93 & 65 & 2 & 41 & 20 \\
93 & 66 & 3 & 45 & 18 \\
93 & 67 & 2 & 45 & 18 \\
93 & 68 & 4 & 51 & 14 \\
93 & 69 & 3 & 51 & 14 \\
93 & 70 & 3 & 53 & 13 \\
93 & 71 & 4 & 57 & 12 \\
93 & 72 & 4 & 59 & 11 \\
93 & 73 & 5 & 63 & 10 \\
93 & 74 & 4 & 63 & 10 \\
93 & 75 & 4 & 65 & 10 \\
93 & 76 & 3 & 65 & 10 \\
93 & 77 & 6 & 73 & 7 \\
93 & 78 & 7 & 77 & 6 \\
93 & 79 & 6 & 77 & 6 \\
93 & 80 & 5 & 77 & 6 \\
93 & 81 & 5 & 79 & 5 \\
93 & 82 & 7 & 85 & 4 \\
93 & 83 & 6 & 85 & 4 \\
93 & 84 & 5 & 85 & 4 \\
\hline
94 & 47 & 1 & 2 & 62 \\
94 & 48 & 1 & 4 & 49 \\
94 & 49 & 1 & 6 & 47 \\
94 & 50 & 2 & 10 & 42 \\
94 & 51 & 2 & 12 & 40 \\
94 & 52 & 2 & 14 & 37 \\
94 & 53 & 2 & 16 & 36 \\
94 & 54 & 2 & 18 & 33 \\
94 & 55 & 2 & 20 & 32 \\
94 & 56 & 2 & 22 & 31 \\
94 & 57 & 2 & 24 & 28 \\
94 & 58 & 2 & 26 & 26 \\
94 & 59 & 2 & 28 & 24 \\
94 & 60 & 2 & 30 & 24 \\
94 & 61 & 2 & 32 & 23 \\
94 & 62 & 3 & 36 & 20 \\
94 & 63 & 3 & 38 & 20 \\
94 & 64 & 3 & 40 & 20 \\
94 & 65 & 2 & 40 & 20 \\
94 & 66 & 2 & 42 & 20 \\
94 & 67 & 3 & 46 & 16 \\
94 & 68 & 3 & 48 & 15 \\
94 & 69 & 4 & 52 & 14 \\
94 & 70 & 3 & 52 & 14 \\
94 & 71 & 3 & 54 & 13 \\
94 & 72 & 4 & 58 & 12 \\
94 & 73 & 4 & 60 & 11 \\
94 & 74 & 5 & 64 & 10 \\
94 & 75 & 4 & 64 & 10 \\
94 & 76 & 4 & 66 & 10 \\
94 & 77 & 3 & 66 & 10 \\
94 & 78 & 6 & 74 & 7 \\
94 & 79 & 7 & 78 & 6 \\
94 & 80 & 6 & 78 & 6 \\
94 & 81 & 5 & 78 & 6 \\
94 & 82 & 5 & 80 & 5 \\
94 & 83 & 7 & 86 & 4 \\
94 & 84 & 6 & 86 & 4 \\
94 & 85 & 5 & 86 & 4 \\
\hline
95 & 47 & 1 & 1 & 95 \\
95 & 48 & 1 & 3 & 54 \\
95 & 49 & 1 & 5 & 48 \\
95 & 50 & 1 & 7 & 47 \\
95 & 51 & 2 & 11 & 41 \\
95 & 52 & 2 & 13 & 40 \\
95 & 53 & 2 & 15 & 37 \\
95 & 54 & 2 & 17 & 35 \\
95 & 55 & 2 & 19 & 32 \\
95 & 56 & 2 & 21 & 32 \\
95 & 57 & 2 & 23 & 31 \\
95 & 58 & 2 & 25 & 28 \\
95 & 59 & 2 & 27 & 26 \\
95 & 60 & 2 & 29 & 24 \\
95 & 61 & 2 & 31 & 24 \\
95 & 62 & 2 & 33 & 23 \\
95 & 63 & 3 & 37 & 20 \\
95 & 64 & 3 & 39 & 20 \\
95 & 65 & 3 & 41 & 20 \\
95 & 66 & 2 & 41 & 20 \\
95 & 67 & 2 & 43 & 20 \\
95 & 68 & 3 & 47 & 16 \\
95 & 69 & 3 & 49 & 15 \\
95 & 70 & 4 & 53 & 14 \\
95 & 71 & 3 & 53 & 14 \\
95 & 72 & 5 & 59 & 12 \\
95 & 73 & 4 & 59 & 12 \\
95 & 74 & 4 & 61 & 11 \\
95 & 75 & 5 & 65 & 10 \\
95 & 76 & 4 & 65 & 10 \\
95 & 77 & 4 & 67 & 10 \\
95 & 78 & 3 & 67 & 10 \\
95 & 79 & 6 & 75 & 7 \\
95 & 80 & 7 & 79 & 6 \\
95 & 81 & 6 & 79 & 6 \\
95 & 82 & 5 & 79 & 6 \\
95 & 83 & 5 & 81 & 5 \\
95 & 84 & 7 & 87 & 4 \\
95 & 85 & 6 & 87 & 4 \\
95 & 86 & 5 & 87 & 4 \\
\hline
96 & 48 & 1 & 2 & 64 \\
96 & 49 & 1 & 4 & 50 \\
96 & 50 & 1 & 6 & 48 \\
96 & 51 & 1 & 8 & 46 \\
96 & 52 & 2 & 12 & 40 \\
96 & 53 & 2 & 14 & 38 \\
96 & 54 & 2 & 16 & 36 \\
96 & 55 & 2 & 18 & 34 \\
96 & 56 & 2 & 20 & 32 \\
96 & 57 & 2 & 22 & 32 \\
96 & 58 & 2 & 24 & 28 \\
96 & 59 & 2 & 26 & 26 \\
96 & 60 & 2 & 28 & 25 \\
96 & 61 & 2 & 30 & 24 \\
96 & 62 & 2 & 32 & 24 \\
96 & 63 & 3 & 36 & 20 \\
96 & 64 & 3 & 38 & 20 \\
96 & 65 & 3 & 40 & 20 \\
96 & 66 & 3 & 42 & 20 \\
96 & 67 & 2 & 42 & 20 \\
96 & 68 & 2 & 44 & 20 \\
96 & 69 & 3 & 48 & 16 \\
96 & 70 & 4 & 52 & 14 \\
96 & 71 & 4 & 54 & 14 \\
96 & 72 & 3 & 54 & 14 \\
96 & 73 & 5 & 60 & 12 \\
96 & 74 & 4 & 60 & 12 \\
96 & 75 & 4 & 62 & 11 \\
96 & 76 & 5 & 66 & 10 \\
96 & 77 & 4 & 66 & 10 \\
96 & 78 & 4 & 68 & 9 \\
96 & 79 & 6 & 74 & 8 \\
96 & 80 & 8 & 80 & 6 \\
96 & 81 & 7 & 80 & 6 \\
96 & 82 & 6 & 80 & 6 \\
96 & 83 & 5 & 80 & 6 \\
96 & 84 & 5 & 82 & 5 \\
96 & 85 & 7 & 88 & 4 \\
96 & 86 & 6 & 88 & 4 \\
96 & 87 & 5 & 88 & 4 \\
\hline
97 & 48 & 1 & 1 & 97 \\
97 & 49 & 1 & 3 & 55 \\
97 & 50 & 1 & 5 & 48 \\
97 & 51 & 1 & 7 & 48 \\
97 & 52 & 2 & 11 & 42 \\
97 & 53 & 2 & 13 & 40 \\
97 & 54 & 2 & 15 & 38 \\
97 & 55 & 2 & 17 & 36 \\
97 & 56 & 2 & 19 & 33 \\
97 & 57 & 2 & 21 & 32 \\
97 & 58 & 2 & 23 & 32 \\
97 & 59 & 2 & 25 & 28 \\
97 & 60 & 2 & 27 & 26 \\
97 & 61 & 2 & 29 & 25 \\
97 & 62 & 2 & 31 & 24 \\
97 & 63 & 2 & 33 & 24 \\
97 & 64 & 3 & 37 & 20 \\
97 & 65 & 3 & 39 & 20 \\
97 & 66 & 3 & 41 & 20 \\
97 & 67 & 3 & 43 & 20 \\
97 & 68 & 2 & 43 & 20 \\
97 & 69 & 2 & 45 & 20 \\
97 & 70 & 3 & 49 & 16 \\
97 & 71 & 4 & 53 & 14 \\
97 & 72 & 4 & 55 & 14 \\
97 & 73 & 3 & 55 & 14 \\
97 & 74 & 5 & 61 & 12 \\
97 & 75 & 4 & 61 & 12 \\
97 & 76 & 4 & 63 & 11 \\
97 & 77 & 5 & 67 & 10 \\
97 & 78 & 4 & 67 & 10 \\
97 & 79 & 4 & 69 & 9 \\
97 & 80 & 6 & 75 & 8 \\
97 & 81 & 8 & 81 & 6 \\
97 & 82 & 7 & 81 & 6 \\
97 & 83 & 6 & 81 & 6 \\
97 & 84 & 5 & 81 & 6 \\
97 & 85 & 5 & 83 & 5 \\
97 & 86 & 7 & 89 & 4 \\
97 & 87 & 6 & 89 & 4 \\
97 & 88 & 5 & 89 & 4 \\
\hline
98 & 49 & 1 & 2 & 65 \\
98 & 50 & 1 & 4 & 52 \\
98 & 51 & 1 & 6 & 48 \\
98 & 52 & 1 & 8 & 47 \\
98 & 53 & 2 & 12 & 41 \\
98 & 54 & 2 & 14 & 40 \\
98 & 55 & 2 & 16 & 36 \\
98 & 56 & 2 & 18 & 35 \\
98 & 57 & 2 & 20 & 33 \\
98 & 58 & 2 & 22 & 32 \\
98 & 59 & 2 & 24 & 30 \\
98 & 60 & 2 & 26 & 28 \\
98 & 61 & 2 & 28 & 26 \\
98 & 62 & 2 & 30 & 25 \\
98 & 63 & 2 & 32 & 24 \\
98 & 64 & 2 & 34 & 24 \\
98 & 65 & 3 & 38 & 20 \\
98 & 66 & 3 & 40 & 20 \\
98 & 67 & 3 & 42 & 20 \\
98 & 68 & 3 & 44 & 20 \\
98 & 69 & 2 & 44 & 20 \\
98 & 70 & 2 & 46 & 20 \\
98 & 71 & 3 & 50 & 16 \\
98 & 72 & 4 & 54 & 14 \\
98 & 73 & 4 & 56 & 14 \\
98 & 74 & 3 & 56 & 14 \\
98 & 75 & 5 & 62 & 12 \\
98 & 76 & 4 & 62 & 12 \\
98 & 77 & 4 & 64 & 11 \\
98 & 78 & 5 & 68 & 10 \\
98 & 79 & 4 & 68 & 10 \\
98 & 80 & 4 & 70 & 9 \\
98 & 81 & 6 & 76 & 8 \\
98 & 82 & 8 & 82 & 6 \\
98 & 83 & 7 & 82 & 6 \\
98 & 84 & 6 & 82 & 6 \\
98 & 85 & 5 & 82 & 6 \\
98 & 86 & 5 & 84 & 5 \\
98 & 87 & 7 & 90 & 4 \\
98 & 88 & 6 & 90 & 4 \\
98 & 89 & 5 & 90 & 4 \\
\hline
99 & 49 & 1 & 1 & 99 \\
99 & 50 & 1 & 3 & 56 \\
99 & 51 & 1 & 5 & 50 \\
99 & 52 & 1 & 7 & 48 \\
99 & 53 & 2 & 11 & 44 \\
99 & 54 & 2 & 13 & 40 \\
99 & 55 & 2 & 15 & 40 \\
99 & 56 & 2 & 17 & 36 \\
99 & 57 & 2 & 19 & 34 \\
99 & 58 & 2 & 21 & 32 \\
99 & 59 & 2 & 23 & 32 \\
99 & 60 & 2 & 25 & 30 \\
99 & 61 & 2 & 27 & 27 \\
99 & 62 & 2 & 29 & 26 \\
99 & 63 & 2 & 31 & 25 \\
99 & 64 & 2 & 33 & 24 \\
99 & 65 & 2 & 35 & 24 \\
99 & 66 & 3 & 39 & 20 \\
99 & 67 & 3 & 41 & 20 \\
99 & 68 & 3 & 43 & 20 \\
99 & 69 & 3 & 45 & 20 \\
99 & 70 & 2 & 45 & 20 \\
99 & 71 & 2 & 47 & 20 \\
99 & 72 & 3 & 51 & 16 \\
99 & 73 & 4 & 55 & 14 \\
99 & 74 & 4 & 57 & 14 \\
99 & 75 & 3 & 57 & 14 \\
99 & 76 & 5 & 63 & 12 \\
99 & 77 & 4 & 63 & 12 \\
99 & 78 & 4 & 65 & 11 \\
99 & 79 & 5 & 69 & 10 \\
99 & 80 & 4 & 69 & 10 \\
99 & 81 & 4 & 71 & 9 \\
99 & 82 & 6 & 77 & 8 \\
99 & 83 & 8 & 83 & 6 \\
99 & 84 & 7 & 83 & 6 \\
99 & 85 & 6 & 83 & 6 \\
99 & 86 & 5 & 83 & 6 \\
99 & 87 & 5 & 85 & 5 \\
99 & 88 & 7 & 91 & 4 \\
99 & 89 & 6 & 91 & 4 \\
99 & 90 & 5 & 91 & 4 \\
\hline
100 & 50 & 1 & 2 & 66 \\
100 & 51 & 1 & 4 & 52 \\
100 & 52 & 1 & 6 & 49 \\
100 & 53 & 1 & 8 & 48 \\
100 & 54 & 2 & 12 & 42 \\
100 & 55 & 2 & 14 & 40 \\
100 & 56 & 2 & 16 & 38 \\
100 & 57 & 2 & 18 & 36 \\
100 & 58 & 2 & 20 & 34 \\
100 & 59 & 2 & 22 & 32 \\
100 & 60 & 2 & 24 & 32 \\
100 & 61 & 2 & 26 & 30 \\
100 & 62 & 2 & 28 & 27 \\
100 & 63 & 2 & 30 & 26 \\
100 & 64 & 2 & 32 & 25 \\
100 & 65 & 2 & 34 & 24 \\
100 & 66 & 2 & 36 & 23 \\
100 & 67 & 3 & 40 & 20 \\
100 & 68 & 3 & 42 & 20 \\
100 & 69 & 3 & 44 & 20 \\
100 & 70 & 3 & 46 & 20 \\
100 & 71 & 2 & 46 & 20 \\
100 & 72 & 2 & 48 & 20 \\
100 & 73 & 3 & 52 & 16 \\
100 & 74 & 4 & 56 & 14 \\
100 & 75 & 4 & 58 & 14 \\
100 & 76 & 3 & 58 & 14 \\
100 & 77 & 5 & 64 & 12 \\
100 & 78 & 4 & 64 & 12 \\
100 & 79 & 4 & 66 & 11 \\
100 & 80 & 5 & 70 & 10 \\
100 & 81 & 4 & 70 & 10 \\
100 & 82 & 4 & 72 & 9 \\
100 & 83 & 6 & 78 & 8 \\
100 & 84 & 8 & 84 & 6 \\
100 & 85 & 7 & 84 & 6 \\
100 & 86 & 6 & 84 & 6 \\
100 & 87 & 5 & 84 & 6 \\
100 & 88 & 5 & 86 & 5 \\
100 & 89 & 7 & 92 & 4 \\
100 & 90 & 6 & 92 & 4 \\
100 & 91 & 5 & 92 & 4 \\
\hline
\hline

\end{supertabular}
\end{multicols*}



\bibliographystyle{IEEEtran}

\end{document}